\begin{document}
\theoremstyle{plain}
\newtheorem*{ithm}{Theorem}
\newtheorem*{idefn}{Definition}
\newtheorem{thm}{Theorem}[section]
\newtheorem{lem}[thm]{Lemma}
\newtheorem{dlem}[thm]{Lemma/Definition}
\newtheorem{prop}[thm]{Proposition}
\newtheorem{set}[thm]{Setting}
\newtheorem{cor}[thm]{Corollary}
\newtheorem*{icor}{Corollary}
\theoremstyle{definition}
\newtheorem{assum}[thm]{Assumption}
\newtheorem{notation}[thm]{Notation}
\newtheorem{defn}[thm]{Definition}
\newtheorem{clm}[thm]{Claim}
\newtheorem{ex}[thm]{Example}
\theoremstyle{remark}
\newtheorem{rem}[thm]{Remark}
\numberwithin{equation}{section}

\newcommand{\unit}{\mathbb I}
\newcommand{\ali}[1]{{\mathfrak A}_{[ #1 ,\infty)}}
\newcommand{\alm}[1]{{\mathfrak A}_{(-\infty, #1 ]}}
\newcommand{\nn}[1]{\lV #1 \rV}
\newcommand{\br}{{\mathbb R}}
\newcommand{\dm}{{\rm dom}\mu}
\newcommand{\inn}{({\rm {inner}})}
\newcommand{\Ad}{\mathop{\mathrm{Ad}}\nolimits}
\newcommand{\Proj}{\mathop{\mathrm{Proj}}\nolimits}
\newcommand{\RRe}{\mathop{\mathrm{Re}}\nolimits}
\newcommand{\RIm}{\mathop{\mathrm{Im}}\nolimits}
\newcommand{\Wo}{\mathop{\mathrm{Wo}}\nolimits}
\newcommand{\Prim}{\mathop{\mathrm{Prim}_1}\nolimits}
\newcommand{\Primz}{\mathop{\mathrm{Prim}}\nolimits}
\newcommand{\ClassA}{\mathop{\mathrm{ClassA}}\nolimits}
\newcommand{\Class}{\mathop{\mathrm{Class}}\nolimits}
\newcommand{\diam}{\mathop{\mathrm{diam}}\nolimits}
\def\qed{{\unskip\nobreak\hfil\penalty50
\hskip2em\hbox{}\nobreak\hfil$\square$
\parfillskip=0pt \finalhyphendemerits=0\par}\medskip}
\def\proof{\trivlist \item[\hskip \labelsep{\bf Proof.\ }]}
\def\endproof{\null\hfill\qed\endtrivlist\noindent}
\def\proofof[#1]{\trivlist \item[\hskip \labelsep{\bf Proof of #1.\ }]}
\def\endproofof{\null\hfill\qed\endtrivlist\noindent}

\newcommand{\mkA}{{\mathfrak A}}
\newcommand{\mkB}{{\mathfrak B}}
\newcommand{\mkC}{{\mathfrak C}}
\newcommand{\mkD}{{\mathfrak D}}
\newcommand{\varphii}{\varphi}

\newcommand{\pg}{{\mathfrak S}(\bbZ^2)}
\newcommand{\oo}{{\boldsymbol\varphii}}
\newcommand{\caA}{{\mathcal A}}
\newcommand{\caB}{{\mathcal B}}
\newcommand{\caC}{{\mathcal C}}
\newcommand{\caD}{{\mathcal D}}
\newcommand{\caE}{{\mathcal E}}
\newcommand{\caF}{{\mathcal F}}
\newcommand{\caG}{{\mathcal G}}
\newcommand{\caH}{{\mathcal H}}
\newcommand{\caI}{{\mathcal I}}
\newcommand{\caJ}{{\mathcal J}}
\newcommand{\caK}{{\mathcal K}}
\newcommand{\caL}{{\mathcal L}}
\newcommand{\caM}{{\mathcal M}}
\newcommand{\caN}{{\mathcal N}}
\newcommand{\caO}{{\mathcal O}}
\newcommand{\caP}{{\mathcal P}}
\newcommand{\caQ}{{\mathcal Q}}
\newcommand{\caR}{{\mathcal R}}
\newcommand{\caS}{{\mathcal S}}
\newcommand{\caT}{{\mathcal T}}
\newcommand{\caU}{{\mathcal U}}
\newcommand{\caV}{{\mathcal V}}
\newcommand{\caW}{{\mathcal W}}
\newcommand{\caX}{{\mathcal X}}
\newcommand{\caY}{{\mathcal Y}}
\newcommand{\caZ}{{\mathcal Z}}
\newcommand{\bbA}{{\mathbb A}}
\newcommand{\bbB}{{\mathbb B}}
\newcommand{\bbC}{{\mathbb C}}
\newcommand{\bbD}{{\mathbb D}}
\newcommand{\bbE}{{\mathbb E}}
\newcommand{\bbF}{{\mathbb F}}
\newcommand{\bbG}{{\mathbb G}}
\newcommand{\bbH}{{\mathbb H}}
\newcommand{\bbI}{{\mathbb I}}
\newcommand{\bbJ}{{\mathbb J}}
\newcommand{\bbK}{{\mathbb K}}
\newcommand{\bbL}{{\mathbb L}}
\newcommand{\bbM}{{\mathbb M}}
\newcommand{\bbN}{{\mathbb N}}
\newcommand{\bbO}{{\mathbb O}}
\newcommand{\bbP}{{\mathbb P}}
\newcommand{\bbQ}{{\mathbb Q}}
\newcommand{\bbR}{{\mathbb R}}
\newcommand{\bbS}{{\mathbb S}}
\newcommand{\bbT}{{\mathbb T}}
\newcommand{\bbU}{{\mathbb U}}
\newcommand{\bbV}{{\mathbb V}}
\newcommand{\bbW}{{\mathbb W}}
\newcommand{\bbX}{{\mathbb X}}
\newcommand{\bbY}{{\mathbb Y}}
\newcommand{\bbZ}{{\mathbb Z}}
\newcommand{\str}{^*}
\newcommand{\lv}{\left \vert}
\newcommand{\rv}{\right \vert}
\newcommand{\lV}{\left \Vert}
\newcommand{\rV}{\right \Vert}
\newcommand{\la}{\left \langle}
\newcommand{\ra}{\right \rangle}
\newcommand{\ltm}{\left \{}
\newcommand{\rtm}{\right \}}
\newcommand{\lcm}{\left [}
\newcommand{\rcm}{\right ]}
\newcommand{\ket}[1]{\lv #1 \ra}
\newcommand{\bra}[1]{\la #1 \rv}
\newcommand{\lmk}{\left (}
\newcommand{\rmk}{\right )}
\newcommand{\al}{{\mathcal A}}
\newcommand{\md}{M_d({\mathbb C})}
\newcommand{\eaut}{\mathop{\mathrm{EAut}}\nolimits}
\newcommand{\qaut}{\mathop{\mathrm{QAut}}\nolimits}
\newcommand{\sqaut}{\mathop{\mathrm{SQAut}}\nolimits}
\newcommand{\gsqaut}{\mathop{\mathrm{GSQAut}}\nolimits}
\newcommand{\QLS}{\mathop{\mathcal{SL}}\nolimits}
\newcommand{\haut}{\mathop{\mathrm{HAut}}\nolimits}
\newcommand{\guaut}{\mathop{\mathrm{GUQAut}}\nolimits}
\newcommand{\IG}{\mathop{\mathrm{IG}}\nolimits}
\newcommand{\IP}{\mathop{\mathrm{IP}}\nolimits}
\newcommand{\ainn}{\mathop{\mathrm{AInn}}\nolimits}
\newcommand{\id}{\mathop{\mathrm{id}}\nolimits}
\newcommand{\Tr}{\mathop{\mathrm{Tr}}\nolimits}
\newcommand{\co}{\mathop{\mathrm{co}}\nolimits}
\newcommand{\sym}{\mathop{\mathrm{Sym}}\nolimits}
\newcommand{\sgn}{\mathop{\mathrm{sgn}}\nolimits}
\newcommand{\Ran}{\mathop{\mathrm{Ran}}\nolimits}
\newcommand{\Ker}{\mathop{\mathrm{Ker}}\nolimits}
\newcommand{\Aut}{\mathop{\mathrm{Aut}}\nolimits}
\newcommand{\Inn}{\mathop{\mathrm{Inn}}\nolimits}
\newcommand{\spn}{\mathop{\mathrm{span}}\nolimits}
\newcommand{\Mat}{\mathop{\mathrm{M}}\nolimits}
\newcommand{\Dia}{\mathop{\mathrm{D}}\nolimits}
\newcommand{\UT}{\mathop{\mathrm{UT}}\nolimits}
\newcommand{\DT}{\mathop{\mathrm{DT}}\nolimits}
\newcommand{\GL}{\mathop{\mathrm{GL}}\nolimits}
\newcommand{\spa}{\mathop{\mathrm{span}}\nolimits}
\newcommand{\supp}{\mathop{\mathrm{supp}}\nolimits}
\newcommand{\rank}{\mathop{\mathrm{rank}}\nolimits}
\newcommand{\idd}{\mathop{\mathrm{id}}\nolimits}
\newcommand{\ran}{\mathop{\mathrm{Ran}}\nolimits}
\newcommand{\dr}{ \mathop{\mathrm{d}_{{\mathbb R}^k}}\nolimits} 
\newcommand{\dc}{ \mathop{\mathrm{d}_{\cc}}\nolimits} \newcommand{\drr}{ \mathop{\mathrm{d}_{\rr}}\nolimits} 
\newcommand{\zin}{\mathbb{Z}}
\newcommand{\rr}{\mathbb{R}}
\newcommand{\cc}{\mathbb{C}}
\newcommand{\ww}{\mathbb{W}}
\newcommand{\nan}{\mathbb{N}}\newcommand{\bb}{\mathbb{B}}
\newcommand{\aaa}{\mathbb{A}}\newcommand{\ee}{\mathbb{E}}
\newcommand{\pp}{\mathbb{P}}
\newcommand{\wks}{\mathop{\mathrm{wk^*-}}\nolimits}
\newcommand{\Hom}{\mathop{\mathrm{Hom}}\nolimits}
\newcommand{\mk}{{\Mat_k}}
\newcommand{\mnz}{\Mat_{n_0}}
\newcommand{\mn}{\Mat_{n}}
\newcommand{\dist}{\mathrm{d}}
\newcommand{\braket}[2]{\left\langle#1,#2\right\rangle}
\newcommand{\ketbra}[2]{\left\vert #1\right \rangle \left\langle #2\right\vert}
\newcommand{\abs}[1]{\left\vert#1\right\vert}
\newtheorem{nota}{Notation}[section]
\def\qed{{\unskip\nobreak\hfil\penalty50
\hskip2em\hbox{}\nobreak\hfil$\square$
\parfillskip=0pt \finalhyphendemerits=0\par}\medskip}
\def\proof{\trivlist \item[\hskip \labelsep{\bf Proof.\ }]}
\def\endproof{\null\hfill\qed\endtrivlist\noindent}
\def\proofof[#1]{\trivlist \item[\hskip \labelsep{\bf Proof of #1.\ }]}
\def\endproofof{\null\hfill\qed\endtrivlist\noindent}
%%%%%%%%%%%%%%%%%%%%%%%%%%%%%%
\newcommand{\ZZ}{\bbZ_2\times\bbZ_2}
\newcommand{\SSS}{\mathcal{S}}
\newcommand{\cs}{S}
\newcommand{\ct}{t}
\newcommand{\hS}{S}
\newcommand{\vv}{{\boldsymbol v}}
\newcommand{\ala}{a}
\newcommand{\bet}{b}
\newcommand{\gam}{c}
\newcommand{\alphas}{\alpha}
\newcommand{\alphai}{\alpha^{(\sigma_{1})}}
\newcommand{\alphan}{\alpha^{(\sigma_{2})}}
\newcommand{\betas}{\beta}
\newcommand{\betai}{\beta^{(\sigma_{1})}}
\newcommand{\betan}{\beta^{(\sigma_{2})}}
\newcommand{\alphass}{\alpha^{{(\sigma)}}}
\newcommand{\uu}{V}
\newcommand{\vp}{\varsigma}
\newcommand{\vpr}{R}
\newcommand{\tg}{\tau_{\Gamma}}
\newcommand{\sgg}{\Sigma_{\Gamma}^{(\sigma)}}
\newcommand{\nh}{1}
\newcommand{\rk}{2,a}
\newcommand{\nii}{1,a}
\newcommand{\nhh}{3,a}
\newcommand{\sjt}{2}
\newcommand{\sjtg}{2}
\newcommand{\bcg}{\caB(\caH_{\alpha})\otimes  C^{*}(\Sigma_{\Gamma}^{(\sigma)})}
\title{A $H^{3}(G,\bbT)$-valued index of symmetry protected topological phases with on-site finite group symmetry for two-dimensional quantum spin systems}

\author{Yoshiko Ogata \thanks{ Graduate School of Mathematical Sciences
The University of Tokyo, Komaba, Tokyo, 153-8914, Japan
Supported in part by
the Grants-in-Aid for
Scientific Research, JSPS.}}
\maketitle

\begin{abstract}
We consider SPT-phases with on-site finite group $G$ symmetry $\beta$ for two-dimensional quantum spin systems.
We show that  they
have $H^{3}(G,\bbT)$-valued invariant.
\end{abstract}

\section{Introduction }
The notion of symmetry protected topological (SPT) phases was introduced by Gu and Wen [GW].
It is defined as follows:
we consider the set of all Hamiltonians with some symmetry, 
which have a unique gapped ground state in the bulk, and can be smoothly deformed into
a common trivial gapped Hamiltonian without closing the gap.
We say 
two such Hamiltonians are equivalent, if they can be smoothly deformed into
each other, without breaking the symmetry.
We call an equivalence class of this classification, a
symmetry protected topological (SPT) phase.
Based on tensor network or TQFT analysis, \cite{cglw}, \cite{molnar}
 it is conjectured that 
SPT phases with on-site finite group $G$ symmetry for $\nu$-dimensional quantum spin systems
 have a $H^{\nu+1}(G,\bbT)$-valued invariant.
 We proved their conjecture affirmatively in \cite{ogata} for $\nu=1$.
 In this paper, We 
 show that the conjecture is also true for $\nu=2$.
 
 %\subsection{Two-dimensional quantum spin systems}\label{setting}

We start by summarizing standard setup of $2$-dimensional quantum spin systems on the two dimensional lattice $\bbZ^{2}$ \cite{BR1,BR2}. We will use freely the basic notation in section \ref{notasec}.
Throughout this paper, we fix some $2\le d\in\nan$.
We denote the algebra of $d\times d$ matrices by $\Mat_{d}$.

For each subset $\Gamma$ of $\bbZ^2$,
we denote the set of all finite subsets in $\Gamma$ by ${\mathfrak S}_{\Gamma}$.
We introduce the Euclidean metric on $\bbZ^2$, inherited from $\bbR^2$.
We denote by $\dist(S_1,S_2)$ the distance between $S_1,S_2\subset \bbZ^2$. 
For a subset $\Gamma$ of $\bbZ^2$ and $r\in\bbR_{\ge 0}$, 
$\hat \Gamma{(r)}$ denotes the all points in $\bbR^{2}$ whose distance from $\Gamma$ is less than or equal to $r$.
We also set $\Gamma(r):=\hat \Gamma{(r)}\cap\bbZ^{2}$.
When we take a complement of $\Gamma\subset\bbZ^{2}$,
it means $\Gamma^{c}:=\bbZ^{2}\setminus \Gamma$.
For each $n\in\bbN$, we denote $[-n,n]^2\cap \bbZ^2$ by $\Lambda_n$.

For each $z\in\bbZ^2$,  let $\caA_{\{z\}}$ be an isomorphic copy of $\Mat_{d}$, and for any finite subset $\Lambda\subset\bbZ^2$, we set $\caA_{\Lambda} = \bigotimes_{z\in\Lambda}\caA_{\{z\}}$.
%, which is the local algebra of observables in $\Lambda$. 
For finite $\Lambda$, the algebra $\caA_{\Lambda} $ can be regarded as the set of all bounded operators acting on
the Hilbert space $\bigotimes_{z\in\Lambda}{\bbC}^{d}$.
We use this identification freely.
If $\Lambda_1\subset\Lambda_2$, the algebra $\caA_{\Lambda_1}$ is naturally embedded in $\caA_{\Lambda_2}$ by tensoring its elements with the identity. 
For an infinite subset $\Gamma\subset \bbZ^{2}$,
$\caA_{\Gamma}$
is given as the inductive limit of the algebras $\caA_{\Lambda}$ with $\Lambda\in{\mathfrak S}_{\Gamma}$.
We call $\caA_{\Gamma}$ the quantum spin system on $\Gamma$.
For simplicity we denote the two dimensional quantum spin system $\caA_{\bbZ^{2}}$
by $\caA$.
We also set $\caA_{\rm loc}:=\bigcup_{\Lambda\in{\mathfrak S}_{\bbZ^{2}}}\caA_{\Lambda}
$.
%We denote by $\beta_x$ the automorphisms on $\caA$ representing the space translation by  $x\in\bbZ$.
%By $Q^{(j)}$, $j\in\bbZ$, we denote the element of $\caA$ with $Q\in\Mat_d$ in the $j$-th component of the tensor product of $\caA$ and the unit in any other component.
For a subset $\Gamma_1$ of $\Gamma\subset\bbZ^{2}$,
the algebra $\caA_{\Gamma_1}$ can be regarded as a subalgebra of $\caA_{\Gamma}$.
With this identification, for $A\in\caA_{\Gamma_1}$, we occasionally use the same symbol 
$A$ to denote $A\otimes \unit_{\caA_{\Gamma\setminus\Gamma_1}}\in \caA_\Gamma$.
Similarly, an automorphism $\gamma$ on $\caA_{\Gamma_1}$ can be naturally regarded as an automorphism
$\gamma\otimes \id_{\caA_{\Gamma\setminus \Gamma_1}}$ on $\caA_{\Gamma}$.
We use this identification freely and with a slight abuse of notation
we occasionally denote $\gamma\otimes \id_{\caA_{\Gamma\setminus \Gamma_1}}$ by $\gamma$.
Similarly, for disjoint $\Gamma_-,\Gamma_+\subset\bbZ^{2}$ and $\alpha_\pm\in\Aut\caA_{\Gamma\pm}$, 
we occasionally write $\alpha_-\otimes\alpha_+$ to denote 
$\lmk \alpha_-\otimes\id_{\Gamma_-^c}\rmk
\lmk \alpha_+\otimes\id_{\Gamma_+^c}\rmk$,
under the above identification.

Throughout this paper we fix a finite group $G$ and its  unitary representation $U$ on $\bbC^{d}$.
Let $\Gamma\subset \bbZ^{2}$ be a non-empty subset.
For each $g\in G$, there exists a unique automorphism $\beta^{\Gamma}$ on $\caA_{\Gamma}$
such that 
\begin{align}\label{tgg}
\beta_{g}^{\Gamma}\lmk A\rmk=\Ad\lmk\bigotimes_{I} U(g)\rmk\lmk A\rmk,\quad A\in\caA_{I},\quad g\in G,
\end{align}
for any finite subset $I$ of $\Gamma$.
We call the group homomorphism $\beta^{\Gamma}: G\to \Aut \caA_{\Gamma}$, 
the on-site action of $G$ 
on $\caA_{\Gamma}$ given by $U$.
For simplicity, we denote $\beta^{\bbZ^{2}}_{g}$ by $\beta_{g}$.

%\subsection{SPT-phases}
A mathematical model of a quantum spin system is fully specified by its interaction $\Phi$.
A uniformly bounded interaction on $\caA$ 
is a map $\Phi: {\mathfrak S}_{\bbZ^{2}}\to \caA_{\rm loc}$ such that
\begin{align}
\Phi(X)=\Phi(X)^*\in \caA_{X},\quad X\in {\mathfrak S}_{\bbZ^{2}},
\end{align}
and 
\begin{align}
\sup_{X\in {\mathfrak S}_{\bbZ^{2}}}\lV \Phi(X)\rV<\infty.
\end{align}
It is of finite range with interaction length less than or equal to $R\in\nan$ if 
$\Phi(X)=0$ for any $X\in {\mathfrak S}_{\bbZ^{2}}$
whose diameter is larger than $R$.
An on-site interaction, i.e., an interaction with
$\Phi(X)=0$ unless $X$ consists of a single point, is said to be trivial.
An interaction $\Phi$ is $\beta$-invariant if $\beta_g(\Phi(X))=\Phi(X)$
for any $X\in {\mathfrak S}_{\bbZ^{2}}$.
%We denote by $\Phi_n$ for each $n\in\nan$
%the interaction given by
%\begin{align}
%\Phi_n(X):=\left\{
%\begin{gathered}
%\Phi(X),\quad \text{if}\quad X\subset \Lambda_n,\\
%0,\quad \text{otherwise}.
%\end{gathered}
%\right.
%\end{align}
%
For a uniformly bounded and finite range interaction $\Phi$ and $\Lambda\in {\mathfrak S}_{\bbZ^{2}}$
define the local Hamiltonian
\begin{align}
\lmk H_\Phi\rmk_\Lambda
:=\sum_{X\subset\Lambda} \Phi(X),
\end{align}
and denote the dynamics
\begin{align}
\tau^{(\Lambda)\Phi}_t (A):=e^{it\lmk H_\Phi\rmk_\Lambda}Ae^{-it\lmk H_\Phi\rmk_\Lambda},
\quad t\in \bbR,\quad A\in\caA.
\end{align}
By the uniform boundedness and finite rangeness of $\Phi$, 
 for each $A\in\caA$, the following limit exists
\begin{align}
\lim_{\Lambda\to\bbZ^{\nu}} \tau^{(\Lambda),\Phi}_t\lmk
A\rmk=:
\tau^{\Phi}_t\lmk A\rmk,\quad t\in\bbR,
\end{align}
and defines the dynamics $\tau^{\Phi}$ on $\caA$.
(See \cite{BR2}.)
%Note that $\tau_{\Phi_n}=\tau_{\Phi, \Lambda_n}$.
%We denote by $\delta_\Phi$ the generator of $\tau_{\Phi}$.
%
For a uniformly bounded and finite range interaction $\Phi$,
a state $\varphi$ on $\caA$ is called a \mbox{$\tau^{\Phi}$-ground} state
if the inequality
$
-i\,\varphi(A^*{\delta_{\Phi}}(A))\ge 0
$
holds
for any element $A$ in the domain $\caD({\delta_{\Phi}})$ of the generator ${\delta_\Phi}$.
Let $\varphi$ be a $\tau^\Phi$-ground state, with a GNS triple $(\caH_\varphi,\pi_\varphi,\Omega_\varphi)$.
Then there exists a unique positive operator $H_{\varphi,\Phi}$ on $\caH_\varphi$ such that
$e^{itH_{\varphi,\Phi}}\pi_\varphi(A)\Omega_\varphi=\pi_\varphi(\tau^t_\Phi(A))\Omega_\varphi$,
for all $A\in\caA$ and $t\in\mathbb R$.
We call this $H_{\varphi,\Phi}$ the bulk Hamiltonian associated with $\varphi$.
\begin{defn}
We say that an interaction $\Phi$ has a unique gapped ground state if 
(i)~the $\tau^\Phi$-ground state, which we denote as $\omega_{\Phi}$, is unique, and 
(ii)~there exists a $\gamma>0$ such that
$\sigma(H_{\omega_{\Phi},\Phi})\setminus\{0\}\subset [\gamma,\infty)$, where  $\sigma(H_{\omega_{\Phi},\Phi})$ is the spectrum of $H_{\omega_{\Phi},\Phi}$.
We denote by $\caP_{UG}
$ the set of all uniformly bounded finite range 
interactions, with unique gapped ground state.
We denote by $\caP_{UG\beta}$ the set of all uniformly bounded finite range 
{\it $\beta$-invariant} interactions, with unique gapped ground state.
\end{defn}
In this paper we consider a classification problem of a subset of $\caP_{UG\beta}$, models with short range entanglement.
To describe the models with short range entanglement, we need 
to explain the classification problem of unique gapped ground state phases, without symmetry.
For $\Gamma\subset\bbZ^{2}$, we denote by  $\Pi_{\Gamma}:\caA\to \caA_{\Gamma}$ the conditional expectation with respect to the trace state.
Let $f:(0,\infty)\to (0,\infty)$ be a continuous decreasing function 
with $\lim_{t\to\infty}f(t)=0$.
For each $A\in\caA$, let
\begin{align}\label{dzeta}
\lV A\rV_f:=\lV A\rV
+ \sup_{N\in \nan}\lmk\frac{\lV
A-\Pi_{\Lambda_N}
(A)
\rV}
{f(N)}
\rmk.
\end{align}
We denote by $\caD_f$ the set of all $A\in\caA$ such that
$\lV A\rV_f<\infty$.

The classification of unique gapped ground state phases $\caP_{UG}
$ without symmetry is the following.
\begin{defn}\label{classsym}
Two interactions $\Phi_0,\Phi_1\in\caP_{UG}
$ are equivalent if there is
a path of interactions
 $\Phi : [0,1]\to \caP_{UG}
$
satisfying the following:
\begin{enumerate}
\item $\Phi(0)=\Phi_0$ and $\Phi(1)=\Phi_1$.
\item
For each $X\in{\mathfrak S}_{\bbZ^2}$, the map
$[0,1]\ni s\to \Phi(X;s)\in\caA_{X}$ is  $C^1$.
We denote by $\dot{\Phi}(X;s)$ 
%or $\dot{\Phi}(X,s)$ 
the corresponding derivatives.
The interaction obtained by differentiation is denoted by $\dot\Phi(s)$, for each $s\in[0,1]$.
\item
There is a number $R\in\nan$
such that $X \in {\mathfrak S}_{\bbZ^2}$ and $\diam{X}\ge R$ imply $\Phi(X;s)=0$, for all $s\in[0,1]$.
\item
Interactions are bounded as follows
\begin{align}
C_b^{\Phi}:=\sup_{s\in[0,1]}\sup_{X\in {\mathfrak S}_{\bbZ^2}}
\lmk
\lV
\Phi\lmk X;s\rmk
\rV+|X|\lV
\dot{\Phi} \lmk X;s\rmk
\rV
\rmk<\infty.
\end{align}
\item
Setting
\begin{align}
b(\varepsilon):=\sup_{Z\in{\mathfrak S}_{\bbZ^2}}
\sup_{s,s_0 \in[0,1],0<| s-s_0|<\varepsilon}
\lV
\frac{\Phi(Z;s)-\Phi(Z;s_0)}{s-s_0}-\dot{\Phi}(Z;s_0)
\rV
\end{align}
for each $\varepsilon>0$, we have
$\lim_{\varepsilon\to 0} b(\varepsilon)=0$. 
\item
There exists a $\gamma>0$ such that
$\sigma(H_{\omega_{\Phi(s)},\Phi(s)})\setminus\{0\}\subset [\gamma,\infty)$ for
all $s\in[0,1]$, where  $\sigma(H_{\omega_{\Phi(s)},\Phi(s)})$ is the spectrum of $H_{\omega_{\Phi(s)},\Phi(s)}$.
\item
There exists an $0<\eta<1$ satisfying the following:
Set $\zeta(t):=e^{-t^{ \eta}}$.
Then for each $A\in D_\zeta$, 
$\omega_{\Phi(s)}(A)$ is differentiable with respect to $s$, and there is a constant
$C_\zeta$ such that:
\begin{align}\label{dcon}
\lv
\frac{d}{ds}\omega_{\Phi(s)}(A)
\rv
\le C_\zeta\lV A\rV_\zeta,
\end{align}
for any $A\in D_\zeta$.(Recall (\ref{dzeta})).
\end{enumerate}
We write $\Phi_0\sim\Phi_1$ if
$\Phi_0$ and $\Phi_1$ are equivalent. 
If $\Phi_0,\Phi_1\in\caP_{UG\beta}$ and
if we can take the path in $\caP_{UG\beta}$, i.e.,
 so that $\beta_g\lmk \Phi(X;s)\rmk=\Phi(X;s)$, $g\in G$ for all $s\in[0,1]$,
then we say $\Phi_0$ and $\Phi_1$ are $\beta$-equivalent
and write  $\Phi_0\sim_\beta\Phi_1$.
\end{defn}
The object we classify in this paper is the following:
\begin{defn}
Fix a trivial interaction
$\Phi_0\in \caP_{UG}
$.
We denote by $\caP_{SL\beta}$ the set of all 
$\Phi\in\caP_{UG\beta}$
such that $\Phi\sim\Phi_0$.
Connected components of 
 $\caP_{SL\beta}$ with respect to
$\sim_\beta$ are the
symmetry protected topological (SPT)-phases.
\end{defn}
Because we have $\Phi_0\sim\tilde\Phi_0$ for any trivial $\Phi_0,\tilde\Phi_{0}\in \caP_{UG}
$, the set
$\caP_{SL\beta}$ does not depend on the choice of $\Phi_0$.
%Note that here, we are requiring $\Phi\sim\Phi_{0}$, not $\Phi\sim_{\beta}\Phi_{0}$.

%Now we are ready to define symmetry protected topological phases.
%\begin{defn}
%An interaction $\Phi\in \caP_{UF\beta}$ is in a symmetry protected topological (SPT) phase
%if 
%\begin{description}
%\item[(i)]
%$\Phi\in \caP_{SL\beta}$
%and
%\item[(ii)]
%there is no trivial $\Phi_1\in  \caP_{UF\beta}$ such that
%$\Phi\sim_\beta\Phi_1$.
%\end{description}
The main result of this paper is as follows.
\begin{thm}\label{main}
There is a $H^3(G,\bbT)$-valued index on $\caP_{SL\beta}$, which is an invariant of the classification
 $\sim_\beta$ of $\caP_{SL\beta}$.
\end{thm}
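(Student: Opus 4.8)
The plan is to carry out, one dimension higher, the construction of the $H^{2}(G,\bbT)$-index for one-dimensional SPT phases from \cite{ogata}. In that argument one restricts a short-range entangled $\beta$-invariant ground state to a half-line, uses the split property to realize the truncated symmetry on the half-line algebra by a \emph{unitary} $W_g$ on the (type~I) GNS space of the restricted state, and reads off the projective phase $W_gW_hW_{gh}^{-1}\in\bbT$, a $2$-cocycle. In two dimensions the edge of a half-plane is one-dimensional rather than a point, so the truncated symmetry is implemented not by a unitary but by an \emph{automorphism} of an edge algebra, well-defined only up to inner automorphisms; the ``projective phase'' is correspondingly replaced by near-a-point unitaries, and their own failure to satisfy a cocycle relation is a $\bbT$-valued $3$-cocycle. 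This degree shift is the whole point, and making it rigorous is the content of the proof.

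Concretely, fix $\Phi\in\caP_{SL\beta}$ and let $\omega:=\omega_\Phi$ be its unique gapped ground state. Since $\Phi\sim\Phi_0$ with $\Phi_0$ trivial, the gap and differentiability hypotheses in Definition \ref{classsym} allow one to apply quasi-adiabatic continuation along the interpolating path to produce an automorphism $\alpha$ of $\caA$ with good Lieb--Robinson-type locality such that $\omega=\omega_0\circ\alpha$ for a product state $\omega_0=\bigotimes_{z}\rho_z$. The real work of this step is to convert the locality of $\alpha$ into a \emph{factorization} across the right half-plane $H_R:=\{(x,y)\in\bbZ^2:x\ge0\}$ with edge line $L$: one produces $\alpha_R\in\Aut\caA_{H_R}$, $\alpha_L\in\Aut\caA_{H_L}$ and a unitary $u$ localized in a thickening of $L$ so that $\alpha$ and $\Ad(u)\circ(\alpha_R\otimes\alpha_L)$ agree, on $\caA_{H_R}$ and on $\caA_{H_L}$ separately, up to controlled errors. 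In particular $\omega$ then has the split property across $L$ — the two-dimensional analogue of the split property of one-dimensional SRE states.

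Next, using that $\beta$ is on-site, so $\beta_g=\beta_g^{H_R}\otimes\beta_g^{H_L}$, I would conjugate the $H_R$-factor through the factorization of $\alpha$ and exploit invariance of the resulting automorphism of $\omega_0$ away from $L$ to extract, for each $g\in G$, an automorphism $\gamma_g$ supported in a neighborhood of $L$ which implements $\beta_g^{H_R}$ modulo the bulk of $H_R$ (i.e.\ agrees with $\beta_g^{H_R}$ on $\caA_{H_R}$ up to an automorphism trivial deep inside $H_R$). Repeating the cutting procedure \emph{within} the edge — splitting $L$ into an upper ray $L_+$ and a lower ray $L_-$ meeting at the origin $o$ — gives $\gamma_g\sim\gamma_g^{+}\otimes\gamma_g^{-}$ modulo an automorphism near $o$, and then the near-$o$ automorphism $\gamma_g^{+}\circ\gamma_h^{+}\circ(\gamma_{gh}^{+})^{-1}$ is inner, implemented by a unitary $v(g,h)$ localized near $o$, so $\gamma_g^{+}\circ\gamma_h^{+}=\Ad\bigl(v(g,h)\bigr)\circ\gamma_{gh}^{+}$. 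Associativity $\gamma_g^{+}\circ(\gamma_h^{+}\circ\gamma_k^{+})=(\gamma_g^{+}\circ\gamma_h^{+})\circ\gamma_k^{+}$ then forces
\[
v(g,h)\,v(gh,k)=c(g,h,k)\,\gamma_g^{+}\!\bigl(v(h,k)\bigr)\,v(g,hk),\qquad g,h,k\in G,
\]
for a unique $c(g,h,k)\in\bbT$, since the two sides induce the same inner automorphism and the ambient algebra is a factor, so they differ by a scalar. A computation with four group elements shows $c$ satisfies the $3$-cocycle identity, and I define the index to be its class $[c]\in H^{3}(G,\bbT)$.

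It then remains to show $[c]$ is well-defined and is a $\sim_\beta$-invariant. Any two admissible choices in the construction — of $\alpha$ and its cut, of the edge implementers $\gamma_g$, of the ray factors $\gamma_g^{\pm}$, of the $v(g,h)$ — differ by data that is inner, or supported near $o$, or close to the identity outside the relevant region; pushing these differences through the definition of $c$ changes it by an explicit $2$-coboundary, so $[c]$ depends only on $\omega$, hence only on $\Phi$. If $\Phi_0\sim_\beta\Phi_1$, the $\beta$-preserving path produces, again by quasi-adiabatic continuation, a $\beta$-commuting automorphism intertwining $\omega_{\Phi_0}$ and $\omega_{\Phi_1}$; transporting through it the whole factorization and edge-symmetry package shows the associated cocycles differ by a coboundary, which gives the $\sim_\beta$-invariant claimed in Theorem \ref{main}. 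The main obstacle throughout is the analysis in the second and third paragraphs: unlike the one-dimensional case the edge is extended, automorphisms do not split exactly across lines, and one must control quasi-locality tails and relative commutants carefully merely to \emph{define} $\gamma_g$, $\gamma_g^{\pm}$ and $v(g,h)$ and to see that $\gamma_g^{+}\gamma_h^{+}(\gamma_{gh}^{+})^{-1}$ is genuinely localized at the point $o$; once those are in place, the cocycle identity, the coboundary computations, and $\sim_\beta$-invariance are essentially cohomological bookkeeping.
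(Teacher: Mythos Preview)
Your strategy is the paper's, up to a $90^\circ$ rotation: you truncate $\beta$ to the right half-plane and split the vertical edge into upper and lower rays; the paper truncates to the upper half-plane $H_U$ and splits the resulting horizontal edge into left and right cone pieces $\eta_g^L\otimes\eta_g^R$. The heuristic --- edge automorphisms, a second cut at a point, and the $3$-cocycle as the associativity defect of the resulting near-a-point unitaries --- is exactly right, and your well-definedness and invariance sketches match Sections~\ref{h3index} and~\ref{stabilitysec}.

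The substantive gap is your assertion that $\gamma_g^{+}\gamma_h^{+}(\gamma_{gh}^{+})^{-1}$ is \emph{inner in the algebra}, with $v(g,h)\in\caU(\caA)$ localized near $o$. Nothing you have set up forces this: the composite is supported in an infinite cone at $o$, and an automorphism of a UHF algebra supported in such a region need not be inner. The paper does \emph{not} produce $v(g,h)\in\caA$; it works in the GNS representation of the reference product state. One first constructs (Theorem~\ref{defindexspt}, via Lemmas~\ref{splitlem5}--\ref{lem11s}) automorphisms $\tilde\beta_g$ preserving $\omega$ with a factorized edge part $\eta_g=\eta_g^L\otimes\eta_g^R$; state-invariance yields implementing unitaries $W_g\in\caU(\caH_0)$, and the tensor splitting of $\pi_0$ forces $W_gW_hW_{gh}^*$ to factor as $u_L(g,h)\otimes u_R(g,h)$ with $u_\sigma(g,h)\in\caU(\caH_\sigma)$ (Lemma~\ref{ichi}). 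The cocycle relation is
\[
\unit_{\caH_L}\otimes u_R(g,h)\,u_R(gh,k)
= c_R(g,h,k)\,\bigl(W_g(\unit\otimes u_R(h,k))W_g^*\bigr)\,(\unit\otimes u_R(g,hk)),
\]
with $\Ad W_g$ in place of your $\gamma_g^{+}(\,\cdot\,)$, and one needs a separate argument (Lemma~\ref{lemni}(iii)) even to see that the conjugate lands back in $\bbC\unit\otimes\caB(\caH_R)$. Your algebra-level picture is precisely the special case of Section~\ref{autocasesec}: when $d^0_{H_U}\alpha$ happens to factorize exactly as in~(\ref{autofac}), Lemma~\ref{bimp} does force the half-edge composite to be inner in $\caA$ and your $v(g,h)$ exists --- but that is an extra hypothesis, not a consequence of $\Phi\in\caP_{SL\beta}$.

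A smaller point in the same vein: obtaining the split $\gamma_g\sim\gamma_g^{+}\otimes\gamma_g^{-}$ together with a state-preserving $\tilde\beta_g$ is not just ``repeating the cutting procedure''. It requires the cone-wise factorization of the quasi-adiabatic automorphism ($\alpha\in\sqaut(\caA)$, Theorem~\ref{thm:quasiauto}) combined with a genuinely operator-algebraic splitting argument (Lemma~\ref{lem11s}) that uses $\beta$-invariance of the \emph{full} state $\omega$ to manufacture the left/right edge factors; quasi-locality of $\alpha$ alone is not enough.
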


This paper is organized as follows.
In section
 \ref{h3index}, we define the $H^{3}(G,\bbT)$-valued index
for a class of states which are created from a fixed product state
via ``factorizable'' automorphisms, and  satisfying some additional condition.
This additional condition is the existence of the set of 
automorphisms 1.which do not move the state, and 2. are almost like 
$\beta$-action restricted to the upper half-plane except for some one-dimensional
perturbation.
In section
 \ref{tildebetasec}, we show that the existence of such set of
automorphisms are guaranteed in suitable situation.
Furthermore, in section
 \ref{stabilitysec}, we  show the stability of the index, i.e.,
suitably $\beta$-invariant automorphism does not
change this index.
Finally in section
 \ref{mainthproofsec}, we show our main Theorem
Theorem \ref{main}, showing that in our setting of Theorem \ref{main},
all the conditions required in section
  \ref{h3index}, \ref{tildebetasec}, \ref{stabilitysec}
are satisfied.
Although the index is defined in terms of GNS representations,
in some good situation,
we can calculate it without going through
GNS representation. This is shown in section
 \ref{autocasesec}.

The present result and the main idea of the proof were announced publicly on 15 December 2020 at IAMP One World Mathematical Physics Seminar (see you-tube video)\cite{IAMP}.
It was also presented in the international meeting {\it Theoretical studies of topological phases of matter}
on 17 December 2020, and
 Current Developments in Mathematics 4th January 2021 via zoom with a lecture note.
 \cite{IAMP}.

\section{The $H^{3}(G,\bbT)$-valued index in $2$-dimensional systems}\label{h3index}
In this section, we associate an $H^{3}(G,\bbT)$-index for some class of states.
It will turn out later that this class includes SPT phases.
\subsection{Definitions and the setting}\label{settingsubsec}
For $0<\theta<\frac\pi 2$, a cone $C_\theta$ is defined by
\begin{align}
C_\theta:=
\left\{
(x,y)\in\bbZ^2\mid
|y|\le \tan \theta\cdot |x|
\right\}.
\end{align}
%We also set $C_{\frac\pi 2}:=\bbZ^2$.
For $0<\theta_1<\theta_2\le \frac \pi 2$, we use a notation
$\caC_{(\theta_1,\theta_2]}:=C_{\theta_2}\setminus C_{\theta_1}$ and
$\caC_{[0,\theta_1]}:=C_{\theta_1}$.
%We define $\caC_{(\theta_{1},\theta_{2})}$ etc. analogously.
Left, right, upper, lower half planes are denoted by $H_L$, $H_R$,
$H_U$, $H_D$, i.e.,
\begin{align}
H_L:=\left\{ (x,y)\in\bbZ^2\mid x\le -1\right\},\quad
H_R:=\left\{ (x,y)\in\bbZ^2\mid 0\le x\right\},\\
H_U:=\left\{ (x,y)\in\bbZ^2\mid 0\le y\right\},\quad
H_D:=\left\{ (x,y)\in\bbZ^2\mid y\le -1\right\}.
\end{align}
We use a notation $\beta_g:=\beta_g^{\bbZ^2}$, $\beta_g^U:=\beta_g^{H_U}$,
$\beta_g^{RU}:=\beta_g^{H_R\cap H_U}$, $\beta_g^{LU}:=\beta_g^{H_L\cap H_U}$.

For each subset $S$ of $\bbZ^2$, we set
\begin{align}
S_\sigma:=S\cap H_\sigma,\quad S_\zeta:=S\cap H_\zeta,\quad
S_{\sigma\zeta}:=S\cap H_\sigma\cap H_\zeta
\quad \sigma=L,R,\quad\zeta=U,D.
\end{align}
We ocationally write $\caA_{S,\sigma}$, $\caA_{S,\zeta}$, $\caA_{S,\sigma,\zeta}$
to denote $\caA_{S_{\sigma}}$, $\caA_{S_{\zeta}}$, $\caA_{S_{\sigma\zeta}}$.
For an automorphism $\alpha$ on $\caA$ and $0<\theta<\frac{\pi}2$,
we denote by ${\mathfrak D}^{\theta}_\alpha$ a set of all triples $(\alpha_L,\alpha_R,\Theta)$
with 
\begin{align}
\alpha_L\in \Aut\lmk \caA_{H_L}\rmk,\quad \alpha_R\in \Aut\lmk \caA_{H_R}\rmk,\quad
\Theta\in\Aut\lmk\caA_{\lmk C_\theta\rmk^c}\rmk
\end{align}
decomposing $\alpha$ as
\begin{align}
\alpha=\inn\circ\lmk\alpha_L\otimes\alpha_R\rmk\circ\Theta.
\end{align}
For $(\alpha_L,\alpha_R,\Theta)\in {\mathfrak D}^{(\theta)}_\alpha$,
we set
\begin{align}\label{azdef}
\alpha_0:=\alpha_L\otimes\alpha_R.
\end{align}
The class of automorphisms
which allow such decompositions for any directions are denoted by
\begin{align}
\qaut\lmk \caA\rmk
:=\left\{
\alpha\in \Aut(\caA)\mid\;
{\mathfrak D}^{\theta}_\alpha\neq\emptyset \;\;\text{for all}\;
0<\theta<\frac\pi 2
\right\}.
\end{align}
Furthermore, for each\begin{align}\label{thetas1}
0<\theta_{0.8}<\theta_1<\theta_{1.2}<\theta_{1.8}<\theta_2<\theta_{2.2}<
\theta_{2.8}<\theta_3<\theta_{3.2}<\frac\pi 2,
\end{align}
we consider decompositions of $\alpha\in \Aut(\caA)$ such that
\begin{align}\label{sqaut}
&\alpha=\inn\circ\lmk
\alpha_{[0,\theta_1]}\otimes\alpha_{(\theta_1,\theta_2]}
\otimes \alpha_{(\theta_2,\theta_3]}\otimes
\alpha_{(\theta_3,\frac\pi 2]}
\rmk
\circ
\lmk
\alpha_{(\theta_{0.8}, \theta_{1.2}]}\otimes
\alpha_{(\theta_{1.8},\theta_{2.2}]}
\otimes \alpha_{(\theta_{2.8},\theta_{3.2}]}
\rmk
\end{align}
with  \begin{align}\label{sqaut2}
  \begin{split}
&  \alpha_X:=\bigotimes_{\sigma=L,R,\zeta=D,U} \alpha_{X,\sigma,\zeta},\quad
 \alpha_{[0,\theta_1]}:=\bigotimes_{\sigma=L,R}\alpha_{[0,\theta_{1}],\sigma},\quad
 \alpha_{(\theta_3,\frac\pi 2]}:=\bigotimes_{\zeta=D,U}  \alpha_{(\theta_3,\frac\pi 2],\zeta}\\
 &\alpha_{X,\sigma,\zeta}\in \Aut\lmk\caA_{C_{X,\sigma,\zeta}}\rmk,\quad
 \alpha_{X,\sigma}:=\bigotimes_{\zeta=U,D}\alpha_{X,\sigma,\zeta},\quad
\alpha_{X,\zeta}:=\bigotimes_{\sigma=L,R}\alpha_{X,\sigma,\zeta}\\
&\alpha_{[0,\theta_{1}],\sigma}\in \Aut\lmk\caA_{C_{[0,\theta_{0}],\sigma}}\rmk,\quad
 \alpha_{(\theta_3,\frac\pi 2],\zeta}\in \Aut\lmk\caA_{C_{(\theta_3,\frac\pi 2],\zeta}}\rmk, 
  \end{split} 
  \end{align}
 for
 \begin{align}\label{sqaut3}
 X=(\theta_1,\theta_2], (\theta_2,\theta_3],
% (\theta_{-0.2},\theta_{0.2}],
 (\theta_{0.8},\theta_{1.2}],
 (\theta_{1.8},\theta_{2.2}], 
(\theta_{2.8},\theta_{3.2}],\quad \sigma=L,R,\quad \zeta=D,U.
 \end{align}

%
%with
%\begin{align}\label{sqaut2}
%&\alpha_{I}\in \Aut\lmk\caA_{\caC_I\cap \Gamma}\rmk,\notag\\
%&\alpha_{Y,D}\in \Aut\lmk\caA_{(\caC_Y)_D\cap\Gamma}\rmk,\quad \alpha_{Y,U}\in \Aut\lmk\caA_{(\caC_Y)_U\cap\Gamma}\rmk,\notag\\
%&\alpha_{X,L}\in \Aut\lmk\caA_{(\caC_X)_L\cap\Gamma}\rmk,\quad \alpha_{X,R}\in \Aut\lmk\caA_{(\caC_X)_R\cap\Gamma}\rmk,
%\end{align}
%for 
%\begin{align}\label{sqaut3}
%\begin{split}
%&I=[0,\theta_1],(\theta_1,\theta_2], (\theta_2,\theta_3], \left(\theta_3,\frac\pi 2\right],
%(\theta_{0.8}, \theta_{1.2}], (\theta_{1.8},\theta_{2.2}], (\theta_{2.8},\theta_{3.2}],\\
%&Y=(\theta_1,\theta_2], (\theta_2,\theta_3], \left(\theta_3,\frac\pi 2\right],
%(\theta_{0.8}, \theta_{1.2}], (\theta_{1.8},\theta_{2.2}], (\theta_{2.8},\theta_{3.2}],\\
%&X=[0,\theta_1], (\theta_1,\theta_2], \theta_2,\theta_3], (\theta_{0.8}, \theta_{1.2}], \theta_{1.8},\theta_{2.2}], (\theta_{2.8},\theta_{3.2}].
%\end{split}
%\end{align}
The class of automorphisms on $\caA$ 
which allow such decompositions for any directions
$\theta_{0.8}, \theta_1, \theta_{1.2}$, $\theta_{1.8},\theta_2,\theta_{2.2},
\theta_{2.8},\theta_3,\theta_{3.2}$
(satisfying (\ref{thetas1})) is denoted by $\sqaut(\caA)$.
Note that $\sqaut(\caA)\subset\qaut(\caA)$.
The set of all $\alpha\in \sqaut(\caA)$ with
each of 
$\alpha_{I}$ in the decompositions required to commute with $\beta_g^{U}$, $g\in G$,
is denoted by $\gsqaut(\caA)$
\begin{align}
\gsqaut(\caA):=\left\{
\alpha\in \sqaut(\caA)\middle|\begin{gathered}
\text{for any}\; \theta_{0.8}, \theta_1, \theta_{1.2},\theta_{1.8},\theta_2,\theta_{2.2},
\theta_{2.8},\theta_3,\theta_{3.2}\quad\text{satisfying (\ref{thetas1})}\\\text{there is a decomposition }(\ref{sqaut}), (\ref{sqaut2}), (\ref{sqaut3})\;
\text{satisfying}\\
\alpha_{I}\circ\beta_g^{U}=\beta_g^{U}\circ\alpha_{I},\quad g\in G,\\
\text{for all}\;\;I=[0,\theta_1],(\theta_1,\theta_2], (\theta_2,\theta_3], \left(\theta_3,\frac\pi 2\right],
(\theta_{0.8}, \theta_{1.2}], (\theta_{1.8},\theta_{2.2}], (\theta_{2.8},\theta_{3.2}]
\end{gathered}
\right\}.
\end{align}
We also define
\begin{align}
\haut\lmk \caA\rmk:=
\left\{\alpha\in\Aut(\caA)\middle|
\begin{gathered}
\text{for any}\;\; 0<\theta<\frac\pi 2,\;\;
\text{there exist}\;\;\alpha_{\sigma}\in \Aut\lmk \caA_{{\lmk C_{\theta}\rmk_\sigma}}\rmk, \sigma=L,R\\
\text{such that}\;\; \alpha=\inn\circ\lmk \alpha_{L}\otimes \alpha_{R}\rmk
\end{gathered}
\right\}.
\end{align}
In section \ref{mainthproofsec}, we will see that
quasi-local automorphisms corresponding to paths in symmetric gapped phases
belong to the following set:
\begin{align}
\guaut\lmk\caA\rmk
:=\left\{
\gamma\in\Aut\lmk\caA\rmk\middle|\begin{gathered}
\text{there are} \;\; \gamma_{H}\in \haut(\caA), 
%\gamma_{U}\in \gsqaut(\caA_{H_{U}}),
\gamma_{C}\in \gsqaut(\caA)\\
\text{such that}\;\;
\gamma=\gamma_{C}\circ\gamma_{H}
\end{gathered}
\right\}.
\end{align}

We fix a reference state $\omega_0$ as follows.
We fix a unit vector  ${\xi_x}\in\bbC^d$ and let $\rho_{\xi_x}$ be the vector state on
$\Mat_d$ given by ${\xi_x}$, for each $x\in\bbZ^2$.
Then our reference state $\omega_0$ is given by
 \begin{align}\label{ozs}
 \omega_0:=\bigotimes_{\bbZ^2} \rho_{\xi_x}.
 \end{align}
Throughout this section this $\omega_0$ is fixed.
Let $(\caH_0,\pi_0,\Omega_0)$ be a GNS triple of $\omega_0$.
Because of the product structure of $\omega_0$,
it is decomposed as
\begin{align}
\caH_0=\caH_L\otimes\caH_R,\quad \pi_0=\pi_L\otimes\pi_R,\quad
\Omega_0=\Omega_L\otimes\Omega_R,
\end{align}
where $(\caH_\sigma,\pi_\sigma,\Omega_\sigma)$ is a GNS triple of
$\omega_\sigma:=\omega_0\vert_{\caA_{H_\sigma}}$ for $\sigma=L,R$.
As $\omega_0\vert_{\caA_{H_\sigma}}$ is pure, $\pi_\sigma$ is irreducible.
What we consider in this section is the set of states
created  via elements in $\qaut(\caA)$ from our reference state $\omega_0$:
\begin{align}
\QLS:=\left\{\omega_0\circ\alpha\mid \alpha\in \qaut(\caA)
\right\}.
\end{align}
Because any pure product states can be transformed to each other via 
an automorphism of product form $\tilde\alpha=\bigotimes_{x\in \bbZ^{2}}\tilde\alpha_{x}$
and $\tilde\alpha\alpha$ belongs to $\qaut(\caA)$ for any $\alpha\in\qaut(\caA)$,
 $\QLS$ does not depend on the choice of $\omega_0$.
For each $\omega\in \QLS$, we set
\begin{align}
\eaut(\omega):=
\left\{
\alpha\in \qaut(\caA)\mid \omega=\omega_0\circ\alpha
\right\}.
\end{align}
By the definition of $\QLS$, $\eaut(\omega)$ is not empty.

For $0<\theta<\frac\pi 2$
and a set of automorphisms
$(\tilde\beta_g)_{g\in G}\subset \Aut({\caA})$,
we introduce a set
\begin{align}
\caT(\theta, (\tilde\beta_g))
:=\left\{
(\eta_{g}^\sigma)_{g\in G, \sigma=L,R
}
\middle|
\begin{gathered}
\eta_g^\sigma\in\Aut\lmk \caA_{\lmk C_\theta\rmk_\sigma}\rmk,
%\quadg\in G,\sigma=L,R
\\
\tilde\beta_g=\inn\circ\lmk \eta_g^L\otimes\eta_g^R\rmk\circ\beta_g^U,\\
\text{for all}\;g\in G,\; \sigma=L,R
\end{gathered}
\right\}.
\end{align}
In a word, it is a set of decompositions of $\tilde\beta_g\circ\lmk \beta_g^{U}\rmk^{-1}$
into tensor of $\Aut\lmk \caA_{\lmk C_\theta\rmk_L}\rmk$,
$\Aut\lmk \caA_{\lmk C_\theta\rmk_R}\rmk$ modulo inner automorphisms.
For $(\eta_{g}^\sigma)_{g\in G, \sigma=L,R
}\in \caT(\theta, (\tilde\beta_g))$,
we set
\begin{align}\label{etadef}
\eta_g:=\eta_g^L\otimes\eta_g^R,\quad g\in G.
\end{align}
The following set of automorphisms is the key ingredient for the definition of our index.
For $\omega\in \QLS$ and $0<\theta<\frac\pi 2$, we set
\begin{align}
\IG\lmk\omega,\theta\rmk
:=\left\{
(\tilde\beta_g)_{g\in G}\in\Aut\lmk \caA\rmk^{\times G}
\middle|
\begin{gathered}
\omega\circ\tilde\beta_g=\omega \quad\text{for all}\quad g\in G,\\
\text{and}\quad\caT(\theta, (\tilde\beta_g))\neq\emptyset\end{gathered}
\right\}.
\end{align}
We also set
\begin{align}
\IG\lmk\omega\rmk:=\cup_{0<\theta<\frac \pi 2}\IG\lmk\omega,\theta\rmk.
\end{align}
In this section we associate some third cohomology  $h(\omega)$
for each $\omega\in\QLS$ with $\IG(\omega)\neq\emptyset$.
\subsection{Derivation of elements in $Z^3(G,\bbT)$}
In this subsection, we derive $3$-cocycles out of $\omega$, $\alpha$, $\theta$,
$(\tilde\beta_g)$, $(\eta_{g}^\sigma)$
$(\alpha_L,\alpha_R,\Theta)$.
 
\begin{lem}\label{ichi}
Let $\omega\in\QLS$, $\alpha\in \eaut(\omega)$, $0<\theta<\frac\pi 2$,
$(\tilde\beta_g)\in \IG\lmk\omega,\theta\rmk$, $(\eta_{g}^\sigma)\in \caT(\theta, (\tilde\beta_g))$,
$(\alpha_L,\alpha_R,\Theta)\in{\mathfrak D}^{\theta}_\alpha$.
Then 
\begin{description}
\item[(i)]
There are unitaries $W_g$, $g\in G$ on $\caH_0$
such that
\begin{align}\label{wimp}
\Ad\lmk W_g\rmk\circ\pi_0
=\pi_0\circ\alpha_0\circ\Theta\circ\eta_g\beta_g^U\circ\Theta^{-1}\circ\alpha_0^{-1},\quad
g\in G
\end{align}
with notation (\ref{azdef}), (\ref{etadef}).
\item[(ii)]
There exist a unitary $u_\sigma(g,h)$ on $\caH_{\sigma}$, for each $\sigma=L,R$ ,$g,h\in G$, 
such that
\begin{align}\label{usig}
\Ad\lmk u_\sigma(g,h)\rmk\circ\pi_\sigma
=\pi_\sigma\circ\alpha_\sigma\circ\eta_g^\sigma\beta_g^{\sigma U}
\eta_h^\sigma\lmk\beta_g^{\sigma U}\rmk^{-1}\lmk \eta_{gh}^\sigma\rmk^{-1}
\circ\alpha_\sigma^{-1},
\end{align}
and
\begin{align}\label{usigt}
\Ad\lmk u_L(g,h)\otimes u_R(g,h)\rmk\pi_0
=\pi_0\circ\alpha_0\circ\eta_g\beta_g^U\eta_h\lmk \beta_g^U\rmk^{-1}\lmk\eta_{gh}\rmk^{-1}
\circ\alpha_0^{-1}
.
\end{align}
Furthermore, $u_{\sigma}(g,h)$
commutes with any element of 
$ \pi_\sigma\circ\alpha_\sigma\lmk \caA_{\lmk \lmk C_\theta\rmk^c\rmk_\sigma}\rmk$.
\end{description}
\end{lem}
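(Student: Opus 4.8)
The plan is to construct all the unitaries by hand using irreducibility of the relevant GNS representations, so that the cocycle identities and the commutation statements become formal consequences of how the automorphisms are built out of pieces living on $H_L$, $H_R$, and the upper half-plane.

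\textbf{Part (i).} First I would observe that the automorphism $\gamma_g := \alpha_0\circ\Theta\circ\eta_g\beta_g^U\circ\Theta^{-1}\circ\alpha_0^{-1}$ fixes the state $\omega_0$. Indeed, $\omega = \omega_0\circ\alpha = \omega_0\circ\inn\circ(\alpha_L\otimes\alpha_R)\circ\Theta$, so $\omega_0\circ\alpha_0\circ\Theta$ agrees with $\omega$ up to an inner automorphism (which does not affect the state since $\inn$ is inner); and $\tilde\beta_g = \inn\circ(\eta_g^L\otimes\eta_g^R)\circ\beta_g^U = \inn\circ\eta_g\circ\beta_g^U$ fixes $\omega$ by the definition of $\IG(\omega,\theta)$. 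Composing, $\omega_0\circ\gamma_g = \omega_0$ up to inner automorphisms at each stage, hence $\omega_0\circ\gamma_g = \omega_0$. Since $\pi_0$ is the GNS representation of $\omega_0$ and $\omega_0\circ\gamma_g = \omega_0$, the representations $\pi_0$ and $\pi_0\circ\gamma_g$ are unitarily equivalent via a unitary $W_g$ on $\caH_0$ with $W_g\Omega_0 = \Omega_0$; this $W_g$ satisfies (\ref{wimp}). (One should note that $\gamma_g$ need not be implemented "spatially" a priori — the point is precisely that it preserves $\omega_0$.)

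\textbf{Part (ii).} For each $\sigma = L,R$, set $\mu_\sigma(g,h) := \alpha_\sigma\circ\eta_g^\sigma\beta_g^{\sigma U}\eta_h^\sigma(\beta_g^{\sigma U})^{-1}(\eta_{gh}^\sigma)^{-1}\circ\alpha_\sigma^{-1}$, an automorphism of $\caA_{(C_\theta)_\sigma}$. I would check that $\omega_\sigma\circ\mu_\sigma(g,h) = \omega_\sigma$: using $\tilde\beta_g = \inn\circ\eta_g\circ\beta_g^U$ and the cocycle bookkeeping $\tilde\beta_g\tilde\beta_h(\tilde\beta_{gh})^{-1} = \inn$, one gets $\eta_g\beta_g^U\eta_h(\beta_g^U)^{-1}\eta_{gh}^{-1} = \inn$ modulo the $\beta^U$-twist, and restricting to $\caA_{H_\sigma}$ and conjugating by $\alpha_\sigma$ shows $\mu_\sigma(g,h)$ is, after tensoring $\sigma = L$ with $\sigma = R$, inner; in particular it preserves the pure product state $\omega_\sigma$. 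Since $\pi_\sigma$ is irreducible, $\pi_\sigma\circ\mu_\sigma(g,h)$ is unitarily equivalent to $\pi_\sigma$ via a unitary $u_\sigma(g,h)$ on $\caH_\sigma$, giving (\ref{usig}). For (\ref{usigt}), the fact that $\eta_g\beta_g^U\eta_h(\beta_g^U)^{-1}\eta_{gh}^{-1}$ on all of $\caA$ is inner (being $\beta_g^U\circ[\tilde\beta_g^{-1}\tilde\beta_g\cdots]$ ... i.e. differing from the product $\mu_L\otimes\mu_R$ only by an inner automorphism coming from the $\inn$ factors in the $\tilde\beta$'s) means $\Ad(u_L(g,h)\otimes u_R(g,h))\circ\pi_0$ and $\pi_0\circ\alpha_0\circ(\text{that automorphism})\circ\alpha_0^{-1}$ are both implementations of the same inner automorphism, so they agree. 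Finally, the commutation claim: $\mu_\sigma(g,h)$ restricted to $\caA_{((C_\theta)^c)_\sigma}$ is the identity, because all of $\eta_g^\sigma$, $\eta_h^\sigma$, $\eta_{gh}^\sigma$ act on $\caA_{(C_\theta)_\sigma}$ and $\beta_g^{\sigma U}$ commutes with everything supported in the complementary region in the obvious way; hence $\Ad(u_\sigma(g,h))$ acts trivially on $\pi_\sigma\circ\alpha_\sigma(\caA_{((C_\theta)^c)_\sigma})$, which by irreducibility forces $u_\sigma(g,h)$ to commute with that algebra.

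\textbf{Main obstacle.} The routine-looking but genuinely delicate point is the bookkeeping that turns the three-fold composition $\eta_g\beta_g^U\eta_h(\beta_g^U)^{-1}\eta_{gh}^{-1}$ into something that is literally \emph{inner} on $\caA$ (not merely state-preserving), and the careful tracking of where the $\beta_g^U$-conjugations move the supports of the $\eta_h^\sigma$'s — one must verify $\beta_g^{\sigma U}\eta_h^\sigma(\beta_g^{\sigma U})^{-1}$ is still an automorphism of $\caA_{(C_\theta)_\sigma}$ (it is, since $\beta_g^U$ is on-site and preserves $(C_\theta)_\sigma$), so that all the pieces genuinely live in the half-cone algebras and the factorization $u_L\otimes u_R$ on $\caH_0 = \caH_L\otimes\caH_R$ makes sense. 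The rest is a standard ``state-preserving automorphism lifts to a vector-fixing unitary in an irreducible GNS representation'' argument applied three times.
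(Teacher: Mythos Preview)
Your Part~(i) is essentially correct in spirit but sloppy in detail: inner automorphisms do \emph{not} preserve states, so ``$\omega_0\circ\gamma_g = \omega_0$ up to inner automorphisms at each stage, hence $\omega_0\circ\gamma_g = \omega_0$'' is false as stated. What you actually get is that $\omega_0\circ\gamma_g$ is \emph{quasi-equivalent} to $\omega_0$ (both being pure states differing by inner perturbations), which is enough to produce $W_g$; the extra claim $W_g\Omega_0 = \Omega_0$ is wrong and unnecessary. This is easily repaired.

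Part~(ii) has a genuine gap. You invoke ``the cocycle bookkeeping $\tilde\beta_g\tilde\beta_h(\tilde\beta_{gh})^{-1} = \inn$'', but nothing in the definition of $\IG(\omega,\theta)$ says $(\tilde\beta_g)_{g\in G}$ is a group homomorphism, a projective representation, or satisfies any multiplicative relation whatsoever --- it is merely a family of $\omega$-preserving automorphisms, each individually of the form $\inn\circ\eta_g\beta_g^U$. So $\eta_g\beta_g^U\eta_h(\beta_g^U)^{-1}\eta_{gh}^{-1}$ need not be inner, and your argument for the existence of $u_\sigma(g,h)$ collapses. (Separately, ``inner $\Rightarrow$ preserves $\omega_\sigma$'' is also false; inner only gives unitary equivalence of states.)

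The paper avoids this entirely: it never claims innerness. Instead it observes that $W_gW_hW_{gh}^*$, built from Part~(i), already implements $\pi_0\circ\alpha_0\circ\bigl(\eta_g\beta_g^U\eta_h(\beta_g^U)^{-1}\eta_{gh}^{-1}\bigr)\circ\alpha_0^{-1}$ (the $\Theta$ drops out because the middle automorphism is supported in $C_\theta$). Since this automorphism factors as a tensor product over $H_L$ and $H_R$, irreducibility of $\pi_R$ forces $\Ad(W_gW_hW_{gh}^*)$ to restrict to an automorphism of $\caB(\caH_R)$, and Wigner's theorem then furnishes $u_R(g,h)$. The implementability comes from the $W_g$'s, not from any algebraic cocycle identity for $\tilde\beta$. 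Your ``main obstacle'' paragraph correctly flags the delicate point but resolves it with an assumption that is not available.
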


\begin{defn}
For $\omega\in\QLS$, $\alpha\in \eaut(\omega)$, $0<\theta<\frac\pi 2$,
$(\tilde\beta_g)\in \IG\lmk\omega,\theta\rmk$, $(\eta_{g}^\sigma)_{g\in G, \sigma=L,R
}\in \caT(\theta, (\tilde\beta_g))$,
$(\alpha_L,\alpha_R,\Theta)\in{\mathfrak D}^{\theta}_\alpha$,
we denote by
\begin{align}
\IP\lmk
\omega, \alpha, \theta,
(\tilde\beta_g), (\eta_{g}^\sigma),
(\alpha_L,\alpha_R,\Theta)
\rmk
\end{align}
the set of $\lmk (W_g)_{g\in G}, (u_\sigma(g,h))_{g,h\in G,\sigma=L,R}\rmk$ with
$W_{g}\in \caU\lmk\caH_{0}\rmk$ and $u_\sigma(g,h)\in\caU\lmk\caH_{\sigma}\rmk$
satisfying
\begin{align}
&\Ad\lmk W_g\rmk\circ\pi_0
=\pi_0\circ\alpha_0\circ\Theta\circ\eta_g\beta_g^U\circ\Theta^{-1}\circ\alpha_0^{-1},\; 
g\in G,\quad \text{and}\label{ipdef}\\
&\Ad\lmk u_{\sigma}(g,h)\rmk\circ\pi_{\sigma}
=\pi_{\sigma}\circ\alpha_{\sigma}\circ\eta_g^{\sigma}\beta_g^{{\sigma} U}
\eta_h^{\sigma}\lmk\beta_g^{{\sigma} U}\rmk^{-1}\lmk \eta_{gh}^{\sigma}\rmk^{-1}
\circ\alpha_{\sigma}^{-1},\quad g,h\in G, \sigma=L,R.\label{ipdef2}
\end{align}
(Here we used notation (\ref{azdef}) and (\ref{etadef}).)
By Lemma \ref{ichi}, it is non-empty.
\end{defn}
\begin{proof}
For a GNS triple $(\caH_0,\pi_0\circ\alpha,\Omega_0)$ of $\omega=\omega_0\circ\alpha$
there are unitaries $\tilde W_g$ on $\caH_0$ such that
\begin{align}
\Ad\lmk\tilde W_g\rmk\circ\pi_0\circ\alpha
=\pi_0\circ\alpha\circ\tilde\beta_g,\quad g\in G
\end{align}
because $\omega\circ\tilde\beta_g=\omega$.

Because $(\eta_{g}^\sigma)_{g\in G, \sigma=L,R
}\in \caT(\theta, (\tilde\beta_g))$,
and $(\alpha_L,\alpha_R,\Theta)\in{\mathfrak D}^{\theta}_\alpha$,
there are unitaries $v_g, V\in \caU\lmk\caA\rmk$ such that
\begin{align}
\tilde\beta_g=\Ad\lmk v_g\rmk\circ\lmk \eta_g^L\otimes\eta_g^R\rmk\circ\beta_g^U,\quad
\alpha=\Ad V\circ\alpha_0\circ\Theta.
\end{align}
Substituting these, we have
\begin{align}
\Ad\lmk \tilde W_g\pi_0(V)\rmk\pi_0\circ\alpha_0\circ\Theta
=\pi_0\circ\alpha\tilde\beta_g
=\pi_0\circ\alpha\circ\Ad(v_g)\circ \eta_g\beta_g^U
=\Ad\lmk
\lmk \pi_0\circ\alpha(v_g)\rmk\pi_0(V)
\rmk\pi_0\circ\alpha_0\circ\Theta\circ\eta_g\beta_g^U.
\end{align}
Therefore, setting $W_g:=\pi_0(V)^*\lmk \pi_0\circ\alpha(v_g^*)\rmk\tilde W_g\pi_0(V)\in\caU(\caH_0)$,
we obtain (\ref{wimp}).

Using this (\ref{wimp}), we have
\begin{align}\label{wwwgh}
\Ad\lmk W_gW_h W_{gh}^*\rmk \pi_0
=\pi_0\circ\alpha_0\circ\Theta \circ\eta_g\beta_g^U\eta_h\lmk \beta_g^{U}\rmk^{-1}
\eta_{gh}^{-1}\Theta^{-1}\alpha_0^{-1}.
\end{align}
Note that because conjugation by $\beta_g^U$ does not change the support of 
automorphisms, $\eta_g\beta_g^U\eta_h\lmk \beta_g^{U}\rmk^{-1}
\eta_{gh}^{-1}$ belongs to $\Aut\lmk\caA_{C_\theta}\rmk$.
On the other hand, $\Theta$ belongs to  $\Aut\lmk\caA_{\lmk C_\theta\rmk^{c}}\rmk$.
Therefore, they commute and we obtain
\begin{align}
\Ad\lmk W_gW_h W_{gh}^*\rmk \pi_0
=(\ref{wwwgh})=\pi_0\circ\alpha_0\circ\eta_g\beta_g^U\eta_h\lmk \beta_g^{U}\rmk^{-1}
\eta_{gh}^{-1}\alpha_0^{-1}
=\bigotimes_{\sigma=L,R}
\pi_\sigma\circ\alpha_\sigma\circ\eta_g^\sigma\beta_g^{\sigma U}
\eta_h^\sigma\lmk\beta_g^{\sigma U}\rmk^{-1}\lmk \eta_{gh}^\sigma\rmk^{-1}
\circ\alpha_\sigma^{-1}
\end{align}
From this and the irreducibility of $\pi_R$, we see that $\Ad\lmk W_gW_h W_{gh}^*\rmk$
gives rise to a $*$-isomorphism $\tau$ on $\caB(\caH_R)$.It is implemented by
some unitary $u_R(g,h)$ on $\caH_R$ by the Wigner theorem and
we obtain
\begin{align}
\begin{split}
&\unit_{\caH_L}\otimes \lmk \Ad\lmk u_R(g,h) \rmk\circ\pi_R(A)\rmk
=\unit_{\caH_L}\otimes \tau\lmk \pi_R(A)\rmk
=\Ad\lmk W_gW_h W_{gh}^*\rmk\lmk \unit_{\caH_L}\otimes \pi_R(A)\rmk\\
&=\unit_{\caH_L}\otimes \pi_R\circ\alpha_R\circ\eta_g^R\beta_g^{R U}
\eta_h^R\lmk\beta_g^{R U}\rmk^{-1}\lmk \eta_{gh}^R\rmk^{-1}
\circ\alpha_R^{-1}(A),
\end{split}
\end{align}
for any $A\in\caA_{H_{R}}$.
Hence we obtain (\ref{usig}) for $\sigma=R$.

To see that $u_R(g,h)$ belongs to $\lmk \pi_R\circ\alpha_R\lmk \caA_{\lmk \lmk C_\theta\rmk^c\rmk_R}\rmk\rmk'$,
let $A\in \caA_{\lmk \lmk C_\theta\rmk^c\rmk_R}$.
Then because $\eta_g^R\beta_g^{R U}
\eta_h^R\lmk\beta_g^{R U}\rmk^{-1}\lmk \eta_{gh}^R\rmk^{-1}$
belongs to $\Aut\lmk \caA_{\lmk C_\theta\rmk_R}\rmk$, we have
\begin{align}
\Ad\lmk u_R(g,h)\rmk\pi_R\lmk \alpha_R(A)\rmk
=\pi_R\alpha_R \eta_g^R\beta_g^{R U}
\eta_h^R\lmk\beta_g^{R U}\rmk^{-1}\lmk \eta_{gh}^R\rmk^{-1}\alpha_R^{-1}\alpha_R(A)
=\pi_R\alpha_R(A).
\end{align}
This proves that $u_R(g,h)$ belongs to $\lmk \pi_R\circ\alpha_R\lmk \caA_{\lmk \lmk C_\theta\rmk^c\rmk_R}\rmk\rmk'$.
Analogous statement for $u_L(g,h)$ can be shown exactly the same way.
The last statement (\ref{usigt}) of (ii) is trivial from (\ref{usig}).
\end{proof}

\begin{lem}\label{lemni}
Let $\omega\in\QLS$, $\alpha\in \eaut(\omega)$, $0<\theta<\frac\pi 2$,
$(\tilde\beta_g)\in \IG\lmk\omega,\theta\rmk$, $(\eta_{g}^\sigma)\in \caT(\theta, (\tilde\beta_g))$,
$(\alpha_L,\alpha_R,\Theta)\in{\mathfrak D}^{\theta}_\alpha$.
Let $\lmk (W_g), (u_R(g,h))\rmk $ be an element of $\IP\lmk
\omega, \alpha, \theta,
(\tilde\beta_g), (\eta_{g}^\sigma),
(\alpha_L,\alpha_R,\Theta)
\rmk$.

Then the followings hold.
\begin{description}
\item[(i)] For any $g,h,k\in G$,
\begin{align}
\Ad\lmk W_g\lmk \unit_{\caH_L}\otimes u_R(h,k)\rmk W_g^*\rmk\circ\pi_0
=\pi_0\circ
\lmk
\id_{\caA_{H_{L}}}\otimes \alpha_R\eta_g^R\beta_g^{RU}
\lmk
\eta_h^R\beta_h^{R U}
\eta_k^R\lmk\beta_h^{R U}\rmk^{-1}\lmk \eta_{hk}^R\rmk^{-1}
\rmk
\lmk \eta_g^R\beta_g^{RU}\rmk^{-1}
\alpha_R^{-1}
\rmk.
\end{align}
\item[(ii)]
For any $g,h\in G$,
\begin{align}\label{hato}
\Ad\lmk \lmk u_L(g,h)\otimes u_R(g,h)\rmk W_{gh}\rmk
=\Ad \lmk W_gW_h\rmk,
\end{align}
on $\caB(\caH_{0})$.
\item[(iii)]
For any $g,h,k\in G$,
\begin{align}
\Ad\lmk W_g\rmk \lmk \unit_{\caH_L}\otimes u_R(h,k)\rmk
\in\bbC\unit_{\caH_L}\otimes\caB(\caH_R).
\end{align}
\item[(iv)]
For any $g,h,k,f\in G$,
\begin{align}
\Ad\lmk W_g W_h\rmk\lmk  \unit_{\caH_L}\otimes u_R(k,f)\rmk
=\lmk
\Ad\lmk  \lmk \unit_{\caH_L}\otimes u_R(g,h)\rmk W_{gh}\rmk
\rmk
\lmk \unit_{\caH_L}\otimes u_R(k,f)\rmk.
\end{align}
\end{description}
\end{lem}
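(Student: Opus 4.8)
I would prove the four items in the order (i), (ii), (iii), (iv): item (iii) will be an immediate consequence of (i), and (iv) will follow by combining (ii) with (iii), so the substance lies in (i) and (ii). Both of these are ``support chasing'' identities obtained by transporting everything back through $\pi_0$ to the level of automorphisms via (\ref{ipdef}) and (\ref{ipdef2}), and then repeatedly invoking two facts: an automorphism supported in $C_\theta$ commutes with one supported in $(C_\theta)^c$ (in particular with $\Theta^{\pm1}$), and $\beta_g^U$ is an on-site automorphism, hence preserves supports and factorizes as $(\beta_g^{LU})\otimes(\beta_g^{RU})$ with respect to $\caA=\caA_{H_L}\otimes\caA_{H_R}$. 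Throughout I write $\Psi_g:=\alpha_0\circ\Theta\circ\eta_g\beta_g^U\circ\Theta^{-1}\circ\alpha_0^{-1}$, so that (\ref{ipdef}) reads $\Ad(W_g)\circ\pi_0=\pi_0\circ\Psi_g$.

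\textbf{Item (i).} From (\ref{usig}) with $(g,h)$ replaced by $(h,k)$, tensored with the identity on $\caH_L$, one gets $\Ad(\unit_{\caH_L}\otimes u_R(h,k))\circ\pi_0=\pi_0\circ\gamma$ where $\gamma:=\id_{\caA_{H_L}}\otimes\,\alpha_R\circ\mu\circ\alpha_R^{-1}$ and $\mu:=\eta_h^R\beta_h^{RU}\eta_k^R(\beta_h^{RU})^{-1}(\eta_{hk}^R)^{-1}\in\Aut\big(\caA_{(C_\theta)_R}\big)$. Hence
\[
\Ad\big(W_g(\unit_{\caH_L}\otimes u_R(h,k))W_g^*\big)\circ\pi_0=\pi_0\circ\Psi_g\circ\gamma\circ\Psi_g^{-1},
\]
and it remains to simplify $\Psi_g\circ\gamma\circ\Psi_g^{-1}$ in three moves. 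First, since $\alpha_0=\alpha_L\otimes\alpha_R$ and $\mu$ is supported in $H_R$, conjugating $\gamma$ by $\alpha_0^{-1}$ gives $\id_{\caA_{H_L}}\otimes\mu$, which is supported in $(C_\theta)_R\subset C_\theta$, so it commutes with $\Theta^{\pm1}$ and both copies of $\Theta$ cancel, leaving $\alpha_0\circ(\eta_g\beta_g^U)\circ(\id_{\caA_{H_L}}\otimes\mu)\circ(\eta_g\beta_g^U)^{-1}\circ\alpha_0^{-1}$. Second, factorizing $\eta_g\beta_g^U=(\eta_g^L\beta_g^{LU})\otimes(\eta_g^R\beta_g^{RU})$ across the left/right cut, this conjugation acts only on the right tensor factor and produces $\id_{\caA_{H_L}}\otimes\big(\eta_g^R\beta_g^{RU}\circ\mu\circ(\eta_g^R\beta_g^{RU})^{-1}\big)$. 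Third, applying $\alpha_0=\alpha_L\otimes\alpha_R$ and unwinding $\mu$ yields exactly the right-hand side of the claimed identity.

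\textbf{Item (ii).} By (\ref{ipdef}) applied twice, $\Ad(W_gW_h)\circ\pi_0=\pi_0\circ\Psi_g\circ\Psi_h=\pi_0\circ\alpha_0\circ\Theta\circ(\eta_g\beta_g^U\eta_h\beta_h^U)\circ\Theta^{-1}\circ\alpha_0^{-1}$, the inner $\alpha_0^{-1}\alpha_0$ and $\Theta^{-1}\Theta$ cancelling. By (\ref{usigt}) and (\ref{ipdef}),
\[
\Ad\big((u_L(g,h)\otimes u_R(g,h))W_{gh}\big)\circ\pi_0=\pi_0\circ\alpha_0\circ\big(\eta_g\beta_g^U\eta_h(\beta_g^U)^{-1}\eta_{gh}^{-1}\big)\circ\Theta\circ\eta_{gh}\beta_{gh}^U\circ\Theta^{-1}\circ\alpha_0^{-1}.
\]
Since $\beta_g^U$ preserves supports, $\eta_g\beta_g^U\eta_h(\beta_g^U)^{-1}\eta_{gh}^{-1}$ is supported in $C_\theta$ and thus commutes with $\Theta$; moving it past $\Theta$, cancelling $\eta_{gh}^{-1}\eta_{gh}$ and using $(\beta_g^U)^{-1}\beta_{gh}^U=\beta_h^U$ turns the right-hand side into $\pi_0\circ\alpha_0\circ\Theta\circ(\eta_g\beta_g^U\eta_h\beta_h^U)\circ\Theta^{-1}\circ\alpha_0^{-1}=\Ad(W_gW_h)\circ\pi_0$. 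So the two automorphisms agree on $\pi_0(\caA)$; because $\omega_0$ is a pure product state its GNS representation $\pi_0$ is irreducible, hence $\pi_0(\caA)$ is weakly dense in $\caB(\caH_0)$, and since $\Ad$ of a unitary is weakly continuous the two automorphisms coincide on all of $\caB(\caH_0)$.

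\textbf{Items (iii), (iv) and the main obstacle.} For (iii), the automorphism on the right-hand side of (i) restricts to $\id_{\caA_{H_L}}$ on $\caA_{H_L}$, so $W_g(\unit_{\caH_L}\otimes u_R(h,k))W_g^*$ commutes with $\pi_0(\caA_{H_L})$, hence with $\pi_0(\caA_{H_L})''=\caB(\caH_L)\otimes\unit_{\caH_R}$ (irreducibility of $\pi_L$), hence lies in $(\caB(\caH_L)\otimes\unit_{\caH_R})'=\bbC\unit_{\caH_L}\otimes\caB(\caH_R)$. For (iv), apply both sides of (ii) to $\unit_{\caH_L}\otimes u_R(k,f)$ and write $(u_L(g,h)\otimes u_R(g,h))W_{gh}=(u_L(g,h)\otimes\unit_{\caH_R})\cdot(\unit_{\caH_L}\otimes u_R(g,h))W_{gh}$; by (iii) with $gh$ in place of $g$, $\Ad(W_{gh})(\unit_{\caH_L}\otimes u_R(k,f))\in\bbC\unit_{\caH_L}\otimes\caB(\caH_R)$, and applying $\Ad(\unit_{\caH_L}\otimes u_R(g,h))$ keeps it there, so it commutes with $u_L(g,h)\otimes\unit_{\caH_R}$ and the extra left factor drops out, giving the stated equality. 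The only delicate point is the bookkeeping in (i): one must factorize $\eta_g\beta_g^U$ correctly across the left/right cut (remembering $\beta_g^U$ is on-site, so it neither enlarges supports nor mixes the two sides) and track precisely which intermediate automorphisms are supported in $C_\theta$ versus $(C_\theta)^c$ so that every $\Theta^{\pm1}$ commutes through and cancels cleanly.
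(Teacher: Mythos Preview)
Your proof is correct and follows essentially the same strategy as the paper for (i), (ii), and (iv): transport everything to automorphisms via (\ref{ipdef}) and (\ref{ipdef2}), commute the $C_\theta$-supported pieces past $\Theta$, factorize across the $L/R$ cut, and then invoke irreducibility of $\pi_0$ (resp.\ $\pi_L$). For (iii) there is a minor but noteworthy difference: you deduce it immediately from (i) (since the right-hand side of (i) acts as the identity on $\caA_{H_L}$, the conjugated unitary commutes with $\pi_L(\caA_{H_L})\otimes\unit$, hence lies in $\bbC\unit_{\caH_L}\otimes\caB(\caH_R)$), whereas the paper gives an independent argument, tracking $\Ad(W_g^*)(\pi_L(A)\otimes\unit)$ into $\pi_L(\caA_{H_L})\otimes\pi_R\alpha_R\big(\caA_{(C_\theta^c)_R}\big)$ and then using the commutation property of $u_R(h,k)$ established in Lemma~\ref{ichi}(ii). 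Your route is shorter and avoids appealing to that extra commutation fact; the paper's route is self-contained in the sense of not relying on (i), but both are equally valid and the underlying mechanism (irreducibility of $\pi_L$) is the same.
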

\begin{proof}
We use the notation  (\ref{azdef}), (\ref{etadef}).\\
(i)
Substituting (\ref{ipdef}) (\ref{ipdef2}), we have
\begin{align}\label{2ichi}
&\Ad\lmk W_g\lmk \unit_{\caH_L}\otimes u_R(h,k)\rmk W_g^*\rmk\circ\pi_0\notag\\
&=\pi_0\circ\alpha_0\circ\Theta\circ\eta_g\beta_g^U\circ\Theta^{-1}\circ\alpha_0^{-1}\circ
\lmk \id_{\caA_{H_L}}\otimes \alpha_R\circ\eta_h^R\beta_h^{R U}
\eta_k^R\lmk\beta_h^{R U}\rmk^{-1}\lmk \eta_{hk}^R\rmk^{-1}
\circ\alpha_R^{-1}\rmk
\circ\alpha_0\circ\Theta\circ\lmk \eta_g\beta_g^U\rmk^{-1}\circ\Theta^{-1}\circ\alpha_0^{-1}\notag\\
&=\pi_0\circ\alpha_0\circ\Theta\circ\eta_g\beta_g^U\circ\Theta^{-1}\circ
\lmk \id_{\caA_{H_L}}\otimes \eta_h^R\beta_h^{R U}
\eta_k^R\lmk\beta_h^{R U}\rmk^{-1}\lmk \eta_{hk}^R\rmk^{-1}
\rmk
\circ\Theta\circ\lmk \eta_g\beta_g^U\rmk^{-1}\circ\Theta^{-1}\circ\alpha_0^{-1}.
\end{align}
Because $\eta_h^R\beta_h^{R U}
\eta_k^R\lmk\beta_h^{R U}\rmk^{-1}\lmk \eta_{hk}^R\rmk^{-1}$ belongs to
$\Aut\lmk\caA_{\lmk C_\theta\rmk_R}\rmk$,
it commutes with $\Theta\in \Aut\lmk\caA_{\lmk C_\theta\rmk^{c}}\rmk$.
Hence we obtain
\begin{align}
&(\ref{2ichi})=
\pi_0\circ\alpha_0\circ\Theta\circ\eta_g\beta_g^U\circ
\lmk \unit_{\caA_{H_L}}\otimes \eta_h^R\beta_h^{R U}
\eta_k^R\lmk\beta_h^{R U}\rmk^{-1}\lmk \eta_{hk}^R\rmk^{-1}
\rmk
\circ\lmk \eta_g\beta_g^U\rmk^{-1}\circ\Theta^{-1}\circ\alpha_0^{-1}\notag\\
&=\pi_0\circ\alpha_0\circ\Theta\circ
\lmk \unit_{\caA_{H_L}}\otimes \eta_g^R\beta_g^{RU}\circ\eta_h^R\beta_h^{R U}
\eta_k^R\lmk\beta_h^{R U}\rmk^{-1}\lmk \eta_{hk}^R\rmk^{-1}\circ
\lmk \eta_g^R\beta_g^{RU}\rmk^{-1}
\rmk
\circ\Theta^{-1}\circ\alpha_0^{-1}.
\end{align}
Again, the term in the round braket in the last line is localized at $\lmk C_\theta\rmk_R$, and it commutes with $\Theta$.
Therefore, we have
\begin{align}
\Ad\lmk W_g\lmk \unit_{\caH_L}\otimes u_R(h,k)\rmk W_g^*\rmk\circ\pi_0
=\pi_0\circ
\lmk \id_{\caA_{H_L}}\otimes \alpha_R\circ\eta_g^R\beta_g^{RU}\circ\eta_h^R\beta_h^{R U}
\eta_k^R\lmk\beta_h^{R U}\rmk^{-1}\lmk \eta_{hk}^R\rmk^{-1}\circ
\lmk \eta_g^R\beta_g^{RU}\rmk^{-1}
\circ\alpha_R^{-1}\rmk
\end{align}
(ii)
Again by (\ref{ipdef}) and (\ref{ipdef2}), we have
\begin{align}
&\Ad\lmk \lmk u_L(g,h)\otimes u_R(g,h)\rmk W_{gh}\rmk\circ\pi_0
=\pi_0\circ\alpha_0\circ\eta_g\beta_g^U\eta_h\lmk \beta_g^U\rmk^{-1}\lmk\eta_{gh}\rmk^{-1}
\circ\Theta\circ\eta_{gh}\beta_{gh}^U\circ\Theta^{-1}\circ\alpha_0^{-1}\notag\\
&=\pi_0\circ\alpha_0\circ\Theta\circ\eta_g\beta_g^U\eta_h\lmk \beta_g^U\rmk^{-1}\lmk\eta_{gh}\rmk^{-1}
\circ\eta_{gh}\beta_{gh}^U\circ\Theta^{-1}\circ\alpha_0^{-1}
=\pi_0\circ\alpha_0\circ\Theta\circ\eta_g\beta_g^U\eta_h\beta_{h}^U\circ\Theta^{-1}\circ\alpha_0^{-1}
=\Ad \lmk W_gW_h\rmk\circ\pi_0.
\end{align}
Here, for the second equality, we again used the commutativity of $\eta$s and $\Theta$, due to their
disjoint support.
Because $\pi_0$ is irreducible, we obtain (\ref{hato}).\\
(iii)
For any $A\in \caA_{H_L}$, we have  
\begin{align}
\Theta^{-1}\circ\alpha_0^{-1}\lmk A\otimes \unit_{\caA_{H_R}}\rmk
=\Theta^{-1}\circ\lmk \alpha_L^{-1}(A)\otimes \unit_{\caA_{H_R}}\rmk
\in \Theta^{-1}\lmk\caA_{H_L}\otimes \bbC\unit_{\caA_{H_R}}\rmk
\subset \caA_{H_L\cup \lmk C_\theta^c\rmk_R },
\end{align}
because $\Theta\in \Aut\lmk\caA_{\lmk C_\theta\rmk^{c}}\rmk$.
Therefore,  $\eta_g^R\in \Aut(\caA_{\lmk C_\theta\rmk_R})$ acts trivially on it and
we have
\begin{align}
\lmk \beta_g^U\rmk^{-1}\lmk \eta_g\rmk^{-1}\circ\Theta^{-1}\circ\alpha_0^{-1}\lmk A\otimes \unit_{\caA_{H_R}}\rmk
\in  \caA_{H_L\cup \lmk C_\theta^c\rmk_R }.
\end{align}
As $\Theta$ preserves $\caA_{H_L\cup \lmk C_\theta^c\rmk_R }$, 
\begin{align}
\Theta\circ\lmk \beta_g^U\rmk^{-1}\lmk \eta_g\rmk^{-1}\circ\Theta^{-1}\circ\alpha_0^{-1}\lmk A\otimes \unit_{\caA_{H_R}}\rmk
\end{align}
also belongs to  $\caA_{H_L\cup \lmk C_\theta^c\rmk_R }$.
As a result,
\begin{align}
\Ad\lmk W_g^*\rmk\lmk \pi_L(A)\otimes \unit_{\caH_R}\rmk
=\pi_0\circ\alpha_0\circ\Theta\circ\lmk \beta_g^U\rmk^{-1}\lmk \eta_g\rmk^{-1}\circ\Theta^{-1}\circ\alpha_0^{-1}\lmk A\otimes \unit_{\caA_{H_R}}\rmk
\end{align}
belongs to 
$\pi_L(\caA_{H_L})\otimes \pi_R\circ\alpha_R\lmk \caA_{\lmk C_\theta^c\rmk_R } \rmk$,
hence commutes with $\unit_{\caH_L}\otimes u_R(h,k)$.
Hence $\Ad(W_g)\lmk \unit_{\caH_L}\otimes u_R(h,k) \rmk$
commutes with any elements in $\pi_L(\caA_L)\otimes \bbC\unit_{\caH_R}$.
Because $\pi_L$ is irreducible,
$\Ad(W_g)\lmk \unit_{\caH_L}\otimes u_R(h,k) \rmk$
belongs to $\bbC\unit_{\caH_L}\otimes \caB(\caH_R)$.\\
(iv)
By (iii), $\Ad\lmk  W_{gh}\rmk\lmk  \unit_{\caH_L}\otimes u_R(k,f)\rmk$
belongs to $\bbC\unit_{\caH_L}\otimes \caB(\caH_R)$.
Therefore, from (ii), we have
\begin{align}
&\Ad\lmk W_g W_h\rmk\lmk  \unit_{\caH_L}\otimes u_R(k,f)\rmk
=\Ad\lmk \lmk u_L(g,h)\otimes u_R(g,h)\rmk W_{gh}\rmk\lmk  \unit_{\caH_L}\otimes u_R(k,f)\rmk\notag\\
&=\Ad\lmk \lmk \unit_{\caH_L}\otimes u_R(g,h)\rmk W_{gh}\rmk\lmk  \unit_{\caH_L}\otimes u_R(k,f)\rmk,
\end{align}
obtaining (iv).
\end{proof}
With this preparation we may obtain some element of $Z^3(G,\bbT)$
from $((W_g), (u_\sigma(g,h)))$.

\begin{lem}\label{lem3}
Let $\omega\in\QLS$, $\alpha\in \eaut(\omega)$, $0<\theta<\frac\pi 2$,
$(\tilde\beta_g)\in \IG\lmk\omega,\theta\rmk$, 
$(\eta_{g}^\sigma)\in \caT(\theta, (\tilde\beta_g))$,
$(\alpha_L,\alpha_R,\Theta)\in{\mathfrak D}^{\theta}_\alpha$.
Let $\lmk (W_g), (u_\sigma(g,h))\rmk$ be an element of $ \IP\lmk
\omega, \alpha, \theta,
(\tilde\beta_g), (\eta_{g}^\sigma),
(\alpha_L,\alpha_R,\Theta)
\rmk$.
Then there is a $c_R\in Z^3(G,\bbT)$
such that
\begin{align}\label{uwc}
\unit_{\caH_L}\otimes u_R(g,h) u_R(gh,k)
=c_R(g,h,k)
\lmk W_g\lmk \unit_{\caH_L}\otimes u_R(h,k)\rmk W_g^*\rmk
\lmk \unit_{\caH_L}\otimes u_R(g,hk)\rmk,
\end{align}
for all $g,h,k\in G$.
\end{lem}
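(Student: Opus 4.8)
The plan is to prove Lemma \ref{lem3} in two steps: first show that the two sides of (\ref{uwc}) implement, via $\Ad$, one and the same automorphism of $\caB(\caH_0)$, so that as unitaries they must differ by a central scalar of modulus one, which we christen $c_R(g,h,k)$; and then verify by an associativity computation that the resulting function $c_R\colon G^{\times 3}\to\bbT$ satisfies the $3$-cocycle identity.

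For the first step I would compose the $\Ad$ of each side of (\ref{uwc}) with $\pi_0$ and rewrite using (\ref{ipdef}), (\ref{ipdef2}) and Lemma \ref{lemni}(i). On the left this produces $\pi_0$ composed with $\idd_{\caA_{H_L}}\otimes\alpha_R\circ\lmk\eta_g^R\beta_g^{RU}\eta_h^R(\beta_g^{RU})^{-1}(\eta_{gh}^R)^{-1}\rmk\circ\lmk\eta_{gh}^R\beta_{gh}^{RU}\eta_k^R(\beta_{gh}^{RU})^{-1}(\eta_{ghk}^R)^{-1}\rmk\circ\alpha_R^{-1}$, while on the right Lemma \ref{lemni}(i) together with (\ref{ipdef2}) gives $\pi_0$ composed with $\idd_{\caA_{H_L}}\otimes\alpha_R\circ\eta_g^R\beta_g^{RU}\lmk\eta_h^R\beta_h^{RU}\eta_k^R(\beta_h^{RU})^{-1}(\eta_{hk}^R)^{-1}\rmk(\eta_g^R\beta_g^{RU})^{-1}\circ\lmk\eta_g^R\beta_g^{RU}\eta_{hk}^R(\beta_g^{RU})^{-1}(\eta_{ghk}^R)^{-1}\rmk\circ\alpha_R^{-1}$. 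The only non-formal input is the cancellation $(\beta_g^{RU})^{-1}\beta_{gh}^{RU}=\beta_h^{RU}$ for the on-site half-plane actions; with it, both expressions collapse to the single automorphism $\alpha_R\circ\eta_g^R\beta_g^{RU}\eta_h^R\beta_h^{RU}\eta_k^R(\beta_h^{RU})^{-1}(\beta_g^{RU})^{-1}(\eta_{ghk}^R)^{-1}\circ\alpha_R^{-1}$ of $\caA_{H_R}$. Since $\omega_0$ is a pure product state, $\pi_0$ is irreducible, so $\Ad$ of the two sides of (\ref{uwc}) agree on all of $\caB(\caH_0)=\pi_0(\caA)''$; hence their ratio is $c_R(g,h,k)\unit_{\caH_0}$ for a unique $c_R(g,h,k)\in\bbT$.

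For the cocycle identity I would expand the triple product $\lmk\unit_{\caH_L}\otimes u_R(g,h)\rmk\lmk\unit_{\caH_L}\otimes u_R(gh,k)\rmk\lmk\unit_{\caH_L}\otimes u_R(ghk,f)\rmk$ in two ways by repeated use of (\ref{uwc}). Writing $\tilde u(g,h):=\unit_{\caH_L}\otimes u_R(g,h)$, the two crucial moves are: (a) replacing $W_gW_h$ by $\tilde u(g,h)W_{gh}$ whenever it conjugates some $\tilde u(k,f)$, which is exactly Lemma \ref{lemni}(iv); and (b) commuting the scalars $c_R(\cdot,\cdot,\cdot)$ freely past all operators. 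Expanding first $\tilde u(gh,k)\tilde u(ghk,f)$, then $\tilde u(g,h)\tilde u(gh,kf)$, and repeatedly re-absorbing the $\Ad(W_g)$-conjugates via (\ref{uwc}), one returns to $\tilde u(g,h)\tilde u(gh,k)\tilde u(ghk,f)$ times $c_R(gh,k,f)\,c_R(g,h,kf)\,\overline{c_R(h,k,f)}\,\overline{c_R(g,hk,f)}\,\overline{c_R(g,h,k)}$, forcing that product of phases to be $1$, i.e. $c_R(h,k,f)\,c_R(g,hk,f)\,c_R(g,h,k)=c_R(gh,k,f)\,c_R(g,h,kf)$; thus $c_R\in Z^3(G,\bbT)$.

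The algebraic cancellations among the $\eta$'s and $\beta^{RU}$'s in the first step are routine. The part requiring care is the second step, because $(W_g)_{g\in G}$ is only a projective representation up to the $u_\sigma(g,h)$'s, so one cannot treat the $W_g$ as an honest representation. This is precisely what Lemma \ref{lemni}(iii), (iv) are for: they guarantee that every operator obtained by $W$-conjugating some $\tilde u(\cdot,\cdot)$ lies in $\bbC\unit_{\caH_L}\otimes\caB(\caH_R)$, so the $\caH_L$-components never interfere and the substitution of $\tilde u(g,h)W_{gh}$ for $W_gW_h$ is legitimate. I expect the choice of expansion order that makes all the phases separate cleanly to be the main (though not deep) obstacle.
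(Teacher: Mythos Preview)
Your proposal is correct and follows essentially the same route as the paper. In the first step the paper likewise computes $\Ad$ of each side composed with $\pi_0$, reduces both to $\pi_L\otimes\pi_R\circ\alpha_R\circ(\eta_g^R\beta_g^{RU})(\eta_h^R\beta_h^{RU})(\eta_k^R\beta_k^{RU})(\eta_{ghk}^R\beta_{ghk}^{RU})^{-1}\circ\alpha_R^{-1}$ using (\ref{ipdef2}) and Lemma~\ref{lemni}(i), and invokes irreducibility of $\pi_0$; in the second step the paper carries out exactly the repeated application of (\ref{uwc}) to $\tilde u(g,h)\tilde u(gh,k)\tilde u(ghk,f)$ that you outline, using Lemma~\ref{lemni}(iv) (and implicitly (iii)) at the point where $W_gW_h$ must be replaced by $\tilde u(g,h)W_{gh}$, arriving at the same cocycle identity.
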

\begin{defn}\label{nagasaki}
We denote this $3$-cocycle $c_R$ in the Lemma by 
\begin{align}
c_R\lmk
\omega, \alpha, \theta,
(\tilde\beta_g), (\eta_{g}^\sigma),
(\alpha_L,\alpha_R,\Theta),\lmk (W_g), (u_\sigma(g,h))\rmk
\rmk
\end{align}
and its cohomology class by
\begin{align}
h^{(1)}
\lmk
\omega, \alpha, \theta,
(\tilde\beta_g), (\eta_{g}^\sigma),
(\alpha_L,\alpha_R,\Theta),\lmk (W_g), (u_\sigma(g,h))\rmk
\rmk
:=
\lcm
c_R\lmk
\omega, \alpha, \theta,
(\tilde\beta_g), (\eta_{g}^\sigma),
(\alpha_L,\alpha_R,\Theta),\lmk (W_g), (u_\sigma(g,h))\rmk
\rmk
\rcm_{H^3(G,\bbT)}.
\end{align}
\end{defn}
\begin{proof}
First we prove that there is a number $c_R(g,h,k)\in\bbT$
satisfying (\ref{uwc}).
From (\ref{ipdef2}),we have
\begin{align}\label{gomach}
\Ad\lmk \unit_{\caH_L}\otimes u_R(g,h) u_R(gh,k)
\rmk\pi_0
=\pi_L\otimes \pi_R \circ \alpha_R\circ \lmk \eta_g^R\beta_g^{RU}\rmk
\lmk \eta_h^R\beta_h^{RU}\rmk
\lmk \eta_k^R\beta_k^{RU}\rmk
\lmk \eta_{ghk}^R\beta_{ghk}^{RU}\rmk^{-1}
\alpha_R^{-1}.
\end{align}
On the other hand, using (i) of Lemma \ref{lemni}, we have 
\begin{align}
\Ad\lmk \lmk W_g\lmk \unit_{\caH_L}\otimes u_R(h,k)\rmk W_g^*\rmk
\lmk \unit_{\caH_L}\otimes u_R(g,hk)\rmk\rmk
\pi_0
\end{align}
is also equal to the right hand side of (\ref{gomach}).
Because $\pi_0$ is irreducible, this means that there is a number $c_R(g,h,k)\in\bbT$
satisfying (\ref{uwc}).

Now let us check that this $c_R$ is a $3$-cocycle.
For any $g,h,k,f\in G$, by repeated use of (\ref{uwc}), we get
\begin{align}
& \unit_{\caH_L}\otimes u_R(g,h) u_R(gh,k)u_R(ghk,f)
=\lcm \unit_{\caH_L}\otimes u_R(g,h) u_R(gh,k)\rcm\cdot\lmk \unit_{\caH_L}\otimes u_R(ghk,f)\rmk\\
& =\lmk
c_R(g,h,k)
 \lmk W_g\lmk \unit_{\caH_L}\otimes u_R(h,k)\rmk W_g^*\rmk
\lmk \unit_{\caH_L}\otimes u_R(g,hk)\rmk\rmk\cdot
\lmk \unit_{\caH_L}\otimes u_R(ghk,f)\rmk
\notag\\
& =\lmk c_R(g,h,k)
 \lmk W_g\lmk \unit_{\caH_L}\otimes u_R(h,k) \rmk W_g^*\rmk\rmk\cdot
\lcm \unit_{\caH_L}\otimes u_R(g,hk)u_R(ghk,f)\rcm
\notag\\
& =\lmk c_R(g,h,k)
 \lmk W_g\lmk \unit_{\caH_L}\otimes u_R(h,k) \rmk W_g^*\rmk\rmk\cdot
\lmk  c_R(g,hk,f) \lmk W_g\lmk \unit_{\caH_L}\otimes u_R(hk,f)\rmk W_g^*\rmk
 \lmk \unit_{\caH_L}\otimes u_R(g,hkf)\rmk \rmk\notag\\
 &=
  c_R(g,h,k) c_R(g,hk,f)
 \lmk 
 W_g\lcm \unit_{\caH_L}\otimes u_R(h,k)  u_R(hk,f)\rcm W_g^*\rmk\cdot
 \lmk \unit_{\caH_L}\otimes u_R(g,hkf)\rmk \notag\\
 &=c_R(g,h,k)
 c_R(g,hk,f) c_R(h,k,f)
 W_g\lmk
 W_h \lmk \unit_{\caH_L}\otimes u_R(k,f)\rmk W_h^*\lmk \unit_{\caH_L}\otimes u_R(h,kf)\rmk\rmk
 W_g^*\cdot
 \lmk \unit_{\caH_L}\otimes u_R(g,hkf)\rmk\notag\\
 &=c_R(g,h,k)
 c_R(g,hk,f) c_R(h,k,f)
 \cdot W_gW_h \lmk \unit_{\caH_L}\otimes u_R(k,f)\rmk W_h^*W_g^*\cdot 
 \lcm\lmk W_g\lmk \unit_{\caH_L}\otimes u_R(h,kf)\rmk W_g^*\rmk
 \unit_{\caH_L}\otimes u_R(g,hkf)\rcm\notag\\
& =c_R(g,h,k)
 c_R(g,hk,f) c_R(h,k,f) \overline{c(g,h,kf)}
 \cdot \left\{W_gW_h \lmk \unit_{\caH_L}\otimes u_R(k,f)\rmk W_h^*W_g^*\rmk\}\notag\\
 &\quad \cdot 
\lmk
\unit_{\caH_L}\otimes u_{R}(g,h)u_{R}(gh,kf)
\rmk.
 \label{temp}
\end{align}
Here, (and below) we apply  (\ref{uwc}) for terms in $\lcm\cdot\rcm$ to get the succeeding equality.
Applying (iv) of Lemma \ref{lemni} to the $\{\cdot\}$ part of (\ref{temp}),
we obtain
\begin{align}
(\ref{temp})
&=c_R(g,h,k)
 c_R(g,hk,f) c_R(h,k,f)\overline{c(g,h,kf)}
 \lmk
\Ad\lmk  \lmk \unit_{\caH_L}\otimes u_R(g,h)\rmk W_{gh}\rmk
\rmk
\lmk \unit_{\caH_L}\otimes u_R(k,f)\rmk 
\lmk
\unit_{\caH_L}\otimes u_{R}(g,h)u_{R}(gh,kf)
\rmk\notag\\
&=c_R(g,h,k)
 c_R(g,hk,f) c_R(h,k,f)\overline{c(g,h,kf)}
 \lmk \unit_{\caH_L}\otimes u_R(g,h)\rmk 
 \lcm W_{gh}\lmk \unit_{\caH_L}\otimes u_R(k,f)\rmk W_{gh}^{*}
 \lmk
\unit_{\caH_L}\otimes u_{R}(gh,kf)
\rmk\rcm\notag\\
&=c_R(g,h,k)
 c_R(g,hk,f) c_R(h,k,f)\overline{c(g,h,kf)}\overline{c_{R}(gh,k,f)}
  \lmk \unit_{\caH_L}\otimes u_R(g,h)u_R(gh, k)u_{R}(ghk,f)\rmk.
  \end{align}
Hence, we obtain
\begin{align}
c_R(g,h,k)
 c_R(g,hk,f) c_R(h,k,f)\overline{c(g,h,kf)}\overline{c_{R}(gh,k,f)}=1,\quad\text{for all}\quad
 g,h,k,f\in G.
\end{align}
This means $c_{R}\in Z^{3}(G,\bbT)$.
\end{proof}

\subsection{The $H^{3}(G,\bbT)$-valued index}
From the previous subsection, we remark the following fact.
\begin{lem}
For any $\omega\in\QLS$ with $\IG(\omega)\neq\emptyset$,
there are
\begin{align}\label{oneset}
\begin{split}
\alpha\in \eaut(\omega), \; 0<\theta<\frac\pi 2,\;
(\tilde\beta_g)\in \IG\lmk\omega,\theta\rmk,\; (\eta_{g}^\sigma)\in \caT(\theta, (\tilde\beta_g)),\;
(\alpha_L,\alpha_R,\Theta)\in{\mathfrak D}^{\theta}_\alpha,\\
\lmk (W_g), (u_R(g,h))\rmk\in\IP\lmk
\omega, \alpha, \theta,
(\tilde\beta_g), (\eta_{g}^\sigma),
(\alpha_L,\alpha_R,\Theta)
\rmk.
\end{split}
\end{align}
\end{lem}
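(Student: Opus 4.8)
The plan is to read off each required object in turn, invoking only the nonemptiness statements already recorded for the sets involved and checking that a single angle $\theta$ can be used consistently throughout; no inequality or representation-theoretic computation is needed here, so this is purely a bookkeeping step that prepares the ground for the definition of the index in the next subsection.

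First I would fix $\alpha$: since $\omega\in\QLS$ there is, by definition, some $\alpha\in\qaut(\caA)$ with $\omega=\omega_0\circ\alpha$, so $\eaut(\omega)\neq\emptyset$ and I may pick $\alpha\in\eaut(\omega)$. Next, since by hypothesis $\IG(\omega)\neq\emptyset$ and $\IG(\omega)=\bigcup_{0<\theta<\frac\pi2}\IG(\omega,\theta)$, there is some $\theta\in(0,\frac\pi2)$ with $\IG(\omega,\theta)\neq\emptyset$; I would fix such a $\theta$ once and for all and choose $(\tilde\beta_g)\in\IG(\omega,\theta)$. Membership in $\IG(\omega,\theta)$ gives $\omega\circ\tilde\beta_g=\omega$ and forces $\caT(\theta,(\tilde\beta_g))\neq\emptyset$, so I pick $(\eta_g^\sigma)\in\caT(\theta,(\tilde\beta_g))$. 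For the \emph{same} $\theta$, the assumption $\alpha\in\qaut(\caA)$ gives ${\mathfrak D}^\theta_\alpha\neq\emptyset$, so I pick $(\alpha_L,\alpha_R,\Theta)\in{\mathfrak D}^\theta_\alpha$.

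Finally, with $\omega,\alpha,\theta,(\tilde\beta_g),(\eta_g^\sigma),(\alpha_L,\alpha_R,\Theta)$ all fixed, Lemma \ref{ichi} supplies unitaries $W_g$ on $\caH_0$ and $u_\sigma(g,h)$ on $\caH_\sigma$ satisfying precisely (\ref{ipdef}) and (\ref{ipdef2}); that is, $\lmk(W_g),(u_R(g,h))\rmk$ lies in $\IP\lmk\omega,\alpha,\theta,(\tilde\beta_g),(\eta_g^\sigma),(\alpha_L,\alpha_R,\Theta)\rmk$, which is therefore nonempty. Collecting all these choices yields (\ref{oneset}). The only point that deserves a moment's attention — and this is the ``hard'' part only in a bookkeeping sense — is that the angle $\theta$ handed to us by $\IG(\omega)$ must simultaneously be an angle at which $\alpha$ admits the decomposition defining ${\mathfrak D}^\theta_\alpha$ and at which Lemma \ref{ichi} applies; this is automatic, since membership in $\qaut(\caA)$ means $\alpha$ decomposes appropriately at every angle, so we are free to adopt whatever $\theta$ the set $\IG(\omega)$ provides.
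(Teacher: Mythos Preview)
Your proof is correct and follows essentially the same approach as the paper: unwind the definitions of $\QLS$, $\eaut(\omega)$, $\IG(\omega)$, $\IG(\omega,\theta)$, $\caT(\theta,(\tilde\beta_g))$, and $\qaut(\caA)$ to extract each object in turn, then invoke Lemma~\ref{ichi} for the final step. The only cosmetic difference is the order in which $\alpha$ and $\theta$ are selected, which is immaterial since the choices are independent; your extra remark about why the single angle $\theta$ serves both purposes is a helpful clarification that the paper leaves implicit.
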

\begin{proof}
Because $\IG(\omega)\neq\emptyset$, there is some $0<\theta<\frac\pi 2$
such that $\IG(\omega,\theta)\neq\emptyset$,
and hence $(\tilde\beta_g)\in \IG(\omega,\theta)$ and 
$(\eta_g^\sigma)\in  \caT(\theta, (\tilde\beta_g))$ exist.
Because
$\omega\in\QLS$, by definition there exists some $\alpha\in \eaut(\omega)$
and by the definition of $\eaut(\omega)$,
there is some $(\alpha_L,\alpha_R,\Theta)\in {\mathfrak D}^{\theta}_\alpha$.
The existence of $\lmk (W_g), (u_R(g,h))\rmk\in\IP\lmk
\omega, \alpha, \theta,
(\tilde\beta_g), (\eta_{g}^\sigma),
(\alpha_L,\alpha_R,\Theta)
\rmk$ is given by Lemma \ref{ichi}.
\end{proof}
By Lemma \ref{lem3}, for
$\omega\in\QLS$ with $\IG(\omega)\neq\emptyset$,
for each choice of (\ref{oneset}),
we can associate some element of
$H^3(G,\bbT)$:
\begin{align}\label{hq}
h^{(1)}
\lmk
\omega, \alpha , \theta,
(\tilde\beta_g), (\eta_{g}^\sigma),
(\alpha_L,\alpha_R,\Theta),\lmk (W_g), (u_\sigma(g,h))\rmk
\rmk.
\end{align}
In this subsection, we show that the third cohomology class does not depend on
the choice of (\ref{oneset}):
\begin{thm}\label{welldefined}
For any  $\omega\in\QLS$ with
$\IG(\omega)\neq\emptyset$,
\[
h^{(1)}
\lmk
\omega, \alpha , \theta,
(\tilde\beta_g), (\eta_{g}^\sigma),
(\alpha_L,\alpha_R,\Theta),\lmk (W_g), (u_\sigma(g,h))\rmk
\rmk
\]
 is independent of the choice of
\[
\alpha, \theta,
(\tilde\beta_g), (\eta_{g}^\sigma), 
(\alpha_L,\alpha_R,\Theta),
\lmk (W_g), (u_\sigma(g,h))\rmk.
\]
\end{thm}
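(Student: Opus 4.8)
The plan is to show invariance of the cohomology class under each of the choices in (\ref{oneset}) separately, by exhibiting, for any two admissible choices, an explicit unitary cochain in $C^2(G,\bbT)$ (or a $G$-equivariant modification of the $u_R$'s and $W_g$'s) whose coboundary is the ratio of the two $3$-cocycles $c_R$. The natural order is: first fix everything and vary only $\lmk (W_g),(u_\sigma(g,h))\rmk$; then vary $(\alpha_L,\alpha_R,\Theta)$; then $(\eta_g^\sigma)$; then $(\tilde\beta_g)$; then $\alpha$; and finally $\theta$. At each stage one uses the irreducibility of $\pi_0$, $\pi_L$, $\pi_R$ (guaranteed by purity of $\omega_0$ and its restrictions) to conclude that the relevant operators are determined up to scalars, which is what produces the freedom to be absorbed into a $2$-cochain.

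\textbf{Step 1 (varying the implementers).} Suppose $\lmk (W_g),(u_R(g,h))\rmk$ and $\lmk (W_g'),(u_R'(g,h))\rmk$ both lie in $\IP(\ldots)$. Since both $W_g$ and $W_g'$ implement the same automorphism of $\caB(\caH_0)$ via (\ref{ipdef}) and $\pi_0$ is irreducible, $W_g'=\mu(g)\,W_g$ for some $\mu(g)\in\bbT$; similarly $u_R'(g,h)=\nu(g,h)\,u_R(g,h)$ with $\nu(g,h)\in\bbT$. Substituting into (\ref{uwc}) and comparing with the primed version, the $\mu$-dependence cancels (because $W_g$ appears as $W_g(\cdot)W_g^*$) and one finds $c_R' = c_R\cdot (\delta\nu)$ where $\delta$ is the group-cohomology coboundary $C^2\to C^3$. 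Hence the class is unchanged.

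\textbf{Step 2 (varying $(\alpha_L,\alpha_R,\Theta)$, then $(\eta_g^\sigma)$, $(\tilde\beta_g)$, $\alpha$).} For two decompositions $(\alpha_L,\alpha_R,\Theta)$ and $(\alpha_L',\alpha_R',\Theta')$ of the same $\alpha$, the ``difference'' $\alpha_L'^{-1}\alpha_L$, $\alpha_R'^{-1}\alpha_R$ are (up to inner) supported near the boundary direction $\theta$, and similarly $\Theta'\Theta^{-1}$ is inner modulo $\Aut(\caA_{(C_\theta)^c})$; conjugating the defining relations (\ref{ipdef}), (\ref{ipdef2}) one can build new implementers for the primed data out of the old ones times fixed unitaries coming from the irreducible representations $\pi_L,\pi_R$, and the resulting change in $c_R$ is again a coboundary $\delta$ of an explicit $2$-cochain (this is where one reuses the argument of Lemma \ref{lem3} together with the commutation-by-disjoint-support trick used repeatedly above). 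The same template handles $(\eta_g^\sigma)$: two elements of $\caT(\theta,(\tilde\beta_g))$ differ by inner automorphisms localized in $(C_\theta)_\sigma$, and $\tilde\beta_g\lmk\beta_g^U\rmk^{-1}$ is unchanged. For two $(\tilde\beta_g),(\tilde\beta_g')\in\IG(\omega,\theta)$ one uses $\omega\circ\tilde\beta_g=\omega=\omega\circ\tilde\beta_g'$, so $\tilde\beta_g'\tilde\beta_g^{-1}$ fixes $\omega$ and is implemented on $\caH_0$; combined with the corresponding $(\eta_g^\sigma)$'s this again only shifts $c_R$ by a coboundary. Finally, for two $\alpha,\alpha'\in\eaut(\omega)$, $\alpha'\alpha^{-1}$ fixes $\omega_0$ hence is unitarily implemented on $\caH_0$ compatibly with the left/right splitting (by $\haut$-type arguments, after adjusting $\theta$), and one transports the whole package along this implementer.

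\textbf{Step 3 ($\theta$-independence).} For $\theta<\theta'$ with both data present, one checks that $\caT(\theta,(\tilde\beta_g))$-data and a $\theta$-decomposition of $\alpha$ can be promoted to $\theta'$-data by tensoring with identity automorphisms on the annular region $\caC_{(\theta,\theta']}$ (absorbing the ``excess'' into $\Theta$ and the $\eta_g^\sigma$), and the cocycle is literally unchanged; monotonicity then gives independence of $\theta$ on the whole range.

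\textbf{Main obstacle.} The genuinely delicate point is Step 2 in the $(\alpha_L,\alpha_R,\Theta)$ and $\alpha$ cases: one must show that the boundary ambiguities really are inner \emph{relative to the correct subalgebra} ($\caA_{(C_\theta)^c}$, $\caA_{(C_\theta)_\sigma}$), so that the new implementers differ from the old ones by operators that commute appropriately with the $u_R$'s — otherwise the correction term need not be a $2$-\emph{cochain} valued in $\bbT$. Tracking which unitaries land in $\bbC\unit_{\caH_L}\otimes\caB(\caH_R)$ (via parts (iii), (iv) of Lemma \ref{lemni}) versus the full $\caB(\caH_0)$, and verifying that every correction is scalar on the relevant factor, is the bookkeeping heart of the argument; the rest is formal group cohomology.
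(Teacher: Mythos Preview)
Your overall architecture (vary one piece of data at a time) and Steps~1 and~3 are correct and match the paper. Step~1 is exactly the paper's first lemma; Step~3 is even simpler than you suggest: since $\caT(\theta_1,\cdot)\subset\caT(\theta_2,\cdot)$ and $\caD_\alpha^{\theta_2}\subset\caD_\alpha^{\theta_1}$ for $\theta_1<\theta_2$, the \emph{same} $\lmk(W_g),(u_\sigma)\rmk$ lies in both $\IP$-sets and the cocycle is literally identical.

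Step~2, however, has a genuine gap. Your claim that for two decompositions $(\alpha_L,\alpha_R,\Theta)$, $(\alpha_L',\alpha_R',\Theta')$ of the same $\alpha$ the ratios $\alpha_\sigma'^{-1}\alpha_\sigma$ are ``(up to inner) supported near the boundary direction $\theta$'' is unjustified and in general false: $\alpha_\sigma,\alpha_\sigma'$ are arbitrary automorphisms of $\caA_{H_\sigma}$ subject only to the global constraint $\inn\circ(\alpha_L\otimes\alpha_R)\circ\Theta=\inn\circ(\alpha_L'\otimes\alpha_R')\circ\Theta'$, which does not localize their ratio. Likewise your ``$\Theta'\Theta^{-1}$ is inner modulo $\Aut(\caA_{(C_\theta)^c})$'' is vacuous, since $\Theta'\Theta^{-1}$ already lies in that group. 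The paper does \emph{not} attempt any such structural control of the automorphism differences. Instead it treats $\alpha$ and $(\alpha_L,\alpha_R,\Theta)$ simultaneously: from $\omega_0\circ\alpha_1=\omega_0\circ\alpha_2$ one gets a single unitary $V$ on $\caH_0$ with $\pi_0\circ\alpha_{0,2}\circ\Theta_2=\Ad(V)\circ\pi_0\circ\alpha_{0,1}\circ\Theta_1$, and then a direct computation (using only that $\Theta_i\in\Aut(\caA_{(C_\theta)^c})$ commutes with the $\eta$-parts) shows that $V(\unit_{\caH_L}\otimes u_{R,1}(g,h))V^*$ is again of the form $\unit_{\caH_L}\otimes u_{R,2}(g,h)$ and that $\lmk(VW_{g,1}V^*),(u_{\sigma,2})\rmk\in\IP(\ldots,\alpha_2,\ldots)$ with $c_{R,2}=c_{R,1}$ \emph{on the nose}, not merely up to coboundary.

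The $(\tilde\beta_g)$ step is similarly more delicate than your sketch: one needs the implementer of $K_g:=\eta_{g,2}\eta_{g,1}^{-1}$ to \emph{split} as $V_g^L\otimes V_g^R$ on $\caH_L\otimes\caH_R$, which requires a separate quasi-equivalence argument on each half-plane (not just that $\tilde\beta_g'\tilde\beta_g^{-1}$ fixes $\omega$). Without this splitting, $\Ad(W_g)(\unit\otimes V_h^R)$ need not stay in $\bbC\unit_{\caH_L}\otimes\caB(\caH_R)$ and the cocycle comparison breaks down. Your ``main obstacle'' paragraph correctly senses this is the crux, but the resolution you propose (innerness in the right subalgebra) is not the mechanism that works.
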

\begin{defn}
Let $\omega\in\QLS$ with $\IG(\omega)\neq\emptyset$.
We denote the third cohomology given in Theorem \ref{welldefined} by
\[
h(\omega):=h^{(1)}
\lmk
\omega, \alpha , \theta,
(\tilde\beta_g), (\eta_{g}^\sigma),
(\alpha_L,\alpha_R,\Theta),\lmk (W_g), (u_\sigma(g,h))\rmk
\rmk.
\]
\end{defn}

First we show the independence from $\lmk (W_g), (u_\sigma(g,h))\rmk$.
\begin{lem}
Let
\begin{align}
&\omega\in\QLS, \; \alpha\in \eaut(\omega), \; 0<\theta<\frac\pi 2,\;
(\tilde\beta_g)\in \IG(\omega,\theta), \; (\eta_{g}^\sigma)\in \caT\lmk\theta, (\tilde\beta_g)\rmk,\;
(\alpha_L,\alpha_R,\Theta)\in \caD_{\alpha}^{\theta},\\
&\lmk (W_g), (u_\sigma(g,h))\rmk, \lmk (\tilde W_g), (\tilde u_\sigma(g,h))\rmk\in \IP\lmk
\omega, \alpha, \theta,
(\tilde\beta_g), (\eta_{g}^\sigma),
(\alpha_L,\alpha_R,\Theta)
\rmk.
\end{align}
Then
we have
\begin{align}\label{hq1}
h^{(1)}
\lmk
\omega, \alpha , \theta,
(\tilde\beta_g), (\eta_{g}^\sigma),
(\alpha_L,\alpha_R,\Theta),\lmk (W_g), (u_\sigma(g,h))\rmk
\rmk
=
h^{(1)}
\lmk
\omega, \alpha , \theta,
(\tilde\beta_g), (\eta_{g}^\sigma),
(\alpha_L,\alpha_R,\Theta),\lmk (\tilde W_g), (\tilde u_\sigma(g,h))\rmk
\rmk.
\end{align}
\begin{defn}
From this lemma and because there is always
$\lmk (W_g), (u_R(g,h))\rmk$
in $\IP\lmk
\omega, \alpha, \theta,
(\tilde\beta_g), (\eta_{g}^\sigma),
(\alpha_L,\alpha_R,\Theta)
\rmk$ by
 Lemma \ref{ichi},  we may
 define
 \begin{align}
 h^{(2)}
\lmk
\omega, \alpha , \theta,
(\tilde\beta_g), (\eta_{g}^\sigma),
(\alpha_L,\alpha_R,\Theta)
\rmk
:=
 h^{(1)}
\lmk
\omega, \alpha , \theta,
(\tilde\beta_g), (\eta_{g}^\sigma),
(\alpha_L,\alpha_R,\Theta),\lmk (W_g), (u_\sigma(g,h))\rmk
\rmk
 \end{align}
 for any 
 \begin{align}
 \omega\in\QLS, \; \alpha\in \eaut(\omega), \; 0<\theta<\frac\pi 2,\;
(\tilde\beta_g)\in \IG(\omega,\theta), \; (\eta_{g}^\sigma)\in \caT\lmk\theta, (\tilde\beta_g)\rmk,\;
(\alpha_L,\alpha_R,\Theta)\in \caD_{\alpha}^{\theta},
 \end{align}
 independent of the choice of $\lmk (W_g), (u_\sigma(g,h))\rmk$.
\end{defn}
\end{lem}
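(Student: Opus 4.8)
The plan is to exploit that, since $\pi_0$ and $\pi_R$ are irreducible, the unitaries occurring in an element of $\IP(\omega,\alpha,\theta,(\tilde\beta_g),(\eta_g^\sigma),(\alpha_L,\alpha_R,\Theta))$ are pinned down, up to a scalar phase, by the implementation relations (\ref{ipdef}) and (\ref{ipdef2}), whose right-hand sides depend only on the fixed data $\omega,\alpha,\theta,(\tilde\beta_g),(\eta_g^\sigma),(\alpha_L,\alpha_R,\Theta)$ and not on the chosen unitaries. Concretely, if $((W_g),(u_\sigma(g,h)))$ and $((\tilde W_g),(\tilde u_\sigma(g,h)))$ both lie in this set, then from (\ref{ipdef}) the unitary $\tilde W_g^* W_g$ commutes with $\pi_0(\caA)$, hence $\tilde W_g^* W_g\in\pi_0(\caA)'=\bbC\unit_{\caH_0}$ and $\tilde W_g=\lambda_g W_g$ for some $\lambda_g\in\bbT$; likewise, from (\ref{ipdef2}), $\tilde u_R(g,h)^* u_R(g,h)\in\pi_R(\caA_{H_R})'=\bbC\unit_{\caH_R}$, so $\tilde u_R(g,h)=\mu_R(g,h)\,u_R(g,h)$ for some $\mu_R(g,h)\in\bbT$.

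Next I would substitute $\tilde W_g=\lambda_g W_g$ and $\tilde u_R(g,h)=\mu_R(g,h)\,u_R(g,h)$ into the defining identity (\ref{uwc}) of Lemma \ref{lem3} written for the tilded data, which determines $\tilde c_R$. Since conjugation by $\tilde W_g$ is insensitive to the scalar $\lambda_g$, the $\lambda_g$'s drop out entirely and only the phases $\mu_R$ survive; rewriting the term $\big(W_g(\unit_{\caH_L}\otimes u_R(h,k))W_g^*\big)(\unit_{\caH_L}\otimes u_R(g,hk))$ by means of (\ref{uwc}) for the untilded data and comparing the coefficients of the invertible (hence nonzero) operator $\unit_{\caH_L}\otimes u_R(g,h)\,u_R(gh,k)$ yields
\[
\tilde c_R(g,h,k)\;=\;c_R(g,h,k)\cdot\frac{\mu_R(g,h)\,\mu_R(gh,k)}{\mu_R(h,k)\,\mu_R(g,hk)},\qquad g,h,k\in G.
\]
The quotient on the right is the $3$-coboundary of the $2$-cochain $\nu(g,h):=\mu_R(g,h)^{-1}$ in $C^2(G,\bbT)$, so $\tilde c_R$ and $c_R$ represent the same class in $H^3(G,\bbT)$. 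This is exactly (\ref{hq1}), and it makes the subsequent definition of $h^{(2)}(\omega,\alpha,\theta,(\tilde\beta_g),(\eta_g^\sigma),(\alpha_L,\alpha_R,\Theta))$ independent of the choice of $((W_g),(u_\sigma(g,h)))$.

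I do not anticipate a genuine obstacle here: the argument is a routine substitution combined with the elementary Schur-type uniqueness of implementing unitaries. The only points that demand a bit of care are keeping the group-cohomology sign convention straight, so that the displayed $\bbT$-valued function of $(g,h,k)$ is correctly recognized as lying in $B^3(G,\bbT)$ rather than merely being some element of $\bbT$ indexed by $G^3$, and recording that (\ref{ipdef}) and (\ref{ipdef2}) hold with the \emph{same} right-hand side for both choices — which is immediate since that right-hand side is part of the data defining the set $\IP(\omega,\alpha,\theta,(\tilde\beta_g),(\eta_g^\sigma),(\alpha_L,\alpha_R,\Theta))$.
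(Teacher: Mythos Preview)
Your proposal is correct and follows essentially the same approach as the paper: both use irreducibility of $\pi_0$ and $\pi_R$ to conclude that the two choices of implementing unitaries differ only by scalar phases, then substitute into (\ref{uwc}) to see that the resulting $3$-cocycles differ by the coboundary of the $2$-cochain built from those phases. The paper's computation and yours are the same up to notation (the paper writes $W_g=b(g)\tilde W_g$, $\tilde u_R(g,h)=a(g,h)u_R(g,h)$ and arrives at $\tilde c_R(g,h,k)=c_R(g,h,k)\,\overline{a(h,k)a(g,hk)}\,a(g,h)a(gh,k)$, matching your displayed formula with $a=\mu_R$).
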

\begin{proof}
Because 
\begin{align}
&\Ad\lmk W_g\rmk\circ\pi_0
=\pi_0\circ\alpha_0\circ\Theta\circ\eta_g\beta_g^U\circ\Theta^{-1}\circ\alpha_0^{-1}
=\Ad\lmk \tilde W_g\rmk\circ\pi_0,\\
&\Ad\lmk u_R(g,h)\rmk\circ\pi_R
=\pi_R\circ\alpha_R\circ\eta_g^R\beta_g^{R U}
\eta_h^R\lmk\beta_g^{R U}\rmk^{-1}\lmk \eta_{gh}^R\rmk^{-1}
\circ\alpha_R^{-1}=\Ad\lmk \tilde u_R(g,h)\rmk\circ\pi_R
\end{align}
and $\pi_{0}$, $\pi_{R}$ are irreducible, 
there are $b(g),a(g,h)\in\bbT$, $g,h\in G$ such that
\begin{align}\label{propwu}
W_g=b(g)\tilde W_g,\quad
\tilde u_R(g,h)=a(g,h)u_R(g,h).
\end{align}
Set 
\begin{align}
&c_{R}:=c_R\lmk
\omega, \alpha, \theta,
(\tilde\beta_g), (\eta_{g}^\sigma),
(\alpha_L,\alpha_R,\Theta),\lmk (W_g), (u_\sigma(g,h))\rmk
\rmk,\notag\\
&\tilde c_R:=c_{R}\lmk
\omega, \alpha, \theta,
(\tilde\beta_g), (\eta_{g}^\sigma),
(\alpha_L,\alpha_R,\Theta),\lmk (\tilde W_g), (\tilde u_\sigma(g,h))\rmk
\rmk.
\end{align}
Then from the definition of these values and (\ref{propwu}), we have
\begin{align}
&a(g,h)a(gh,k)\lmk \unit_{\caH_L}\otimes u_R(g,h) u_R(gh,k)\rmk
=
\unit_{\caH_L}\otimes \tilde u_R(g,h) \tilde u_R(gh,k)\notag\\
&=\tilde c_R(g,h,k)
\lmk \tilde W_g\lmk \unit_{\caH_L}\otimes \tilde u_R(h,k)\rmk \tilde W_g^*\rmk
\lmk \unit_{\caH_L}\otimes \tilde u_R(g,hk)\rmk\notag\\
&=\tilde c_R(g,h,k)a(h,k)a(g,hk)
\lmk W_g\lmk \unit_{\caH_L}\otimes u_R(h,k)\rmk  W_{g}^{*}\rmk
\lmk \unit_{\caH_L}\otimes u_R(g,hk)\rmk\notag\\
&=\tilde c_R(g,h,k)a(h,k)a(g,hk)\overline{c_{R}(g,h,k)}
\lmk \unit_{\caH_L}\otimes u_R(g,h) u_R(gh,k)\rmk.
\end{align}
Hence we have $\tilde c_{R}(g,h,k)=c_R(g,h,k)\overline{a(h,k)a(g,hk)}a(g,h)a(gh,k)$, and
we get $[c_{R}]_{H^{3}(G,\bbT)}=[\tilde c_{R}]_{H^{3}(G,\bbT)}$, proving the claim.
\end{proof}
Next  we show the independence from $\alpha,
(\alpha_L,\alpha_R,\Theta)$.
\begin{lem}
Let
\begin{align}
&\omega\in\QLS, \; \alpha_{1},\alpha_{2}\in \eaut(\omega), \; 0<\theta<\frac\pi 2,\;
(\tilde\beta_g)\in \IG(\omega,\theta), \; (\eta_{g}^\sigma)\in \caT\lmk\theta, (\tilde\beta_g)\rmk,\\
&
(\alpha_{L,1},\alpha_{R,1},\Theta_{1})\in \caD_{\alpha_{1}}^{\theta},\quad
(\alpha_{L,2},\alpha_{R,2},\Theta_{2})\in \caD_{\alpha_{2}}^{\theta}.
\end{align}
Then
we have
\begin{align}\label{hq2}
h^{(2)}
\lmk
\omega, \alpha_{1} , \theta,
(\tilde\beta_g), (\eta_{g}^\sigma),
(\alpha_{L,1},\alpha_{R,1},\Theta_{1})
\rmk
=
h^{(2)}
\lmk
\omega, \alpha_{2} , \theta,
(\tilde\beta_g), (\eta_{g}^\sigma),
(\alpha_{L,2},\alpha_{R,2},\Theta_{2})
\rmk.
\end{align}
\end{lem}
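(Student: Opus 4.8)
The plan is to reduce the statement to the special case where $\alpha_{1}$ and $\alpha_{2}$ differ by an automorphism that is, up to inner, of product form relative to the cut $H_{L}\sqcup H_{R}$, and then exploit the freedom in choosing the unitaries $(W_{g}),(u_{\sigma}(g,h))$. Since $\alpha_{1},\alpha_{2}\in\eaut(\omega)$, we have $\omega_{0}\circ\alpha_{1}=\omega=\omega_{0}\circ\alpha_{2}$, so $\gamma:=\alpha_{2}\circ\alpha_{1}^{-1}$ fixes $\omega_{0}$. First I would record what $\gamma$ looks like relative to the two chosen decompositions: writing $\alpha_{i}=\Ad V_{i}\circ\alpha_{0,i}\circ\Theta_{i}$ with $\alpha_{0,i}=\alpha_{L,i}\otimes\alpha_{R,i}$, one sees that $\alpha_{0,2}\circ\Theta_{2}$ and $\alpha_{0,1}\circ\Theta_{1}$ differ by $\Ad$ of a unitary on the left, so $\gamma$ is of the form $\inn\circ(\text{tensor over }H_{L},H_{R})\circ(\text{supported on }(C_{\theta})^{c})$ composed in the appropriate order; in particular $\gamma\in\qaut(\caA)$ and, since it preserves $\omega_{0}$ and $\omega_{0}$ is a pure product state across the $L/R$ cut, the left and right product pieces of $\gamma$ each preserve $\omega_{L}$, $\omega_{R}$ respectively up to the inner correction.

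Next I would produce, starting from an element $((W_{g}^{(1)}),(u_{\sigma}^{(1)}(g,h)))\in\IP(\omega,\alpha_{1},\theta,(\tilde\beta_{g}),(\eta_{g}^{\sigma}),(\alpha_{L,1},\alpha_{R,1},\Theta_{1}))$, an \emph{explicit} candidate element for $\alpha_{2}$: one conjugates by the GNS implementers on $\caH_{0}$ of the various automorphisms appearing in the change from $\alpha_{1}$ to $\alpha_{2}$. Concretely, because $\pi_{0}\circ\alpha_{1}$ and $\pi_{0}\circ\alpha_{2}$ are GNS representations of the same state $\omega$, there is a unitary $Y$ on $\caH_{0}$ with $\pi_{0}\circ\alpha_{2}=\Ad Y\circ\pi_{0}\circ\alpha_{1}$, and the key point is that $Y$ can be chosen in $\caB(\caH_{L})\otimes\caB(\caH_{R})$ \emph{locally}, i.e. compatibly with the product structure, because $\gamma$ is $\qaut$ and $\omega_{0}$ is a product state — this is the standard ``split'' argument used to define these implementers and is exactly the mechanism already exploited in Lemma \ref{ichi}. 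Setting $W_{g}^{(2)}:=Y\,\pi_{0}(\,\cdots\,)W_{g}^{(1)}\,\pi_{0}(\,\cdots\,)^{*}Y^{*}$ and $u_{\sigma}^{(2)}(g,h):=Y_{\sigma}\,\cdots\,u_{\sigma}^{(1)}(g,h)\,\cdots\,Y_{\sigma}^{*}$ with $Y=Y_{L}\otimes Y_{R}$, one checks directly from \eqref{ipdef}, \eqref{ipdef2} that this new pair lies in $\IP(\omega,\alpha_{2},\theta,(\tilde\beta_{g}),(\eta_{g}^{\sigma}),(\alpha_{L,2},\alpha_{R,2},\Theta_{2}))$: the conjugations by $Y$, $Y_{\sigma}$ and by the GNS implementers of the $L/R$-product pieces of $\gamma$ all cancel in the relevant Ad-formulas precisely because those pieces are supported in $H_{L}$ or $H_{R}$ and because the $(C_{\theta})^{c}$-supported pieces commute with the $C_{\theta}$-supported $\eta$'s and $\beta^{U}$-conjugates (the same commutation lemma used repeatedly in Lemma \ref{lemni}).

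Finally I would feed this explicit $((W_{g}^{(2)}),(u_{R}^{(2)}(g,h)))$ into the cocycle identity \eqref{uwc} and compare with the one for $((W_{g}^{(1)}),(u_{R}^{(1)}(g,h)))$. Because $u_{R}^{(2)}(g,h)$ differs from $u_{R}^{(1)}(g,h)$ by conjugation by the $\caH_{R}$-operator $Y_{R}$ together with conjugation by a fixed $\caH_{R}$-implementer (independent of $g,h$) and by an implementer of $\eta_{g}^{R}\beta_{g}^{RU}$, and because $W_{g}^{(2)}$ is the corresponding conjugate of $W_{g}^{(1)}$, the scalar $c_{R}$ extracted from \eqref{uwc} is unchanged on the nose — all the conjugating unitaries pass through and cancel, leaving $c_{R}^{(2)}(g,h,k)=c_{R}^{(1)}(g,h,k)$, hence certainly the same cohomology class. (The already-proven independence from the choice of $((W_{g}),(u_{\sigma}(g,h)))$, i.e. the definition of $h^{(2)}$, then guarantees that the $c_{R}$ computed from \emph{any} admissible choice for $\alpha_{2}$ gives this same class.) The main obstacle is the second step: making precise that the unitary $Y$ intertwining the two GNS representations can be taken in product form $Y_{L}\otimes Y_{R}$ with $Y_{\sigma}$ implementing the $\sigma$-localized part of $\gamma$ up to the inner correction, and keeping track of the bookkeeping of which conjugating unitary lives on $\caH_{0}$ versus $\caH_{\sigma}$ so that the cancellation in \eqref{uwc} is clean; once that is set up the cocycle comparison is purely formal.
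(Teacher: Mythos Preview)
Your overall strategy is right, and in fact the paper's proof follows exactly this pattern: pick $((W_{g,1}),(u_{\sigma,1}(g,h)))$ for $\alpha_{1}$, conjugate by an intertwining unitary $V$ to obtain an element of $\IP$ for $\alpha_{2}$, and then check that the cocycle $c_{R}$ is literally unchanged under this conjugation. The difference lies in your second step, precisely at the obstacle you yourself flag.

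You want the intertwiner $Y$ between $\pi_{0}\circ\alpha_{0,1}\circ\Theta_{1}$ and $\pi_{0}\circ\alpha_{0,2}\circ\Theta_{2}$ to be a pure tensor $Y_{L}\otimes Y_{R}$. This is not available: the automorphism relating the two sides contains the factor $\Theta_{2}\Theta_{1}^{-1}\in\Aut(\caA_{(C_{\theta})^{c}})$, and $(C_{\theta})^{c}$ straddles both half-planes, so nothing forces the implementer to be of product form. Saying $Y\in\caB(\caH_{L})\otimes\caB(\caH_{R})$ is vacuous (that tensor product is all of $\caB(\caH_{0})$); what you need is $Y=Y_{L}\otimes Y_{R}$, and the ``split'' argument does not give this.

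The paper sidesteps the issue entirely. It takes a \emph{single} unitary $V$ on $\caH_{0}$ with $\pi_{0}\circ\alpha_{0,2}\circ\Theta_{2}=\Ad(V)\circ\pi_{0}\circ\alpha_{0,1}\circ\Theta_{1}$ (no product structure assumed), and then \emph{proves} that $V(\unit_{\caH_{L}}\otimes u_{R,1}(g,h))V^{*}$ is again of the form $\unit_{\caH_{L}}\otimes u_{R,2}(g,h)$. This follows by computing $\Ad(V(\unit\otimes u_{R,1})V^{*})\circ\pi_{0}$: the $\Theta$'s cancel against the $(C_{\theta})$-supported $\eta_{g}^{R}\beta_{g}^{RU}\eta_{h}^{R}(\beta_{g}^{RU})^{-1}(\eta_{gh}^{R})^{-1}$ because their supports are disjoint, leaving $\pi_{L}\otimes(\pi_{R}\circ\alpha_{R,2}\circ(\cdots)\circ\alpha_{R,2}^{-1})$, and irreducibility of $\pi_{L}$ then forces the conjugated element into $\bbC\unit_{\caH_{L}}\otimes\caB(\caH_{R})$. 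With $(VW_{g,1}V^{*},u_{\sigma,2})$ so obtained, conjugating the cocycle identity \eqref{uwc} by $V$ gives $c_{R,1}=c_{R,2}$ on the nose. So you do not need $Y_{L}\otimes Y_{R}$ at all; the product structure of the $u_{\sigma}$'s is recovered \emph{a posteriori} from the support-disjointness of $\Theta_{i}$ and the $\eta$'s, not imposed on $V$ \emph{a priori}.
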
\begin{defn}
From this lemma
and because there are always 
$ \alpha\in \eaut(\omega)$
and $(\alpha_L,\alpha_R,\Theta)\in \caD_{\alpha}^{\theta}$ 
for $\omega\in \QLS$ and $0<\theta<\frac\pi 2$ by the definition,
we may
 define
 \begin{align}
 h^{(3)}
\lmk
\omega, \theta,
(\tilde\beta_g), (\eta_{g}^\sigma)
\rmk
:=
 h^{(2)}
\lmk
\omega, \alpha , \theta,
(\tilde\beta_g), (\eta_{g}^\sigma),
(\alpha_{L},\alpha_{R},\Theta)
\rmk
  \end{align}
   for any 
 \begin{align}
 \omega\in\QLS, \;, \; 0<\theta<\frac\pi 2,\;
(\tilde\beta_g)\in \IG(\omega,\theta), \; (\eta_{g}^\sigma)\in \caT\lmk\theta, (\tilde\beta_g)\rmk,
 \end{align}
 independent of the choice of $\alpha,
(\alpha_L,\alpha_R,\Theta)$.
\end{defn}
\begin{proof}
By Lemma \ref{ichi}, there are 
\begin{align}
\lmk (W_{g,1}), (u_{\sigma,1}(g,h))\rmk
\in \IP\lmk
\omega, \alpha_{1}, \theta,
(\tilde\beta_g), (\eta_{g}^\sigma),
(\alpha_{L,1},\alpha_{R,1},\Theta_1)
\rmk.
\end{align}
For each $i=1,2$, 
we have $\Theta_{i}\in\Aut\caA_{C_{\theta}^{c}}$
and
\begin{align}\label{aii}
\alpha_{i}=\inn\circ\alpha_{0,i}\circ\Theta_{i},
\end{align}
setting
\begin{align}
\alpha_{0,i}:=\alpha_{L,i}\otimes \alpha_{R,i}.
\end{align}
Because $\omega_{0}\circ\alpha_{1}=\omega=\omega_{0}\circ\alpha_{2}$,
we have $\omega_{0}\circ\alpha_{2}\circ\alpha_{1}^{-1}=\omega_{0}$.
Therefore, there is a unitary $\tilde V$ on $\caH_{0}$
such that
$
\pi_{0}\circ \alpha_{2}\circ\alpha_{1}^{-1}=\Ad\lmk \tilde V\rmk\circ\pi_{0}
$.
Substituting (\ref{aii}) to this,
we see that there is a unitary $V$ on $\caH_{0}$
satisfying
\begin{align}
\pi_{0}\circ \alpha_{0, 2}\circ\Theta_{2}=\Ad\lmk  V\rmk\circ\pi_{0}\circ \alpha_{0, 1}\circ\Theta_{1}.
\end{align}
From this, we obtain
\begin{align}\label{wno}
\begin{split}
&\Ad\lmk V W_{g,1} V^{*}\rmk\circ\pi_{0}
=\Ad\lmk V W_{g,1}\rmk \pi_{0}\circ \alpha_{0, 1}\circ\Theta_{1}
\circ \Theta_{2}^{-1}\circ  \alpha_{0, 2}^{-1}\\
&=\Ad\lmk V\rmk\circ \pi_{0}
\circ\alpha_{0,1}\circ\Theta_{1}\circ\eta_g\beta_g^U\circ\Theta_{1}^{-1}\circ\alpha_{0,1}^{-1}
\circ \alpha_{0, 1}\circ\Theta_{1}
\circ \Theta_{2}^{-1}\circ  \alpha_{0, 2}^{-1}\\
&=\pi_{0}\circ\alpha_{0, 2}\circ\Theta_{2}\circ \Theta_{1}^{-1}\circ \alpha_{0, 1}^{-1}
\circ\alpha_{0,1}\circ\Theta_{1}\circ\eta_g\beta_g^U\circ\Theta_{1}^{-1}\circ\alpha_{0,1}^{-1}
\circ \alpha_{0, 1}\circ\Theta_{1}
\circ \Theta_{2}^{-1}\circ  \alpha_{0, 2}^{-1}\\
&=\pi_{0}\circ\alpha_{0, 2}\circ\Theta_{2}
\circ\eta_g\beta_g^U
\circ \Theta_{2}^{-1}\circ  \alpha_{0, 2}^{-1},
\end{split}
\end{align}
for all $g\in G$.
Furthermore, we have
\begin{align}
&\Ad\lmk
V\lmk\unit_{\caH_{L}}\otimes u_{R,1}(g,h)\rmk V^{*}
\rmk\circ\pi_{0}
=
\Ad\lmk
V\lmk\unit_{\caH_{L}}\otimes u_{R,1}(g,h)\rmk 
\rmk\circ
\pi_{0}\circ \alpha_{0, 1}\circ\Theta_{1}
\circ \Theta_{2}^{-1}\circ  \alpha_{0, 2}^{-1}\notag\\
&=\Ad\lmk
V\rmk\circ
\pi_{0}\circ 
\lmk \id_{\caA_{H_{L}}}\otimes \alpha_{R,1} \eta_g^R\beta_g^{R U}
\eta_h^R\lmk\beta_g^{R U}\rmk^{-1}\lmk \eta_{gh}^R\rmk^{-1}\alpha_{R,1}^{-1}\rmk
\alpha_{0, 1}\circ\Theta_{1}
\circ \Theta_{2}^{-1}\circ  \alpha_{0, 2}^{-1}\notag\\
&=
\pi_{0}\circ \alpha_{0, 2}\circ\Theta_{2}\circ \Theta_{1}^{-1}
\circ \alpha_{0, 1}^{-1}\lmk \id_{\caA_{H_{L}}}\otimes \alpha_{R,1} \eta_g^R\beta_g^{R U}
\eta_h^R\lmk\beta_g^{R U}\rmk^{-1}\lmk \eta_{gh}^R\rmk^{-1}\alpha_{R,1}^{-1}\rmk
\circ\alpha_{0, 1}\circ\Theta_{1}
\circ \Theta_{2}^{-1}\circ  \alpha_{0, 2}^{-1}
\notag\\
&=
\pi_{0}\circ \alpha_{0, 2}\circ\Theta_{2}\circ \Theta_{1}^{-1}
\circ \lmk \id_{\caA_{H_{L}}}\otimes  \eta_g^R\beta_g^{R U}
\eta_h^R\lmk\beta_g^{R U}\rmk^{-1}\lmk \eta_{gh}^R\rmk^{-1}\rmk
\circ\Theta_{1}
\circ \Theta_{2}^{-1}\circ  \alpha_{0, 2}^{-1}
\label{vuv}
\end{align}
Now, because $\eta_g^R\beta_g^{R U}
\eta_h^R\lmk\beta_g^{R U}\rmk^{-1}\lmk \eta_{gh}^R\rmk^{-1}$
is an automorphism on $\caA_{C_{\theta}}$
and $\Theta_{2}\circ \Theta_{1}^{-1}$  is an automorphism
on  $\caA_{C_{\theta}^{c}}$, they commute.
Therefore, we have
\begin{align}
&\Ad\lmk
V\lmk\unit_{\caH_{L}}\otimes u_{R,1}(g,h)\rmk V^{*}
\rmk\circ\pi_{0}\notag\\
&=
(\ref{vuv})=
\pi_{0}\circ \alpha_{0, 2}
\circ \lmk \id_{\caA_{H_{L}}}\otimes  \eta_g^R\beta_g^{R U}
\eta_h^R\lmk\beta_g^{R U}\rmk^{-1}\lmk \eta_{gh}^R\rmk^{-1}\rmk
\circ  \alpha_{0, 2}^{-1}\notag\\
&=
\pi_{L}\otimes 
\lmk \pi_{R}\circ \alpha_{R,2}\eta_{g}^{R}\beta_{g}^{RU}
\eta_{h}^{R}\lmk \beta_{g}^{RU}\rmk^{-1}
\lmk \eta_{gh}^{R}\rmk^{-1}\lmk \alpha_{R,2}\rmk^{-1}\rmk.
\end{align}
From this equality and the fact that $\pi_{L}$ is irreducible,
we see that $V\lmk\unit_{\caH_{L}}\otimes u_{R,1}(g,h)\rmk V^{*}$ is of the form
$\unit_{\caH_{L}}\otimes u_{R,2}(g,h)$ with some unitary $u_{R,2}(g,h)$ on $\caH_{R}$.
This $u_{R,2}(g,h)$ satisfies
\begin{align}\label{urno}
\Ad\lmk u_{R,2}(g,h)\rmk\circ\pi_{R}
=\pi_{R}\circ \alpha_{R,2}\eta_{g}^{R}\beta_{g}^{RU}
\eta_{h}^{R}\lmk \beta_{g}^{RU}\rmk^{-1}
\lmk \eta_{gh}^{R}\rmk^{-1}\lmk \alpha_{R,2}\rmk^{-1}.
\end{align}
Analogously, we obtain a
unitary $u_{L,2}(g,h)$ on $\caH_{L}$
such that
\begin{align}\label{ulno}
&V\lmk
 u_{L,1}(g,h)\otimes \unit_{\caH_{R}}\rmk V^{*}= u_{L,2}(g,h)\otimes\unit_{\caH_{R}},\\
&\Ad\lmk u_{L,2}(g,h)\rmk\circ\pi_{L}
=\pi_{L}\circ \alpha_{L,2}\eta_{g}^{L}\beta_{g}^{LU}
\eta_{h}^{L}\lmk \beta_{g}^{LU}\rmk^{-1}
\lmk \eta_{gh}^{L}\rmk^{-1}\lmk \alpha_{L,2}\rmk^{-1}.
\end{align}
From (\ref{wno}), (\ref{urno}), (\ref{ulno}), we see that 
\begin{align}
\lmk  \lmk V W_{g,1} V^{*}\rmk , \lmk u_{\sigma,2}(g,h)\rmk\rmk\in 
 \IP\lmk
\omega, \alpha_{2}, \theta,
(\tilde\beta_g), (\eta_{g}^\sigma),
(\alpha_{L,2},\alpha_{R,2},\Theta)
\rmk.
\end{align}
Set 
\begin{align}
&c_{R,1}:=c_R\lmk
\omega, \alpha_{1}, \theta,
(\tilde\beta_g), (\eta_{g}^\sigma),
(\alpha_{L,1},\alpha_{R,1},\Theta_{1}),\lmk (W_{g,1}), (u_{\sigma,1}(g,h))\rmk
\rmk,\notag\\
&c_{R,2}:=c_{R}\lmk
\omega, \alpha_{2}, \theta,
(\tilde\beta_g), (\eta_{g}^\sigma),
(\alpha_{L,2},\alpha_{R,2},\Theta_{2}),
\lmk V W_{g,1} V^{*}\rmk , \lmk u_{\sigma,2}(g,h)\rmk
\rmk.
\end{align}
It suffices to show that $c_{R,1}=c_{R,2}$.
This can be checked directly as follows:
\begin{align}
&V
\lmk \unit_{\caH_L}\otimes u_{R,1}(g,h) u_{R,1}(gh,k)
\rmk
V^{*}
=
\unit_{\caH_L}\otimes u_{R,2}(g,h) u_{R,2}(gh,k)\notag\\
&=c_{R,2}(g,h,k)
\lmk V W_{g,1}V^{*}\lmk \unit_{\caH_L}\otimes u_{R,2}(h,k)\rmk VW_{g,1}^*V^{*}\rmk
\lmk \unit_{\caH_L}\otimes u_{R,2}(g,hk)\rmk\notag\\
&=c_{R,2}(g,h,k)V
\lmk W_{g,1}\lmk \unit_{\caH_L}\otimes u_{R,1}(h,k)\rmk W_{g,1}^{*}\rmk
\lmk \unit_{\caH_L}\otimes u_{R,1}(g,hk)\rmk V^{*}\notag\\
&=
c_{R,2}(g,h,k)\overline{c_{R,1}(g,h,k)}V
\lmk
\unit_{\caH_L}\otimes u_{R,1}(g,h) u_{R,1}(gh,k)
\rmk
V^{*}.
\end{align}
\end{proof}
\begin{lem}
Let
\begin{align}
&\omega\in\QLS,  \; 0<\theta<\frac\pi 2,\;
(\tilde\beta_g)\in \IG(\omega,\theta), \; (\eta_{g}^\sigma), (\tilde \eta_{g}^\sigma)\in \caT\lmk\theta, (\tilde\beta_g)\rmk.
\end{align}
Then
we have
\begin{align}\label{hq3}
h^{(3)}
\lmk
\omega , \theta,
(\tilde\beta_g), (\eta_{g}^\sigma)
\rmk
=
h^{(3)}
\lmk
\omega, \theta,
(\tilde\beta_g), (\tilde \eta_{g}^\sigma)
\rmk.
\end{align}
\end{lem}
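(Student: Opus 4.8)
The plan is to exploit the independence results already established in the preceding lemmas: the value $h^{(3)}$ does not depend on $\alpha$, on the decomposition $(\alpha_L,\alpha_R,\Theta)$, or on the implementing data $\bigl((W_g),(u_\sigma(g,h))\bigr)$. Hence it suffices to fix $\omega,\theta,(\tilde\beta_g)$ together with one choice of $\alpha$ and $(\alpha_L,\alpha_R,\Theta)$ and to prove that the $3$-cocycles obtained respectively from $(\eta_g^\sigma)$ and from $(\tilde\eta_g^\sigma)$ are cohomologous. Concretely, by Lemma~\ref{ichi} I would fix $\bigl((W_g),(u_\sigma(g,h))\bigr)\in\IP\bigl(\omega,\alpha,\theta,(\tilde\beta_g),(\eta_g^\sigma),(\alpha_L,\alpha_R,\Theta)\bigr)$ with cocycle $c_R$, and $\bigl((\tilde W_g),(\tilde u_\sigma(g,h))\bigr)\in\IP\bigl(\omega,\alpha,\theta,(\tilde\beta_g),(\tilde\eta_g^\sigma),(\alpha_L,\alpha_R,\Theta)\bigr)$ with cocycle $\tilde c_R$, and show $[c_R]=[\tilde c_R]$ in $H^3(G,\bbT)$. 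The structural input is that, since $(\eta_g^\sigma)$ and $(\tilde\eta_g^\sigma)$ both lie in $\caT(\theta,(\tilde\beta_g))$, the automorphisms $\nu_g^\sigma:=\tilde\eta_g^\sigma\circ(\eta_g^\sigma)^{-1}$ belong to $\Aut\bigl(\caA_{(C_\theta)_\sigma}\bigr)$, so $\nu_g:=\nu_g^L\otimes\nu_g^R=\tilde\eta_g\circ\eta_g^{-1}$, viewed as an automorphism of $\caA$, is supported in $C_\theta$ and therefore commutes with $\Theta\in\Aut\bigl(\caA_{(C_\theta)^c}\bigr)$.

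First I would relate $\tilde W_g$ to $W_g$. Combining $(\ref{ipdef})$ for $W_g$ and for $\tilde W_g$ and using $\Theta\nu_g\Theta^{-1}=\nu_g$ gives
\[
\Ad(\tilde W_g W_g^*)\circ\pi_0=\pi_0\circ\alpha_0\circ\nu_g\circ\alpha_0^{-1}=\bigl(\pi_L\circ\alpha_L\nu_g^L\alpha_L^{-1}\bigr)\otimes\bigl(\pi_R\circ\alpha_R\nu_g^R\alpha_R^{-1}\bigr).
\]
Arguing exactly as in the extraction of $u_\sigma(g,h)$ from $W_gW_hW_{gh}^*$ in the proof of Lemma~\ref{ichi}, this tensor splitting together with the irreducibility of $\pi_L$ and $\pi_R$ produces unitaries $Z_g^\sigma\in\caB(\caH_\sigma)$ with $\Ad(Z_g^\sigma)\circ\pi_\sigma=\pi_\sigma\circ\alpha_\sigma\nu_g^\sigma\alpha_\sigma^{-1}$ and a phase $\lambda_g\in\bbT$ with $\tilde W_g W_g^*=\lambda_g\,(Z_g^L\otimes Z_g^R)$. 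Since $\nu_g^\sigma$ is supported in $(C_\theta)_\sigma$, each $Z_g^\sigma$ commutes with $\pi_\sigma\circ\alpha_\sigma\bigl(\caA_{((C_\theta)^c)_\sigma}\bigr)$, so the localization arguments of Lemma~\ref{lemni} apply verbatim to $Z_g^R$ in the role of a $u_R$.

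Next I would relate $\tilde u_R(g,h)$ to $u_R(g,h)$. Writing $\tilde\eta_g^R=\nu_g^R\circ\eta_g^R$ in $(\ref{ipdef2})$ for $\tilde u_R(g,h)$ and commuting $\nu_h^R$ to the left past $\eta_g^R\beta_g^{RU}$ (legitimate since this conjugation preserves support in $(C_\theta)_R$), the automorphism implemented by $\tilde u_R(g,h)$ becomes $\alpha_R\bigl(\nu_g^R\cdot\nu_h^{R,g}\cdot\rho_R(g,h)\cdot(\nu_{gh}^R)^{-1}\bigr)\alpha_R^{-1}$, where $\rho_R(g,h):=\eta_g^R\beta_g^{RU}\eta_h^R(\beta_g^{RU})^{-1}(\eta_{gh}^R)^{-1}$ is the automorphism implemented by $u_R(g,h)$ and $\nu_h^{R,g}:=\eta_g^R\beta_g^{RU}\,\nu_h^R\,(\eta_g^R\beta_g^{RU})^{-1}$. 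By the computation of Lemma~\ref{lemni}(i) with $Z_h^R$ in place of $u_R(h,k)$, the operator $W_g(\unit_{\caH_L}\otimes Z_h^R)W_g^*$ implements $\alpha_R\,\nu_h^{R,g}\,\alpha_R^{-1}$ on the $\caH_R$-factor, and by Lemma~\ref{lemni}(iii) it lies in $\bbC\unit_{\caH_L}\otimes\caB(\caH_R)$. Reading off the implementer of the whole product and invoking the irreducibility of $\pi_R$ gives, for a suitable phase $\psi(g,h)\in\bbT$,
\[
\unit_{\caH_L}\otimes\tilde u_R(g,h)=\psi(g,h)\,\bigl(\unit_{\caH_L}\otimes Z_g^R\bigr)\,\bigl(W_g(\unit_{\caH_L}\otimes Z_h^R)W_g^*\bigr)\,\bigl(\unit_{\caH_L}\otimes u_R(g,h)\bigr)\,\bigl(\unit_{\caH_L}\otimes Z_{gh}^R\bigr)^{-1}.
\]

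Finally I would substitute this identity and $\tilde W_g=\lambda_g(Z_g^L\otimes Z_g^R)W_g$ into the defining relation $(\ref{uwc})$ for $\tilde c_R$; on all the operators that occur, conjugation by $\tilde W_g$ reduces to conjugation by $(\unit_{\caH_L}\otimes Z_g^R)W_g$, since those operators lie in $\bbC\unit_{\caH_L}\otimes\caB(\caH_R)$ by Lemma~\ref{lemni}(iii), so the $Z_g^L$-parts and the phases $\lambda_g$ are irrelevant. Using the defining relation for $c_R$ and Lemma~\ref{lemni}(ii) in the form $\bigl(\unit_{\caH_L}\otimes u_R(g,h)\bigr)W_{gh}(\unit_{\caH_L}\otimes Z_k^R)W_{gh}^*=W_gW_h(\unit_{\caH_L}\otimes Z_k^R)W_h^*W_g^*\bigl(\unit_{\caH_L}\otimes u_R(g,h)\bigr)$, every $Z^R$-unitary telescopes and cancels, leaving $\tilde c_R(g,h,k)=c_R(g,h,k)\,\psi(g,h)\psi(gh,k)\psi(g,hk)^{-1}\psi(h,k)^{-1}$, so $\tilde c_R$ and $c_R$ differ by the coboundary of the $2$-cochain $\psi$. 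This yields $[c_R]=[\tilde c_R]$, that is, $h^{(3)}(\omega,\theta,(\tilde\beta_g),(\eta_g^\sigma))=h^{(3)}(\omega,\theta,(\tilde\beta_g),(\tilde\eta_g^\sigma))$. I expect the main obstacle to be the third paragraph: pinning down the factorization formula for $\tilde u_R(g,h)$ with all phases accounted for and verifying each localization claim (that the relevant automorphisms are supported in $(C_\theta)_R$, and that conjugation by $W_g$ and by $u_R$ keeps the corresponding unitaries in $\bbC\unit_{\caH_L}\otimes\caB(\caH_R)$); granting that, the final cancellation is mechanical.
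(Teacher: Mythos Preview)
Your approach is correct but differs from the paper's in one essential respect. The paper observes (via Lemma~\ref{bimp}) that since both $(\eta_g^\sigma)$ and $(\tilde\eta_g^\sigma)$ lie in $\caT(\theta,(\tilde\beta_g))$, the difference $\nu_g^\sigma=\tilde\eta_g^\sigma(\eta_g^\sigma)^{-1}$ is not merely an automorphism of $\caA_{(C_\theta)_\sigma}$ but is actually \emph{inner}, implemented by a concrete unitary $v_g^\sigma\in\caU(\caA_{(C_\theta)_\sigma})$. With these $v_g^\sigma$ in hand the paper writes down explicit formulas $\tilde W_g:=\bigl(\pi_L\alpha_L(v_g^L)\otimes\pi_R\alpha_R(v_g^R)\bigr)W_g$ and $\tilde u_\sigma(g,h):=\pi_\sigma\alpha_\sigma\bigl(v_g^\sigma\cdot\eta_g^\sigma\beta_g^{\sigma U}(v_h^\sigma)\bigr)\,u_\sigma(g,h)\,\pi_\sigma\alpha_\sigma\bigl((v_{gh}^\sigma)^*\bigr)$, verifies directly that these lie in $\IP$ for the $\tilde\eta$-data, and then shows by a straight computation that $\tilde c_R=c_R$ \emph{exactly} (no coboundary appears).

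You instead take both sets of implementing data independently and extract your $Z_g^\sigma$ abstractly via Wigner's theorem from the tensor factorization of $\Ad(\tilde W_gW_g^*)$, never invoking innerness of $\nu_g^\sigma$ in the $C^*$-algebra. This is valid---the localization claims you need (that $Z_g^R$ commutes with $\pi_R\alpha_R(\caA_{(C_\theta^c)_R})$, and hence that the proofs of Lemma~\ref{lemni}(i),(iii),(iv) go through with $Z_\bullet^R$ in place of $u_R(\cdot,\cdot)$) do hold because $\nu_g^R$ is supported in $(C_\theta)_R$---and your telescoping in step~4 indeed yields $\tilde c_R(g,h,k)=c_R(g,h,k)\,\psi(g,h)\psi(gh,k)\psi(h,k)^{-1}\psi(g,hk)^{-1}$, so $[\tilde c_R]=[c_R]$. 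The trade-off: the paper's route is shorter and gets equality of cocycles, while yours carries an undetermined $2$-cochain $\psi$ throughout but would apply even in settings where $\nu_g^\sigma$ is only implementable in the GNS representation rather than inner in the algebra. For the statement at hand either suffices.
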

\begin{defn}
From this lemma and the definition of $ \IG(\omega,\theta)$,
we may
 define
 \begin{align}
 h^{(4)}
\lmk
\omega, \theta,
(\tilde\beta_g)
\rmk
:=
 h^{(3)}
\lmk
\omega, \theta,
(\tilde\beta_g), (\eta_{g}^\sigma)
\rmk
  \end{align}
   for any 
 \begin{align}
 \omega\in\QLS,  \; 0<\theta<\frac\pi 2,\;
(\tilde\beta_g)\in \IG(\omega,\theta), \; (\eta_{g}^\sigma)\in \caT\lmk\theta, (\tilde\beta_g)\rmk,
 \end{align}
 independent of the choice of $(\eta_{g}^\sigma)$.
\end{defn}
\begin{proof}
There are
$ \alpha\in \eaut(\omega)$
and $(\alpha_L,\alpha_R,\Theta)\in \caD_{\alpha}^{\theta}$ 
for $\omega\in \QLS$ by the definition.
We set $\alpha_0:=\alpha_L\otimes\alpha_R$ and
$\eta_g:=\eta_g^L\otimes\eta_g^R$,
$\tilde \eta_g:=\tilde \eta_g^L\otimes\tilde \eta_g^R$.
By Lemma \ref{ichi}, there is some
\begin{align}
\lmk (W_{g}), (u_{\sigma}(g,h))\rmk
\in \IP\lmk
\omega, \alpha, \theta,
(\tilde\beta_g), (\eta_{g}^\sigma),
(\alpha_{L},\alpha_{R},\Theta)
\rmk.
\end{align}

Because $ (\eta_{g}^\sigma), (\tilde \eta_{g}^\sigma)\in \caT\lmk\theta, (\tilde\beta_g)\rmk$,
we have
\begin{align}
\tilde\beta_g=\inn\circ\lmk \eta_g^L\otimes\eta_g^R\rmk\circ\beta_g^U
=\inn \circ\lmk\tilde  \eta_g^L\otimes\tilde \eta_g^R\rmk\circ\beta_g^U.
\end{align}
From this, we obtain
\begin{align}
\tilde  \eta_g^L\circ\lmk\eta_g^L\rmk^{-1}\otimes\tilde \eta_g^R\circ\lmk\eta_g^R\rmk^{-1}
=\inn,
\end{align}
hence
there are unitaries $v_g^\sigma\in\caA_{H_\sigma}$, $\sigma=L,R$
such that
\begin{align}
\tilde  \eta_g^\sigma\circ\lmk\eta_g^\sigma\rmk^{-1}
=\Ad\lmk v_g^\sigma\rmk.
\end{align}
Because $\tilde\eta_g^\sigma, \eta_g^\sigma$ are
automorphisms on $\caA_{C_\theta,\sigma}$,
$v_g^\sigma$ belongs to $\caA_{C_\theta,\sigma}$.
(See Lemma \ref{bimp}.)
Setting $v_g:=v_g^L\otimes v_g^R$,
we obtain 
$\tilde\eta_g=\Ad\lmk v_g\rmk\circ\eta_g$.

Set
\begin{align}
&\tilde W_g:=\lmk \lmk
\pi_L\alpha_L\lmk v_g^L\rmk\rmk \otimes \lmk \pi_R\alpha_R\lmk v_g^R\rmk\rmk\rmk
W_g,\\
&\tilde u_\sigma(g,h)
:=
\pi_\sigma\lmk
\alpha_\sigma\lmk v_g^\sigma \cdot \lmk \eta_g^\sigma\beta_g^{\sigma U}\rmk\lmk v_h^\sigma\rmk\rmk
\rmk\cdot
u_\sigma\lmk g,h\rmk\cdot
\pi_\sigma\lmk\alpha_\sigma\lmk \lmk v_{gh}^{\sigma}\rmk^*\rmk\rmk
\end{align}
for each $g,h\in G$ and $\sigma=L,R$.
We claim that 
\begin{align}\label{tip}
\lmk\lmk \tilde W_g\rmk,  \lmk \tilde u_\sigma(g,h)\rmk
\rmk
\in
\IP\lmk
\omega, \alpha, \theta,
(\tilde\beta_g), (\tilde \eta_{g}^\sigma),
(\alpha_L,\alpha_R,\Theta)
\rmk.
\end{align}
First, we have
\begin{align}
\begin{split}
&\pi_0\circ\alpha_0\circ\Theta\circ\tilde \eta_g\beta_g^U\circ\Theta^{-1}\circ\alpha_0^{-1}
=\pi_0\circ\alpha_0\circ\Theta\circ
\Ad\lmk v_g\rmk\circ\eta_g
\beta_g^U\circ\Theta^{-1}\circ\alpha_0^{-1}\\
&=\pi_0\circ\alpha_0\circ\Ad\lmk v_g\rmk\circ\Theta
\circ\eta_g
\beta_g^U\circ\Theta^{-1}\circ\alpha_0^{-1}\\
&=
\Ad\lmk
 \lmk
\pi_L\alpha_L\lmk v_g^L\rmk\rmk \otimes \lmk \pi_R\alpha_R\lmk v_g^R\rmk\rmk
\rmk
\pi_0\circ\alpha_0\circ\Theta\circ\eta_g\beta_g^U\circ\Theta^{-1}\circ\alpha_0^{-1}
=
\Ad\lmk \tilde W_g\rmk\circ\pi_0.
\end{split}
%&\pi_0\circ\alpha_0\circ\Theta\circ\tilde \eta_g\beta_g^U\circ\Theta^{-1}\circ\alpha_0^{-1}
%=\pi_0\circ\alpha_0\circ\Theta\circ
%\Ad\lmk v_g\rmk\circ\eta_g
%\beta_g^U\circ\Theta^{-1}\circ\alpha_0^{-1}\\
%&=\pi_0\circ\alpha_0\circ\Ad\lmk v_g\rmk\circ\Theta
%\circ\eta_g
%\beta_g^U\circ\Theta^{-1}\circ\alpha_0^{-1}\\
%&=
%\Ad\lmk
% \lmk
%\pi_L\alpha_L\lmk v_g^L\rmk\rmk \otimes \lmk \pi_R\alpha_R\lmk v_g^R\rmk\rmk
%\rmk
%\pi_0\circ\alpha_0\circ\Theta\circ\eta_g\beta_g^U\circ\Theta^{-1}\circ\alpha_0^{-1}
%=
%\Ad\lmk \tilde W_g\rmk\circ\pi_0.
\end{align}
For the first equality, we substituted $\tilde\eta_g=\Ad\lmk v_g\rmk\circ\eta_g$, 
and for the second equality, we used the fact that $v_g^\sigma$ belongs to $\caA_{C_\theta,\sigma}$,
while $\Theta$ is an automorphism on $\caA_{\lmk C_\theta\rmk^c,\sigma}$.
The last equality follows from the definition of $W_g$.
On the other hand,
we have
\begin{align}
&\pi_{\sigma}\circ\alpha_{\sigma}\circ{\tilde\eta}_g^{\sigma}\beta_g^{{\sigma} U}
{\tilde\eta}_h^{\sigma}\lmk\beta_g^{{\sigma} U}\rmk^{-1}\lmk {\tilde\eta}_{gh}^{\sigma}\rmk^{-1}
\circ\alpha_{\sigma}^{-1}\notag\\
&=
\pi_{\sigma}\circ\alpha_{\sigma}\circ\Ad\lmk v_g^\sigma\rmk\circ\eta_g^{\sigma}\beta_g^{{\sigma} U}
\Ad\lmk v_h^\sigma\rmk\circ\eta_h^{\sigma}\lmk\beta_g^{{\sigma} U}\rmk^{-1}\lmk \eta_{gh}^{\sigma}\rmk^{-1}
\Ad\lmk {v_{gh}^\sigma}^*\rmk
\circ\alpha_{\sigma}^{-1}\notag\\
&=
\Ad\lmk
\pi_{\sigma}\circ\alpha_{\sigma}
\lmk
\lmk v_g^\sigma\rmk
  \eta_g^{\sigma}\beta_g^{{\sigma} U}\lmk v_h^\sigma\rmk\rmk
  \rmk
\pi_{\sigma}\circ\alpha_{\sigma}\eta_g^{\sigma}\beta_g^{{\sigma} U}
\eta_h^{\sigma}\lmk\beta_g^{{\sigma} U}\rmk^{-1}\lmk \eta_{gh}^{\sigma}\rmk^{-1}
\circ\alpha_{\sigma}^{-1}\circ
\Ad\lmk\alpha_\sigma\lmk  {v_{gh}^\sigma}^*\rmk\rmk\notag\\
&=
\Ad\lmk
\pi_{\sigma}\circ\alpha_{\sigma}
\lmk
\lmk v_g^\sigma\rmk
  \eta_g^{\sigma}\beta_g^{{\sigma} U}\lmk v_h^\sigma\rmk\rmk\rmk
\circ \Ad\lmk u_\sigma(g,h)\rmk
 \pi_\sigma\circ \Ad\lmk\alpha_\sigma\lmk  {v_{gh}^\sigma}^*\rmk\rmk
=
\Ad\lmk \tilde u_{\sigma}(g,h)\rmk\circ\pi_{\sigma},
\end{align}
for all $g,h\in G$.
For the first equality, we substituted $\tilde\eta_g=\Ad\lmk v_g\rmk\circ\eta_g$.
The third equality is the definition of $u(g,h)$.
Hence we have proven (\ref{tip}).

Set 
\begin{align}
&c_{R}:=c_R\lmk
\omega, \alpha, \theta,
(\tilde\beta_g), (\eta_{g}^\sigma),
(\alpha_L,\alpha_R,\Theta),\lmk (W_g), (u_\sigma(g,h))\rmk
\rmk,\notag\\
&\tilde c_R:=c_{R}\lmk
\omega, \alpha, \theta,
(\tilde\beta_g), (\tilde \eta_{g}^\sigma),
(\alpha_L,\alpha_R,\Theta),\lmk (\tilde W_g), (\tilde u_\sigma(g,h))\rmk
\rmk.
\end{align}
In order to show the statement of the Lemma, it suffices to show that $c_R=\tilde c_R$.
Substituting the definition of $\tilde u_R$, we obtain 
\begin{align}
&
%\unit_{\caH_L}\otimes
 \tilde u_R(g,h) \tilde u_R(gh,k)\notag\\
&=
%\unit_{\caH_L}\otimes
%\lmk
\pi_R\lmk
\alpha_R\lmk v_g^R \lmk \eta_g^R\beta_g^{R U}\rmk\lmk v_h^R\rmk\rmk
\rmk\cdot
u_R\lmk g,h\rmk\cdot
\pi_R\lmk\alpha_R\lmk \lmk v_{gh}^{R}\rmk^*\rmk\rmk
\pi_R\lmk
\alpha_R\lmk v_{gh}^R \lmk \eta_{gh}^R\beta_{gh}^{R U}\rmk\lmk v_k^R\rmk\rmk
\rmk\cdot
u_R\lmk gh,k\rmk\cdot
\pi_R\lmk\alpha_R\lmk \lmk v_{ghk}^{R}\rmk^*\rmk\rmk
%\rmk
\notag\\
&=
%\unit_{\caH_L}\otimes
%\lmk
\pi_R\lmk
\alpha_R\lmk v_g^R \lmk \eta_g^R\beta_g^{R U}\rmk\lmk v_h^R\rmk\rmk
\rmk\cdot
\lcm
u_R\lmk g,h\rmk\cdot
\pi_R\lmk
\alpha_R\lmk \lmk \eta_{gh}^R\beta_{gh}^{R U}\rmk\lmk v_k^R\rmk\rmk
\rmk
\rcm
u_R\lmk gh,k\rmk
\pi_R\lmk\alpha_R\lmk \lmk v_{ghk}^{R}\rmk^*\rmk\rmk
%\rmk
\notag\\
&=
%\unit_{\caH_L}\otimes
%\lmk
\pi_R\lmk
\alpha_R\lmk v_g^R \lmk \eta_g^R\beta_g^{R U}\rmk\lmk v_h^R\rmk\rmk
\rmk\cdot 
\lcm
\Ad\lmk u_R\lmk g,h\rmk\rmk\lmk
\pi_R\lmk
\alpha_R\lmk \lmk \eta_{gh}^R\beta_{gh}^{R U}\rmk\lmk v_k^R\rmk\rmk
\rmk\rmk
\cdot u_R\lmk g,h\rmk
\rcm
u_R\lmk gh,k\rmk\cdot
\pi_R\lmk\alpha_R\lmk \lmk v_{ghk}^{R}\rmk^*\rmk\rmk
%\rmk
\notag\\
&=
%\lmk
\pi_R\lmk
\alpha_R\lmk v_g^R \lmk \eta_g^R\beta_g^{R U}\rmk\lmk v_h^R\rmk\rmk
\rmk\cdot 
\lmk
\pi_R\lmk\alpha_R\circ\eta_g^R\beta_g^{R U}
\eta_h^R\lmk\beta_g^{R U}\rmk^{-1}\lmk \eta_{gh}^R\rmk^{-1}
\circ\alpha_R^{-1}
\alpha_R\lmk \lmk \eta_{gh}^R\beta_{gh}^{R U}\rmk\lmk v_k^R\rmk\rmk
\rmk\rmk\notag\\
&\cdot u_R\lmk g,h\rmk
u_R\lmk gh,k\rmk\cdot
\pi_R\lmk\alpha_R\lmk \lmk v_{ghk}^{R}\rmk^*\rmk\rmk\notag\\
%\rmk
&=\pi_R\lmk
\alpha_R\lmk v_g^R \cdot \lmk \eta_g^R\beta_g^{R U}\rmk\lmk v_h^R\rmk\cdot
\eta_g^R\beta_g^{R U}
\eta_h^R\beta_{h}^{R U}\lmk v_k^R\rmk\rmk
\rmk\cdot u_R\lmk g,h\rmk
u_R\lmk gh,k\rmk\cdot
\pi_R\lmk\alpha_R\lmk \lmk v_{ghk}^{R}\rmk^*\rmk\rmk
\end{align}
For the fourth equality, we used the definition of $u_R$.
From the above equation, applying (\ref{uwc}) to the $[\cdot]$ part below, we have
\begin{align}\label{uue}
&\unit_{\caH_L}\otimes
 \tilde u_R(g,h) \tilde u_R(gh,k)\notag\\
&=
\unit_{\caH_L}\otimes
\pi_R\lmk
\alpha_R\lmk v_g^R \cdot \lmk \eta_g^R\beta_g^{R U}\rmk\lmk v_h^R\rmk\cdot
\eta_g^R\beta_g^{R U}
\eta_h^R\beta_{h}^{R U}\lmk v_k^R\rmk\rmk
\rmk\cdot \lcm
u_R\lmk g,h\rmk
u_R\lmk gh,k\rmk\rcm\cdot
\pi_R\lmk\alpha_R\lmk \lmk v_{ghk}^{R}\rmk^*\rmk\rmk\notag\\
&=c_R(g,h,k)\lmk
\unit_{\caH_L}\otimes
\pi_R\lmk
\alpha_R\lmk v_g^R \cdot \lmk \eta_g^R\beta_g^{R U}\rmk\lmk v_h^R\rmk\cdot
\eta_g^R\beta_g^{R U}
\eta_h^R\beta_{h}^{R U}\lmk v_k^R\rmk\rmk
\rmk\rmk\notag\\
&
\left\{ W_g\lmk \unit_{\caH_L}\otimes u_R(h,k)\rmk W_g^*\right\}
\lmk \unit_{\caH_L}\otimes u_R(g,hk)\rmk
\cdot
\pi_R\lmk\alpha_R\lmk \lmk v_{ghk}^{R}\rmk^*\rmk\rmk.
\end{align}
Now from the definition of $\tilde u_R$, the $\{\cdot\}$ part above becomes
\begin{align}\label{shimokita}
& W_g\lmk \unit_{\caH_L}\otimes u_R(h,k)\rmk W_g^*\notag\\
 &=
 \Ad\lmk W_g\rmk\circ\pi_0\circ\lmk
 \id_L\otimes
\alpha_R\lmk \lmk v_h^R \cdot  \eta_h^R\beta_h^{R U}\lmk v_k^R\rmk\rmk^*\rmk
\rmk
\cdot
\Ad\lmk W_g\rmk\lmk\unit_{\caH_L}\otimes \tilde u_R\lmk h,k\rmk\rmk\cdot\lmk
\Ad\lmk W_g\rmk\pi_0\lmk \id_L\otimes\alpha_R\lmk  v_{hk}^{R}\rmk\rmk
 \rmk.
\end{align}
Because  $v_g^R$ belongs to $\caA_{C_\theta,R}$ and
$\eta_g^R$ is an automorphism on $\caA_{C_\theta,R}$
while $\Theta$ is an automorphism on $\caA_{\lmk C_\theta\rmk^c}$
and $\beta_g^U\lmk \caA_{C_\theta,R}\rmk=\caA_{C_\theta,R}$,
we have
\begin{align}
&\Ad\lmk W_g\rmk\circ\pi_0\circ\lmk
 \id_L\otimes
\alpha_R\lmk \lmk v_h^R \cdot  \eta_h^R\beta_h^{R U}\lmk v_k^R\rmk\rmk^*\rmk
\rmk
=
\pi_0\circ\alpha_0\circ\Theta\circ\eta_g\beta_g^U\circ\Theta^{-1}\circ\alpha_0^{-1}
\circ
\lmk
 \id_L\otimes
\alpha_R\lmk \lmk v_h^R \cdot  \eta_h^R\beta_h^{R U}\lmk v_k^R\rmk\rmk^*\rmk
\rmk\notag\\
&=
\pi_0
\lmk
 \id_L\otimes\alpha_R\circ \eta_g^R\beta_g^{R U}
\lmk \lmk v_h^R \cdot \eta_h^R\beta_h^{R U}\lmk v_k^R\rmk\rmk^*\rmk
\rmk,\quad \text{and}\notag\\
&\Ad\lmk W_g\rmk\circ\pi_0\lmk \id_L\otimes\alpha_R\lmk  v_{hk}^{R}\rmk\rmk
=\pi_0
\lmk
 \id_L\otimes\alpha_R\circ \eta_g^R\beta_g^{R U}
\lmk 
 v_{hk}^{R}
\rmk
\rmk.
\end{align}
Substituting this to (\ref{shimokita}), we obtain
\begin{align}
& W_g\lmk \unit_{\caH_L}\otimes u_R(h,k)\rmk W_g^*\notag\\
 &=
 \pi_0
\lmk
 \id_L\otimes\alpha_R\circ \eta_g^R\beta_g^{R U}
\lmk \lmk v_h^R \cdot \eta_h^R\beta_h^{R U}\lmk v_k^R\rmk\rmk^*\rmk
\rmk
 \cdot
\Ad\lmk W_g\rmk\lmk\unit_{\caH_L}\otimes \tilde u_R\lmk h,k\rmk\rmk\cdot
\pi_0
\lmk
 \id_L\otimes\alpha_R\circ \eta_g^R\beta_g^{R U}
\lmk 
 v_{hk}^{R}
\rmk
\rmk.
\end{align}
Substituting this, $\{\cdot\}$ part of (\ref{uue}), we obtain
\begin{align}
&\unit_{\caH_L}\otimes
 \tilde u_R(g,h) \tilde u_R(gh,k)\notag\\
 &=c_R(g,h,k)\lmk
\unit_{\caH_L}\otimes
\pi_R\lmk
\alpha_R\lmk v_g^R \cdot \lmk \eta_g^R\beta_g^{R U}\rmk\lmk v_h^R\rmk\cdot
\eta_g^R\beta_g^{R U}
\eta_h^R\beta_{h}^{R U}\lmk v_k^R\rmk\rmk
\rmk\rmk\notag\\
&
\pi_0
\lmk
 \id_L\otimes\alpha_R\circ \eta_g^R\beta_g^{R U}
\lmk \lmk v_h^R \cdot \eta_h^R\beta_h^{R U}\lmk v_k^R\rmk\rmk^*\rmk
\rmk
 \cdot
\Ad\lmk W_g\rmk\lmk\unit_{\caH_L}\otimes \tilde u_R\lmk h,k\rmk\rmk\cdot
\pi_0
\lmk
 \id_L\otimes\alpha_R\circ \eta_g^R\beta_g^{R U}
\lmk 
 v_{hk}^{R}
\rmk
\rmk\notag\\
&\lmk \unit_{\caH_L}\otimes u_R(g,hk)\cdot
\pi_R\lmk\alpha_R\lmk \lmk v_{ghk}^{R}\rmk^*\rmk\rmk\rmk
\notag\\
&
=c_R(g,h,k)\lmk
\unit_{\caH_L}\otimes
\pi_R\lmk
\alpha_R\lmk v_g^R \rmk
\rmk\rmk
\Ad\lmk W_g\rmk\lmk\unit_{\caH_L}\otimes \tilde u_R\lmk h,k\rmk\rmk\cdot
\pi_0
\lmk
 \id_L\otimes\alpha_R\circ \eta_g^R\beta_g^{R U}
\lmk 
 v_{hk}^{R}
\rmk
\rmk\notag\\
&
\pi_0\circ\lmk
 \id_L\otimes
\alpha_R\lmk \lmk v_g^R \cdot  \eta_g^R\beta_g^{R U}\lmk v_{hk}^R\rmk\rmk^*\rmk
\rmk
\cdot
\lmk\unit_{\caH_L}\otimes \tilde u_R\lmk g,hk\rmk\cdot\rmk
\pi_0\lmk \id_L\otimes\alpha_R\lmk  v_{ghk}^{R}\rmk
\alpha_R\lmk \lmk v_{ghk}^{R}\rmk^*\rmk
\rmk \notag\\
&=
c_R(g,h,k)
\Ad\lmk 
\lmk
\unit_{\caH_L}\otimes
\pi_R\lmk
\alpha_R\lmk v_g^R \rmk
\rmk\rmk W_g\rmk\lmk\unit_{\caH_L}\otimes \tilde u_R\lmk h,k\rmk\rmk\cdot
\lmk\unit_{\caH_L}\otimes \tilde u_R\lmk g,hk\rmk\rmk
\notag\\
 &=c_R(g,h,k)
 \left\{\Ad\lmk 
\pi_L\lmk
\alpha_L\lmk{ v_g^L}^* 
\rmk\otimes\unit_{\caH_R}
\rmk\rmk 
\Ad
\tilde W_g\lmk\unit_{\caH_L}\otimes \tilde u_R\lmk h,k\rmk\rmk\right\}\cdot
\lmk\unit_{\caH_L}\otimes \tilde u_R\lmk g,hk\rmk\rmk
\notag\\
\end{align}
Because of (iii) of Lemma \ref{lemni},
the $\{\cdot\}$ part of the last equation is equal to
$\Ad
\tilde W_g\lmk\unit_{\caH_L}\otimes \tilde u_R\lmk h,k\rmk\rmk$.
Hence we obtain
\begin{align}
\unit_{\caH_L}\otimes
 \tilde u_R(g,h) \tilde u_R(gh,k)
 =c_R(g,h,k)
 \Ad
\tilde W_g
\lmk\unit_{\caH_L}\otimes \tilde u_R\lmk h,k\rmk\rmk\cdot
\lmk\unit_{\caH_L}\otimes \tilde u_R\lmk g,hk\rmk\rmk.
\end{align}
This proves $c_R=\tilde c_R$, completing the proof.
\end{proof}

\begin{lem}
Let
\begin{align}
&\omega\in\QLS,  \; 0<\theta<\frac\pi 2,\;
(\tilde\beta_{g}^{(1)}), (\tilde\beta_g^{(2)})\in \IG(\omega,\theta).
\end{align}
Then
we have
\begin{align}\label{hq4}
h^{(4)}
\lmk
\omega , \theta,
(\tilde\beta_g^{(1)})
\rmk
=
h^{(4)}
\lmk
\omega, \theta,
(\tilde\beta_g^{(2)})
\rmk.
\end{align}
\end{lem}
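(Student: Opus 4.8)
The plan is to prove (\ref{hq4}) by running, one level up, the same telescoping computation used in the preceding lemma (independence of $h^{(3)}$ from $(\eta_g^\sigma)$). Fix once and for all $\alpha\in\eaut(\omega)$, $(\alpha_L,\alpha_R,\Theta)\in{\mathfrak D}^{\theta}_\alpha$, and $(\eta_g^{\sigma,i})\in\caT(\theta,(\tilde\beta_g^{(i)}))$ for $i=1,2$, writing $\eta_g^{(i)}:=\eta_g^{L,i}\otimes\eta_g^{R,i}$ and $\alpha_0:=\alpha_L\otimes\alpha_R$. Put $\mu_g^\sigma:=\eta_g^{\sigma,1}\circ(\eta_g^{\sigma,2})^{-1}\in\Aut(\caA_{(C_\theta)_\sigma})$ and $\mu_g:=\mu_g^L\otimes\mu_g^R$; then $\eta_g^{(1)}=\mu_g\circ\eta_g^{(2)}$ by construction, and pulling inner factors through split automorphisms gives $\tilde\beta_g^{(1)}\circ(\tilde\beta_g^{(2)})^{-1}=\inn\circ\mu_g$. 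From $\omega\circ\tilde\beta_g^{(1)}=\omega=\omega\circ\tilde\beta_g^{(2)}$ we get $\omega\circ\inn\circ\mu_g=\omega$; since $\alpha=\inn\circ\alpha_0\circ\Theta$ and $\Theta\in\Aut\caA_{(C_\theta)^c}$ commutes with $\mu_g\in\Aut\caA_{C_\theta}$, this forces $\omega_0\circ\alpha_0\mu_g\alpha_0^{-1}$ to be inner-equivalent to $\omega_0$, hence $\pi_0\circ\alpha_0\mu_g\alpha_0^{-1}\cong\pi_0$. As $\alpha_0\mu_g\alpha_0^{-1}=(\alpha_L\mu_g^L\alpha_L^{-1})\otimes(\alpha_R\mu_g^R\alpha_R^{-1})$ respects $\caH_0=\caH_L\otimes\caH_R$ and $\pi_L,\pi_R$ are irreducible, any implementing unitary normalizes both tensor factors, hence splits as $a_g^L\otimes a_g^R$ with $\Ad(a_g^\sigma)\circ\pi_\sigma=\pi_\sigma\circ\alpha_\sigma\mu_g^\sigma\alpha_\sigma^{-1}$ (the normalizer argument of Lemma \ref{lemni}(iii)). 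These $a_g^\sigma$ substitute for the operators $\pi_\sigma(\alpha_\sigma(v_g^\sigma))$ of the previous lemma.

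Next I would build an $\IP$-element for $(\tilde\beta_g^{(1)})$ from one for $(\tilde\beta_g^{(2)})$. Take $((W_g^{(2)}),(u_\sigma^{(2)}(g,h)))\in\IP(\omega,\alpha,\theta,(\tilde\beta_g^{(2)}),(\eta_g^{\sigma,2}),(\alpha_L,\alpha_R,\Theta))$ (Lemma \ref{ichi}) and set $W_g^{(1)}:=(a_g^L\otimes a_g^R)W_g^{(2)}$; using $\eta_g^{(1)}=\mu_g\circ\eta_g^{(2)}$ and $\Theta\mu_g=\mu_g\Theta$ one verifies (\ref{ipdef}) for the data of $(\tilde\beta_g^{(1)})$. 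By conjugation, $W_g^{(2)}(a_h^L\otimes a_h^R)(W_g^{(2)})^*$ implements $\pi_0\circ\alpha_0\,(\eta_g^{(2)}\beta_g^{U})\,\mu_h\,(\eta_g^{(2)}\beta_g^{U})^{-1}\,\alpha_0^{-1}$; here $\Theta$ drops out because $\beta_g^{U}$ preserves all local algebras, so the conjugated automorphism is localized in $C_\theta$. Being split, this operator factorizes as $d_g^L(h)\otimes d_g^R(h)$ with $\Ad(d_g^\sigma(h))\circ\pi_\sigma=\pi_\sigma\circ\alpha_\sigma\,(\eta_g^{\sigma,2}\beta_g^{\sigma U})\,\mu_h^\sigma\,(\eta_g^{\sigma,2}\beta_g^{\sigma U})^{-1}\,\alpha_\sigma^{-1}$. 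Then define
\[
u_\sigma^{(1)}(g,h):=a_g^\sigma\, d_g^\sigma(h)\, u_\sigma^{(2)}(g,h)\,(a_{gh}^\sigma)^{*},
\]
the exact analogue of the formula for $\tilde u_\sigma(g,h)$ in the previous lemma, with $a_g^\sigma$, $d_g^\sigma(h)$, $(a_{gh}^\sigma)^*$ in the roles of $\pi_\sigma(\alpha_\sigma(v_g^\sigma))$, $\pi_\sigma(\alpha_\sigma((\eta_g^\sigma\beta_g^{\sigma U})(v_h^\sigma)))$, $\pi_\sigma(\alpha_\sigma((v_{gh}^\sigma)^*))$. Since $\mu_g^\sigma\circ(\eta_g^{\sigma,2}\beta_g^{\sigma U})=\eta_g^{\sigma,1}\beta_g^{\sigma U}$, composing the four $\Ad$-relations shows $((W_g^{(1)}),(u_\sigma^{(1)}(g,h)))\in\IP(\omega,\alpha,\theta,(\tilde\beta_g^{(1)}),(\eta_g^{\sigma,1}),(\alpha_L,\alpha_R,\Theta))$.

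Finally I would substitute $u_R^{(1)}(g,h)=a_g^R\, d_g^R(h)\, u_R^{(2)}(g,h)\,(a_{gh}^R)^*$ and $W_g^{(1)}=(a_g^L\otimes a_g^R)W_g^{(2)}$ into the defining identity (\ref{uwc}) for the $3$-cocycle $c_R^{(1)}$ attached to $(\tilde\beta_g^{(1)})$, and run the long telescoping computation of the previous lemma verbatim, using (i)--(iv) of Lemma \ref{lemni} together with (\ref{uwc}) for $c_R^{(2)}$; the $a$- and $d$-factors cancel in pairs, leaving $c_R^{(1)}=c_R^{(2)}$, at worst up to the coboundary of a $2$-cochain formed from the phases in the factorizations of $W_g^{(2)}(a_h^L\otimes a_h^R)(W_g^{(2)})^*$ and $W_g^{(1)}W_h^{(1)}(W_{gh}^{(1)})^{*}$ (phases one may also just normalize away). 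Hence $[c_R^{(1)}]=[c_R^{(2)}]$ in $H^3(G,\bbT)$, and by the independence statements already established these classes are $h^{(4)}(\omega,\theta,(\tilde\beta_g^{(1)}))$ and $h^{(4)}(\omega,\theta,(\tilde\beta_g^{(2)}))$ respectively, which is (\ref{hq4}). The one genuinely new point, and the main obstacle, is producing the unitaries $a_g^\sigma$ and $d_g^\sigma(h)$: in contrast with the $(\eta_g^\sigma)$-independence lemma the difference $\mu_g^\sigma$ is an arbitrary automorphism of the cone algebra rather than an inner one, so these operators admit no explicit formula and their existence has to be extracted from the $\omega$-invariance of both $\tilde\beta_g^{(i)}$ via the irreducibility/splitting argument of the first step; once they are in hand the rest is the same bookkeeping as before.
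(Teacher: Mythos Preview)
Your proposal is correct and follows essentially the same route as the paper: the paper defines $K_g^\sigma:=\eta_{g,2}^\sigma\circ(\eta_{g,1}^\sigma)^{-1}$ (your $\mu_g^\sigma$ with indices swapped), produces the split implementers $V_g^\sigma$ (your $a_g^\sigma$) from the $\omega$-invariance of both $\tilde\beta_g^{(i)}$ exactly as you describe, obtains the conjugated implementers $v_{g,h}^\sigma$ (your $d_g^\sigma(h)$) by the same support/irreducibility argument, sets $u_\sigma^{(2)}(g,h):=V_g^\sigma v_{g,h}^\sigma u_\sigma^{(1)}(g,h)(V_{gh}^\sigma)^*$ and $W_{g,2}:=V_gW_{g,1}$, and then carries out the telescoping computation to conclude $c_{R,1}=c_{R,2}$ on the nose (not merely up to coboundary). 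You have also correctly identified the one new obstacle relative to the $(\eta_g^\sigma)$-independence lemma.
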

\begin{defn}
From this lemma 
we may
 define
 \begin{align}
 h^{(5)}
\lmk
\omega, \theta
\rmk
:=
 h^{(4)}
\lmk
\omega, \theta,
(\tilde\beta_g)
\rmk
  \end{align}
   for any 
 \begin{align}
 \omega\in\QLS,  \; 0<\theta<\frac\pi 2,\;\text{with}\;
\IG(\omega,\theta)\neq \emptyset
 \end{align}
 independent of the choice of $(\tilde\beta_g)$.
\end{defn}
\begin{proof}
By the definition of  $\IG(\omega,\theta)$, there are
\begin{align}
(\eta_{g,i}^{\sigma})_{g\in G,\sigma=L,R} \in
\caT(\theta, (\tilde\beta_g^{(i)})),\quad
\text{for}\quad i=1,2.
\end{align}
We set $\eta_{g,i}:=\eta_{g,i}^L\otimes\eta_{g,i}^R$, for $i=1,2$.
There are
$ \alpha\in \eaut(\omega)$
and $(\alpha_L,\alpha_R,\Theta)\in \caD_{\alpha}^{\theta}$ 
for $\omega\in \QLS$ by the definition.
Setting $\alpha_0:=\alpha_L\otimes\alpha_R$, we have $\alpha=\inn\circ\alpha_0\circ\Theta$.
By Lemma \ref{ichi}, there is some
\begin{align}
\lmk (W_{g,1}), (u_{\sigma}^{(1)}(g,h))\rmk
\in \IP\lmk
\omega, \alpha, \theta,
(\tilde\beta_g^{(1)}), (\eta_{g,1}^\sigma),
(\alpha_{L},\alpha_{R},\Theta)
\rmk.
\end{align}
Set
\begin{align}
K^\sigma_g:=\eta_{g,2}^\sigma\circ \lmk \eta_{g,1}^\sigma\rmk^{-1}\in \Aut\lmk
\caA_{C_\theta,\sigma}\rmk,\quad\text{for}\quad \sigma=L,R,\; g\in G,\quad
K_g:=K_{g}^L\otimes K_g^R\in\Aut\lmk
\caA_{C_\theta}\rmk.
\end{align}

We claim that there are unitaries $V_g^\sigma$, $g\in G,\sigma=L,R$
on $\caH_\sigma$
such that
\begin{align}\label{vke}
\Ad \lmk V_g^\sigma\rmk\circ\pi_\sigma
=\pi_\sigma\circ\alpha_\sigma\circ K_g^\sigma\circ \lmk\alpha_\sigma\rmk^{-1}.
\end{align}
To see this, note that
\begin{align}
\omega=\omega\circ\tilde \beta_g^{(i)}
=\omega_0\circ\alpha\circ\tilde \beta_g^{(i)}
\sim_{q.e.} \omega_0\circ\alpha_0\circ\Theta\circ
\lmk
\eta_{g,i}^{L}\otimes \eta_{g,i}^{R}
\rmk\circ\beta_g^U,\quad i=1,2.
\end{align}
Therefore, we have
\begin{align}
%\omega_0\circ\alpha\circ\tilde \beta_g^{(1)}
%\sim_{q.e.} 
\omega_0\circ\alpha_0\circ\Theta\circ
\lmk
\eta_{g,1}^{L}\otimes \eta_{g,1}^{R}
\rmk
\sim_{q.e.}
\omega\circ\lmk \beta_g^U\rmk^{-1}
\sim_{q.e.}
%\omega_0\circ\alpha\circ\tilde \beta_g^{(2)}
%\sim_{q.e.}
 \omega_0\circ\alpha_0\circ\Theta\circ
\lmk
\eta_{g,2}^{L}\otimes \eta_{g,2}^{R}
\rmk,
\end{align}
and then using the fact that 
$\Theta\in \Aut(\caA_{C_\theta^c})$ and
$K_g\in \Aut(\caA_{C_\theta})$,
\begin{align}
\omega_0\sim_{q.e.}
\omega_0\circ\alpha_0\circ\Theta\circ K_g\circ\Theta^{-1}\circ \alpha_0^{-1}
=\omega_0\circ \alpha_0\circ K_g\circ \lmk \alpha_0\rmk^{-1}
=\bigotimes_{\sigma=L,R} \omega_\sigma \circ\alpha_\sigma K_g^\sigma \lmk \alpha_\sigma\rmk^{-1}.
\end{align}
This implies
that $\omega_\sigma$ and $ \omega_\sigma \circ\alpha_\sigma K_g^\sigma \lmk \alpha_\sigma\rmk^{-1}$
are quasi-equivalent.
Because $\pi_\sigma$ is irreducible, this implies the existence of a unitary $V_g^\sigma$
on $\caH_\sigma$
satisfying (\ref{vke}), proving the claim.

Next we claim that there are unitaries $v^\sigma_{g,h}$ on $\caH_\sigma$, for $g,h\in G$ and $\sigma=L,R$
such that 
\begin{align}
\Ad_{W_{g,1}}\lmk\unit_{\caH_L}\otimes V_h^R\rmk
=\unit_{\caH_L}\otimes v^{R}_{g,h},\quad
\Ad_{W_{g,1}}\lmk V_h^L\otimes \unit_{\caH_R}\rmk
=v^{L}_{g,h}\otimes\unit_{\caH_R}, \label{vew}
\end{align}
and \begin{align}\label{vwwv}
\Ad\lmk V_g^\sigma v^\sigma_{g,h}
u_\sigma^{(1)}(g,h)
\lmk
V_{gh}^\sigma
\rmk^*
\rmk\pi_\sigma
=\pi_\sigma\circ \alpha_\sigma 
 \eta_{g,2}^{\sigma}\beta_g^{{\sigma} U}
\eta_{h,2}^{\sigma}\lmk\beta_g^{{\sigma} U}\rmk^{-1}\lmk \eta_{gh,2}^{\sigma}\rmk^{-1}
\alpha_\sigma ^{-1},
\end{align}
for any $g,h\in G$ and $\sigma=L,R$.
To see this, 
first we calculate
\begin{align}\
&\Ad\lmk W_{g,1}\lmk \unit_{\caH_L}\otimes V_h^R\rmk \lmk W_{g,1}\rmk^*
\rmk
\circ\pi_0
=\Ad\lmk W_{g,1}\lmk \unit_{\caH_L}\otimes V_h^R\rmk 
\rmk\pi_0\circ
\alpha_0\circ\Theta\circ\lmk {\eta_{g,1}}\beta_g^U\rmk^{-1}\circ\Theta^{-1}\circ\alpha_0^{-1}\notag\\
&=\pi_0\circ
\alpha_0\circ\Theta\circ {\eta_{g,1}}\beta_g^U\circ\Theta^{-1}\circ\alpha_0^{-1}
\circ
\lmk
\id_{L}\otimes \alpha_R\circ K_h^R\circ \lmk\alpha_R\rmk^{-1}
\rmk
\circ
\alpha_0\circ\Theta\circ\lmk {\eta_{g,1}}\beta_g^U\rmk^{-1}\circ\Theta^{-1}\circ\alpha_0^{-1}\notag\\
&=
\pi_0\circ
\alpha_0\circ\Theta\circ {\eta_{g,1}}\beta_g^U\circ\Theta^{-1}
\circ
\lmk
\id_{L}\otimes  K_h^R
\rmk
\circ\Theta\circ\lmk {\eta_{g,1}}\beta_g^U\rmk^{-1}\circ\Theta^{-1}\circ\alpha_0^{-1}
\notag\\
&=
\pi_0\circ
\alpha_0\circ\Theta\circ {\eta_{g,1}}\beta_g^U
\circ
\lmk
\id_{L}\otimes  K_h^R
\rmk
\circ\lmk {\eta_{g,1}}\beta_g^U\rmk^{-1}\circ\Theta^{-1}\circ\alpha_0^{-1}\notag\\
&=\pi_0\circ\alpha_0\circ\Theta
\circ
\lmk
\id_{L}\otimes  {\eta_{g,1}^R}\beta_g^{RU}
K_h^R \lmk {\eta_{g,1}^R}\beta_g^{RU}\rmk^{-1}
\rmk
\circ\Theta^{-1}\circ\alpha_0^{-1}\notag\\
&=
\pi_0\circ
\lmk
\id_{L}\otimes  \alpha_R\circ {\eta_{g,1}^R}\beta_g^{RU}
K_h^R \lmk {\eta_{g,1}^R}\beta_g^{RU}\rmk^{-1}\alpha_R^{-1}
\rmk.\label{zousan}
\end{align}
Here, in the fourth and sixth equality,
we used the fact that $K_h^R, \eta_{g,1}^R\beta_g^{RU}
K_h^R \lmk \eta_{g,1}^R\beta_g^{RU}\rmk^{-1}\in\Aut\lmk\caA_{C_\theta}\rmk$
and $\Theta\in \Aut\lmk\caA_{C_\theta^c}\rmk$ commute, in order to remove $\Theta$.
Equation (\ref{zousan}) and the fact that $\pi_L$ is irreducible
imply that there is a unitary  $v^R_{g,h}$ satisfying (\ref{vew}).
The same argument implies the existence of  $v^L_{g,h}$ satisfying (\ref{vew}).

For this $v^R_{g,h}$, we would like to show (\ref{vwwv}).
Rewriting
\begin{align}
\eta_{g,2}^{\sigma}\beta_g^{{\sigma} U}
\eta_{h,2}^{\sigma}\lmk\beta_g^{{\sigma} U}\rmk^{-1}\lmk \eta_{gh,2}^{\sigma}\rmk^{-1}
=K_g^\sigma\cdot \lmk
\eta_{g,1}^{\sigma}\beta_g^{{\sigma} U}
K_h^\sigma
\lmk \eta_{g,1}^{\sigma}\beta_g^{{\sigma} U}\rmk^{-1}
\rmk\cdot
\eta_{g,1}^{\sigma}\beta_g^{{\sigma} U}
\eta_{h,1}^{\sigma}\lmk\beta_g^{{\sigma} U}\rmk^{-1}\lmk \eta_{gh,1}^{\sigma}\rmk^{-1}
\cdot
\lmk
K_{gh}^{\sigma}
\rmk^{-1},
\end{align}
we obtain 
\begin{align}
&\pi_L\otimes \pi_R\circ \alpha_R 
 \eta_{g,2}^{R}\beta_g^{{R} U}
\eta_{h,2}^{R}\lmk\beta_g^{{R} U}\rmk^{-1}\lmk \eta_{gh,2}^{R}\rmk^{-1}
\alpha_R ^{-1}\notag\\
&=\pi_0\circ\lmk
\id_L\otimes \alpha_R\circ K_g^R\cdot \lmk
\eta_{g,1}^{R}\beta_g^{{R} U}
K_h^R
\lmk \eta_{g,1}^{R}\beta_g^{{R} U}\rmk^{-1}
\rmk\cdot
\eta_{g,1}^{R}\beta_g^{{R} U}
\eta_{h,1}^{R}\lmk\beta_g^{{R} U}\rmk^{-1}\lmk \eta_{gh,1}^{R}\rmk^{-1}
\cdot
\lmk
K_{gh}^{R}
\rmk^{-1}
\alpha_R^{-1}
\rmk\notag\\
&=
\pi_L\otimes 
\Ad\lmk {V_g^R v_{g,h}^R u_R^{(1)}(g,h) \lmk V_{gh}^R\rmk^*}\rmk
\pi_R,
\end{align}
substituting
(\ref{vke}), (\ref{zousan}), (\ref{vew}).
This proves (\ref{vwwv}) for $\sigma=R$.
Analogous result for $\sigma=L$ can be proven by the same argument.
Hence we have proven the claim (\ref{zousan}) and (\ref{vwwv}).

Setting 
\begin{align}
V_g:=V_g^L\otimes V_g^R\in\caU(\caH_0),
\end{align}
we have 
\begin{align}\label{vwo}
&\Ad\lmk V_gW_{g,1}\rmk\circ\pi_0
=\pi_0\circ \alpha_0\circ K_g\circ \alpha_0^{-1}\circ \alpha_0\circ\Theta\circ \eta_{g,1}
\circ\beta_g^U\circ \Theta^{-1}\circ \alpha_0^{-1}\notag\\
&=\pi_0\circ \alpha_0\circ\Theta\circ \eta_{g,2}\circ\beta_g^U\circ\Theta^{-1}\circ \alpha_0^{-1}.
\end{align}
In the last equality, we used the definition of $K_g$ and the commutativity of $\Theta$ and $K_g$ again.
From (\ref{vwo}) and (\ref{vwwv}),
setting
\begin{align}
u_\sigma^{(2)}(g,h):=
V_g^\sigma v^\sigma_{g,h}
u_\sigma^{(1)}(g,h)
\lmk
V_{gh}^\sigma
\rmk^*,
\end{align}
we see that
\begin{align}\label{nagasaki3}
\lmk
\lmk
V_gW_{g,1}\rmk,  \lmk
{u_{R}^{(2)}}(g,h )\rmk\rmk
\in  \IP\lmk
\omega, \alpha, \theta,
(\tilde\beta_g^{(2)}), (\eta_{g,2}^\sigma),
(\alpha_{L},\alpha_{R},\Theta)
\rmk,
\end{align}
and 
\begin{align}\label{uw2}
\unit_{\caH_L}\otimes {u_{R}^{(2)}}(g,h )
=\lmk \unit_{\caH_L}\otimes V_g^R\rmk
W_{g,1} \lmk \unit_{\caH_L}\otimes V_h^R\rmk
\lmk W_{g,1}\rmk^*
\lmk
\unit_{\caH_L}\otimes {u_{R}^{(1)}}(g,h )\lmk V_{gh}^R\rmk^*
\rmk.
\end{align}

Now we set
\begin{align}
&c_{R,1}:=c_R\lmk
\omega, \alpha, \theta,
(\tilde\beta_g^{(1)}), (\eta_{g,1}^\sigma),
(\alpha_{L},\alpha_{R},\Theta),\lmk (W_{g,1}), (u_{\sigma}^{(1)}(g,h))\rmk
\rmk,\notag\\
&c_{R,2}:=c_{R}\lmk
\omega, \alpha, \theta,
(\tilde\beta_g^{(2)}), (\eta_{g,2}^\sigma),
(\alpha_{L},\alpha_{R},\Theta),
\lmk
\lmk
V_gW_{g,1}\rmk,  \lmk
{u_{R}^{(2)}}(g,h )\rmk\rmk
\rmk.
\end{align}
To prove the Lemma, it suffices to show 
$c_{R,1}=c_{R,2}$.
By (\ref{uw2}), we have
\begin{align}
&\unit_{\caH_L}\otimes u_R^{(2)}(g,h) u_R^{(2)}(gh,k)\notag\\
&=\lmk \unit_{\caH_L}\otimes V_g^R\rmk
W_{g,1} \lmk \unit_{\caH_L}\otimes V_h^R\rmk
\lmk W_{g,1}\rmk^*
\lmk
\unit_{\caH_L}\otimes {u_{R}^{(1)}}(g,h )\lmk V_{gh}^R\rmk^*
\rmk\notag\\\cdot
&\lmk \unit_{\caH_L}\otimes V_{gh}^R\rmk
W_{gh,1} \lmk \unit_{\caH_L}\otimes V_k^R\rmk
\lmk W_{gh,1}\rmk^*
\lmk
\unit_{\caH_L}\otimes {u_{R}^{(1)}}(gh,k )\lmk V_{ghk}^R\rmk^*
\rmk\notag\\
&=
\lmk \unit_{\caH_L}\otimes V_g^R\rmk
W_{g,1} \lmk \unit_{\caH_L}\otimes V_h^R\rmk
\lmk W_{g,1}\rmk^*
\lmk
\unit_{\caH_L}\otimes {u_{R}^{(1)}}(g,h )
\rmk\cdot
W_{gh,1} \lmk \unit_{\caH_L}\otimes V_k^R\rmk
\lmk W_{gh,1}\rmk^*
\lmk
\unit_{\caH_L}\otimes {u_{R}^{(1)}}(gh,k )\lmk V_{ghk}^R\rmk^*
\rmk\notag\\
&=
\lmk \unit_{\caH_L}\otimes V_g^R\rmk
W_{g,1} \lmk \unit_{\caH_L}\otimes V_h^R\rmk
\lmk W_{g,1}\rmk^*
\left\{\Ad\lmk
\lmk
\unit_{\caH_L}\otimes {u_{R}^{(1)}}(g,h )
\rmk\cdot
W_{gh,1} \rmk
\lmk \unit_{\caH_L}\otimes V_k^R\rmk
\right\}\cdot\\
&\lmk
\unit_{\caH_L}\otimes  \lcm {u_{R}^{(1)}}(g,h ){u_{R}^{(1)}}(gh,k )\rcm\lmk V_{ghk}^R\rmk^*
\rmk\notag\\
&
=c_{R,1}(g,h,k)
\lmk \unit_{\caH_L}\otimes V_g^R\rmk
W_{g,1} \lmk \unit_{\caH_L}\otimes V_h^R\rmk
\lmk W_{g,1}\rmk^*
\left\{\Ad\lmk
W_{g,1}W_{h,1}  \rmk
\lmk \unit_{\caH_L}\otimes V_k^R\rmk
\right\}\notag\\
&\cdot \lmk W_{g,1}\lmk \unit_{\caH_L}\otimes u_{R}^{(1)}(h,k)\rmk W_{g,1}^*\rmk
\lmk \unit_{\caH_L}\otimes u_{R}^{(1)}(g,hk)\lmk V_{ghk}^R\rmk^*\rmk.
\label{roe}
\end{align}
We used (\ref{uwc}) for $[\cdot]$ part and 
Lemma \ref{lemni} (ii) and (\ref{vew}) for $\{\cdot\}$ part 
for the fourth equality.
Again using (\ref{uw2}),
we have
\begin{align}
&\unit_{\caH_L}\otimes u_R^{(2)}(g,h) u_R^{(2)}(gh,k)=
(\ref{roe})\\
&
=c_{R,1}(g,h,k)
\lmk \unit_{\caH_L}\otimes V_g^R\rmk
W_{g,1} \lmk \unit_{\caH_L}\otimes V_h^R\rmk
\left\{\Ad\lmk
W_{h,1} \rmk
\lmk \unit_{\caH_L}\otimes V_k^R\rmk
\right\}\notag\\
&\cdot\lmk 
W_{h,1}
\lmk \unit_{\caH_L}\otimes \lmk V_k^R\rmk^*\rmk
\lmk W_{h,1}\rmk^* \lmk \unit_{\caH_L}\otimes V_h^R\rmk^*
\lmk \unit_{\caH_L}\otimes
u_{R}^{(2)}(h,k)
\rmk
\lmk
\unit_{\caH_L}\otimes \lmk V_{hk}^R\rmk
\rmk
\lmk W_{g,1}\rmk^*\rmk\notag\\
&W_{g,1}
\lmk \unit_{\caH_L}\otimes \lmk V_{hk}^R\rmk^*\rmk
\lmk W_{g,1}\rmk^* \lmk \unit_{\caH_L}\otimes V_g^R\rmk^*
\lmk \unit_{\caH_L}\otimes
u_{R}^{(2)}(g,hk)
\rmk
\lmk
\unit_{\caH_L}\otimes \lmk V_{ghk}^R\rmk
\rmk
\lmk \unit_{\caH_L}\otimes \lmk V_{ghk}^R\rmk^*\rmk\notag\\
&=
c_{R,1}(g,h,k)
\lmk \unit_{\caH_L}\otimes V_g^R\rmk
W_{g,1} \cdot\lmk 
\lmk \unit_{\caH_L}\otimes
u_{R}^{(2)}(h,k)
\rmk
\rmk\cdot
\lmk W_{g,1}\rmk^* \lmk \unit_{\caH_L}\otimes V_g^R\rmk^*
\lmk \unit_{\caH_L}\otimes
u_{R}^{(2)}(g,hk)
\rmk\notag\\
&=
c_{R,1}(g,h,k)\cdot
\Ad\lmk
\lmk \unit_{\caH_L}\otimes V_g^R\rmk
W_{g,1} 
\rmk
\lmk 
\lmk \unit_{\caH_L}\otimes
u_{R}^{(2)}(h,k)
\rmk
\rmk\cdot
\lmk \unit_{\caH_L}\otimes
u_{R}^{(2)}(g,hk)
\rmk\notag\\
&=
c_{R,1}(g,h,k)\cdot
\Ad\lmk
\lmk  {V_g^L}^*\otimes \unit_{\caH_R}\rmk
V_g
W_{g,1} 
\rmk
\lmk 
\lmk \unit_{\caH_L}\otimes
u_{R}^{(2)}(h,k)
\rmk
\rmk\cdot
\lmk \unit_{\caH_L}\otimes
u_{R}^{(2)}(g,hk)
\rmk\notag\\
&=
c_{R,1}(g,h,k)\cdot
\Ad\lmk
V_g
W_{g,1} 
\rmk
\lmk 
\lmk \unit_{\caH_L}\otimes
u_{R}^{(2)}(h,k)
\rmk
\rmk\cdot
\lmk \unit_{\caH_L}\otimes
u_{R}^{(2)}(g,hk)
\rmk
\end{align}
In the last line we used (\ref{nagasaki3}) and  Lemma \ref{lemni} (iii)
to remove ${V_g^L}^*$.
From this, we see that $c_{R,1}=c_{R,2}$, completing the proof.
\end{proof}
\begin{lem}
Let
\begin{align}
&\omega\in\QLS,  \; 0<\theta_1<\theta_2<\frac\pi 2,\;
\text{with}\; \IG(\omega,\theta_1), \IG(\omega,\theta_2)\neq\emptyset.
\end{align}
Then
we have
\begin{align}\label{hq5}
h^{(5)}
\lmk
\omega , \theta_1
\rmk
=
h^{(5)}
\lmk
\omega, \theta_2
\rmk.
\end{align}
\end{lem}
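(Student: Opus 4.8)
The plan is to exploit the fact that, once admissible auxiliary data $\alpha\in\eaut(\omega)$, $(\tilde\beta_g)$, $(\eta_g^\sigma)$, $(\alpha_L,\alpha_R,\Theta)$ and $\lmk (W_g),(u_\sigma(g,h))\rmk$ have been fixed, neither the set $\IP\lmk \omega,\alpha,\theta,(\tilde\beta_g),(\eta_g^\sigma),(\alpha_L,\alpha_R,\Theta)\rmk$ nor the cocycle $c_R$ of Lemma \ref{lem3} actually depends on $\theta$: the defining relations (\ref{ipdef}), (\ref{ipdef2}) of $\IP$ involve only $\alpha_0=\alpha_L\otimes\alpha_R$, $\alpha_\sigma$, $\Theta$, $\eta_g=\eta_g^L\otimes\eta_g^R$, $\eta_g^\sigma$ and $\beta_g^U$, and equation (\ref{uwc}) characterising $c_R$ involves only $(W_g)$ and $(u_R(g,h))$ --- in no case does $\theta$ appear. (The $\theta$-dependent commutation property recorded in Lemma \ref{ichi}(ii) is not part of the definition of $\IP$ and is irrelevant here.) Hence it suffices to produce a single choice of data that is admissible simultaneously for $\theta_1$ and for $\theta_2$.

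For this I would first record the two inclusions that follow from $C_{\theta_1}\subset C_{\theta_2}$, and which point in opposite directions. On the one hand $\lmk C_{\theta_1}\rmk_\sigma\subset\lmk C_{\theta_2}\rmk_\sigma$, so (extending automorphisms by the identity, which leaves the corresponding automorphism of $\caA$ unchanged) every $(\eta_g^\sigma)\in\caT(\theta_1,(\tilde\beta_g))$ is also an element of $\caT(\theta_2,(\tilde\beta_g))$; in particular $\IG(\omega,\theta_1)\subset\IG(\omega,\theta_2)$. On the other hand $\lmk C_{\theta_2}\rmk^c\subset\lmk C_{\theta_1}\rmk^c$, so every $(\alpha_L,\alpha_R,\Theta)\in{\mathfrak D}^{\theta_2}_\alpha$ is also an element of ${\mathfrak D}^{\theta_1}_\alpha$.

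Now I would choose the data. Since $\IG(\omega,\theta_1)\neq\emptyset$, pick $(\tilde\beta_g)\in\IG(\omega,\theta_1)$ and $(\eta_g^\sigma)\in\caT(\theta_1,(\tilde\beta_g))$; by the first inclusion these lie in $\IG(\omega,\theta_2)$ and $\caT(\theta_2,(\tilde\beta_g))$ as well. Since $\omega\in\QLS$, pick $\alpha\in\eaut(\omega)$ and --- crucially --- pick the decomposition $(\alpha_L,\alpha_R,\Theta)\in{\mathfrak D}^{\theta_2}_\alpha$ at the \emph{larger} angle; by the second inclusion it also lies in ${\mathfrak D}^{\theta_1}_\alpha$. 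Finally, by Lemma \ref{ichi} (applied at, say, $\theta_2$) there is some $\lmk (W_g),(u_\sigma(g,h))\rmk\in\IP\lmk \omega,\alpha,\theta_2,(\tilde\beta_g),(\eta_g^\sigma),(\alpha_L,\alpha_R,\Theta)\rmk$, and by the $\theta$-independence noted above this same tuple lies in $\IP\lmk \omega,\alpha,\theta_1,(\tilde\beta_g),(\eta_g^\sigma),(\alpha_L,\alpha_R,\Theta)\rmk$ too. Then the cocycle $c_R$ read off from $\lmk (W_g),(u_R(g,h))\rmk$ via (\ref{uwc}) is literally the same element of $Z^3(G,\bbT)$ in both cases, and unwinding the chain of definitions (using that $h^{(2)}$ is independent of the choice of $\lmk (W_g),(u_\sigma(g,h))\rmk$, $h^{(3)}$ of $(\alpha_L,\alpha_R,\Theta)$ and $\alpha$, $h^{(4)}$ of $(\eta_g^\sigma)$, and $h^{(5)}$ of $(\tilde\beta_g)$) gives
\begin{align*}
h^{(5)}(\omega,\theta_i)&=h^{(4)}(\omega,\theta_i,(\tilde\beta_g))=h^{(3)}(\omega,\theta_i,(\tilde\beta_g),(\eta_g^\sigma))\\
&=h^{(2)}(\omega,\alpha,\theta_i,(\tilde\beta_g),(\eta_g^\sigma),(\alpha_L,\alpha_R,\Theta))=\lcm c_R\rcm_{H^3(G,\bbT)}
\end{align*}
for $i=1,2$, which is exactly (\ref{hq5}).

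The argument is essentially bookkeeping on top of the independence lemmas already established; the only point requiring genuine attention is keeping straight that enlarging $\theta$ enlarges the cone but shrinks its complement, so that the decomposition of $\tilde\beta_g$ must be taken at $\theta_1$ and that of $\alpha$ at $\theta_2$ in order to land in data valid for both angles simultaneously. I do not expect any substantive obstacle beyond that.
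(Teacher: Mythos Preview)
Your proof is correct and follows essentially the same approach as the paper: choose $(\tilde\beta_g),(\eta_g^\sigma)$ at the smaller angle $\theta_1$ and $(\alpha_L,\alpha_R,\Theta)$ at the larger angle $\theta_2$, observe that both choices are then valid at the other angle, and note that the resulting $\IP$-element and the cocycle $c_R$ are literally the same for $\theta_1$ and $\theta_2$. The only cosmetic difference is that the paper invokes Lemma~\ref{ichi} at $\theta_1$ rather than $\theta_2$, which is immaterial given your observation that the defining relations of $\IP$ do not involve $\theta$.
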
\begin{defn}\label{theindexdef}
From this lemma, for any  $\omega\in\QLS$ with
$\IG(\omega)\neq\emptyset$,
we may
 define
 \begin{align}
 h
\lmk
\omega
\rmk
:=
 h^{(5)}
\lmk
\omega, \theta
\rmk
  \end{align}
  independent of the choice of $\theta$.
  This is the index we associate to $\omega\in\QLS$ with
$\IG(\omega)\neq\emptyset$.
\end{defn}
\begin{proof}
By the assumption, there are
some $(\tilde\beta_g)\in\IG(\omega,\theta_1)$
and $(\eta_g^\sigma)\in \caT\lmk (\theta_1, \tilde\beta_g)\rmk$.
Because $\omega\in\QLS$, there are
$ \alpha\in \eaut(\omega)$
and $(\alpha_L,\alpha_R,\Theta)\in \caD_{\alpha}^{\theta_2}$ 
by the definition.
Setting $\alpha_0:=\alpha_L\otimes\alpha_R$, we have $\alpha=\inn\circ\alpha_0\circ\Theta$.
Because $0<\theta_1<\theta_2<\frac\pi 2$,
we also have $(\eta_g^\sigma)\in \caT\lmk (\theta_2, \tilde\beta_g)\rmk$, and
$(\tilde\beta_g)\in\IG(\omega,\theta_2)$.
For the same reason, we also have 
$(\alpha_L,\alpha_R,\Theta)\in \caD_{\alpha}^{\theta_1}$.

By Lemma \ref{ichi}, there is some
\begin{align}
\lmk (W_g), (u_\sigma(g,h))\rmk
\in \IP\lmk
\omega, \alpha, \theta_1,
(\tilde\beta_g), (\eta_{g}^\sigma),
(\alpha_L,\alpha_R,\Theta)
\rmk.
\end{align}
However, 
we also have
\begin{align}
\lmk (W_g), (u_\sigma(g,h))\rmk
\in \IP\lmk
\omega, \alpha, \theta_2,
(\tilde\beta_g), (\eta_{g}^\sigma),
(\alpha_L,\alpha_R,\Theta)
\rmk.
\end{align}
Therefore, we obtain
$h^{(5)}
\lmk
\omega , \theta_1
\rmk
=
h^{(5)}
\lmk
\omega, \theta_2
\rmk
$.
\end{proof}
This completes the proof of Theorem \ref{welldefined}.

\section{The existence of $\tilde\beta$ for SPT phases}\label{tildebetasec}
In this section, we 
give a sufficient condition for $\IG(\omega)$ to be non-empty.
We consider the same setting as in subsection \ref{settingsubsec}.
\begin{thm}\label{defindexspt}
For any $0<\theta<\frac\pi 2$ and $\alpha\in \sqaut(\caA)$ satisfying
$
\omega_0\circ\alpha\circ\beta_g=\omega_0\circ\alpha$
for all $g\in G$,
$\IG(\omega_0\circ\alpha,\theta)$ is not empty.
\end{thm}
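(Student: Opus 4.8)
Write $\omega:=\omega_{0}\circ\alpha$; by hypothesis $\omega\circ\beta_{g}=\omega$ for all $g\in G$. Since $\beta_{g}=\beta_{g}^{U}\circ\beta_{g}^{H_{D}}$ with $\beta_{g}^{U}$, $\beta_{g}^{H_{D}}$ of disjoint support, this gives $\omega\circ\beta_{g}^{H_{D}}=\omega\circ(\beta_{g}^{U})^{-1}$. It is therefore enough to produce, for each $g$, automorphisms $\eta_{g}^{\sigma}\in\Aut(\caA_{(C_{\theta})_{\sigma}})$ ($\sigma=L,R$) and a unitary $v_{g}\in\caA$ such that $\mu_{g}:=\Ad(v_{g})\circ(\eta_{g}^{L}\otimes\eta_{g}^{R})$ satisfies $\omega\circ\mu_{g}=\omega\circ\beta_{g}^{H_{D}}$: then $\tilde\beta_{g}:=\mu_{g}\circ\beta_{g}^{U}=\Ad(v_{g})\circ(\eta_{g}^{L}\otimes\eta_{g}^{R})\circ\beta_{g}^{U}$ shows $(\eta_{g}^{\sigma})\in\caT(\theta,(\tilde\beta_{g}))$, while $\omega\circ\tilde\beta_{g}=\omega\circ\mu_{g}\circ\beta_{g}^{U}=\omega\circ\beta_{g}^{H_{D}}\circ\beta_{g}^{U}=\omega\circ\beta_{g}=\omega$, so $(\tilde\beta_{g})\in\IG(\omega,\theta)$. (This reduction is in fact an equivalence.) Observe that $\beta_{g}^{C_{\theta}\cap H_{D}}=\beta_{g}^{C_{\theta}\cap H_{D}\cap H_{L}}\otimes\beta_{g}^{C_{\theta}\cap H_{D}\cap H_{R}}$ is already of the form allowed for $\mu_{g}$, and that, $\beta_{g}$ being on-site, $\omega\vert_{\caA_{\Gamma}}$ is $\beta_{g}^{\Gamma}$-invariant for every $\Gamma\subset\bbZ^{2}$; thus $\omega$ and $\omega\circ\beta_{g}^{H_{D}}$ agree on $\caA_{H_{U}}$ and on $\caA_{H_{D}}$, so the whole discrepancy lies in correlations across the $x$-axis.

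\textbf{Using the $\sqaut$ structure.} Fix auxiliary angles $0<\theta_{0.8}<\cdots<\theta_{3.2}<\pi/2$ as in (\ref{thetas1}) with $\theta_{1}<\theta$, and take the decomposition (\ref{sqaut})--(\ref{sqaut3}) of $\alpha$ for these angles. Every piece of that decomposition other than the innermost one is supported in a single half-plane $H_{U}$ or $H_{D}$; the innermost piece is $\alpha_{[0,\theta_{1}]}=\alpha_{[0,\theta_{1}],L}\otimes\alpha_{[0,\theta_{1}],R}$, supported in $C_{\theta_{1}}\subset C_{\theta}$ and factorized over $L,R$. Collecting all the $H_{U}$-pieces into one automorphism $G^{U}$ and all the $H_{D}$-pieces into $G^{D}$ (they commute and may be slid past one another, having disjoint supports) one obtains
\begin{align*}
\alpha=\inn\circ(\lambda^{L}\otimes\lambda^{R})\circ(G^{U}\otimes G^{D}),\qquad
\lambda^{\sigma}:=\alpha_{[0,\theta_{1}],\sigma}\in\Aut(\caA_{(C_{\theta})_{\sigma}}),\quad G^{\zeta}\in\Aut(\caA_{H_{\zeta}}).
\end{align*}
Setting $\psi:=\omega_{0}\circ\inn\circ(\lambda^{L}\otimes\lambda^{R})$ gives $\omega=\psi\circ(G^{U}\otimes G^{D})$. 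Because $G^{U}$ commutes with $\beta_{g}^{H_{D}}$ and $G^{D}$ with $\beta_{g}^{U}$, the relation $\omega\circ\beta_{g}=\omega$ collapses, after cancelling $G^{U}\otimes G^{D}$, to the stabilizer identity $\psi\circ(\hat\beta_{g}^{U}\otimes\hat\beta_{g}^{D})=\psi$ with $\hat\beta_{g}^{\zeta}:=G^{\zeta}\circ\beta_{g}^{H_{\zeta}}\circ(G^{\zeta})^{-1}\in\Aut(\caA_{H_{\zeta}})$; likewise $\omega\circ\beta_{g}^{H_{D}}=\psi\circ\hat\beta_{g}^{D}\circ(G^{U}\otimes G^{D})=\psi\circ(\hat\beta_{g}^{U})^{-1}\circ(G^{U}\otimes G^{D})$. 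To finish one must in addition exploit the left--right cone decomposition of $\alpha$ (available since $\sqaut(\caA)\subset\qaut(\caA)$) and the full quadrant structure of (\ref{sqaut}): reconciling these with the decomposition above exhibits $\psi$ as a state which is moreover split across the cut $C_{\theta}\mid(C_{\theta})^{c}$, and this --- via irreducibility of the half-space GNS representations and purity of $\omega_{0}$ --- allows the part of $\hat\beta_{g}^{U}$ supported in $(C_{\theta})^{c}$ to be absorbed into an inner automorphism after precomposition with $\psi$, leaving only a bowtie-$C_{\theta}$, left--right factorized automorphism. Transporting the resulting identity back through $G^{U}\otimes G^{D}$ and $\lambda^{L}\otimes\lambda^{R}$, and combining with the explicit factor $\beta_{g}^{C_{\theta}\cap H_{D}}$, yields the required $\mu_{g}$.

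\textbf{Main obstacle.} The genuine work is this last step: reconciling the $U\mid D$ and $L\mid R$ (really, the full quadrant) refinements of the $\sqaut$ decomposition of $\alpha$, and tracking precisely which sub-cone pieces survive, so that the correction automorphism one is left with is genuinely supported in $C_{\theta}$ and factorizes over $L,R$. The overlap between the fine and coarse annuli in (\ref{sqaut}) is exactly the device that legitimises the necessary rearrangements of supports, and purity of $\omega_{0}$ is what upgrades the quasi-equivalences furnished by the split property to the exact state identities demanded by the definition of $\IG$. The reduction and the identity $\omega\circ\beta_{g}^{H_{D}}=\omega\circ(\beta_{g}^{U})^{-1}$ are routine.
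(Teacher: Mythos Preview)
Your reduction and the decomposition $\alpha=\inn\circ(\lambda^{L}\otimes\lambda^{R})\circ(G^{U}\otimes G^{D})$ are correct, but the step you yourself label as the ``main obstacle'' is not actually carried out, and the sketch you give for it does not work. You write that, since $\psi$ is split across $C_{\theta}\mid(C_{\theta})^{c}$, ``the part of $\hat\beta_{g}^{U}$ supported in $(C_{\theta})^{c}$'' can be absorbed into an inner automorphism. There are two problems. First, $\hat\beta_{g}^{U}=G^{U}\beta_{g}^{H_{U}}(G^{U})^{-1}$ has no reason to factor across $C_{\theta}\mid(C_{\theta})^{c}$, so there is no ``part supported in $(C_{\theta})^{c}$'' to speak of. Second, even granting such a factorization, the split of $\psi$ only tells you that $\psi$ is quasi-equivalent to a product state; it does \emph{not} say that an automorphism of $\caA_{(C_{\theta})^{c}}$ becomes inner after composition with $\psi$. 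Your earlier observation that $\omega\vert_{\caA_{\Gamma}}$ is $\beta_{g}^{\Gamma}$-invariant does not help here, because the automorphism in play is the $G^{U}$-conjugated $\hat\beta_{g}^{U}$, not $\beta_{g}^{H_{U}}$ itself.

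The paper fills this gap via a genuinely nontrivial operator-algebraic argument (Lemma~\ref{lem11s}, built on Lemmas~\ref{splitlem5} and~\ref{lem8s}). The key observation is that $\alpha\beta_{g}\alpha^{-1}$ and $\alpha\beta_{g}^{U}\alpha^{-1}$ share the \emph{same} ``upper'' piece $\rho_{\mkB\mkC}^{U}$ in their respective decompositions (your $G^{U}\beta_{g}^{U}(G^{U})^{-1}$-type factor). The full $\beta_{g}$-invariance $\omega_{0}\circ\alpha\beta_{g}\alpha^{-1}=\omega_{0}$ then forces $\omega_{\mkB\mkC}^{U}\circ\rho_{\mkB\mkC}^{U}$ to be quasi-equivalent to a state that splits over $L\mid R$ along the finer cone structure (equation~(\ref{daisetsu})). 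It is at this point that the homogeneity of the pure-state space of a UHF algebra (Powers, Kishimoto--Ozawa--Sakai; Lemma~\ref{splitlem5}) is invoked to produce the $\theta_{\mkB}^{LU},\theta_{\mkB}^{RU}$ that replace $\rho_{\mkB\mkC}^{U}$ by a genuinely $L,R$-factorized automorphism, yielding the $\eta_{L},\eta_{R}$. Without this input there is no mechanism to manufacture the required $\eta_{g}^{\sigma}$; the fine $\sqaut$ structure (the overlapping annuli) is used not merely to ``legitimise rearrangements of supports'' but to ensure that the left--right structure of $\alpha\beta_{g}\alpha^{-1}$ survives into the decomposition feeding Lemma~\ref{lem11s}. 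Your coarser decomposition $\alpha=\inn\circ(\lambda^{L}\otimes\lambda^{R})\circ(G^{U}\otimes G^{D})$ discards exactly this information.
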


In order to prove this theorem, we first show several general lemmas.
\begin{lem}\label{splitlem5}
Let $\mkA,\mkB$ be UHF-algebras.
Let $\omega$ be a pure state on $\mkA\otimes\mkB$ and
$\varphi_{\mkA}$,  $\varphi_{\mkB}$ states on $\mkA$,
$\mkB$ respectively.
Assume that $\omega$ is quasi-equivalent to $\varphi_{\mkA}\otimes\varphi_{\mkB}$.
Then for any pure states $\psi_{\mkA}$, $\psi_{\mkB}$
on $\mkA$, $\mkB$, there are automorphisms 
$\gamma_{\mkA}\in\Aut\lmk \mkA\rmk$,
$\gamma_{\mkB}\in\Aut\lmk \mkB\rmk$ and
a unitary $u\in\caU\lmk\mkA\otimes\mkB\rmk$
such that 
\begin{align}
\omega=\lmk
\lmk
\psi_{\mkA}\circ\gamma_{\mkA}\rmk\otimes\lmk
\psi_{\mkB}\circ\gamma_{\mkB}
\rmk
\rmk\circ\Ad(u).
\end{align}
If $\psi_{\mkA}$ and $\varphi_{\mkA}$ are quasi-equivalent,
then we may set $\gamma_{\mkA}=\id_{\mkA}$.
\end{lem}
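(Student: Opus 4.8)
The plan is to reduce the statement to three standard facts: that quasi-equivalence of states is inherited by restriction to a $C^{*}$-subalgebra and is preserved under tensor products; that a type~I factor state on any $C^{*}$-algebra is quasi-equivalent to a pure state; and that the automorphism group of a UHF algebra acts transitively on its pure states. First I would restrict $\omega$ to the two tensor factors, setting $\omega_{\mkA}:=\omega\vert_{\mkA}$ and $\omega_{\mkB}:=\omega\vert_{\mkB}$. Since $\omega$ is quasi-equivalent to $\varphi_{\mkA}\otimes\varphi_{\mkB}$, restriction gives $\omega_{\mkA}\sim_{q.e.}\varphi_{\mkA}$ and $\omega_{\mkB}\sim_{q.e.}\varphi_{\mkB}$. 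Because $\omega$ is pure, $\varphi_{\mkA}\otimes\varphi_{\mkB}$ is then a factor state whose GNS von Neumann algebra $\pi_{\varphi_{\mkA}}(\mkA)''\,\bar\otimes\,\pi_{\varphi_{\mkB}}(\mkB)''$ is a type~I factor; since a von Neumann tensor product of factors is a type~I factor precisely when both legs are type~I factors, it follows that $\pi_{\varphi_{\mkA}}(\mkA)''$ and $\pi_{\varphi_{\mkB}}(\mkB)''$, hence also $\pi_{\omega_{\mkA}}(\mkA)''$ and $\pi_{\omega_{\mkB}}(\mkB)''$, are type~I factors. Thus each of $\omega_{\mkA},\omega_{\mkB}$ is a type~I factor state, and therefore quasi-equivalent to a pure state on $\mkA$, respectively on $\mkB$ (a type~I factor representation is an ampliation of an irreducible one, and any unit vector in the underlying irreducible space gives such a pure state).

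Next I would invoke homogeneity of the pure state space of the UHF algebra $\mkA$: for the given target pure state $\psi_{\mkA}$ there is $\gamma_{\mkA}\in\Aut(\mkA)$ carrying the pure state quasi-equivalent to $\omega_{\mkA}$ onto $\psi_{\mkA}\circ\gamma_{\mkA}$, so that $\omega_{\mkA}\sim_{q.e.}\psi_{\mkA}\circ\gamma_{\mkA}$ with $\psi_{\mkA}\circ\gamma_{\mkA}$ pure; likewise one obtains $\gamma_{\mkB}\in\Aut(\mkB)$ with $\omega_{\mkB}\sim_{q.e.}\psi_{\mkB}\circ\gamma_{\mkB}$ pure. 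Tensoring these two quasi-equivalences --- a tensor product of normal $*$-isomorphisms intertwines the GNS representations, using the commutation theorem to identify $\pi(\mkA\otimes\mkB)''$ with $\pi_{\mkA}(\mkA)''\bar\otimes\pi_{\mkB}(\mkB)''$ --- shows that $\omega$ is quasi-equivalent to the pure state $\chi:=\lmk\psi_{\mkA}\circ\gamma_{\mkA}\rmk\otimes\lmk\psi_{\mkB}\circ\gamma_{\mkB}\rmk$. Two quasi-equivalent pure states on the unital $C^{*}$-algebra $\mkA\otimes\mkB$ have unitarily equivalent irreducible GNS representations, so by the unitary form of Kadison's transitivity theorem there is $u\in\caU\lmk\mkA\otimes\mkB\rmk$ with $\omega=\chi\circ\Ad(u)$, which is the asserted decomposition. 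For the last clause, if $\psi_{\mkA}\sim_{q.e.}\varphi_{\mkA}$ then $\omega_{\mkA}\sim_{q.e.}\psi_{\mkA}$ already, so the homogeneity step is superfluous on the $\mkA$-side and one may take $\gamma_{\mkA}=\id_{\mkA}$, the rest of the argument being identical with $\psi_{\mkA}$ in place of $\psi_{\mkA}\circ\gamma_{\mkA}$.

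The only genuinely non-formal input --- and hence the main obstacle --- is the transitivity of $\Aut(\mkA)$ on the pure state space of the UHF algebra $\mkA$, i.e.\ that an arbitrary pure state can be moved by an automorphism into the folium of $\omega_{\mkA}$; this is precisely where the UHF hypothesis is used, and it should be quoted from the known results on homogeneity of pure state spaces rather than reproved. Everything else is routine: the behaviour of quasi-equivalence under restriction, tensoring and inner perturbation, the multiplicity structure of type~I representations, and Kadison transitivity. The one point requiring a little care is to make sure, when composing the quasi-equivalences in the second paragraph, that the resulting state $\chi$ is again pure, so that Kadison transitivity is applicable.
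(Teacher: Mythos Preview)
Your argument is correct and uses the same three structural ingredients as the paper's proof: the fact that a type~I factor tensor product forces both legs to be type~I factors (Takesaki), the transitivity of the automorphism group on pure states of a UHF-algebra (Powers, Kishimoto--Ozawa--Sakai), and Kadison's transitivity theorem to upgrade quasi-equivalence of pure states to inner equivalence.

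The execution differs slightly. The paper works concretely inside the GNS space of $\omega$: it shows $\pi_\omega(\mkA\otimes\bbC\unit_{\mkB})''$ is a type~I factor, invokes the structure theorem for type~I factors to write $\caH_\omega\cong\caK_{\mkA}\otimes\caK_{\mkB}$ and $\pi_\omega\cong\rho_{\mkA}\otimes\rho_{\mkB}$ with $\rho_{\mkA},\rho_{\mkB}$ irreducible, and then picks explicit unit vectors $\xi_{\mkA},\xi_{\mkB}$ to manufacture pure states on which to apply homogeneity. You instead argue at the level of quasi-equivalence classes: the restrictions $\omega\vert_{\mkA},\omega\vert_{\mkB}$ are type~I factor states (via the same Takesaki argument), hence quasi-equivalent to pure states, and homogeneity plus tensoring of quasi-equivalences gives $\omega\sim_{q.e.}\chi$. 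This is a perfectly good shortcut that avoids writing down the spatial tensor decomposition and the intertwining unitary $W$. One small expository point: in your first paragraph the claim $\omega_{\mkA}\sim_{q.e.}\varphi_{\mkA}$ actually uses that $\pi_{\varphi_{\mkA}}(\mkA)''$ is a factor (so that the subrepresentation $\pi_{\omega_{\mkA}}\subset\pi_\omega\vert_{\mkA}$ is quasi-equivalent to the whole), which you only establish in the following sentence; the logic is sound but the order could be reversed for clarity. Similarly, in the tensoring step the cleanest route is through the given hypothesis $\omega\sim_{q.e.}\varphi_{\mkA}\otimes\varphi_{\mkB}$ together with $\varphi_{\mkA}\sim_{q.e.}\psi_{\mkA}\circ\gamma_{\mkA}$ and $\varphi_{\mkB}\sim_{q.e.}\psi_{\mkB}\circ\gamma_{\mkB}$, rather than directly tensoring the restrictions.
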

\begin{proof}
Let $(\caH_{\omega},\pi_{\omega},\Omega_{\omega})$
$(\caH_{\varphi_{\mkA}},\pi_{\varphi_{\mkA}},\Omega_{\varphi_{\mkA}})$
$(\caH_{\varphi_{\mkB}},\pi_{\varphi_{\mkB}},\Omega_{\varphi_{\mkB}})$
be
GNS triples of $\omega$, $\varphi_{\mkA}$, $\varphi_{\mkB}$ respectively.
Then $(\caH_{\varphi_{\mkA}}\otimes
\caH_{\varphi_{\mkB}},\pi_{\varphi_{\mkA}}\otimes 
\pi_{\varphi_{\mkB}},\Omega_{\varphi_{\mkA}}\otimes\Omega_{\varphi_{\mkB}})$
is a GNS triple of $\varphi_{\mkA}\otimes\varphi_{\mkB}$.
As $\omega$ is quasi-equivalent to $\varphi_{\mkA}\otimes\varphi_{\mkB}$,
there is a $*$-isomorphism 
$\tau:\pi_{\omega}\lmk\mkA\otimes\mkB\rmk''\to
\pi_{\varphi_{\mkA}}(\mkA)''\otimes\pi_{\varphi_{\mkB}}(\mkB)''$
such that $\tau\circ\pi_{\omega}=\pi_{\varphi_{\mkA}}\otimes 
\pi_{\varphi_{\mkB}}$.
Because $\omega$ is pure, we have $\pi_{\omega}\lmk\mkA\otimes\mkB\rmk''=\caB(\caH_{\omega})$
and from the isomorphism $\tau$, $\pi_{\varphi_{\mkA}}(\mkA)''\otimes\pi_{\varphi_{\mkB}}(\mkB)''$
is also a type $I$ factor.
Then from Theorem 2.30 V \cite{takesaki},
both of $\pi_{\varphi_{\mkA}}(\mkA)''$ and $\pi_{\varphi_{\mkB}}(\mkB)''$
are type $I$ factors.
The restriction of $\tau$ to $\pi_{\omega}\lmk\mkA\otimes\bbC\unit_{\mkB}\rmk''$
implies a $*$-isomorphism from $\pi_{\omega}\lmk\mkA\otimes\bbC\unit_{\mkB}\rmk''$
onto the type I factor $\pi_{\varphi_{\mkA}}(\mkA)''$.
Hence we see that $\pi_{\omega}\lmk\mkA\otimes\bbC\unit_{\mkB}\rmk''$ is a type $I$-factor.
%Because $\omega$ is pure, $\pi_{\omega}\lmk\mkA\otimes\mkB\rmk''=\caB(\caH_{\omega})$
%and from the isomorphism $\tau$,
%$\pi_{\varphi}(\mkA)''\otimes\pi_{\psi}(\mkB)''$ is a type $I$-factor.
%By Theorem 2.30 V \cite{takesaki},  
Therefore, from Theorem 1.31 V of \cite{takesaki},
there are Hilbert spaces
$\caK_{\mkA},\caK_{\mkB}$ and a unitary $W: \caH_\omega\to \caK_{\mkA}\otimes \caK_{\mkB}$
such that $\Ad\lmk W\rmk\lmk \pi_{\omega}\lmk\mkA\otimes\bbC\unit_{\mkB}\rmk''\rmk=
\caB\lmk \caK_{\mkA}\rmk\otimes\bbC\unit_{\caK_{\mkB}} $.
Because $\omega$ is pure, we also have 
$\Ad\lmk W\rmk\lmk \pi_{\omega}\lmk\bbC\unit_{\mkA}
\otimes \mkB\rmk''\rmk=\bbC\unit_{\caK_{\mkA}}\otimes \caB(\caK_{\mkB})$.
From this, we see that there are irreducible representations
$\rho_{\mkA}$, $\rho_{\mkB}$ of $\mkA$ and $\mkB$ on
$\caK_{\mkA}$, $\caK_{\mkB}$
such that $\Ad(W)\circ \pi_{\omega}=\rho_{\mkA}\otimes \rho_{\mkB}$.
Fix some unit vectors $\xi_{\mkA}\in \caK_{\mkA}$, $\xi_{\mkB}\in \caK_{\mkB}$.
Then because of the irreducibility of $\rho_{\mkB}$ and $\rho_{\mkB}$,
 $\omega_{\mkA}:=\braket{\xi_{\mkA}}{\rho_{\mkA}\lmk\cdot\rmk\xi_{\mkA}}$ and
$\omega_{\mkB}:=\braket{\xi_{\mkB}}{\rho_{\mkB}\lmk\cdot\rmk\xi_{\mkB}}$
are pure states on $\mkA$, $\mkB$.
By Theorem 1.1 of \cite{kos} (originally proved by Powers \cite{powers} for UHF-algebras)
for any pure states $\psi_{\mkA}$, $\psi_{\mkB}$
on $\mkA$, $\mkB$, there exist automorphisms 
$\gamma_{\mkA}\in\Aut(\mkA)$
$\gamma_{\mkB}\in\Aut(\mkB)$
such that 
$\omega_{\mkA}=\psi_{\mkA}\circ \gamma_{\mkA}$
$\omega_{\mkB}=\psi_{\mkB}\circ \gamma_{\mkB}$.
Now for unit vectors $W^{*}\lmk\xi_{\mkA}\otimes \xi_{\mkB}\rmk,\Omega_{\omega}\in\caH_{\omega}$,
by Kadison's transitivity theorem and the irreducibility of $\pi_\omega$, there exists a unitary $u\in\caU\lmk \mkA\otimes\mkB\rmk$
such that $\pi_{\omega}(u)\Omega_{\omega}=W^{*}\lmk\xi_{\mkA}\otimes \xi_{\mkB}\rmk$.
Substituting this, we obtain
\begin{align}\label{opro}
&\omega=\braket{\Omega_{\omega}}{\pi_{\omega}\lmk\cdot\rmk \Omega_{\omega}}
=\braket{\pi_{\omega}(u^{*})W^{*}\lmk\xi_{\mkA}\otimes \xi_{\mkB}\rmk}{\pi_{\omega}\lmk\cdot\rmk\pi_{\omega}(u^{*})W^{*}\lmk\xi_{\mkA}\otimes \xi_{\mkB}\rmk}
=\braket{W^{*}\lmk\xi_{\mkA}\otimes \xi_{\mkB}\rmk}{\pi_{\omega}\circ\Ad(u)\lmk\cdot\rmk
W^{*}\lmk\xi_{\mkA}\otimes \xi_{\mkB}\rmk}\notag\\
&=\braket{\lmk\xi_{\mkA}\otimes \xi_{\mkB}\rmk}{\lmk\rho_{\mkA}\otimes \rho_{\mkB}\rmk\circ\Ad(u)\lmk\cdot\rmk
\lmk\xi_{\mkA}\otimes \xi_{\mkB}\rmk}
=\lmk \omega_{\mkA}\otimes\omega_{\mkB}\rmk\circ \Ad(u)
=\lmk \psi_{\mkA}\circ \gamma_{\mkA}\otimes \psi_{\mkB}\circ \gamma_{\mkB}\rmk
\circ \Ad(u).
\end{align}
Now assume that $\psi_{\mkA}$ and $\varphi_{\mkA}$ are quasi-equivalent, i.e.,
the GNS representations of $\psi_{\mkA}$, $\varphi_{\mkA}$, denoted by
$\pi_{\psi_{\mkA}}$ and $\pi_{\varphi_{\mkA}}$ are quasi-equivalent.
From the above argument, $\pi_{\omega}\vert_{\mkA}$ and $\pi_{\varphi_{\mkA}}$
are quasi-equivalent. At the same time, $\pi_{\omega}\vert_{\mkA}$ and
$\rho_{\mkA}$ are quasi-equivalent. 
Therefore, $\pi_{\psi_{\mkA}}$ and $\rho_{\mkA}$ are quasi-equivalent.
Because both of them are irreducible, we
see that a pure state ${\psi_{\mkA}}$  can be represented by a unit vector $\zeta\in \caK_{\mkA}$
as ${\psi_{\mkA}}=\braket{\zeta}{\rho_{\mkA}\lmk\cdot\rmk \zeta}$.
Because $\rho_{\mkA}$ is irreducible, 
by Kadison's transitivity theorem, there exists a unitary $w\in\caU\lmk \mkA\rmk$ such that
$\rho_{\mkA}(w^{*})\zeta=\xi_{\mkA}$.
Hence we obtain ${\psi_{\mkA}}\circ \Ad(w)=\omega_{\mkA}$.
Substituting this instead of $\omega_{\mkA}=\psi_{\mkA}\circ \gamma_{\mkA}$
in (\ref{opro}),
we obtain
\begin{align}
\omega=
\lmk \psi_{\mkA}\otimes \psi_{\mkB}\circ \gamma_{\mkB}\rmk
\circ \Ad\lmk \lmk w\otimes \id_{\mkB}\rmk u\rmk,
\end{align}
proving the last claim.

\end{proof}

\begin{lem}\label{lem8s}
Let $\mkB,\mkA_{1,L},\mkA_{2,L},\mkA_{1,R},\mkA_{2,R}$ be UHF-algebras.
Set $\mkA_{1}:=\mkA_{1,L}\otimes\mkA_{1,R}$,
$\mkA_{2}:=\mkA_{2,L}\otimes\mkA_{2,R}$,
$\mkA_{L}:=\mkA_{1,L}\otimes\mkA_{2,L}$, and
$\mkA_{R}:=\mkA_{1,R}\otimes\mkA_{2,R}$. Let 
$\omega$, $\varphi_{L}^{(1,2)}$, $\varphi_{R}^{(1,2)}$, $\psi$
  be pure states on $\mkB\otimes \mkA_{1}$, $\mkA_{L}$,
$\mkA_{R}$, $\mkB$, respectively.
Suppose that
$\omega$ is quasi-equivalent to $\left.\lmk
\psi\otimes \varphi_{L}^{(1,2)}\otimes \varphi_{R}^{(1,2)}
\rmk\right\vert_{\mkB\otimes\mkA_{1}}$.
Then for any pure states $\varphi_{L}^{(1)}$, $\varphi_{R}^{(1)}$
on $\mkA_{1,L}$, $\mkA_{1,R}$ respectively,
there are automorphisms $\gamma_{L}^{(1)}\in\Aut\lmk \mkA_{1,L}\rmk$,
$\gamma_{R}^{(1)}\in\Aut\lmk \mkA_{1,R}\rmk$,
and a unitary $u\in\caU\lmk\mkB\otimes \mkA_{1}\rmk$
such that
\begin{align}\label{kurashiki}
\omega=\lmk
\psi\otimes \lmk \varphi_{L}^{(1)}\circ\gamma_{L}^{(1)}\rmk
\otimes \lmk
\varphi_{R}^{(1)}\circ\gamma_{R}^{(1)}\rmk
\rmk\circ\Ad u.
\end{align}
\end{lem}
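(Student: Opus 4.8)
The plan is to apply Lemma~\ref{splitlem5} twice: once to strip off the $\mkB$-factor while keeping $\psi$ untouched, and once to split the remaining factor $\mkA_1=\mkA_{1,L}\otimes\mkA_{1,R}$ into its two halves. Set $\mu_L:=\left.\varphi_L^{(1,2)}\right\vert_{\mkA_{1,L}}$ and $\mu_R:=\left.\varphi_R^{(1,2)}\right\vert_{\mkA_{1,R}}$, which are (not necessarily pure) states on $\mkA_{1,L}$, $\mkA_{1,R}$. Then $\left.\lmk\psi\otimes\varphi_L^{(1,2)}\otimes\varphi_R^{(1,2)}\rmk\right\vert_{\mkB\otimes\mkA_1}=\psi\otimes\mu_L\otimes\mu_R$, so the hypothesis says that $\omega$ is a pure state on $\mkB\otimes\mkA_1$ quasi-equivalent to $\psi\otimes\lmk\mu_L\otimes\mu_R\rmk$.

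First I would apply Lemma~\ref{splitlem5} with $\mkB\otimes\mkA_1$ written as the tensor product of the two factors $\mkB$ and $\mkA_1=\mkA_{1,L}\otimes\mkA_{1,R}$, taking $\psi$ and $\mu_L\otimes\mu_R$ as the two reference states and the prescribed pure states $\psi$ on $\mkB$ and $\varphi_L^{(1)}\otimes\varphi_R^{(1)}$ on $\mkA_1$ (the latter is pure, being a tensor product of pure states of UHF algebras). Since the prescribed pure state on $\mkB$ coincides with the reference state $\psi$ there, the last assertion of Lemma~\ref{splitlem5} lets me keep the $\mkB$-factor fixed, i.e.\ take the automorphism of $\mkB$ to be $\id_{\mkB}$. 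This produces an automorphism $\gamma\in\Aut\lmk\mkA_1\rmk$ and a unitary $u_1\in\caU\lmk\mkB\otimes\mkA_1\rmk$ with $\omega=\lmk\psi\otimes\lmk\lmk\varphi_L^{(1)}\otimes\varphi_R^{(1)}\rmk\circ\gamma\rmk\rmk\circ\Ad u_1$.

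Next I would decompose $\gamma$ into a left and a right piece. The state $\lmk\varphi_L^{(1)}\otimes\varphi_R^{(1)}\rmk\circ\gamma$ is pure on $\mkA_1=\mkA_{1,L}\otimes\mkA_{1,R}$; moreover $\psi\otimes\lmk\lmk\varphi_L^{(1)}\otimes\varphi_R^{(1)}\rmk\circ\gamma\rmk=\omega\circ\Ad\lmk u_1^{-1}\rmk$ is quasi-equivalent to $\omega$, hence to $\psi\otimes\mu_L\otimes\mu_R$, and therefore --- since quasi-equivalence of two states on a $C^{*}$-tensor product passes to their restrictions to a tensor factor (a normal $*$-isomorphism intertwining the two GNS representations restricts to one between the von Neumann algebras generated by the images of the subalgebra) --- the state $\lmk\varphi_L^{(1)}\otimes\varphi_R^{(1)}\rmk\circ\gamma$ is quasi-equivalent to $\mu_L\otimes\mu_R$ on $\mkA_{1,L}\otimes\mkA_{1,R}$. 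Applying Lemma~\ref{splitlem5} a second time, now to $\mkA_1=\mkA_{1,L}\otimes\mkA_{1,R}$ with $\mu_L$, $\mu_R$ as reference states and $\varphi_L^{(1)}$, $\varphi_R^{(1)}$ as prescribed pure states, I obtain $\gamma_L^{(1)}\in\Aut\lmk\mkA_{1,L}\rmk$, $\gamma_R^{(1)}\in\Aut\lmk\mkA_{1,R}\rmk$ and a unitary $u_2\in\caU\lmk\mkA_1\rmk$ with $\lmk\varphi_L^{(1)}\otimes\varphi_R^{(1)}\rmk\circ\gamma=\lmk\lmk\varphi_L^{(1)}\circ\gamma_L^{(1)}\rmk\otimes\lmk\varphi_R^{(1)}\circ\gamma_R^{(1)}\rmk\rmk\circ\Ad u_2$.

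Substituting this into the formula for $\omega$ and using $\Ad\lmk\unit_{\mkB}\otimes u_2\rmk\circ\Ad u_1=\Ad\lmk\lmk\unit_{\mkB}\otimes u_2\rmk u_1\rmk$ yields (\ref{kurashiki}) with $u:=\lmk\unit_{\mkB}\otimes u_2\rmk u_1\in\caU\lmk\mkB\otimes\mkA_1\rmk$. The only point that is not pure bookkeeping is the descent of quasi-equivalence to the tensor factor $\mkA_1$, which is what legitimizes the second application of Lemma~\ref{splitlem5}; I also note that both applications use Lemma~\ref{splitlem5} in the generality where the reference states (here $\mu_L$, $\mu_R$) need not be pure.
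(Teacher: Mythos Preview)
Your proof is correct and follows essentially the same route as the paper's: two applications of Lemma~\ref{splitlem5}, the first using its last clause to keep $\psi$ fixed on the $\mkB$-factor, the second to split the resulting automorphism of $\mkA_1$ into left and right pieces. The paper states the descent of quasi-equivalence from $\mkB\otimes\mkA_1$ to $\mkA_1$ without further comment, whereas you spell out why it holds; otherwise the arguments are identical.
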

\begin{proof}
Because the pure state $\omega$ is quasi-equivalent to $\left.\lmk
\psi\otimes \varphi_{L}^{(1,2)}\otimes \varphi_{R}^{(1,2)}
\rmk\right\vert_{\mkB\otimes\mkA_{1}}
=\psi\otimes \left.\lmk
\varphi_{L}^{(1,2)}\otimes \varphi_{R}^{(1,2)}
\rmk\right\vert_{\mkA_{1}}$, applying Lemma \ref{splitlem5},
 for any pure states $\varphi_{L}^{(1)}$, $\varphi_{R}^{(1)}$
on $\mkA_{1,L}$, $\mkA_{1,R}$,
there exist an automorphism $S\in\Aut\mkA_1$ and a unitary $v\in\caU\lmk \mkB\otimes\mkA_1\rmk$
such that
\begin{align}\label{matsushima}
\omega=\lmk
\psi\otimes\lmk
 \lmk \varphi_{L}^{(1)}
\otimes 
\varphi_{R}^{(1)}\rmk\circ S\rmk
\rmk\circ \Ad v.
\end{align}
From (\ref{matsushima}) and $\omega\sim_{q.e.}\left.\lmk
\psi\otimes \varphi_{L}^{(1,2)}\otimes \varphi_{R}^{(1,2)}
\rmk\right\vert_{\mkB\otimes\mkA_{1}}$,
we get
$
\lmk
\psi\otimes\lmk
 \lmk \varphi_{L}^{(1)}
\otimes 
\varphi_{R}^{(1)}\rmk\circ S\rmk
\rmk\sim_{q.e.}
\left.\lmk
\psi\otimes \varphi_{L}^{(1,2)}\otimes \varphi_{R}^{(1,2)}
\rmk\right\vert_{\mkB\otimes\mkA_{1}}$,
which implies
\begin{align}\label{hiraizumi}
\lmk \varphi_{L}^{(1)}
\otimes 
\varphi_{R}^{(1)}\rmk\circ S\sim_{q.e.} \left.\lmk
\varphi_{L}^{(1,2)}\otimes \varphi_{R}^{(1,2)}
\rmk\right\vert_{\mkA_{1}}.
\end{align}
Applying Lemma \ref{splitlem5} to (\ref{hiraizumi}),
there are automorphisms $\gamma_{L}^{(1)}\in\Aut\lmk \mkA_{1,L}\rmk$,
$\gamma_{R}^{(1)}\in\Aut\lmk \mkA_{1,R}\rmk$,
and a unitary $w\in\caU\lmk \mkA_{1}\rmk$
such that
\begin{align}
\lmk \varphi_{L}^{(1)}
\otimes 
\varphi_{R}^{(1)}\rmk\circ S=\lmk\lmk \varphi_{L}^{(1)}\circ\gamma_{L}^{(1)}\rmk
\otimes \lmk
\varphi_{R}^{(1)}\circ\gamma_{R}^{(1)}\rmk
\rmk\circ\Ad w.
\end{align}
Substituting this to (\ref{matsushima}), we obtain
(\ref{kurashiki}).
\end{proof}
\begin{lem}\label{lem11s}
Let $\mkA_L$, $\mkA_R$, $\mkB_{LU}$, $\mkB_{LD}$, $\mkB_{RU}$, $\mkB_{RD}$,
$\mkC_U$, $\mkC_D$ be UHF-algebras, and set
\begin{align}
&
\mkB_U:=\mkB_{LU}\otimes\mkB_{RU},\quad
\mkB_D:=\mkB_{LD}\otimes\mkB_{RD},\quad
\mkB_L:=\mkB_{LD}\otimes\mkB_{LU},\quad
\mkB_R:=\mkB_{RD}\otimes\mkB_{RU},\
\notag\\
&\mkA:=\mkA_L\otimes\mkA_R,\quad\mkB:=\mkB_D\otimes\mkB_U=\mkB_L\otimes\mkB_R,\quad
\mkC:=\mkC_D\otimes\mkC_U,\quad
\mkD:=\mkA\otimes\mkB\otimes\mkC.
\end{align}
Let $\omega_X$ be a pure state on
each $X=\mkA_L, \mkA_R, \mkB_{LU}, \mkB_{LD}, \mkB_{RU},
\mkB_{RD}, \mkC_U, \mkC_D$, and
set
\begin{align}
&\omega_{\mkB\mkC}^U:=\omega_{\mkB_{LU}}\otimes\omega_{\mkB_{RU}}\otimes\omega_{\mkC_U},\quad
\text{on}\quad \mkB_U\otimes\mkC_U\notag\\
&\omega_{\mkB\mkC}^D:=\omega_{\mkB_{LD}}\otimes\omega_{\mkB_{RD}}\otimes\omega_{\mkC_D},\quad
\text{on}\quad \mkB_D\otimes\mkC_D,\notag\\
&\omega_{\mkA}:=\omega_{\mkA_L}\otimes\omega_{\mkA_R}
\quad
\text{on}\quad\mkA\notag\\
&\omega_{\mkA\mkB}^L
:=\omega_{\mkA_L}\otimes\omega_{\mkB_{LU}}\otimes\omega_{\mkB_{LD}}
\quad
\text{on}\quad 
\mkA_L\otimes\mkB_L
\notag\\
&\omega_{\mkA\mkB}^R
:=\omega_{\mkA_R}\otimes\omega_{\mkB_{RU}}\otimes\omega_{\mkB_{RD}}
\quad
\text{on}\quad 
\mkA_R\otimes\mkB_R
\notag\\
&\omega_0:=\bigotimes_{\substack{X=\mkA_L, \mkA_R, \mkB_{LU}, \mkB_{LD},\\ \mkB_{RU},
\mkB_{RD}, \mkC_U, \mkC_D}}\omega_X,\quad \text{on} \quad\mkD.
\end{align}
Let $\alpha,\hat\alpha$
be automorphisms on $\mkD$ which allow the following decompositions
\begin{align}
&\hat\alpha=\lmk\rho_{\mkB\mkC}^U\otimes\id_{\mkA}\otimes \rho_{\mkB\mkC}^D\rmk
\circ \lmk
\id_{\mkC_U}\otimes\hat\gamma_{\mkA\mkB}^L\otimes \hat\gamma_{\mkA\mkB}^R
\otimes \id_{\mkC_D}
\rmk\circ\inn\label{hatal}\\
&\alpha=\lmk\rho_{\mkB\mkC}^U\otimes\id_{\mkA}\otimes \id_{\mkB_D\otimes\mkC_D}\rmk
\circ \lmk
\id_{\mkC_U}\otimes\gamma_{\mkA\mkB}^L\otimes \gamma_{\mkA\mkB}^R
\otimes \id_{\mkC_D}
\rmk\circ\inn\label{ald}.
\end{align}
Here,  $\rho_{\mkB\mkC}^U$/ $\rho_{\mkB\mkC}^D$
are automorphisms on $\mkB_U\otimes\mkC_U$/  $\mkB_D\otimes\mkC_D$ respectively.
For each $\sigma=L,R$,
$\gamma_{\mkA\mkB}^{\sigma},\hat\gamma_{\mkA\mkB}^{\sigma}$
are automorphisms on $\mkA_{\sigma}\otimes\mkB_{{\sigma}D}\otimes\mkB_{{\sigma}U}$.
Suppose that $\omega_0\circ\hat\alpha=\omega_0$.
Then there are automorphisms $\eta_L,\eta_R$ on $\mkA_L\otimes\mkB_{LD}\otimes\mkB_{LU}$,
$\mkA_R\otimes\mkB_{RD}\otimes\mkB_{RU}$
such that
$\omega_0\circ \alpha$ is quasi-equivalent to 
$\omega_0\circ\lmk\id_{\mkC_U}\otimes\eta_L\otimes\eta_R\otimes\id_{\mkC_D}\rmk$.
\end{lem}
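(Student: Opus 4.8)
The plan is to reduce the assertion to a splitting property of the single state $\psi^U:=\omega_{\mkB\mkC}^U\circ\rho_{\mkB\mkC}^U$ on $\mkB_U\otimes\mkC_U$, and to extract that property from the hypothesis $\omega_0\circ\hat\alpha=\omega_0$.

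First I would unpack (\ref{hatal}) and (\ref{ald}) using the two product groupings $\omega_0=\omega_{\mkB\mkC}^U\otimes\omega_\mkA\otimes\omega_{\mkB\mkC}^D$ and $\omega_0=\omega_{\mkC_U}\otimes\omega_{\mkA\mkB}^L\otimes\omega_{\mkA\mkB}^R\otimes\omega_{\mkC_D}$, and absorbing the inner parts (which do not change the quasi-equivalence class). From (\ref{ald}) one gets
\[
\omega_0\circ\alpha\sim_{q.e.}\psi\circ\Gamma,\qquad \psi:=\psi^U\otimes\omega_\mkA\otimes\omega_{\mkB\mkC}^D,\quad \Gamma:=\id_{\mkC_U}\otimes\gamma_{\mkA\mkB}^L\otimes\gamma_{\mkA\mkB}^R\otimes\id_{\mkC_D},
\]
while (\ref{hatal}) together with $\omega_0\circ\hat\alpha=\omega_0$ rearranges, writing $\hat\psi:=\psi^U\otimes\omega_\mkA\otimes\hat\psi^D$ with $\hat\psi^D:=\omega_{\mkB\mkC}^D\circ\rho_{\mkB\mkC}^D$, to
\[
\hat\psi\sim_{q.e.}\omega_0\circ\lmk\id_{\mkC_U}\otimes(\hat\gamma_{\mkA\mkB}^L)^{-1}\otimes(\hat\gamma_{\mkA\mkB}^R)^{-1}\otimes\id_{\mkC_D}\rmk=\omega_{\mkC_U}\otimes\xi^L\otimes\xi^R\otimes\omega_{\mkC_D}=:\Xi,
\]
with $\xi^L:=\omega_{\mkA\mkB}^L\circ(\hat\gamma_{\mkA\mkB}^L)^{-1}$, $\xi^R:=\omega_{\mkA\mkB}^R\circ(\hat\gamma_{\mkA\mkB}^R)^{-1}$ pure on $\mkA_L\otimes\mkB_L$, $\mkA_R\otimes\mkB_R$. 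The point of this step is that $\psi^U$ appears as the \emph{same} $\mkB_U\otimes\mkC_U$-tensor factor in both $\psi$ and $\hat\psi$ (here sharing $\rho_{\mkB\mkC}^U$ between $\alpha$ and $\hat\alpha$ is used), and that the triviality of the $D$-part of $\alpha$ makes the $\mkB_D\otimes\mkC_D$-factor of $\psi$ the plain product state $\omega_{\mkB\mkC}^D$.

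The core is to deduce from $\hat\psi\sim_{q.e.}\Xi$ that $\psi^U\sim_{q.e.}\omega_{\mkC_U}\otimes\zeta_{LU}\otimes\zeta_{RU}$ for some pure states $\zeta_{LU}$, $\zeta_{RU}$ on $\mkB_{LU}$, $\mkB_{RU}$. Since $\hat\psi$ and $\Xi$ are pure, the quasi-equivalence is implemented by a unitary $\mathbf W\colon\caH_{\hat\psi}\to\caH_{\Xi}$ with $\mathbf W\pi_{\hat\psi}(\cdot)\mathbf W^{*}=\pi_{\Xi}(\cdot)$; restricting to $\mkB_U\otimes\mkC_U=\mkC_U\otimes\mkB_{LU}\otimes\mkB_{RU}\subset\mkD$ and using $\pi_{\hat\psi}=\pi_{\psi^U}\otimes\pi_{\omega_\mkA}\otimes\pi_{\hat\psi^D}$, $\pi_{\Xi}=\pi_{\omega_{\mkC_U}}\otimes\pi_{\xi^L}\otimes\pi_{\xi^R}\otimes\pi_{\omega_{\mkC_D}}$ yields a unitary equivalence of representations of $\mkB_U\otimes\mkC_U$,
\[
\pi_{\psi^U}\otimes 1\otimes 1\;\cong\;\pi_{\omega_{\mkC_U}}\otimes(\pi_{\xi^L}|_{\mkB_{LU}})\otimes(\pi_{\xi^R}|_{\mkB_{RU}})\otimes 1.
\]
As $\psi^U$ is pure, the left side generates a type I factor, hence so does the right; by Theorem 2.30 of Chapter V of \cite{takesaki} (exactly as in the proof of Lemma \ref{splitlem5}) $\pi_{\xi^L}(\mkB_{LU})''$ and $\pi_{\xi^R}(\mkB_{RU})''$ are type I factors, and by Theorem 1.31 of Chapter V of \cite{takesaki} the marginals $\xi^L|_{\mkB_{LU}}$, $\xi^R|_{\mkB_{RU}}$ are quasi-equivalent to pure states $\zeta_{LU}$, $\zeta_{RU}$; feeding this back gives $\pi_{\psi^U}\sim_{q.e.}\pi_{\omega_{\mkC_U}}\otimes\pi_{\zeta_{LU}}\otimes\pi_{\zeta_{RU}}$, i.e.\ $\psi^U\sim_{q.e.}\omega_{\mkC_U}\otimes\zeta_{LU}\otimes\zeta_{RU}$. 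I expect this passage to be the main obstacle: a naive restriction of a quasi-equivalence of \emph{states} to a subalgebra is not valid, and one must instead restrict the underlying unitary equivalence of the pure-state representations and then invoke the type I factor structure theory, which is precisely the mechanism inside Lemma \ref{splitlem5}.

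Finally I would reassemble. Tensoring $\psi^U\sim_{q.e.}\omega_{\mkC_U}\otimes\zeta_{LU}\otimes\zeta_{RU}$ with $\omega_\mkA\otimes\omega_{\mkB\mkC}^D$ and regrouping along $\mkD=\mkC_U\otimes(\mkA_L\otimes\mkB_L)\otimes(\mkA_R\otimes\mkB_R)\otimes\mkC_D$,
\[
\psi\sim_{q.e.}\omega_{\mkC_U}\otimes\lmk\zeta_{LU}\otimes\omega_{\mkA_L}\otimes\omega_{\mkB_{LD}}\rmk\otimes\lmk\zeta_{RU}\otimes\omega_{\mkA_R}\otimes\omega_{\mkB_{RD}}\rmk\otimes\omega_{\mkC_D},
\]
the two middle factors being pure on $\mkA_L\otimes\mkB_L$, $\mkA_R\otimes\mkB_R$. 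By Theorem~1.1 of \cite{kos} (Powers, \cite{powers}) there are $\eta_L^{0}\in\Aut\lmk\mkA_L\otimes\mkB_L\rmk$, $\eta_R^{0}\in\Aut\lmk\mkA_R\otimes\mkB_R\rmk$ with $\omega_{\mkA\mkB}^L\circ\eta_L^{0}=\zeta_{LU}\otimes\omega_{\mkA_L}\otimes\omega_{\mkB_{LD}}$ and $\omega_{\mkA\mkB}^R\circ\eta_R^{0}=\zeta_{RU}\otimes\omega_{\mkA_R}\otimes\omega_{\mkB_{RD}}$, so $\psi\sim_{q.e.}\omega_0\circ\lmk\id_{\mkC_U}\otimes\eta_L^{0}\otimes\eta_R^{0}\otimes\id_{\mkC_D}\rmk$. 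Combining with $\omega_0\circ\alpha\sim_{q.e.}\psi\circ\Gamma$ yields $\omega_0\circ\alpha\sim_{q.e.}\omega_0\circ\lmk\id_{\mkC_U}\otimes\eta_L\otimes\eta_R\otimes\id_{\mkC_D}\rmk$ with $\eta_L:=\eta_L^{0}\circ\gamma_{\mkA\mkB}^L$, $\eta_R:=\eta_R^{0}\circ\gamma_{\mkA\mkB}^R$, automorphisms of $\mkA_L\otimes\mkB_{LD}\otimes\mkB_{LU}$, $\mkA_R\otimes\mkB_{RD}\otimes\mkB_{RU}$, as required. The only facts used beyond these references are the standard stability of quasi-equivalence under composition with automorphisms, under tensoring with a fixed state, and under amplification.
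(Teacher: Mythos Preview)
Your proposal is correct and follows essentially the same route as the paper: both arguments derive, from $\omega_0\circ\hat\alpha=\omega_0$, the quasi-equivalence $\psi^U=\omega_{\mkB\mkC}^U\circ\rho_{\mkB\mkC}^U\sim_{q.e.}\omega_{\mkC_U}\otimes(\cdot)|_{\mkB_{LU}}\otimes(\cdot)|_{\mkB_{RU}}$ by restricting the unitary equivalence of the pure-state GNS representations to $\mkB_U\otimes\mkC_U$ and invoking the type~I factor structure (exactly the mechanism underlying Lemma~\ref{splitlem5}). The only difference is packaging: the paper feeds this restricted quasi-equivalence into Lemma~\ref{lem8s} to obtain $\psi^U=\omega_{\mkB\mkC}^U\circ(\theta_{\mkB}^{LU}\otimes\theta_{\mkB}^{RU}\otimes\id_{\mkC_U})\circ\Ad(u)$ with $\theta_{\mkB}^{\sigma U}\in\Aut(\mkB_{\sigma U})$, which yields the more structured $\eta_\sigma=(\theta_{\mkB}^{\sigma U}\otimes\id_{\mkA_\sigma}\otimes\id_{\mkB_{\sigma D}})\circ\gamma_{\mkA\mkB}^\sigma$, whereas you extract pure states $\zeta_{\sigma U}$ directly and then invoke Powers' homogeneity on the whole $\mkA_\sigma\otimes\mkB_\sigma$ to produce a general $\eta_\sigma^0$; this is logically equivalent for the lemma as stated, just slightly less explicit about where the nontrivial part of $\eta_\sigma$ lives.
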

\begin{proof}
First we claim that there are automorphisms $\theta_\mkB^{LU}\in\Aut{\mkB_{LU}}$, 
$\theta_\mkB^{RU}\in\Aut\mkB_{RU}$
and a unitary $u\in\caU\lmk \mkB^U\otimes\mkC^U\rmk$
such that
\begin{align}\label{lem10s}
\omega_{\mkB\mkC}^U\circ\rho_{\mkB\mkC}^U
=\omega_{\mkB\mkC}^U\circ
\lmk \theta_{\mkB}^{LU}\otimes \theta_{\mkB}^{RU}\otimes\id_{\mkC^U}\rmk
\circ\Ad\lmk u\rmk.
\end{align}
To prove this, we first note that from $\omega_0\circ\hat \alpha=\omega_0$
and the decomposition (\ref{hatal}), we have
\begin{align}
\omega_{\mkB\mkC}^U\circ\rho_{\mkB\mkC}^U\otimes \omega_{\mkA}
\otimes \omega_{\mkB\mkC}^D\circ\rho_{\mkB\mkC}^D
\sim_{q.e.} \omega_{\mkC_U}
\otimes\omega_{\mkA\mkB}^L\circ\lmk\widehat{ \gamma_{\mkA\mkB}^L}\rmk^{-1}
\otimes \omega_{\mkA\mkB}^R\circ\lmk \widehat{\gamma_{\mkA\mkB}^R}\rmk^{-1}
\otimes\omega_{\mkC_D}.
\end{align}
From this, because both of the states above are pure, (hence the restrictions of
their GNS representations onto $\mkC_U\otimes\mkB_U$ are factors)
we have 
\begin{align}\label{daisetsu}
\omega_{\mkB\mkC}^U\circ\rho_{\mkB\mkC}^U=
\left.\lmk 
\omega_{\mkB\mkC}^U\circ\rho_{\mkB\mkC}^U\otimes \omega_{\mkA}
\otimes \omega_{\mkB\mkC}^D\circ\rho_{\mkB\mkC}^D
\rmk
\right\vert_{\mkC_U\otimes
\mkB_U}
\sim_{q.e.}
\omega_{\mkC_U}
\otimes\left.\lmk \omega_{\mkA\mkB}^L\circ\lmk \widehat{\gamma_{\mkA\mkB}^L}\rmk^{-1}
\otimes \omega_{\mkA\mkB}^R\circ\lmk \widehat{\gamma_{\mkA\mkB}^R}\rmk^{-1}\rmk\right\vert_{\mkB_U}.
\end{align}
We apply Lemma \ref{lem8s} for
$\mkB$, $\mkA_{1L}$, $\mkA_{1R}$, $\mkA_{2L}$, $\mkA_{2R}$,
$\omega$, $\varphi_L^{(1,2)}$, $\varphi_R^{(1,2)}$, $\psi$
replaced by $\mkC_U$, $\mkB_{LU}$, $\mkB_{RU}$, $\mkA_{L}\otimes \mkB_{LD}$,
$\mkA_R\otimes \mkB_{RD}$, 
$\omega_{\mkB\mkC}^U\circ\rho_{\mkB\mkC}^U$, $\omega_{\mkA\mkB}^L\circ\lmk\widehat{\gamma_{\mkA\mkB}^L}\rmk^{-1}$, $\omega_{\mkA\mkB}^R\circ\lmk \widehat{\gamma_{\mkA\mkB}^R}\rmk^{-1}$,
$\omega_{\mkC_U}$
 respectively.
From (\ref{daisetsu}), they satisfy the conditions in Lemma \ref{lem8s}.
Applying Lemma \ref{lem8s}, 
(for pure states $\varphi_L^{(1)}=\omega_{\mkB_{LU}}$ and
$\varphi_R^{(1)}=\omega_{\mkB_{RU}}$)
we obtain automorphisms $\theta_{\mkB}^{LU}\in\Aut\lmk\mkB_{LU}\rmk$,
 $\theta_{\mkB}^{RU}\in\Aut\lmk\mkB_{RU}\rmk$,
and a unitary $u\in \caU\lmk\mkB_U\otimes\mkC_U\rmk$
satisfying (\ref{lem10s}).

We set
\begin{align}
\begin{split}
&\eta_L:=\lmk \theta^{LU}_{\mkB}\otimes \id_{\mkA_L}\otimes \id_{\mkB_{LD}}\rmk
\circ\gamma_{\mkA\mkB}^L
\in \Aut\lmk\mkB_{LU}\otimes  \mkA_L\otimes\mkB_{LD}\rmk\\
&\eta_R:=\lmk \theta^{RU}_{\mkB}\otimes \id_{\mkA_R}\otimes \id_{\mkB_{RD}}\rmk
\circ\gamma_{\mkA\mkB}^R
\in \Aut\lmk \mkB_{RU}\otimes \mkA_R\otimes\mkB_{RD}\rmk.
\end{split}
\end{align}
Then we have
\begin{align}
\begin{split}
&\omega_0\circ\alpha=
\lmk
\omega_{\mkA_L}\otimes\omega_{\mkA_R}
\otimes \omega_{\mkB\mkC}^U\otimes \omega_{\mkB\mkC}^D
\rmk\circ\alpha
\sim_{q.e.}
\lmk
\omega_{\mkA_L}\otimes\omega_{\mkA_R}
\otimes
 \omega_{\mkB\mkC}^U\circ\rho_{\mkB\mkC}^U
\otimes
\omega_{\mkB\mkC}^D
\rmk
\circ
\lmk
\id_{\mkC_U}\otimes \gamma_{\mkA\mkB}^L\otimes \gamma_{\mkA\mkB}^R
\otimes \id_{\mkC_D}
\rmk\\
&\sim_{q.e.}
\lmk
\omega_{\mkA_L}\otimes\omega_{\mkA_R}
\otimes
 \omega_{\mkB\mkC}^U
\otimes
\omega_{\mkB\mkC}^D
\rmk
\circ
\lmk
\id_{\mkC_U}\otimes
\lmk
\lmk
\theta_{\mkB}^{LU}\otimes \id_{\mkA_L}\otimes\id_{\mkB_{LD}}
\rmk\circ\gamma_{\mkA\mkB}^L\rmk
\otimes 
\lmk
\lmk
\theta_{\mkB}^{RU}\otimes \id_{\mkA_R}\otimes\id_{\mkB_{RD}}
\rmk\circ\gamma_{\mkA\mkB}^R\rmk
\otimes\id_{\mkC_D}
\rmk\\
&=\omega_0\circ\lmk\id_{\mkC_U}\otimes\eta_L\otimes\eta_R\otimes\id_{\mkC_D}\rmk.
\end{split}
\end{align}
This completes the proof.
\end{proof}
Now we are ready to prove Theorem \ref{defindexspt}.

\begin{proofof}[Theorem \ref{defindexspt}]
Let $0<\theta<\frac\pi 2$ and $\alpha\in \sqaut(\caA)$ satisfying
$
\omega_0\circ\alpha\circ\beta_g=\omega_0\circ\alpha$
for all $g\in G$.
We would like to show that
$\IG(\omega_0\circ\alpha,\theta)$ is not empty.

Let us set $\theta_{2.2}:=\theta$ and 
consider 
 $\theta_{0.8}$, $\theta_1$, $\theta_{1.2}$, $\theta_{1.8}$, $\theta_2$,
$\theta_{2.8}$, $\theta_3$, $\theta_{3.2}$
 satisfying (\ref{thetas1}) for this $\theta_{2.2}$.
 Because $\alpha\in \sqaut(\caA)$, there is a decompotision
 given by (\ref{sqaut}), (\ref{sqaut2}), (\ref{sqaut3}).
 Using this decomposition, set 
 \begin{align}
 \begin{split}
 &\alpha_1:=\alpha_{1D}\otimes\alpha_{1U}\\
  &\alpha_{1\zeta}:=
 \lmk
\alpha_{(\theta_1,\theta_2],\zeta}
\otimes \alpha_{(\theta_2,\theta_3],\zeta}\otimes
\alpha_{(\theta_3,\frac\pi 2],\zeta}
\rmk
\circ
\lmk
\alpha_{(\theta_{0.8}, \theta_{1.2}],\zeta}\otimes
\alpha_{(\theta_{1.8},\theta_{2.2}],\zeta}
\otimes \alpha_{(\theta_{2.8},\theta_{3.2}],\zeta}
\rmk\in\Aut\lmk
\caA_{\lmk \lmk C_{\theta_{0.8}}\rmk^c\rmk_\zeta}
\rmk
,\quad\zeta=U,D,\\
&\alpha_2:=\alpha_{[0,\theta_1]}
\in\Aut\lmk\caA_{C_{\theta_1}}\rmk.
 \end{split}
 \end{align}
%
%\begin{align}\label{sqaut}
%&\alpha=\inn\circ\lmk
%\alpha_{[0,\theta_1]}\otimes\alpha_{(\theta_1,\theta_2]}
%\otimes \alpha_{(\theta_2,\theta_3]}\otimes
%\alpha_{(\theta_3,\frac\pi 2]}
%\rmk
%\circ
%\lmk
%\alpha_{(\theta_{0.8}, \theta_{1.2}]}\otimes
%\alpha_{(\theta_{1.8},\theta_{2.2}]}
%\otimes \alpha_{(\theta_{2.8},\theta_{3.2}]}
%\rmk\notag\\
%&\alpha_{Y}=\alpha_{Y,D}\otimes\alpha_{Y,U},
%\quad\alpha_{X}=\alpha_{X,L}\otimes\alpha_{X,R},
%\end{align}
We have
$\alpha=\inn\circ\alpha_2\circ\alpha_1$.

We would like to show that 
$\lmk\alpha\circ
\beta_g^U\circ \alpha^{-1}, \alpha\circ
\beta_g\circ \alpha^{-1}\rmk$
satisfy the conditions of $(\alpha,\hat\alpha)$ in Lemma \ref{lem11s}.
We first show that they satisfy a decomposition corresponding to
 (\ref{hatal}) and (\ref{ald}).
 For $\Gamma=\bbZ^2, H_U$, we have
\begin{align}\label{saiku}
\lmk \beta_g^\Gamma\rmk^{-1}\alpha\circ\beta_g^\Gamma\circ\alpha^{-1}
=\inn\circ\lmk \beta_g^\Gamma\rmk^{-1}\circ\lmk \alpha_1\beta_g^\Gamma\alpha_1^{-1}\rmk
\lmk \alpha_1\beta_g^\Gamma\alpha_1^{-1}\rmk^{-1}
\alpha_2\alpha_1\beta_g^{\Gamma}\alpha_1^{-1}\alpha_2^{-1}.
\end{align}
The latter part $\lmk \alpha_1\beta_g^\Gamma\alpha_1^{-1}\rmk^{-1}
\alpha_2\alpha_1\beta_g^{\Gamma}\alpha_1^{-1}\alpha_2^{-1}$ decomposes to left and right.
To see this, first note that 
\begin{align}
\alpha_1^{-1}\alpha_2\alpha_1
=\alpha_{(\theta_{0.8}, \theta_{1.2}]}^{-1}\alpha_{[0,\theta_1]}\alpha_{(\theta_{0.8}, \theta_{1.2}]}
\in \Aut\lmk
\caA_{C_{\theta_{1.2}}}
\rmk.
\end{align}
Because the conjugation $\lmk \beta_g^\Gamma\rmk^{-1}\cdot  \beta_g^{\Gamma}$ does not change the
support of an automorphism, 
$\lmk \beta_g^\Gamma\rmk^{-1}\lmk
 \alpha_1^{-1}\alpha_2\alpha_1\rmk  \beta_g^{\Gamma}$
 is also supported on ${C_{\theta_{1.2}}}$.
 Therefore, we have
 \begin{align}
 \alpha_1\lmk \lmk \beta_g^\Gamma\rmk^{-1}\lmk
 \alpha_1^{-1}\alpha_2\alpha_1\rmk  \beta_g^{\Gamma}\rmk
\alpha_1^{-1}
=\alpha_{(\theta_1,\theta_2]}\alpha_{(\theta_{0.8}, \theta_{1.2}]}
\lmk \beta_g^\Gamma\rmk^{-1}
\alpha_{(\theta_{0.8}, \theta_{1.2}]}^{-1}\alpha_{[0,\theta_1]}\alpha_{(\theta_{0.8}, \theta_{1.2}]}
 \beta_g^\Gamma\alpha_{(\theta_{0.8}, \theta_{1.2}]}^{-1}\alpha_{(\theta_1,\theta_2]}^{-1}
 \end{align}
 Hence we get the left-right decomposition:
\begin{align}
\begin{split}
&\lmk \alpha_1\beta_g^\Gamma\alpha_1^{-1}\rmk^{-1}
\alpha_2\alpha_1\beta_g^{\Gamma}\alpha_1^{-1}\alpha_2^{-1}
=\alpha_1\lmk \lmk \beta_g^\Gamma\rmk^{-1}\lmk
 \alpha_1^{-1}\alpha_2\alpha_1\rmk  \beta_g^{\Gamma}\rmk
\alpha_1^{-1}\alpha_2^{-1}\\
&=\alpha_{(\theta_1,\theta_2]}\alpha_{(\theta_{0.8}, \theta_{1.2}]}
\lmk \beta_g^\Gamma\rmk^{-1}
\alpha_{(\theta_{0.8}, \theta_{1.2}]}^{-1}\alpha_{[0,\theta_1]}\alpha_{(\theta_{0.8}, \theta_{1.2}]}
 \beta_g^\Gamma\alpha_{(\theta_{0.8}, \theta_{1.2}]}^{-1}\alpha_{(\theta_1,\theta_2]}^{-1}\circ\alpha_{[0,\theta_1]}^{-1}\\
 &=\bigotimes_{\sigma=L,R}
\lmk
\alpha_{(\theta_1,\theta_2],\sigma}\alpha_{(\theta_{0.8}, \theta_{1.2}],\sigma}
\lmk \beta_g^{\Gamma_\sigma}\rmk^{-1}
\alpha_{(\theta_{0.8}, \theta_{1.2}],\sigma}^{-1}\alpha_{[0,\theta_1],\sigma
}\alpha_{(\theta_{0.8}, \theta_{1.2}],\sigma}
 \beta_g^{\Gamma_\sigma}\alpha_{(\theta_{0.8}, \theta_{1.2}],\sigma}^{-1}\alpha_{(\theta_1,\theta_2],\sigma}^{-1}\circ\alpha_{[0,\theta_1],\sigma}^{-1}
\rmk\\
&=:\bigotimes_{\sigma=L,R}\Xi_{\Gamma,g,\sigma}.
\end{split}
\end{align}
Here we set
\begin{align}
\Xi_{\Gamma,g,\sigma}
=\lmk
\alpha_{(\theta_1,\theta_2],\sigma}\alpha_{(\theta_{0.8}, \theta_{1.2}],\sigma}
\lmk \beta_g^{\Gamma_\sigma}\rmk^{-1}
\alpha_{(\theta_{0.8}, \theta_{1.2}],\sigma}^{-1}\alpha_{[0,\theta_1],\sigma
}\alpha_{(\theta_{0.8}, \theta_{1.2}],\sigma}
 \beta_g^{\Gamma_\sigma}\alpha_{(\theta_{0.8}, \theta_{1.2}],\sigma}^{-1}\alpha_{(\theta_1,\theta_2],\sigma}^{-1}\circ\alpha_{[0,\theta_1],\sigma}^{-1}
\rmk\in\Aut\lmk\caA_{\lmk C_{\theta_2}\rmk_\sigma}\rmk.
\end{align}
On the other hand, the first part of 
(\ref{saiku}) with $\Gamma=\bbZ^2,H_U$ satisfies 
\begin{align}
\begin{split}
&\beta_g^{-1}\alpha_1\beta_g\alpha_1^{-1}
% =\lmk \beta_g^{D}\rmk^{-1}\alpha_1\beta_g^{D}\alpha_1^{-1}
% \otimes
% \lmk \beta_g^{U}\rmk^{-1}\alpha_1\beta_g^{U}\alpha_1^{-1}
 =
 \xi_D\otimes\xi_U,
\quad
\lmk \beta_g^{U}\rmk^{-1}\alpha_1\beta_g^U\alpha_1^{-1}
% =\id_{\caA_{H_D}}
% \otimes
% \lmk \beta_g^{U}\rmk^{-1}\alpha_1\beta_g^{U}\alpha_1^{-1}
 =
 \id_{\caA_{H_D}}\otimes \xi_U
\end{split}
\end{align}
where 
\begin{align}
\xi_\zeta:=\lmk \beta_g^{\zeta}\rmk^{-1}\alpha_{1,\zeta}\beta_g^{\zeta}\alpha_{1,\zeta}^{-1}\in
\Aut\lmk
\caA_{\lmk \lmk C_{\theta_{0.8}}\rmk^c\rmk_\zeta}
\rmk,\quad
\zeta=U,D
\end{align}
Hence we obtain decompositions
\begin{align}\label{aahok1}
\begin{split}
&\lmk \beta_g^U\rmk^{-1}\circ\alpha\circ
\beta_g^U\circ \alpha^{-1}
=\inn\circ \lmk \id_{\caA_{H_D}}\otimes \xi_U
\rmk\circ \lmk \Xi_{H_U,g,L}\otimes  \Xi_{H_U,g,R}\rmk,\\
&
\lmk \beta_g\rmk^{-1}\circ\alpha\circ
\beta_g\circ \alpha^{-1}
=\inn\circ \lmk  \xi_D\otimes\xi_U\rmk
\circ  \lmk \Xi_{\bbZ^2,g,L}\otimes  \Xi_{\bbZ^2,g,R}\rmk.
\end{split}
\end{align}
Because  $\xi_{\zeta}\in \Aut\lmk
\caA_{\lmk \lmk C_{\theta_{0.8}}\rmk^c\rmk_\zeta}
\rmk$ commutes with $\beta_{g}^{C_{[0,\theta_{0.8}]}}$
and $\beta_{g}^{C_{[0,\theta_{0.8}],U}}$, we get
\begin{align}\label{aahok}
\begin{split}
&\alpha\circ
\beta_g^U\circ \alpha^{-1}
=\inn\circ \lmk \id_{\caA_{H_D}}\otimes 
\beta_{g}^{C_{(\theta_{0.8},\frac\pi 2],U}}\xi_U
\rmk\circ \lmk \beta_{g}^{C_{[0,\theta_{0.8}],L,U}}\Xi_{H_U,g,L}\otimes \beta_{g}^{C_{[0,\theta_{0.8}],R,U}} \Xi_{H_U,g,R}\rmk,\\
&
\alpha\circ
\beta_g\circ \alpha^{-1}
=\inn\circ \lmk \beta_{g}^{C_{(\theta_{0.8},\frac\pi 2],D}} \xi_D\otimes\beta_{g}^{C_{(\theta_{0.8},\frac\pi 2],U}}\xi_U\rmk
\circ  \lmk  \beta_{g}^{C_{[0,\theta_{0.8}],L}}\Xi_{\bbZ^2,g,L}\otimes  
\beta_{g}^{C_{[0,\theta_{0.8}],R}}\Xi_{\bbZ^2,g,R}\rmk.
\end{split}
\end{align}

Furthermore, from $\beta_g$-invariance of 
$\omega_0\circ\alpha$, we have
\begin{align}\label{invt}
\omega_0\circ \alpha\circ
\beta_g\circ \alpha^{-1}
=\omega_0.
\end{align}

Now we apply Lemma \ref{lem11s} 
for $\mkA_\sigma$, $\mkB_{\sigma\zeta}$,
$\mkC_\zeta$ replaced by $\caA_{\lmk C_{[0,\theta_{0.8}]}\rmk_\sigma}$
 $\caA_{\lmk C_{(\theta_{0.8},\theta_{2}]}\rmk_{\sigma,\zeta}}$
$\caA_{\lmk C_{(\theta_2,\frac\pi 2]}\rmk_\zeta}$,
for $\sigma=L,R$, $\zeta=D,U$.
By (\ref{invt}) and (\ref{aahok}),
$\lmk\alpha\circ
\beta_g^U\circ \alpha^{-1}, \alpha\circ
\beta_g\circ \alpha^{-1}\rmk$
satisfy the conditions of $(\alpha,\hat\alpha)$ in Lemma \ref{lem11s},
for $\omega_0$ and its restrictions.
Applying Lemma \ref{lem11s}, there are $\tilde \eta_{\sigma,g}\in\Aut\lmk
\caA_{\lmk C_{\theta_2}\rmk_\sigma}\rmk$, $g\in G$, $\sigma=L,R$
such that
\begin{align}
\omega_0\circ\alpha\circ
\beta_g^U\circ \alpha^{-1}
\sim_{q.e.}
\omega_0\circ
\lmk
\tilde\eta_{Lg}\otimes \tilde\eta_{Rg}
\rmk,\quad g\in G.
\end{align}
Because both of $\omega_0\circ\alpha\circ
\beta_g^U\circ \alpha^{-1}$ and
$\omega_0\circ
\lmk
\tilde\eta_{Lg}\otimes \tilde\eta_{Rg}
\rmk$
are pure,
by Kadison's transitibity theorem, there exists
a unitary $\tilde v_g\in\caU(\caA)$
such
that
\begin{align}\label{isu}
\omega_0\circ\alpha\circ
\beta_g^U\circ \alpha^{-1}
=
\omega_0\circ\Ad_{\tilde v_g}\circ
\lmk
\tilde\eta_{Lg}\otimes \tilde\eta_{Rg}
\rmk,\quad g\in G.
\end{align}
We define
\begin{align}\label{bgan}
\tilde \beta_g:=
\Ad\lmk \alpha^{-1}\lmk\tilde v_{g^{-1}}\rmk\rmk
\circ\alpha^{-1}
\circ\lmk
\tilde\eta_{Lg^{-1}}\otimes \tilde\eta_{Rg^{-1}}
\rmk\circ\alpha\circ\beta_g^U,\quad g\in G.
\end{align}
It suffices to show that $(\tilde\beta_g)\in \IG(\omega_0\circ\alpha,\theta)=\IG(\omega_0\circ\alpha,\theta_{2.2})$.
By (\ref{isu}), we have $\omega_0\circ\alpha\circ\tilde\beta_g=\omega_0\circ\alpha$.
Therefore, what is left to be proven is that
there are
$\eta_{g}^\sigma\in\Aut\lmk \lmk C_\theta\rmk_\sigma\rmk$, $g\in G$, $\sigma=L,R$
such that
\begin{align}\label{decomgan}
\tilde\beta_g=\inn\circ\lmk \eta_g^L\otimes\eta_g^R\rmk\circ\beta_g^U,\; \text{for all}\;g\in G
\end{align}
By the decomposition (\ref{sqaut}) and the fact that
$\tilde\eta_{Lg^{-1}}\otimes \tilde\eta_{Rg^{-1}}$
has support in $C_{\theta_2}$,
we have
\begin{align}
\begin{split}
&\alpha^{-1}
\circ\lmk
\tilde\eta_{Lg^{-1}}\otimes \tilde\eta_{Rg^{-1}}
\rmk\circ\alpha\\
&=\inn\circ
\lmk
\alpha_{(\theta_{0.8}, \theta_{1.2}]}\otimes
\alpha_{(\theta_{1.8},\theta_{2.2}]}
\rmk^{-1}
\lmk
\alpha_{[0,\theta_1]}\otimes\alpha_{(\theta_1,\theta_2]}
\rmk^{-1}
\lmk
\tilde\eta_{Lg^{-1}}\otimes \tilde\eta_{Rg^{-1}}
\rmk
\lmk
\alpha_{[0,\theta_1]}\otimes\alpha_{(\theta_1,\theta_2]}
\rmk
\circ
\lmk
\alpha_{(\theta_{0.8}, \theta_{1.2}]}\otimes
\alpha_{(\theta_{1.8},\theta_{2.2}]}
\rmk\\
&=\inn\circ\lmk \eta_g^L\otimes\eta_g^R\rmk,
\end{split}
\end{align}
where 
\begin{align}
\begin{split}
&\eta_{g}^\sigma=\lmk
\alpha_{(\theta_{0.8}, \theta_{1.2}],\sigma}\otimes
\alpha_{(\theta_{1.8},\theta_{2.2}],\sigma}
\rmk^{-1}
\lmk
\alpha_{[0,\theta_1],\sigma}\otimes\alpha_{(\theta_1,\theta_2],\sigma}
\rmk^{-1}
\lmk
\tilde\eta_{\sigma g^{-1}}
\rmk
\lmk
\alpha_{[0,\theta_1],\sigma}\otimes\alpha_{(\theta_1,\theta_2],\sigma}
\rmk
\circ
\lmk
\alpha_{(\theta_{0.8}, \theta_{1.2}],\sigma}\otimes
\alpha_{(\theta_{1.8},\theta_{2.2}], \sigma}
\rmk\\
&\quad\quad\quad\in \Aut\lmk \lmk C_{\theta_{2.2}}\rmk_\sigma\rmk,\quad
\sigma=L,R
\end{split}
\end{align}
Substituting this to (\ref{bgan}), we obtain (\ref{decomgan}).
This completes the proof.
\end{proofof}

\section{The stability of the index $h(\omega)$ }\label{stabilitysec}
In this section we prove the stability of the index $h(\omega)$
with respect to $\gamma\in \guaut(\caA)$.
\begin{thm}\label{stabilitythm}
Let $\omega\in\QLS$ with $\IG(\omega)\neq\emptyset$.
Let $\gamma\in \guaut\lmk\caA\rmk$.
Then we have $\omega\circ\gamma\in\QLS$ with $\IG(\omega\circ\gamma)\neq\emptyset$
and 
\begin{align}
h(\omega\circ\gamma)=h(\omega).
\end{align}
\end{thm}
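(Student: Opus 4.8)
The plan is to reduce the stability statement to two separate facts: that $\QLS$ and the non-emptiness of $\IG$ are preserved under $\guaut$, and that the cocycle constructed in Section \ref{h3index} is unchanged when $\omega$ is replaced by $\omega\circ\gamma$. For the first part, write $\gamma=\gamma_C\circ\gamma_H$ with $\gamma_H\in\haut(\caA)$ and $\gamma_C\in\gsqaut(\caA)$. Since $\omega=\omega_0\circ\alpha$ for some $\alpha\in\qaut(\caA)$, we get $\omega\circ\gamma=\omega_0\circ(\alpha\circ\gamma)$, and because $\alpha\circ\gamma\in\qaut(\caA)$ (composition of quasi-local automorphisms admits left-right decompositions for all directions: combine the decompositions of $\alpha$, $\gamma_C$, $\gamma_H$ in a slightly fattened cone, absorbing the cross terms into the inner part), we conclude $\omega\circ\gamma\in\QLS$. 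This is the easy bookkeeping part.

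The core of the argument is the invariance of the index. First I would handle $\gamma_H\in\haut(\caA)$: by definition $\gamma_H=\inn\circ(\gamma_{H,L}\otimes\gamma_{H,R})$ relative to any cone $C_\theta$, so $\gamma_H$ is essentially a left-right product of automorphisms with no genuine ``cone'' content. Given $(\tilde\beta_g)\in\IG(\omega,\theta)$ with $(\eta_g^\sigma)\in\caT(\theta,(\tilde\beta_g))$ and a decomposition $\alpha=\inn\circ(\alpha_L\otimes\alpha_R)\circ\Theta$, the natural candidate for the new data is $\tilde\beta_g':=\gamma_H^{-1}\circ\tilde\beta_g\circ\gamma_H$, which still fixes $\omega\circ\gamma_H$ and still decomposes as $\inn\circ(\eta_g^{\prime L}\otimes\eta_g^{\prime R})\circ\beta_g^U$ because conjugating the cone-supported part $\eta_g^L\otimes\eta_g^R$ by $\gamma_{H,L}\otimes\gamma_{H,R}$ keeps the left-right product structure and conjugating $\beta_g^U$ by $\gamma_H$ only changes it by an inner automorphism and a left-right product (here one uses that $\gamma_H$ decomposes with respect to $H_U$-adapted cones, or absorbs the discrepancy). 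Tracking the unitaries $W_g$, $u_\sigma(g,h)$ through this conjugation — multiplying by the GNS unitaries implementing $\gamma_{H,\sigma}$ — shows the resulting $3$-cocycle $c_R$ is literally unchanged, since all the extra factors are $\bbC\unit_{\caH_L}\otimes(\cdot)$ or $(\cdot)\otimes\bbC\unit_{\caH_R}$ and cancel in formula (\ref{uwc}). This is structurally the same calculation as in the lemmas of Section \ref{h3index} establishing independence from $(\eta_g^\sigma)$.

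Next I would handle $\gamma_C\in\gsqaut(\caA)$, and \emph{this is the step I expect to be the main obstacle}. The point of requiring $\gamma_C$ to lie in $\gsqaut(\caA)$ — i.e.\ to commute with $\beta_g^U$ on each cone-piece — is precisely that $\gamma_C$ then transports the data defining $\IG(\omega)$ to data defining $\IG(\omega\circ\gamma_C)$ without disturbing the $\beta^U$-structure. Concretely, with $\tilde\beta_g'':=\gamma_C^{-1}\circ\tilde\beta_g\circ\gamma_C$, one checks $\omega\circ\gamma_C\circ\tilde\beta_g''=\omega\circ\gamma_C$, and that $\tilde\beta_g''\circ(\beta_g^U)^{-1}$ still factors through cones modulo inner: $\gamma_C^{-1}\circ(\eta_g^L\otimes\eta_g^R)\circ\beta_g^U\circ\gamma_C\circ(\beta_g^U)^{-1}$, where the last three factors combine — using $\gamma_C\circ\beta_g^U=\beta_g^U\circ\gamma_C$ on each cone-piece plus the inner discrepancy from the finitely many overlapping collar cones — into something supported in a slightly larger cone $C_{\theta'}$ and left-right factorizable. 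Then, as before, conjugate the whole apparatus $\big((W_g),(u_\sigma(g,h))\big)$ by the GNS unitary implementing $\gamma_C$ relative to $\omega$; because $\gamma_C$ itself admits the left-right-plus-collar decomposition of $\sqaut$, the resulting unitary splits as $\pi_L(\cdot)\otimes\pi_R(\cdot)$ on a subalgebra large enough that all modifications to $u_R(g,h)$ are by scalars or by elements commuting through (\ref{uwc}), so $[c_R]_{H^3(G,\bbT)}$ is unchanged. Finally, combining the two steps, $h(\omega\circ\gamma)=h(\omega\circ\gamma_C\circ\gamma_H)=h(\omega\circ\gamma_C)=h(\omega)$, and the well-definedness Theorem \ref{welldefined} guarantees all choices made along the way are irrelevant. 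The delicate point throughout is keeping careful track of the supports of the various collar automorphisms $\alpha_{(\theta_{i.8},\theta_{(i+1).2}]}$ in $\gamma_C$ and verifying that after conjugation everything still lands in $C_{\theta'}$ for a suitable $\theta'<\pi/2$ — this is where the specific nested structure (\ref{thetas1}) of the angles in the definition of $\sqaut$ is used.
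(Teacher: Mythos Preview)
Your overall strategy---set $(\gamma^{-1}\tilde\beta_g\gamma)\in\IG(\omega\circ\gamma,\theta')$ for a suitably enlarged $\theta'$ and then compare the resulting cocycles---is correct and is what the paper does. The paper, however, does \emph{not} separate the argument into a $\gamma_H$-stage and a $\gamma_C$-stage; it treats $\gamma=\gamma_C\circ\gamma_H$ in one pass. Concretely it absorbs the inner pieces $\gamma_{[0,\theta_1],\sigma}$, $\gamma_{(\theta_1,\theta_2],\sigma}$, $\gamma_{(\theta_{0.8},\theta_{1.2}],\sigma}$, $\gamma_{H,\sigma}$ into a new $\hat\alpha_\sigma$, and the outer collar pieces together with $\Theta$ into a new $\hat\Theta\in\Aut(\caA_{C_{\theta_{1.8}}^c})$, obtaining $(\hat\alpha_L,\hat\alpha_R,\hat\Theta)\in\caD^{\theta_{1.2}}_{\alpha\gamma}$; and it writes down an explicit $\hat\eta_g^\sigma:=(\gamma_{[0,\theta_1],\sigma}\gamma_{(\theta_{0.8},\theta_{1.2}],\sigma}\gamma_{H,\sigma})^{-1}\eta_g^\sigma\,\beta_g^{\sigma U}(\gamma_{[0,\theta_1],\sigma}\gamma_{(\theta_{0.8},\theta_{1.2}],\sigma}\gamma_{H,\sigma})(\beta_g^{\sigma U})^{-1}\in\Aut(\caA_{(C_{\theta_{1.2}})_\sigma})$. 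This is straightforward support bookkeeping using the nested angles (\ref{thetas1}).

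Where your plan has a genuine gap is in the mechanism you invoke for comparing the cocycles. You propose to ``conjugate the whole apparatus by the GNS unitary implementing $\gamma_C$ relative to $\omega$'' and assert this unitary ``splits as $\pi_L(\cdot)\otimes\pi_R(\cdot)$ on a subalgebra large enough''. But $\omega$ and $\omega\circ\gamma$ are different states, so there is no intertwining unitary between their GNS representations to conjugate by (this is unlike the independence-from-$\alpha$ lemma, where the two $\alpha$'s give the \emph{same} state); and nothing forces any such unitary to tensor-factor. The paper's key insight avoids all of this: the \emph{very same} $(W_g)$ and $(u_\sigma(g,h))$ that lie in $\IP\big(\omega,\alpha,\theta_2,(\tilde\beta_g),(\eta_g^\sigma),(\alpha_L,\alpha_R,\Theta)\big)$ also lie in $\IP\big(\omega\circ\gamma,\alpha\circ\gamma,\theta_{1.2},(\gamma^{-1}\tilde\beta_g\gamma),(\hat\eta_g^\sigma),(\hat\alpha_L,\hat\alpha_R,\hat\Theta)\big)$. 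One just checks the defining equations (\ref{ipdef}), (\ref{ipdef2}) directly: after substituting the explicit $\hat\alpha_\sigma,\hat\Theta,\hat\eta_g^\sigma$, the $\beta_g^U$-commutation of each $\gamma_I$ (this is precisely what $\gamma_C\in\gsqaut(\caA)$ supplies) together with disjoint-support commutations collapses the new right-hand sides to the old ones. Since identical $(W_g,u_\sigma(g,h))$ satisfy (\ref{uwc}) in both cases, the cocycle $c_R$ is literally the same, giving $h(\omega\circ\gamma)=h(\omega)$ with no auxiliary unitaries introduced at all.
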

\begin{proof}
{\it Step 1.}
From $\omega\in \QLS$, there is an 
$ \alpha\in \eaut(\omega)$.
For any $0<\theta<\frac\pi 2$ fixed,
we show that $\caD^{\theta}_{\alpha\circ\gamma}\neq\emptyset$, hence
$\alpha\circ\gamma\in\qaut(\caA)$ and 
$\omega\circ\gamma=\omega_{0}\circ\alpha\gamma\in \QLS$.
Set $\theta_{1.2}:=\theta$ and choose 
\begin{align}
0<\theta_{0}<\theta_{0.8}<\theta_1<\theta_{1.2}:=\theta<\theta_{1.8}<\theta_2<\theta_{2.2}<
\theta_{2.8}<\theta_3<\theta_{3.2}<\frac\pi 2.
\end{align}
Because $\alpha\in\qaut(\caA)$, there exists some
$(\alpha_L,\alpha_R,\Theta)\in \caD_{\alpha}^{\theta_{2}}$.
Setting $\alpha_0:=\alpha_L\otimes\alpha_R$, we have $\alpha=\inn\circ\alpha_0\circ\Theta$.
Because $\gamma\in \guaut\lmk\caA\rmk$, 
there are 
$\gamma_{H}\in \haut(\caA)$ and 
$\gamma_{C}\in \gsqaut(\caA)$
{such that}
\begin{align}\label{gcz}
\gamma=\gamma_{C}\circ\gamma_{H}.
\end{align}
Because $\gamma_{H}\in \haut(\caA)$, we may decompose $\gamma_{H}$
as 
\begin{align}\label{ght}
\gamma_{H}=\inn\circ\lmk \gamma_{H, {L}}\otimes \gamma_{H,R}\rmk
=\inn\circ\gamma_{0}
\end{align}
with
some $\gamma_{H, {\sigma}}\in \Aut\lmk \caA_{{\lmk C_{\theta_{0}}\rmk_\sigma}}\rmk$, $\sigma=L,R$.
We set $\gamma_{0}:=\gamma_{H, {L}}\otimes \gamma_{H,R}\in 
\Aut\lmk \caA_{{ C_{\theta_{0}}}}\rmk$.
By definition, $\gamma_{C}\in \gsqaut(\caA)$, allows a
decomposition
\begin{align}\label{sqautg}
&\gamma_{C}=\inn\circ\gamma_{CS}\notag\\
&\gamma_{CS}=\lmk
\gamma_{[0,\theta_1]}\otimes\gamma_{(\theta_1,\theta_2]}
\otimes \gamma_{(\theta_2,\theta_3]}\otimes
\gamma_{(\theta_3,\frac\pi 2]}
\rmk
\circ
\lmk
\gamma_{(\theta_{0.8}, \theta_{1.2}]}\otimes
\gamma_{(\theta_{1.8},\theta_{2.2}]}
\otimes \gamma_{(\theta_{2.8},\theta_{3.2}]}
\rmk
\end{align}
with  \begin{align}\label{sqaut2g}
  \begin{split}
&  \gamma_X:=\bigotimes_{\sigma=L,R,\zeta=D,U} \gamma_{X,\sigma,\zeta},\quad
 \gamma_{[0,\theta_{1}]}:=\bigotimes_{\sigma=L,R}\gamma_{[0,\theta_{1}],\sigma},\quad
 \gamma_{(\theta_3,\frac\pi 2]}:=\bigotimes_{\zeta=D,U}  \gamma_{(\theta_3,\frac\pi 2],\zeta}\\
 &\gamma_{X,\sigma,\zeta}\in \Aut\lmk\caA_{C_{X,\sigma,\zeta}}\rmk,\quad
 \gamma_{X,\sigma}:=\bigotimes_{\zeta=U,D}\gamma_{X,\sigma,\zeta},\quad
\gamma_{X,\zeta}:=\bigotimes_{\sigma=L,R}\gamma_{X,\sigma,\zeta}\\
&\gamma_{[0,\theta_{1}],\sigma}\in \Aut\lmk\caA_{C_{[0,\theta_{1}],\sigma}}\rmk,\quad
 \gamma_{(\theta_3,\frac\pi 2],\zeta}\in \Aut\lmk\caA_{C_{(\theta_3,\frac\pi 2],\zeta}}\rmk, 
  \end{split} 
  \end{align}
 for
 \begin{align}\label{sqaut3g}
 X=(\theta_1,\theta_2], (\theta_2,\theta_3],
% (\theta_{-0.2},\theta_{0.2}],
 (\theta_{0.8},\theta_{1.2}],
 (\theta_{1.8},\theta_{2.2}], 
(\theta_{2.8},\theta_{3.2}],\quad \sigma=L,R,\quad \zeta=D,U.
 \end{align}
 Here we have
 \begin{align}
\gamma_{I}\circ\beta_g^{U}=\beta_g^{U}\circ\gamma_{I}\;\quad\text{for all}\quad g\in G,
\end{align}
for any 
\begin{align}\label{sqaut3g}
I=[0,\theta_1],(\theta_1,\theta_2], (\theta_2,\theta_3], \left(\theta_3,\frac\pi 2\right],
(\theta_{0.8}, \theta_{1.2}], (\theta_{1.8},\theta_{2.2}], (\theta_{2.8},\theta_{3.2}].
 \end{align}
%\begin{align}\label{sqautg}
%&\gamma_{CS}:=\lmk
%\gamma_{[0,\theta_1]}\otimes\gamma_{(\theta_1,\theta_2]}
%\otimes \gamma_{(\theta_2,\theta_3]}\otimes
%\gamma_{(\theta_3,\frac\pi 2]}
%\rmk
%\circ
%\lmk
%\gamma_{(\theta_{0.8}, \theta_{1.2}]}\otimes
%\gamma_{(\theta_{1.8},\theta_{2.2}]}
%\otimes \gamma_{(\theta_{2.8},\theta_{3.2}]}
%\rmk\notag\\
%&\gamma_{Y}=\gamma_{Y,D}\otimes\gamma_{Y,U},
%\quad\gamma_{X}=\gamma_{X,L}\otimes\gamma_{X,R},
%\end{align}
%with
%\begin{align}
%&\gamma_{I}\in \Aut\lmk\caA_{\caC_I}\rmk,\notag\\
%&\gamma_{Y,D}\in \Aut\lmk\caA_{(\caC_Y)_D}\rmk,\quad \gamma_{Y,U}\in \Aut\lmk\caA_{(\caC_Y)_U}\rmk,\notag\\
%&\gamma_{X,L}\in \Aut\lmk\caA_{(\caC_X)_L}\rmk,\quad \gamma_{X,R}\in \Aut\lmk\caA_{(\caC_X)_R}\rmk,
%\end{align}
%for 
%\begin{align}
%&I=[0,\theta_1],(\theta_1,\theta_2], (\theta_2,\theta_3], \left(\theta_3,\frac\pi 2\right],
%(\theta_{0.8}, \theta_{1.2}], (\theta_{1.8},\theta_{2.2}], (\theta_{2.8},\theta_{3.2}],\\
%&Y=(\theta_1,\theta_2], (\theta_2,\theta_3], \left(\theta_3,\frac\pi 2\right],
%(\theta_{0.8}, \theta_{1.2}], (\theta_{1.8},\theta_{2.2}], (\theta_{2.8},\theta_{3.2}],\\
%&X=[0,\theta_1], (\theta_1,\theta_2], \theta_2,\theta_3], (\theta_{0.8}, \theta_{1.2}], \theta_{1.8},\theta_{2.2}], (\theta_{2.8},\theta_{3.2}].
%\end{align}
Set
\begin{align}\label{athens}
\hat\Theta:=\Theta\circ
\lmk
\gamma_{(\theta_2,\theta_3]}\otimes
\gamma_{(\theta_3,\frac\pi 2]}
\rmk
\circ
\lmk
\gamma_{(\theta_{1.8},\theta_{2.2}]}
\otimes \gamma_{(\theta_{2.8},\theta_{3.2}]}
\rmk
\in\Aut\lmk \caA_{C_{\theta_{1.8}}^{c}}\rmk
\subset \Aut\lmk \caA_{C_{\theta_{1.2}}^{c}}\rmk,
\end{align}
and
\begin{align}\label{rome}
\hat\alpha_{\sigma}:=
\alpha_{\sigma}\circ
\lmk
\gamma_{[0,\theta_1],\sigma}\otimes\gamma_{(\theta_1,\theta_2],\sigma}
\rmk
\circ
\gamma_{(\theta_{0.8}, \theta_{1.2}],\sigma}
\circ\gamma_{H,\sigma}\in\Aut(\caA_{H_{\sigma}}),\quad\sigma=L,R.
\end{align}
We claim
\begin{align}\label{algm}
\alpha\circ\gamma=\inn\circ\lmk\hat\alpha_{L}\otimes\hat \alpha_{R}\rmk\circ\hat\Theta.
\end{align}
This means $(\hat\alpha_{L},\hat\alpha_{R},\hat\Theta)\in\caD_{\alpha\gamma}^{\theta_{1.2}}$,
hence $\caD_{\alpha\gamma}^{\theta}=\caD_{\alpha\gamma}^{\theta_{1.2}}\neq \emptyset$.
The claim (\ref{algm}) can be checked as follows.
Note that $\gamma_{(\theta_2,\theta_3]}\otimes
\gamma_{(\theta_3,\frac\pi 2]}$ and $\gamma_{(\theta_{0.8}, \theta_{1.2}]}$
commute because of their disjoint supports.
Because 
$\Theta\in\Aut\lmk\caA_{C_{\theta_{2}}^{c}}\rmk$,
it commutes with $\gamma_{[0,\theta_1]}\otimes\gamma_{(\theta_1,\theta_2]}$
and $\gamma_{(\theta_{0.8}, \theta_{1.2}]}$.
Therefore, we have
\begin{align}
&\alpha\circ\gamma=\inn\circ\alpha_{0}\circ \Theta\circ\lmk
\gamma_{[0,\theta_1]}\otimes\gamma_{(\theta_1,\theta_2]}
\otimes \gamma_{(\theta_2,\theta_3]}\otimes
\gamma_{(\theta_3,\frac\pi 2]}
\rmk
\circ
\lmk
\gamma_{(\theta_{0.8}, \theta_{1.2}]}\otimes
\gamma_{(\theta_{1.8},\theta_{2.2}]}
\otimes \gamma_{(\theta_{2.8},\theta_{3.2}]}
\rmk\circ\gamma_{0}\notag\\
&=\inn\circ\alpha_{0}\circ\lmk
\gamma_{[0,\theta_1]}\otimes\gamma_{(\theta_1,\theta_2]}
\rmk\circ\gamma_{(\theta_{0.8}, \theta_{1.2}]}\circ
 \Theta\circ\lmk \gamma_{(\theta_2,\theta_3]}\otimes
\gamma_{(\theta_3,\frac\pi 2]}\rmk
\circ\lmk
\gamma_{(\theta_{1.8},\theta_{2.2}]}
\otimes \gamma_{(\theta_{2.8},\theta_{3.2}]}
\rmk\circ\gamma_{0}\notag\\
&=\inn\circ\alpha_{0}\circ\lmk
\gamma_{[0,\theta_1]}\otimes\gamma_{(\theta_1,\theta_2]}
\rmk\circ\gamma_{(\theta_{0.8}, \theta_{1.2}]}\circ\hat\Theta\circ\gamma_{0}\label{paris}
\end{align}
Because $\gamma_{0}\in \Aut\lmk \caA_{{ C_{\theta_{0}}}}\rmk$
and
$\hat\Theta
\in\Aut\lmk \caA_{C_{\theta_{1.8}}^{c}}\rmk$ commute,
we have
\begin{align}
\alpha\circ\gamma=(\ref{paris})=
\inn\circ\alpha_{0}\circ\lmk
\gamma_{[0,\theta_1]}\otimes\gamma_{(\theta_1,\theta_2]}
\rmk\circ\gamma_{(\theta_{0.8}, \theta_{1.2}]}\circ\gamma_{0}\circ\hat\Theta
=\inn\circ\lmk\hat\alpha_{L}\otimes\hat \alpha_{R}\rmk\circ\hat\Theta,
\end{align}
proving (\ref{algm}).
\\
{\it Step 2.}
From $\IG(\omega)\neq\emptyset$, we fix a
$0<\theta_{0}<\frac\pi 2$ such that
$\IG(\omega,\theta_{0})\neq\emptyset$.
We choose $\theta_{0.8},\theta_1,\theta_{1.2},\theta_{1.8},\theta_2,\theta_{2.2},
\theta_{2.8},\theta_3,\theta_{3.2}$ such that
\begin{align}
0<\theta_{0}<\theta_{0.8}<\theta_1<\theta_{1.2}<\theta_{1.8}<\theta_2<\theta_{2.2}<
\theta_{2.8}<\theta_3<\theta_{3.2}<\frac\pi 2.
\end{align}
For these $\theta$s, we associate the decomposition of $\gamma$ in {\it Step 1}.
Fix $\lmk \tilde \beta_{g}\rmk\in \IG(\omega,\theta_{0})$ and
$
(\eta_{g}^{\sigma})\in
\caT(\theta_{0}, (\tilde\beta_g))$.
Set $\eta_{g}:=\eta_{g}^{L}\otimes\eta_{g}^{R}$.
Note that  $(\eta_{g}^{\sigma})$ also belongs to
$\caT(\theta_{2}, (\tilde\beta_g))$.
Set
\begin{align}\label{miyajima}
\hat\eta_{g}^{\sigma}:=
\lmk \gamma_{[0,\theta_1],\sigma}\gamma_{(\theta_{0.8}, \theta_{1.2}],\sigma}\gamma_{H,\sigma}\rmk^{-1}
\eta_{g}^{\sigma}\lmk
\beta_{g}^{\sigma U}
\gamma_{[0,\theta_1],\sigma}\gamma_{(\theta_{0.8}, \theta_{1.2}],\sigma}\gamma_{H,\sigma}
\lmk
\beta_{g}^{\sigma U}
\rmk^{-1}\rmk\in\Aut\lmk \caA_{\lmk C_{\theta_{1.2}}\rmk_{\sigma}}\rmk,
\end{align}
for $\sigma=L,R$.
We also set $\hat\eta_{g}:=\hat \eta_{g}^{L}\otimes \hat \eta_{g}^{R}$.
We claim that $\lmk \gamma^{-1}\tilde \beta_{g}\gamma\rmk\in \IG(\omega\circ\gamma,\theta_{1.2})$
with $(\hat\eta_{g}^{\sigma})\in \caT\lmk\theta_{1.2}, \lmk \gamma^{-1}\tilde \beta_{g}\gamma\rmk\rmk$.
Clearly we have
\begin{align}
\omega\circ\gamma\circ\lmk \gamma^{-1}\tilde \beta_{g}\gamma\rmk
=\omega\circ\tilde \beta_{g}\circ\gamma
=\omega\circ\gamma.
\end{align}
Therefore, what remains to be shown is 
\begin{align}
\gamma^{-1}\tilde \beta_{g}\gamma
=\inn \circ\lmk\hat\eta_{g}^L\otimes \hat\eta_{g}^R\rmk\circ \beta_{g}^{U}
\end{align}
To see this, we first have 
\begin{align}
&\gamma^{-1}\circ\eta_{g}\circ\gamma\notag\\
&=\inn\circ\gamma_{0}^{-1}\circ
\lmk
\gamma_{(\theta_{0.8}, \theta_{1.2}]}\otimes
\gamma_{(\theta_{1.8},\theta_{2.2}]}
\otimes \gamma_{(\theta_{2.8},\theta_{3.2}]}
\rmk^{-1}\circ
\lmk
\gamma_{[0,\theta_1]}\otimes\gamma_{(\theta_1,\theta_2]}
\otimes \gamma_{(\theta_2,\theta_3]}\otimes
\gamma_{(\theta_3,\frac\pi 2]}
\rmk^{-1}\notag\\
&\circ\eta_{g}\circ\lmk
\gamma_{[0,\theta_1]}\otimes\gamma_{(\theta_1,\theta_2]}
\otimes \gamma_{(\theta_2,\theta_3]}\otimes
\gamma_{(\theta_3,\frac\pi 2]}
\rmk
\circ
\lmk
\gamma_{(\theta_{0.8}, \theta_{1.2}]}\otimes
\gamma_{(\theta_{1.8},\theta_{2.2}]}
\otimes \gamma_{(\theta_{2.8},\theta_{3.2}]}
\rmk
\gamma_{0}\label{sweden}
\end{align}
from the decomposition (\ref{gcz}), (\ref{ght}) (\ref{sqautg}).
Because $\gamma_{(\theta_1,\theta_2]}
\otimes \gamma_{(\theta_2,\theta_3]}\otimes
\gamma_{(\theta_3,\frac\pi 2]}$ commutes with
$\eta_{g}\in \Aut\lmk \caA_{{ C_{\theta_{0}}}}\rmk$ and
$\gamma_{(\theta_{1.8},\theta_{2.2}]}
\otimes \gamma_{(\theta_{2.8},\theta_{3.2}]}$
commutes with $\lmk \gamma_{[0,\theta_1]}\rmk^{-1}\eta_{g}\gamma_{[0,\theta_1]}\in
 \Aut\lmk \caA_{{ C_{\theta_{1}}}}\rmk$, we have
 \begin{align}
 &\gamma^{-1}\circ\eta_{g}\circ\gamma\notag\\
&=(\ref{sweden})
=
\inn\circ\gamma_{0}^{-1}\circ
\lmk
\gamma_{(\theta_{0.8}, \theta_{1.2}]}\otimes
\gamma_{(\theta_{1.8},\theta_{2.2}]}
\otimes \gamma_{(\theta_{2.8},\theta_{3.2}]}
\rmk^{-1}\notag\\
&\circ
\lmk
\gamma_{[0,\theta_1]}
\rmk^{-1}\circ\eta_{g}
\circ\lmk
\gamma_{[0,\theta_1]}
\rmk
\circ
\lmk
\gamma_{(\theta_{0.8}, \theta_{1.2}]}\otimes
\gamma_{(\theta_{1.8},\theta_{2.2}]}
\otimes \gamma_{(\theta_{2.8},\theta_{3.2}]}
\rmk
\gamma_{0}\notag\\
&=\inn\circ\gamma_{0}^{-1}\circ
\lmk
\gamma_{(\theta_{0.8}, \theta_{1.2}]}
\rmk^{-1}\circ
\lmk
\gamma_{[0,\theta_1]}
\rmk^{-1}\circ\eta_{g}
\circ\lmk
\gamma_{[0,\theta_1]}
\rmk
\circ
\lmk
\gamma_{(\theta_{0.8}, \theta_{1.2}]}
\rmk
\gamma_{0}\label{spain}
 \end{align}
On the other hand, because $\gamma_{CS}$ and $\beta_{g}^{U}$ commute, we have
\begin{align}
&\gamma^{-1}\circ\beta_{g}^{U}\circ\gamma=\inn\gamma_{0}^{-1}\circ\gamma_{CS}^{-1}\beta_{g}^{U}\gamma_{CS}\gamma_{0}
=\inn\gamma_{0}^{-1}\circ\beta_{g}^{U}\gamma_{0}.
\label{italy}
\end{align}
Combining (\ref{spain}) and (\ref{italy}), we obtain
\begin{align}
&\gamma^{-1}\tilde \beta_{g}\gamma
=\inn\circ \gamma_{0}^{-1}
\lmk
\gamma_{(\theta_{0.8}, \theta_{1.2}]}
\rmk^{-1}\circ
\lmk
\gamma_{[0,\theta_1]}
\rmk^{-1}\circ\eta_{g}
\circ\lmk
\gamma_{[0,\theta_1]}
\rmk
\circ
\lmk
\gamma_{(\theta_{0.8}, \theta_{1.2}]}
\rmk
\gamma_{0}
\circ
\gamma_{0}^{-1}\circ\beta_{g}^{U}\gamma_{0}\notag\\
&=\inn\circ \gamma_{0}^{-1}
\lmk
\gamma_{(\theta_{0.8}, \theta_{1.2}]}
\rmk^{-1}\circ
\lmk
\gamma_{[0,\theta_1]}
\rmk^{-1}\circ\eta_{g}\beta_{g}^{U}
\circ\lmk
\gamma_{[0,\theta_1]}
\rmk
\circ
\lmk
\gamma_{(\theta_{0.8}, \theta_{1.2}]}
\rmk
\circ\gamma_{0}\notag\\
&=\inn \circ\lmk\hat\eta_{g}^L\otimes \hat\eta_{g}^R\rmk\circ \beta_{g}^{U}.
\end{align}
In the second equality, we used the fact that $
\gamma_{[0,\theta_1]}
\gamma_{(\theta_{0.8}, \theta_{1.2}]}
$
and $\beta_{g}^{U}$ commute.
This completes the proof of the claim.
\\
{\it Step 3.}
We use the setting and notation of {\it Step 1.}(with $\theta_{0}$ chosen in {\it Step 2.})
and  {\it Step. 2}.
By Lemma \ref{ichi}, there exists 
\begin{align}
\lmk (W_g), (u_\sigma(g,h))\rmk\in \IP\lmk
\omega, \alpha, \theta_{2},
(\tilde\beta_g), (\eta_{g}^\sigma),
(\alpha_L,\alpha_R,\Theta)
\rmk.
\end{align}
Now we have
\begin{align}\label{berlin}
\omega\circ\gamma\in \QLS,\;
\alpha\circ\gamma\in \eaut(\omega\circ\gamma),\;
\lmk \gamma^{-1}\circ\tilde \beta_{g}\circ\gamma\rmk\in \IG(\omega\circ\gamma,\theta_{1.2}),\;
(\hat\eta_{g}^{\sigma})\in \caT\lmk\theta_{1.2}, \lmk \gamma^{-1}\tilde \beta_{g}\gamma\rmk\rmk,\;
(\hat \alpha_L,\hat  \alpha_R,\hat\Theta)\in\caD_{\alpha\gamma}^{\theta_{1.2}}
.
\end{align}
We claim
\begin{align}\label{onomichi}
\lmk (W_g), (u_\sigma(g,h))\rmk\in \IP\lmk
\omega\circ\gamma, \alpha\circ\gamma, \theta_{1.2},
(\gamma^{-1}\tilde\beta_g\gamma), (\hat \eta_{g}^\sigma),
(\hat \alpha_L,\hat \alpha_R,\hat \Theta)
\rmk.
\end{align}
This immediately implies $h(\omega)=h(\omega\circ\gamma)$.
To prove the claim, we first see from (\ref{athens}) and (\ref{rome}) that
\begin{align}
&\lmk \hat \alpha_L\otimes\hat \alpha_R\rmk\circ
\hat\Theta\circ \gamma_{0}^{-1}
\lmk
\gamma_{(\theta_{0.8}, \theta_{1.2}]}
\rmk^{-1}\circ
\lmk
\gamma_{[0,\theta_1]}
\rmk^{-1}\notag\\
&=\alpha_{0}\circ
\lmk
\gamma_{[0,\theta_1]}\otimes\gamma_{(\theta_1,\theta_2]}
\rmk
\circ
\gamma_{(\theta_{0.8}, \theta_{1.2}]}
\circ\gamma_{0}
\circ\Theta\circ
\lmk
\gamma_{(\theta_2,\theta_3]}\otimes
\gamma_{(\theta_3,\frac\pi 2]}
\rmk
\circ
\lmk
\gamma_{(\theta_{1.8},\theta_{2.2}]}
\otimes \gamma_{(\theta_{2.8},\theta_{3.2}]}
\rmk
\circ \gamma_{0}^{-1}
\lmk
\gamma_{(\theta_{0.8}, \theta_{1.2}]}
\rmk^{-1}\circ
\lmk
\gamma_{[0,\theta_1]}
\rmk^{-1}\notag\\
&=\alpha_{0}\circ
\lmk
\gamma_{[0,\theta_1]}\otimes\gamma_{(\theta_1,\theta_2]}
\rmk
\circ
\Theta\circ
\lmk
\gamma_{(\theta_2,\theta_3]}\otimes
\gamma_{(\theta_3,\frac\pi 2]}
\rmk
\circ
\lmk
\gamma_{(\theta_{1.8},\theta_{2.2}]}
\otimes \gamma_{(\theta_{2.8},\theta_{3.2}]}
\rmk
\circ
\lmk
\gamma_{[0,\theta_1]}
\rmk^{-1}\label{greece}
\end{align}
because $\gamma_{(\theta_{0.8}, \theta_{1.2}]}\circ \gamma_{0}\in  \Aut\lmk \caA_{{ C_{\theta_{1.2}}}}\rmk$
and $\Theta\circ
\lmk
\gamma_{(\theta_2,\theta_3]}\otimes
\gamma_{(\theta_3,\frac\pi 2]}
\rmk
\circ
\lmk
\gamma_{(\theta_{1.8},\theta_{2.2}]}
\otimes \gamma_{(\theta_{2.8},\theta_{3.2}]}
\rmk\in  \Aut\lmk \caA_{{ C_{\theta_{1.8}}}^{c}}\rmk$
commute.
Furthermore, because $\gamma_{[0,\theta_1]}$
and $\Theta\circ
\lmk
\gamma_{(\theta_2,\theta_3]}\otimes
\gamma_{(\theta_3,\frac\pi 2]}
\rmk
\circ
\lmk
\gamma_{(\theta_{1.8},\theta_{2.2}]}
\otimes \gamma_{(\theta_{2.8},\theta_{3.2}]}
\rmk
\in  \Aut\lmk \caA_{{ C_{\theta_{1.8}}}^{c}}\rmk$
commute,
while $\gamma_{(\theta_1,\theta_2]}$ and 
$\Theta\in  \Aut\lmk \caA_{{ C_{\theta_{2}}}^{c}}\rmk$
commute,
we have
\begin{align}
&\lmk \hat \alpha_L\otimes\hat \alpha_R\rmk\circ
\hat\Theta\circ \gamma_{0}^{-1}
\lmk
\gamma_{(\theta_{0.8}, \theta_{1.2}]}
\rmk^{-1}\circ
\lmk
\gamma_{[0,\theta_1]}
\rmk^{-1}
=(\ref{greece})
=\alpha_{0}\circ
\gamma_{(\theta_1,\theta_2]}
\circ
\Theta\circ
\lmk
\gamma_{(\theta_2,\theta_3]}\otimes
\gamma_{(\theta_3,\frac\pi 2]}
\rmk
\circ
\lmk
\gamma_{(\theta_{1.8},\theta_{2.2}]}
\otimes \gamma_{(\theta_{2.8},\theta_{3.2}]}
\rmk\notag\\
&=\alpha_{0}\circ
\Theta\circ\gamma_{(\theta_1,\theta_2]}
\circ\lmk
\gamma_{(\theta_2,\theta_3]}\otimes
\gamma_{(\theta_3,\frac\pi 2]}
\rmk
\circ
\lmk
\gamma_{(\theta_{1.8},\theta_{2.2}]}
\otimes \gamma_{(\theta_{2.8},\theta_{3.2}]}
\rmk
=\alpha_{0}\circ
\Theta
\circ\hat\gamma.
\end{align}
Here $\hat\gamma:=\gamma_{(\theta_1,\theta_2]}
\circ\lmk
\gamma_{(\theta_2,\theta_3]}\otimes
\gamma_{(\theta_3,\frac\pi 2]}
\rmk
\circ
\lmk
\gamma_{(\theta_{1.8},\theta_{2.2}]}
\otimes \gamma_{(\theta_{2.8},\theta_{3.2}]}
\rmk\in  \Aut\lmk \caA_{{ C_{\theta_{1}}}^{c}}\rmk$
commutes with $\beta_{g}^{U}$.
Combining this and 
\begin{align}
\hat \eta_{g}\beta_{g}^{U}=
\lmk \gamma_{[0,\theta_1]}\gamma_{(\theta_{0.8}, \theta_{1.2}]}\gamma_{0}\rmk^{-1}
\eta_{g}
\beta_{g}^{ U}
\gamma_{[0,\theta_1]}\gamma_{(\theta_{0.8}, \theta_{1.2}]}\gamma_{0},
\end{align}
we obtain
\begin{align}\label{madrid}
\pi_{0}\circ 
\lmk \hat \alpha_L\otimes\hat \alpha_R\rmk\circ
\hat\Theta\circ\hat\eta_{g}\beta_{g}^{U}
\lmk \hat\Theta\rmk^{-1}
\lmk \hat \alpha_L\otimes\hat \alpha_R\rmk^{-1}
=\pi_{0}\circ\alpha_{0}\circ
\Theta
\circ\hat\gamma\circ \eta_{g}
\beta_{g}^{ U}\circ
\hat\gamma^{-1}\circ\Theta^{-1}\circ \alpha_{0}^{-1}.
\end{align}
Because 
$\hat\gamma$
commutes with $\beta_{g}^{U}$
and $ \eta_{g}\in  \Aut\lmk \caA_{{ C_{\theta_{0}}}}\rmk$ commutes with 
$\hat\gamma\in \Aut\lmk \caA_{{ C_{\theta_{1}}}^{c}}\rmk$,
we have
\begin{align}
\pi_{0}\circ 
\lmk \hat \alpha_L\otimes\hat \alpha_R\rmk\circ
\hat\Theta\circ\hat\eta_{g}\beta_{g}^{U}
\lmk \hat\Theta\rmk^{-1}
\lmk \hat \alpha_L\otimes\hat \alpha_R\rmk^{-1}=(\ref{madrid})
=\pi_{0}\circ\alpha_{0}\circ
\Theta
\circ \eta_{g}
\beta_{g}^{ U}
\circ\Theta^{-1}\circ \alpha_{0}^{-1}
=\Ad\lmk W_{g}\rmk\circ\pi_{0}
\end{align}
Hence the condition for $W_{g}$ in (\ref{onomichi}) is checked.
On the other hand, 
substituting  (\ref{rome}) and (\ref{miyajima}), we get
\begin{align}
&\pi_R\circ\hat \alpha_R\circ\hat \eta_g^R\beta_g^{R U}
\hat \eta_h^R\lmk\beta_g^{R U}\rmk^{-1}\lmk \hat \eta_{gh}^R\rmk^{-1}\hat\alpha_{R}^{-1}\notag\\
&=
\pi_{R}\circ\alpha_R\circ
\lmk
\gamma_{[0,\theta_1],R}\otimes\gamma_{(\theta_1,\theta_2],R}
\rmk
\circ
\gamma_{(\theta_{0.8}, \theta_{1.2}],R}
\circ\gamma_{H,R}
\circ
\lmk \gamma_{[0,\theta_1],R}\circ\gamma_{(\theta_{0.8}, \theta_{1.2}],R}\circ\gamma_{H,R}\rmk^{-1}\notag\\
& \eta_g^R\beta_g^{RU}
 \eta_h^R\lmk\beta_g^{R U}\rmk^{-1}\lmk \eta_{gh}^R\rmk^{-1}\circ
\gamma_{[0,\theta_1],R}\circ\gamma_{(\theta_{0.8}, \theta_{1.2}],R}\circ\gamma_{H,R}
\circ
\lmk
\lmk
\gamma_{[0,\theta_1],R}\otimes\gamma_{(\theta_1,\theta_2],R}
\rmk
\circ
\gamma_{(\theta_{0.8}, \theta_{1.2}],R}
\circ\gamma_{H,R}
\rmk^{-1}\alpha_R^{-1}\notag\\
&=
\pi_{R}\circ\alpha_R\circ
\gamma_{(\theta_1,\theta_2],R}
\circ
 \eta_g^R\beta_g^{RU}
 \eta_h^R\lmk\beta_g^{R U}\rmk^{-1}\lmk \eta_{gh}^R\rmk^{-1}\circ
\lmk
\gamma_{(\theta_1,\theta_2],R}
\rmk^{-1}\circ\alpha_R^{-1}.\label{totori}
\end{align}
Because
$ \eta_g^R\beta_g^{RU}
 \eta_h^R\lmk\beta_g^{R U}\rmk^{-1}\lmk \eta_{gh}^R\rmk^{-1}\in  \Aut\lmk \caA_{{ C_{\theta_{0}}}}\rmk$ commutes with  $\gamma_{(\theta_1,\theta_2],R}$, we obtain
\begin{align}
\pi_R\circ\hat \alpha_R\circ\hat \eta_g^R\beta_g^{R U}
\hat \eta_h^R\lmk\beta_g^{R U}\rmk^{-1}\lmk \hat \eta_{gh}^R\rmk^{-1}\hat\alpha_{R}^{-1}
=
(\ref{totori})
=\pi_{R}\circ\alpha_R\circ
 \eta_g^R\beta_g^{RU}
 \eta_h^R\lmk\beta_g^{R U}\rmk^{-1}\lmk \eta_{gh}^R\rmk^{-1}
 \alpha_R^{-1}
 =\Ad\lmk u_{R}(g,h)\rmk\circ\pi_{R}.
\end{align}
An analogous statement for $\sigma=L$ also holds.
This completes the proof of  (\ref{onomichi}).
Hence the statement of the Theorem is proven.
\end{proof}
\section{Proof of Theorem \ref{main}}\label{mainthproofsec}
In this section, we prove Theorem \ref{main}.
%In fact, our index can be considered as an invariant of short range entangled $\beta$-invariant states.
%In this paper, we say a state has short range entanglement 
%if it can be created from a product state via a quasi-local automorphism.
The proof relies heavily on the machinery of quasi-local automorphisms 
developed in \cite{bmns}~\cite{NSY},~\cite{MO}.
(Summary is given in Appendix \ref{quasilocalsec}.)
We use terminology and facts from  Appendix \ref{ffunc}, \ref{quasilocalsec}, freely.
%\begin{defn}
We introduce a set of $F$-functions with fast decay, $\caF_a$ as 
Definition \ref{fadef}. 
Crucial point for us is the following.
\begin{thm}\label{mo}
Let $\Phi_0,\Phi_1\in\caP_{UG}$ and $\omega_{\Phi_0}$, $\omega_{\Phi_1}$
be their unique gapped ground states.
 Suppose that
$\Phi_0\sim\Phi_1$ holds, via a path $\Phi : [0,1]\to \caP_{UG}$.
Then there exists some 
$\Psi\in\hat\caB_F([0,1])$ with $\Psi_{1}\in \hat\caB_{F}([0,1])$ for some $F\in \caF_a$ of the form
$F(r)=\frac{\exp\lmk {-r^{\theta}}\rmk}{(1+r)^{4}}$ with a constant $0<\theta<1$,
such that
$\omega_{\Phi_{1}}=\omega_{\Phi_0}\circ\tau_{1,0}^{\Psi}$.
If 
$\Phi_0,\Phi_1\in \caP_{UG\beta}$ and 
$\Phi\sim_\beta\Phi_0$,
we may take $\Psi$ to be $\beta$-invariant.
\end{thm}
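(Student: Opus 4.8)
The plan is to realize the equivalence by Hastings' quasi-adiabatic continuation (the spectral flow), in the form developed in \cite{bmns}, \cite{NSY}, \cite{MO}; conditions (1)--(7) of Definition \ref{classsym} are precisely the hypotheses under which that machinery applies and produces a generating interaction of the asserted decay class.

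First I would pass to finite volume. For $\Lambda\in{\mathfrak S}_{\bbZ^{2}}$ set $H_\Lambda(s):=(H_{\Phi(s)})_\Lambda$ and let $P_\Lambda(s)$ be the spectral projection onto the lowest part of its spectrum. Condition (6) gives a uniform bulk gap $\gamma>0$ along the path; combined with the uniqueness of the ground state and the ``local perturbations perturb locally'' quasi-locality estimates of \cite{NSY} --- which use the Lieb--Robinson bound attached to an $F$-function with sub-exponential tail together with the boundedness conditions (3)--(5) --- this yields gap estimates in finite volume strong enough to define the spectral-flow generators
\begin{align}
D_\Lambda(s):=\int_{\bbR} W_\gamma(t)\,\tau_t^{(\Lambda),\Phi(s)}\!\lmk \dot H_\Lambda(s)\rmk\,dt ,
\end{align}
where $W_\gamma$ is the usual weight function whose Fourier transform is supported away from $[-\gamma,\gamma]$ and which decays sub-exponentially, roughly like $\exp\lmk -c\,\abs{t}\,\ln^{-2}\abs{t}\rmk$. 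The unitary cocycle $U_\Lambda(s)$ solving $\tfrac{d}{ds}U_\Lambda(s)=iD_\Lambda(s)U_\Lambda(s)$, $U_\Lambda(0)=\unit$, then satisfies $\Ad\lmk U_\Lambda(s)\rmk\lmk P_\Lambda(0)\rmk=P_\Lambda(s)$.

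Second, I would take the thermodynamic limit. Writing $\dot H_\Lambda(s)=\sum_X\dot\Phi(X;s)$, evolving each term and integrating against $W_\gamma$, one regroups the resulting operators according to the region on which they are anchored; in the limit $\Lambda\uparrow\bbZ^{2}$ this defines a time-dependent interaction $\Psi$ with $\omega_{\Phi(s)}=\omega_{\Phi(0)}\circ\tau^{\Psi}_{0,s}$. The Lieb--Robinson bound with $F\in\caF_a$ spreads $\tau_t^{\Phi(s)}(\dot\Phi(X;s))$ at a logarithmically corrected linear rate, and performing the $t$-integral against $W_\gamma$ turns this into a spatial decay of sub-exponential type $\exp\lmk -c\,r^{\theta}\rmk$, the polynomial factor $(1+r)^{-4}$ absorbing the combinatorial cost of the sum over $X$; this places $\Psi$ in $\hat\caB_F([0,1])$ with the stated $F(r)=\exp\lmk -r^{\theta}\rmk(1+r)^{-4}\in\caF_a$. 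Conditions (4) and (5) make $s\mapsto\Psi(s)$ regular enough for $\tau^{\Psi}_{s,s_{0}}$ to be a genuine family of automorphisms of $\caA$, while condition (7) on $\tfrac{d}{ds}\omega_{\Phi(s)}(A)$ controls the continuity of the ground state along the path and is what lets one identify the limit state with $\omega_{\Phi(0)}$ transported by the spectral flow, giving $\omega_{\Phi_{1}}=\omega_{\Phi_0}\circ\tau^{\Psi}_{1,0}$.

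For the symmetric statement, if every $\Phi(X;s)$ is $\beta$-invariant then $H_\Lambda(s)$ commutes with $\bigotimes_{z\in\Lambda}U(g)$, so $\tau_t^{(\Lambda),\Phi(s)}$ commutes with $\beta_g^{\Lambda}$ and $\dot H_\Lambda(s)$ is $\beta$-invariant; hence $D_\Lambda(s)$ is $\beta$-invariant, $U_\Lambda(s)$ commutes with the on-site symmetry, and $\Psi(s)$ inherits $\beta$-invariance in the limit. I expect the main obstacle to be the second step: one must verify both that the thermodynamic limit of the spectral flow is generated by an honest interaction (not merely a quasi-local generator) and that the interplay of the Lieb--Robinson tail $F\in\caF_a$ with the weight function $W_\gamma$ produces exactly the $F$-function $\exp(-r^{\theta})(1+r)^{-4}$; these estimates are carried out in \cite{NSY} and \cite{MO}, so that the remaining work is essentially the verification that the hypotheses of Definition \ref{classsym} --- especially (4), (5), (7) --- supply precisely the finite-volume gap and ground-state-continuity inputs those theorems require.
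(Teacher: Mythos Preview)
Your proposal is correct and follows essentially the same approach as the paper: the spectral-flow generator $\caK_t(A)=-\int W_\gamma(u)\,\tau_u^{\Phi(t)}(A)\,du$ applied to $\dot\Phi$, regrouped via the $\Delta_{X(m)}$ decomposition into an interaction $\Psi$, with the decay class coming from Lieb--Robinson estimates combined with the sub-exponential decay of $W_\gamma$; the paper packages this by invoking Theorem~1.3 of \cite{MO} and then Proposition~6.13 and Theorem~5.17 of \cite{NSY} (the latter being Theorem~\ref{nsy5.17} here) to land $\Psi,\Psi_1$ in $\hat\caB_{\tilde F_2}([0,1])$.

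One organizational point: you route through finite volume and assert that the bulk gap together with LPPL ``yields gap estimates in finite volume strong enough'' for $\Ad(U_\Lambda(s))(P_\Lambda(0))=P_\Lambda(s)$. Finite-volume gaps are neither assumed in Definition~\ref{classsym} nor derived in the paper, and the paper does not need them: \cite{MO} works directly in infinite volume, taking condition~(7) of Definition~\ref{classsym} (differentiability of $s\mapsto\omega_{\Phi(s)}(A)$ with the bound \eqref{dcon}) as the input that identifies the transported state, so the finite-volume intertwining step can simply be skipped. Your final paragraph already recognizes that condition~(7) is the key, so this is only a matter of streamlining the argument.
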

For the proof,
see Appendix \ref{quasilocalsec}.

%We set
%\begin{align}
%{ {QAut}_0}:=
%\left\{
%\tau_{1,0}^\Phi\mid
%\Phi\in \hat\caB_F([0,1]), \; F\in\caF_a
%\right\},
%\end{align}
%and denote by $QAut_1$ the set of all
%automorphism given by composition of finite number of
%
%We say that a state $\omega$ on $\caA_{\bbZ}$ is a
%short range entangled state if 
%there are an $F\in\caF$, a $\Phi\in\hat\caB_F([0,1])$ (see definition \ref{hbfdef}), and a product state
%$
% \omega_1:=\bigotimes_{\bbZ^2} \rho_\xi
% $   such that 
%$\omega=\omega_1\circ\tau_{1,0}^\Phi$.
%\end{defn}
%We require the existence of $\beta$-invariant pure 
%product state, which exists if and only if $U$ has an invariant vector.
%\begin{defn}
%Suppose that there is a $\beta$-invariant  pure product state 
% $
% \omega_0:=\bigotimes_{\bbZ^2} \rho_\xi
% $.
%We define
%\begin{align}
%\caS_{SL\beta}:=
%\left\{
%\omega\mid \omega=\omega_0\circ\tau_{1,0}^\Phi,\quad \text{for some}\quad
%\Phi\in \hat\caB_F([0,1]), \; F\in\caF_a
%\right\}
%\end{align}
%We say $\omega_1,\omega_2$ are equivalent (and write $\omega_1\sim_{SL\beta}\omega_2$) if
%there is $\Phi\in \hat\caB_F([0,1])$ and  $F\in\caF_a$
%such that 
%\begin{description}
%\item[(i)]
%$\omega_2=\omega_1\circ\tau_{1,0}^\Phi$, and
%\item[(ii)]
%$\Phi(\cdot,t)\in\caP_{UF\beta}$ for any $t\in[0,1]$.
%\end{description}
%
%\end{defn}
%\begin{rem}
%Because there is a strictly local automorphism connecting any two 
%product states, the definition of $\caS_{SL\beta}$ does not depend on the
%choice of $\omega_0$.
%\end{rem}
%
From this and Theorem \ref{defindexspt} and Theorem \ref{stabilitythm},
in order to show Theorem \ref{main}, it suffices to show
the following.

\begin{thm}
	\label{thm:quasiauto}
Let 
$F\in \caF_a$ be an $F$-function of the form
$F(r)=\frac{\exp\lmk {-r^{\theta}}\rmk}{(1+r)^{4}}$ with a constant  $0<\theta<1$.
Let $\Psi\in \hat \caB_{F}([0,1])$ be a path of interactions satisfying $\Psi_1\in \hat \caB_F([0,1])$.
Then we have $\tau_{1,0}^{\Psi}\in \sqaut(\caA)$.
Furthermore, if $\Psi$ is $\beta_{g}^{U}$-invariant, i.e., 
$\beta_{g}^U\lmk \Psi(X;t)\rmk=\Psi(X;t)$ for any $X\in{\mathfrak S}_{\bbZ^2}$, $t\in [0,1]$, and
$g\in G$, then 
we have $\tau_{1,0}^{\Psi}\in \gsqaut(\caA)$.
\end{thm}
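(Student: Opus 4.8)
The plan is to read off the decompositions defining $\sqaut(\caA)$ and $\gsqaut(\caA)$ from the locality properties of quasi-local automorphisms recorded in Appendix \ref{quasilocalsec}. The crucial input is that, since $F(r)=\exp(-r^{\theta})/(1+r)^4$ decays faster than any polynomial, the automorphism $\tau^{\Psi}_{1,0}$, its inverse, and every automorphism $\tau^{\Psi\vert_{\Gamma}}_{1,0}$ obtained by restricting $\Psi$ to a subregion $\Gamma\subset\bbZ^2$ are \emph{almost local} with a stretched-exponential tail. Given such estimates, the interface-splitting lemma of \cite{bmns},\cite{NSY},\cite{MO} applies: for a partition $\bbZ^2=\Gamma_+\sqcup\Gamma_-$, writing $\Psi_\pm$ for the restrictions of $\Psi$ to subsets of $\Gamma_\pm$, one has $\tau^{\Psi}_{1,0}=\gamma_{\mathrm{int}}\circ\bigl(\tau^{\Psi_-}_{1,0}\otimes\tau^{\Psi_+}_{1,0}\bigr)$ with $\gamma_{\mathrm{int}}$ a quasi-local automorphism almost localised at the common boundary $\partial\Gamma_+$; and such an almost-localised automorphism can be turned into one \emph{exactly} supported in any prescribed cone-shaped neighbourhood of $\partial\Gamma_+$ at the cost of an inner automorphism of $\caA$, the tails being summable by the fast decay.

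First I would fix angles as in (\ref{thetas1}) and peel off the interfaces at the three rays of angles $\theta_1,\theta_2,\theta_3$. Applying the splitting lemma successively to $\bbZ^2=C_{\theta_1}\sqcup C_{\theta_1}^c$, then to $C_{\theta_1}^c=C_{(\theta_1,\theta_2]}\sqcup C_{\theta_2}^c$, then to $C_{\theta_2}^c=C_{(\theta_2,\theta_3]}\sqcup C_{\theta_3}^c$, and each time collecting the interface automorphism at the ray $\theta_j$ and making it exactly supported in the fattening $\caC_{(\theta_{j-0.2},\theta_{j+0.2}]}$ modulo inner, yields a unitary $w\in\caU(\caA)$ together with $\alpha_{[0,\theta_1]}\in\Aut(\caA_{C_{\theta_1}})$, $\alpha_{(\theta_j,\theta_{j+1}]}\in\Aut(\caA_{C_{(\theta_j,\theta_{j+1}]}})$ and $\alpha_{(\theta_{j-0.2},\theta_{j+0.2}]}\in\Aut(\caA_{\caC_{(\theta_{j-0.2},\theta_{j+0.2}]}})$ realising the outer shape of (\ref{sqaut}). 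Then, within each of these cones --- which decomposes into two or four mutually separated connected components indexed by $\sigma=L,R$ and/or $\zeta=U,D$ --- I would apply the same lemma again, now to the trivial partition into connected components (no interface terms survive, since distinct components sit at linearly growing distance), obtaining the further tensor factorisations $\alpha_X=\bigotimes_{\sigma,\zeta}\alpha_{X,\sigma,\zeta}$ demanded by (\ref{sqaut2}), (\ref{sqaut3}), again modulo inner. Absorbing all inner corrections into a single $\Ad(u)$ gives $\tau^{\Psi}_{1,0}\in\sqaut(\caA)$; the argument works for any admissible $\theta_{0.8},\dots,\theta_{3.2}$, so the membership is uniform as required.

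For the $\beta_g^U$-invariant case the only change is to keep, throughout the construction, each localised factor equal to $\tau^{\Psi\vert_{\Gamma}}_{1,0}$ for the appropriate region $\Gamma$ (a coarse cone, a fattened cone, or one of their connected components), without further modification. If $\Psi$ is $\beta_g^U$-invariant then so is $\Psi\vert_{\Gamma}$ for every $\Gamma$, and hence $\tau^{\Psi\vert_{\Gamma}}_{1,0}$ commutes with the relevant piece of $\beta_g^U$: for a component contained in $H_D$ this is automatic because $\beta_g^U$ acts trivially there, and for a component $\Gamma\subset H_U$ it holds since $\beta_g^{\Gamma}\bigl(\Psi\vert_{\Gamma}(Y)\bigr)=\beta_g^Y(\Psi(Y))=\Psi(Y)$ for $Y\subset\Gamma$, so $\beta_g^{\Gamma}$ commutes with the generator and therefore with the flow. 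As the definition of $\gsqaut(\caA)$ imposes no constraint on the inner part, this gives $\tau^{\Psi}_{1,0}\in\gsqaut(\caA)$.

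The quasi-local machinery of the appendix is used as a black box, so the heart of the argument is geometric. The main obstacle is that the interface contribution of each cone split is a priori only \emph{almost} localised in a Lieb--Robinson-width strip around a ray --- a strip whose fixed width does not match the linearly-opening cone $\caC_{(\theta_{j-0.2},\theta_{j+0.2}]}$ near the vertex --- so one must verify that it can genuinely be pushed into $\caA_{\caC_{(\theta_{j-0.2},\theta_{j+0.2}]}}$ after subtracting an inner automorphism, and that the inner corrections from the nested splits compose consistently. Controlling this interplay between a fixed propagation width and the cone geometry near the origin is exactly the point for which the auxiliary overlapping cones in the definition of $\sqaut$ were introduced, and is the one step requiring care rather than routine iteration.
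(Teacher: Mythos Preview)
Your strategy coincides with the paper's: factor $\tau_{1,0}^{\Psi}$ as a piece generated by the restriction of $\Psi$ to the coarse cones, composed with a correction that (modulo inner) lives on the overlapping cones. The paper streamlines your iterative ray-by-ray splitting by doing everything in one shot: it sets $\Psi^{(0)}(X;t):=\Psi(X;t)$ whenever $X$ lies in \emph{some} element of $\caC_0$ --- which is already the list of connected components $C_{[0,\theta_1],\sigma}$, $C_{(\theta_i,\theta_{i+1}],\sigma,\zeta}$, $C_{(\theta_3,\frac\pi2],\zeta}$ --- and $0$ otherwise. Then $\tau_{1,0}^{\Psi^{(0)}}$ is automatically the full tensor product (\ref{sqaut2}), and the entire remainder is a single $\tau_{0,1}^{\Xi^{(1)}}$ with $\Xi^{(1)}$ given explicitly by the Duhamel formula of Proposition~\ref{tokyo1877}. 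The geometric work (Lemmas~\ref{gomap} and~\ref{yomap}) is to show that the terms of $\Xi^{(1)}$ not contained in any $\caC_1$-cone have uniformly summable norm; the resulting bounded perturbation is peeled off as an inner automorphism via a cocycle argument, leaving $\tau_{0,1}^{\tilde\Xi}$, which factors over the $\caC_1$-components. This bypasses your separate ``split each cone into connected components'' step --- which, incidentally, is not interface-free as you claim, since the components all meet at the origin; the correction there is inner but not absent, and proving this requires precisely the same kind of summability estimate.

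Your treatment of the $\beta_g^U$-invariant case has a genuine gap. The second-layer factors $\alpha_{(\theta_{j-0.2},\theta_{j+0.2}]}$ \emph{cannot} be taken equal to $\tau^{\Psi\vert_\Gamma}_{1,0}$: if both layers were restrictions of $\Psi$ the composite would double-count the interaction on the overlap and could not equal $\tau_{1,0}^{\Psi}$ modulo inner. These factors arise from the interface correction, i.e.\ from $\tau^{\tilde\Xi}$ for a \emph{derived} interaction $\tilde\Xi$. To conclude they commute with $\beta_g^U$ you must check that $\tilde\Xi$ (hence $\Xi^{(1)}$) is itself $\beta_g^U$-invariant; this holds because $\Xi^{(1)}$ is built via (\ref{psisdef}) from $\Psi^{(1)}$, $\tau_{t,1}^{\Psi}$ and the conditional expectations $\Pi_{X(m)}$, all of which are $\beta_g^U$-invariant or commute with $\beta_g^U$. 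That is a separate verification (the paper's Step~5), not an instance of your observation about $\tau^{\Psi\vert_\Gamma}$.
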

\begin{proof}
Fix arbitrary 
\begin{align}\label{thetas}
0<\theta_{0.8}<\theta_1<\theta_{1.2}<\theta_{1.8}<\theta_2<\theta_{2.2}<
\theta_{2.8}<\theta_3<\theta_{3.2}<\frac\pi 2.
\end{align}
We show the existence of the decomposition 
\begin{align}
\begin{split}
\tau_{1,0}^\Psi=&\Ad(u)\circ
\lmk
 \alpha_{(0,\theta_1]}\otimes
\alpha_{(\theta_1,\theta_2]}\otimes \alpha_{(\theta_2,\theta_3]}
\otimes  \alpha_{(\theta_3,\frac\pi 2]}
\rmk\\
&\circ
\lmk
\alpha_{(\theta_{0.8},\theta_{1.2}]}
\alpha_{(\theta_{1.8},\theta_{2.2}]}\otimes \alpha_{(\theta_{2.8},\theta_{3.2}]}
\rmk,
\end{split}
\end{align}
with $\alpha$s of the form (\ref{sqaut2}) and (\ref{sqaut3}).
We follow the strategy in \cite{NO}.
\\
{\it Step 1.}
Fix some $0<\theta'<\theta$, and set
\begin{align}\label{apple}
\tilde F(r):=\frac{\exp\lmk {-r^{\theta'}}\rmk}{(1+r)^{4}}.
\end{align}
With suitably chosen constant $c_{1}>0$,
we have
\begin{align}\label{tildef}
\max\left\{ F\lmk\frac r 3\rmk, \lmk  F\lmk \lcm \frac r 3 \rcm \rmk\rmk^{\frac 12}\right\}\le
c_{1}\tilde F(r),\quad\quad  r\ge 0.
\end{align}
Namely, $c_{1}\tilde F(r)$ satisfy the condition of $\tilde F_{\theta}$
in Definition \ref{fadef} (ii) for our $F=\frac{\exp\lmk {-cr^{\theta}}\rmk}{(1+r)^{4}}$ and $\theta=\frac 12$.\\
Set 
\begin{align}
&\caC_{0}:=\left\{
\begin{gathered}
%C_{[0,\theta_{0}],\sigma}\;,
C_{[0,\theta_1],\sigma}\;,
C_{(\theta_1,\theta_2], \sigma,\zeta}\;,
C_{(\theta_2,\theta_3], \sigma,\zeta}\;,
C_{(\theta_3,\frac\pi 2], \zeta},\\
\sigma=L,R,\quad \zeta=D,U.
\end{gathered}
\right\},\\
&\caC_{1}:=\left\{
\begin{gathered}
%C_{(\theta_{-0.2},\theta_{0.2}),\sigma,\zeta}\;,
C_{(\theta_{0.8},\theta_{1.2}),\sigma,\zeta}\;,
C_{(\theta_{1.8},\theta_{2.2}), \sigma,\zeta}\;,
C_{(\theta_{2.8},\theta_{3.2}), \sigma,\zeta},\\
\sigma=L,R,\quad \zeta=D,U.
\end{gathered}
\right\}.
\end{align}
Define $\Psi^{(0)}, \Psi^{(1)}\in \hat\caB_{F}([0,1])$ by
 \begin{align}
 \begin{split}
 &\Psi^{(0)}\lmk X; t\rmk:=
 \left\{
\begin{gathered}
 \Psi\lmk X; t\rmk,\quad \text{if there exists a}\quad  C\in\caC_0\quad
 \text{such that }\quad X\subset C
 \\
 0,\quad \text{otherwise}
 \end{gathered}
 \right.,\\
& \Psi^{(1)}\lmk X; t\rmk:=\Psi^{(0)}\lmk X; t\rmk-\Psi\lmk X; t\rmk,
 \end{split}
 \end{align}
 for each $X\in{\mathfrak S}_{\bbZ^2}$, $t\in[0,1]$.

First we would like to represent $ \lmk \tau_{1,0}^{\Psi^{(0)}}\rmk^{-1}\circ \tau_{1,0}^{ \Psi}$
as some quasi-local automorphism.
 %Let $\{\Lambda_{n}\}_{n=1}^{\infty}\subset\caP_{0}\lmk {\bbZ^2}\rmk$ be an increasing sequence $\Lambda_{n}\nearrow{\bbZ^2}$. 
Let $t,s\in[0,1]$.
We apply Proposition \ref{tokyo1877} for $\Psi$ replaced by  $\Psi^{(1)}$,and $\tilde\Psi$
by $\Psi$.
%with $\caK_{t}$, $\caK_{t}^{(n)}$, $\Psi$ replaced by $\tau_{t,s}^{\Psi}$, $\tau_{t,s}^{(\Lambda_{n}), \Psi}$, $\Psi^{(1)}$.
%Note that the conditions {\it 1,2,3,4,5} of Theorem \ref{tsan} with $p=0,q=r=1$, $G=G_{F}$ is 
%clear from Theorem \ref{tni}.
%Note that we have $\Psi^{(1)}_{1}\in\caB_{F}([0,1])$ where $1=\max\{ p=0,q=1, r=1\}$.
Hence we set
\begin{align}\label{psisdef}
\Xi^{(s)}\lmk
Z, t
\rmk:=
\sum_{m\ge 0} \sum_{X\subset Z,\; X(m)=Z}
\Delta_{X(m)}\lmk
\tau_{t,s}^{\Psi}\lmk
 \Psi^{(1)}\lmk X; t\rmk
\rmk
\rmk
\end{align}
and
\begin{align}
\Xi^{(n)(s)}\lmk
Z, t
\rmk:=
\sum_{m\ge 0} \sum_{X\subset Z, X(m)\cap\Lambda_{n}=Z}
\Delta_{X(m)}\lmk
\tau_{t,s}^{(\Lambda_n)\Psi}\lmk
 \Psi^{{(1)}}\lmk X; t\rmk
\rmk
\rmk.
\end{align}

Corresponding to (\ref{psio}), we obtain
\begin{align}
\tau_{t,s}^{(\Lambda_n),\Psi} \lmk
H_{\Lambda_n,\Psi^{(1)}}(t)
\rmk
= H_{\Lambda_n,\Xi^{(n)(s)}}(t).
\end{align}
Applying Proposition \ref{tokyo1877}.
we have $\Xi^{(n)(s)}, \Xi^{(s)}\in \hat \caB_{\tilde F}([0,1])$, and
\begin{align}\label{convconv1}
\lim_{n\to\infty}\lV
\tau_{t,u}^{\Xi^{(n)(s)}}\lmk A\rmk
-\tau_{t,u}^{\Xi^{(s)}}\lmk A\rmk
\rV=0,\quad A\in\caA,\quad t,u\in [0,1]
\end{align}
holds.
%Note that
%\begin{align}
%\begin{split}
%\frac{d}{dt} \hat\tau_{t,s}^{(\Lambda_n), \Xi^{(n)(s)}}(A)
%&=-i\left[ H_{\Lambda_n, \Xi^{(n)(s)}}(t), \hat\tau_{t,s}^{(\Lambda_n), \Xi^{(n)(s)}}(A)
%\right] \\
%&=-i \left[
%\tau_{t,s}^{(\Lambda_n),\Psi} \lmk
%H_{\Lambda_n, \Psi^{(1)}}(t)
%\rmk,
%\hat\tau_{t,s}^{(\Lambda_n), \Xi^{(n)(s)}}(A)
%\right].
%\end{split}
%\end{align}
%On the other hand, we have
%\begin{align}
%\begin{split}
%\frac{d}{dt} \tau_{t,s}^{(\Lambda_n), \Psi}&\circ \lmk \tau_{t,s}^{(\Lambda_n),\Psi^{(0)}}\rmk^{-1}(A) \\
%&=\tau_{t,s}^{(\Lambda_n),\Psi}
%\lmk
%i\left[
%H_{\Lambda_n,\Psi}(t)-H_{\Lambda_n,\Psi^{(0)}}(t), \lmk \tau_{t,s}^{(\Lambda_n),\Psi^{(0)}}\rmk^{-1}(A)
%\right]
%\rmk \\
%&=-i\left[
%\tau_{t,s}^{(\Lambda_n),\Psi} \lmk
%H_{\Lambda_n, \Psi^{(1)}}(t)
%\rmk,
%\tau_{t,s}^{(\Lambda_n), \Psi}\circ \lmk \tau_{t,s}^{(\Lambda_n),\Psi^{(0)}}\rmk^{-1}(A)
%\right].
%\end{split}
%\end{align}
Two functions $\hat\tau_{t,s}^{(\Lambda_n), \Xi^{(n)(s)}}(A)$ and $ \tau_{t,s}^{(\Lambda_n), \Psi}\circ \lmk \tau_{t,s}^{(\Lambda_n),\Psi^{(0)}}\rmk^{-1}(A)$
satisfy the same differential equation and the initial condition.
Therefore we obtain
\begin{align}\label{ata}
\hat\tau_{t,s}^{(\Lambda_n), \Xi^{(n)(s)}}(A)= \tau_{t,s}^{(\Lambda_n), \Psi}\circ \lmk \tau_{t,s}^{(\Lambda_n),\Psi^{(0)}}\rmk^{-1}(A),\quad t\in [0,1],\quad A\in\caA.
\end{align}
From the fact that $
%\hat\tau_{t,u}^{\Xi^{(n)(s)}}\lmk A\rmk=
\hat \tau_{t,u}^{(\Lambda_n), \Xi^{(n)(s)}}=\tau_{u,t}^{(\Lambda_n), \Xi^{(n)(s)}}=\tau_{u,t}^{\Xi^{(n)(s)}}$ converges strongly to an automorphism $\tau_{u,t}^{\Xi^{(s)}}$ on 
$\caA$ (\ref{convconv1}), 
we have
\begin{align}\label{convconvh}
\lim_{n\to\infty}\lV
\hat\tau_{t,s}^{(\Lambda_{n})\Xi^{(n)(s)}}\lmk A\rmk
-\tau_{s,t}^{\Xi^{(s)}}\lmk A\rmk
\rV=0,\quad A\in\caA.
\end{align}
On the other hand, by Theorem~\ref{tni},  
we have for $t \in [0,1]$ and $A \in \caA$
\begin{align}
\lim_{n\to\infty}\lV
 \tau_{t,s}^{(\Lambda_n), \Psi}\circ \lmk \tau_{t,s}^{(\Lambda_n),\Psi^{(0)}}\rmk^{-1}(A)
 -\tau_{t,s}^{ \Psi}\circ \lmk \tau_{t,s}^{\Psi^{(0)}}\rmk^{-1}(A)
 \rV=0.
% ,\quad t\in [0,1],\quad A\in\caA.
\end{align}
Therefore, taking $n\to\infty$ limit in (\ref{ata}), we obtain
\begin{align}
\tau_{s,t}^{ \Xi^{(s)}}(A)= \tau_{t,s}^{\Psi}\circ \lmk \tau_{t,s}^{\Psi^{(0)}}\rmk^{-1}(A),\quad t,s\in [0,1],\quad A\in\caA.
\end{align}
Hence we have
\begin{align}
\tau_{s,t}^{\Psi}=\lmk \tau_{t,s}^{\Psi}\rmk^{-1}
=\lmk \tau_{t,s}^{\Psi^{(0)}}\rmk^{-1}\lmk\tau_{s,t}^{ \Xi^{(s)}}\rmk^{-1}=
\tau_{s,t}^{\Psi^{(0)}}\tau_{t,s}^{ \Xi^{(s)}}
\end{align}
In particular, we get
\begin{align}\label{ttt}
\tau_{1,0}^{\Psi}=
\tau_{1,0}^{\Psi^{(0)}}\tau_{0,1}^{ \Xi^{(1)}}.
\end{align}
{\it Step 2.}
We show 
\begin{align}\label{vtc}
\sum_{\substack{Z\in\pg, \\\not\exists C\in\caC_{1} s.t. Z\subset C}}
\sup_{t\in [0,1]}\lV \Xi^{(1)}\lmk Z,t\rmk\rV<\infty.
\end{align}
From this,
\begin{align}\label{vt}
V(t):=\sum_{\substack{Z\in\pg, \\\not\exists C\in\caC_{1} s.t. Z\subset C}}
\Xi^{(1)}\lmk Z,t\rmk
\in\caA
\end{align}
converges absolutely in the norm topology
and define an element in $\caA$.
Furthermore, for
\begin{align}
V_{n}(t):=\sum_{\substack{Z\in\pg,\; Z\subset \Lambda_{n} \\\not\exists C\in\caC_{1} s.t. Z\subset C} } 
\Xi^{(1)}\lmk Z,t\rmk
\in\caA_{\Lambda_{n}},\quad n\in\nan,
\end{align}
we get
\begin{align}\label{kub}
\lim_{n\to\infty}\sup_{t\in[0,1]}\lV V_{n}(t)-V(t)\rV=0,
\end{align}
from (\ref{vtc}).

To prove (\ref{vtc}), we first bound
\begin{align}
\begin{split}
&\sum_{\substack{Z\in\pg,
%\; Z\cap \Lambda_{n}^{c}\neq \emptyset
\\
\not\exists C\in\caC_{1} s.t. Z\subset C
} } 
\sup_{t\in[0,1]}
\lV
\Xi^{(1)}\lmk Z,t\rmk
\rV
\\
&\le
\sum_{\substack{Z\in\pg,
%\; Z\cap \Lambda_{n}^{c}\neq \emptyset
\\
\not\exists C\in\caC_{1} s.t. Z\subset C
} } 
\sum_{m\ge 0} \sum_{\substack{X: X\subset Z,\\ X(m)=Z}} 
\left[
\sup_{t\in[0,1]}\lV
\Delta_{X(m)}\lmk
\tau_{t,1}^{\Psi}\lmk
 \Psi^{(1)}\lmk X; t\rmk
\rmk
\rmk
\rV
\right]\\
&\le
\sum_{m\ge 0} \sum_{\substack{X:  
%X(m)\cap \Lambda_{n}^{c}\neq \emptyset
\\
\not\exists C\in\caC_{1} s.t. X(m)\subset C}} 
\sup_{t\in[0,1]}\lV
\Delta_{X(m)}\lmk
\tau_{t,1}^{\Psi}\lmk
 \Psi^{(1)}\lmk X; t\rmk
\rmk
\rmk
\rV\\
%&\le2
%\sum_{m\ge 0} \sum_{\substack{X:  X(m)\cap \Lambda_{n}^{c}\neq \emptyset\\
%\not\exists C\in\caC_{1} s.t. X(m)\subset C}} 
%\sup_{t\in[0,1]}\lV\Delta_{X(m)}\lmk
%\tau_{t,1}^{\Psi}\lmk
% \Psi^{(1)}\lmk X; t\rmk
%\rmk
%\rmk
%\rV\\
& \le
\sum_{m\ge 0} \sum_{\substack{X:  
%X(m)\cap \Lambda_{n}^{c}\neq \emptyset
\\
\not\exists C\in\caC_{1} s.t. X(m)\subset C}} 
 \left[
\sup_{t\in[0,1]}
\frac{8\lV \Psi^{(1)}\lmk X; t\rmk \rV}{C_{F}}\lmk e^{2I_{F}(\Psi)}-1\rmk\lv X\rv G_{F}\lmk m\rmk
\right]
\\
&=\frac{8}{C_{F}}\lmk e^{2I_F(\Psi)}-1\rmk
\sum_{m\ge 0} \sum_{\substack{X:  
%X(m)\cap \Lambda_{n}^{c}\neq \emptyset
\\
\not\exists C\in\caC_{1} s.t. X(m)\subset C}} 
\\
&\quad\quad\quad\quad\quad\quad\quad
\left[
\sup_{t\in[0,1]}\lmk \lV \Psi^{(1)}\lmk X; t\rmk \rV\rmk
\lv X\rv G_{F}\lmk m\rmk
\right].
\end{split}
\label{remi}
\end{align}
For the third inequality, we used Theorem~\ref{tni}~3.
For any cone $C_1,C_2$ of $\bbZ^2$ with apex at the origin, we set
\begin{align}\label{mc1c2}
M(C_1,C_2):=
\sum_{m\ge 0} 
\sum_{\substack{X:  
%X(m)\cap \Lambda_{n}^{c}\neq \emptyset
\\
\forall  C\in\caC_{1},
X\cap \lmk \lmk C^{c}\rmk(m)\rmk\neq\emptyset,\\
%\not\exists C\in\caC_{1} s.t. X(m)\subset C\\
 X\cap C_{1}\neq\emptyset,\quad  X\cap C_{2}\neq\emptyset
}} 
\left[
\sup_{t\in[0,1]}\lmk \lV \Psi^{(1)}\lmk X; t\rmk \rV\rmk
\lv X\rv G_{F}\lmk m\rmk
\right].
\end{align}
From the definition of $ \Psi^{(1)}$, we have
$
 \Psi^{(1)}\lmk X; t\rmk =0, 
$
unless
$X$ has a non-empty intersection with at least two of
elements in $\caC_{0}$.
Therefore, if $X$ gives a non-zero contribution in 
(\ref{remi}), then it has to satisfy
\begin{align*}
% X(m)\cap \Lambda_{n}^{c}\neq \emptyset,\\
X\cap \lmk \lmk C^{c}\rmk(m)\rmk\neq\emptyset,\quad \text{for all}\quad  C\in\caC_{1},\\
\exists C_{1},C_{2}\in\caC_{0}\quad\text{such that},\; C_1\neq C_2, \
\quad X\cap C_{1}\neq\emptyset,\quad  X\cap C_{2}\neq\emptyset.
\end{align*}
Hence we have
\begin{align}
\begin{split}
&(\ref{remi})\le
\frac{8}{C_{F}}\lmk e^{2I_F(\Psi)}-1\rmk
\sum_{\substack{C_{1},C_{2}\in\caC_{0}\\C_1\neq C_2
} }
M({C_1,C_2})
\end{split}
\end{align}
Hence it suffice so show that
$M({C_1,C_2})<\infty$ for all $C_{1},C_{2}\in\caC_{0}$
with $C_1\neq C_2$.

In order to proceed, we prepare two estimates.
We will freely identify $\bbC$ and $\bbR^{2}$ in an obvious manner.
In particular, $\arg z$ of $z\in\bbZ^{2}\subset\bbR^{2}$
in the following definition
is considered with this identification:
for $\varphi_{1}<\varphi_{2}$, we set 
\begin{align}\label{ccheck}
\check C_{[\varphi_{1},\varphi_{2}]}:=
\left\{
z\in \bbZ^{2}\mid \arg z\in [\varphi_{1},\varphi_{2}]
\right\}.
\end{align}
We define $\check C_{(\varphi_{1},\varphi_{2})}$ etc. analogously.
Set
\begin{align}\label{czdefn}
\begin{split}
&{c^{(0)}}_{\zeta_1,\zeta_2,\zeta_3,\zeta_4}
:=\sqrt{1-\max\left\{\cos (\zeta_3-\zeta_2),\cos (\zeta_4-\zeta_1),0\right\}},\quad
\zeta_1,\zeta_2,\zeta_3,\zeta_4 \in \bbR.
%&K_{F}^{(0)}(c)
%:=\sum_{r=0}^\infty
%F\lmk 
%c
%r
%\rmk (r+2)^3,\quad c>0,
%{K^{(1)}}(m,c):=&64\cdot 24\cdot (m+1)
%\sum_{r_1=0}^\infty\sum_{\substack{r\in{\bbZ_{\ge 0}}: \\\sqrt{r^2+r_1^2}c\ge (m+1)} }
%(r_1+1)
% F\lmk
%  \sqrt{r^2+r_1^2}c-(m+1)
%  \rmk\\
%&  +
%3\cdot 64\cdot  48\cdot \frac{(m+1)^4}{c^3}F(0),\\
%&\quad c>0,\quad m\in\bbZ_{\ge 0}.
\end{split}
\end{align}
%for $F\in\caF_{a}$.
%From  (\ref{bzeroest}), we have $K_{F}^{(0)}(c)<\infty$ for any $c>0$ and $F\in\caF_{a}$.
\begin{lem}\label{gomap}
Let $\varphi_1<\varphi_2<\varphi_3<\varphi_4$ with $\varphi_4-\varphi_1<2\pi$.
Then
\begin{align}
\begin{split}
&b_0(\varphi_1,\varphi_2,\varphi_3,\varphi_4)\\
&:=
\sum_{m\ge 0} \sum_{\substack{X: \\
X\cap \check
 C_{[\varphi_1,\varphi_2]}\neq\emptyset,\\ X\cap \check
 C_{[\varphi_3,\varphi_4]}\neq\emptyset
}} 
\left[
 \sup_{t\in[0,1]}\lmk\lV \Psi\lmk X; t\rmk\rV\rmk \lv X\rv G_{F}\lmk m\rmk
\right]
\notag\\
&\le
(64)^3
\frac {3^{4}\kappa_{1, 4, F}}{\lmk c^{(0)}_{\varphi_1,\varphi_2,\varphi_3,\varphi_4}\rmk ^{4}}
\lmk \lV\lv \Psi_1\rV\rv_{F}\rmk
\lmk \sum_{m\ge 0}G_{F}\lmk m\rmk\rmk<\infty.
%&
%\le
% (64)^3\lmk \lV\lv \Psi_1\rV\rv_F\rmk\lmk \sum_{m\ge 0}G_{F}\lmk m\rmk\rmk
%K_{F}^{(0)}(c^{(0)}_{\varphi_1,\varphi_2,\varphi_3,\varphi_4})<\infty.
%
% (64)^2\lmk \lV\lv \Psi_1^{(1)}\rV\rv\rmk\lmk \sum_{m\ge 0}G_{F}\lmk m\rmk\rmk\\
%&\sum_{r=0}^\infty
%F\lmk 
%\sqrt{1- \max\left\{
%\cos\lmk \varphi_3-\varphi_2\rmk,
%\cos\lmk \varphi_4-\varphi_1\rmk,0
%\right\}}
%r)
%\rmk (r+1)^3\\
\end{split}
\end{align}
\end{lem}
\begin{proof}
%Let $x=s_1e^{i\phi_1}\in \check
%{C}_{[\varphi_1,\varphi_2]}$ 
%and $y=s_2e^{i\phi_2}\in \check
%{C}_{[\varphi_3,\varphi_4]}$ with $s_1,s_2\ge 0$.
%If $\cos\lmk \phi_2-\phi_1\rmk\ge 0$, then
%we have
%\begin{align}
%\begin{split}
%&{\dist}(x,y)=\sqrt{s_1^2+s_2^2-2s_1s_2\cos\lmk \phi_2-\phi_1\rmk}
%\ge 
%\sqrt{s_1^2+s_2^2}
%\sqrt{1-\cos\lmk \phi_2-\phi_1\rmk}
%\\
%%&\ge \sqrt{s_1^2+s_2^2-2s_1s_2
%%\max\left\{
%%\cos\lmk \varphi_3-\varphi_2\rmk,
%%\cos\lmk \varphi_4-\varphi_1\rmk
%%\right\}}\notag\\
%&\ge \sqrt{1- \max\left\{
%\cos\lmk \varphi_3-\varphi_2\rmk,
%\cos\lmk \varphi_4-\varphi_1\rmk,0
%\right\}}
%\sqrt{s_1^2+s_2^2}
%.
%\end{split}
%\end{align}
%If $\cos\lmk \phi_2-\phi_1\rmk<0$, then
%we have
%\begin{align}
%\begin{split}
%&{\dist}(x,y)=\sqrt{s_1^2+s_2^2-2s_1s_2\cos\lmk \phi_2-\phi_1\rmk}
%\ge \sqrt{s_1^2+s_2^2}.\\
%%&\ge 
%% \sqrt{1- \max\left\{
%%\cos\lmk \varphi_3-\varphi_2\rmk,
%%\cos\lmk \varphi_4-\varphi_1\rmk
%%\right\}}
%%\sqrt{s_1^2+s_2^2}.
%%&\ge \sqrt{s_1^2+s_2^2-2s_1s_2
%%\max\left\{
%%\cos\lmk \varphi_3-\varphi_2\rmk,
%%\cos\lmk \varphi_4-\varphi_1\rmk
%%\right\}}\notag\\
%\end{split}
%\end{align}
%Hence for any $x=s_1e^{i\phi_1}\in \check
%{C}_{[\varphi_1,\varphi_2]}$
%and $y=s_2e^{i\phi_2}\in \check
%{C}_{[\varphi_3,\varphi_4]}$ with $s_1,s_2\ge 0$
%we have
%\begin{align}
%{\dist}(x,y)\ge  \sqrt{1- \max\left\{
%\cos\lmk \varphi_3-\varphi_2\rmk,
%\cos\lmk \varphi_4-\varphi_1\rmk,0
%\right\}}
%\sqrt{s_1^2+s_2^2}
%=c^{(0)}_{\varphi_1,\varphi_2,\varphi_3,\varphi_4}\sqrt{s_1^2+s_2^2}.
%\end{align}
Substituting Lemme \ref{gomap1}, we obtain
\begin{align}
&b_0(\varphi_1,\varphi_2,\varphi_3,\varphi_4)\\
&:=
\sum_{m\ge 0} \sum_{\substack{X: \\
X\cap \check
{C}_{[\varphi_1,\varphi_2]}\neq\emptyset,\\ X\cap \check
{C}_{[\varphi_3,\varphi_4]}\neq\emptyset
}} 
\left[
\sup_{t\in[0,1]}\lmk \lV \Psi\lmk X; t\rmk \rV\rmk
\lv X\rv G_{F}\lmk m\rmk
\right]
\notag\\
&\le 
\sum_{m\ge 0} \sum_{\substack{x\in  \check
{C}_{[\varphi_1,\varphi_2]},\\y\in \check
{C}_{[\varphi_3,\varphi_4]}}}
\sum_{X\ni x,y}\left[
\sup_{t\in[0,1]}\lmk \lV \Psi\lmk X; t\rmk \rV\rmk
\lv X\rv G_{F}\lmk m\rmk
\right]\notag\\
&\le \lmk \lV\lv \Psi_1\rV\rv_{F}\rmk
 \sum_{\substack{x\in \check
{C}_{[\varphi_1,\varphi_2]},\\y\in \check
{C}_{[\varphi_3,\varphi_4]}}}
F\lmk {\dist}(x,y)\rmk
\lmk \sum_{m\ge 0}G_{F}\lmk m\rmk\rmk\notag\\
&\le
(64)^3
\frac {3^{4}\kappa_{1, 4, F}}{\lmk c^{(0)}_{\varphi_1,\varphi_2,\varphi_3,\varphi_4}\rmk ^{4}}
\lmk \lV\lv \Psi_1\rV\rv_{F}\rmk
\lmk \sum_{m\ge 0}G_{F}\lmk m\rmk\rmk<\infty.
%&\le \lmk \lV\lv \Psi_1^{(1)}\rV\rv\rmk\lmk \sum_{m\ge 0}G_{F}\lmk m\rmk\rmk
%\sum_{r_1=0}^\infty\sum_{r_2=0}^\infty
%\sum_{\substack{x\in S_{r_1,1}^{[\varphi_1,\varphi_2]}\cap \bbZ^2\\
%y\in S_{r_2,1}^{[\varphi_3,\varphi_4]}\cap \bbZ^2}}
%F\lmk {\dist}(x,y)\rmk
%\notag\\
%\begin{split}
%&\le\lmk \lV\lv \Psi_1^{(1)}\rV\rv\rmk\lmk \sum_{m\ge 0}G_{F}\lmk m\rmk\rmk\\
%&\sum_{r_1=0}^\infty\sum_{r_2=0}^\infty
%F\lmk 
%{c^{(0)}_{\varphi_1,\varphi_2,\varphi_3,\varphi_4}}
%\sqrt{r_1^2+r_2^2}.)
%\rmk\#\lmk
%S_{r_1,1}^{[\varphi_1,\varphi_2]}\cap \bbZ^2
%\rmk
%\#\lmk
%S_{r_2,1}^{[\varphi_3,\varphi_4]}\cap \bbZ^2
%\rmk
%\end{split}\\
%\begin{split}
%&\le (64)^2\lmk \lV\lv \Psi_1^{(1)}\rV\rv\rmk\lmk \sum_{m\ge 0}G_{F}\lmk m\rmk\rmk\\
%&\sum_{r_1=0}^\infty\sum_{r_2=0}^\infty
%F\lmk 
%{c^{(0)}_{\varphi_1,\varphi_2,\varphi_3,\varphi_4}}
%\sqrt{r_1^2+r_2^2})
%\rmk(r_1+1)(r_2+1)\\
%& 
%\le (64)^2\lmk \lV\lv \Psi_1^{(1)}\rV\rv\rmk\lmk \sum_{m\ge 0}G_{F}\lmk m\rmk\rmk\\
%&\sum_{r=0}^\infty
%F\lmk 
%{c^{(0)}_{\varphi_1,\varphi_2,\varphi_3,\varphi_4}}
%r)
%\rmk (r+2)^2
%\#\lmk
%S_{r}^{0,\frac\pi 2}\cap \bbZ^2
%\rmk
%\\
%&\le
%(64)^3\lmk \lV\lv \Psi_1^{(1)}\rV\rv\rmk\lmk \sum_{m\ge 0}G_{F}\lmk m\rmk\rmk\\
%&\sum_{r=0}^\infty
%F\lmk 
%{c^{(0)}_{\varphi_1,\varphi_2,\varphi_3,\varphi_4}}
%r)
%\rmk (r+2)^3
%\\
%&= (64)^3\lmk \lV\lv \Psi_1^{(1)}\rV\rv\rmk\lmk \sum_{m\ge 0}G_{F}\lmk m\rmk\rmk
%K^{(0)}(c^{(0)}_{\varphi_1,\varphi_2,\varphi_3,\varphi_4}).
%\end{split}
 \end{align}
 We used Lemma \ref{gomap} at the last inequality.
 The last value is finite by (\ref{ugugug}) for our $F\in\caF_{a}$.
\end{proof}
Set\begin{align}
{c^{(1)}}_{\zeta_1,\zeta_2,\zeta_3}
:=\sqrt{1-\max\left\{\cos (\zeta_1-\zeta_2),\cos (\zeta_1-\zeta_3)\right\}},\quad
\zeta_1,\zeta_2,\zeta_3\in [0,2\pi).
\end{align}
\begin{lem}\label{yomap}
For $\varphi_1<\varphi_2<\varphi_3$ with $\varphi_3-\varphi_1<\frac \pi 2$, we have
\begin{align}
\begin{split}
&b_1(\varphi_1,\varphi_2,\varphi_3)
:=\sum_{m\ge 0} \sum_{\substack{X: \\
X\subset \check
{C}_{[\varphi_1,\varphi_3]}\\
X\cap \check
{C}_{[\varphi_1,\varphi_2]}\neq\emptyset\\
X\cap \check
 C_{[\varphi_2,\varphi_3]}\neq\emptyset\\
X\cap \lmk \lmk
\lmk \check
 C_{(\varphi_1,\varphi_3)}\rmk^c\rmk(m)\rmk\neq \emptyset\\
}} 
\left[
\sup_{t\in[0,1]}\lmk \lV \Psi\lmk X; t\rmk \rV\rmk
\lv X\rv G_{F}\lmk m\rmk
\right]\\
%&\le
%\lmk \lV\lv \Psi_1^{(1)}\rV\rv\rmk
%\sum_{m\ge 0}  G_{F}\lmk m\rmk
%\lmk
%{K^{(1)}}(m,{c^{(1)}}_{\varphi_1, \varphi_2,\varphi_3})
%+{K^{(1)}}(m,{c^{(1)}}_{\varphi_3, \varphi_1,\varphi_2})
%\rmk
&\le64\cdot 144\cdot 24\cdot 
\lmk
\pi \kappa_{1,2,F}+F(0)
\rmk
\lmk \lV\lv \Psi_1\rV\rv_F\rmk
\lmk
\sum_{m\ge 0}  (m+1)^{4}G_{F}\lmk m\rmk\rmk
\lmk
\lmk {c^{(1)}}_{\varphi_{1},\varphi_{2},\varphi_{3}}\rmk^{-4}
+\lmk {c^{(1)}}_{\varphi_{3},\varphi_{1},\varphi_2}\rmk^{-4}
\rmk
<\infty.
\end{split}
\end{align}
\end{lem}
\begin{proof}
Set
\begin{align}
L_\varphi:=
\left\{
z\in\bbR^2\mid
\arg z=\varphi
\right\},\quad \varphi\in [0,2\pi).
\end{align}
Note that
if $X\in{\mathfrak S}_{\bbZ^2}$ satisfies
$X\subset \check
{C}_{[\varphi_1,\varphi_3]}$ and 
$X\cap \lmk \lmk
\lmk \check
 C_{(\varphi_1,\varphi_3)}\rmk^c\rmk(m)\rmk\neq \emptyset$,
then
we have
\begin{align}
d(X, L_{\varphi_1})\le m,\quad\text{or}\quad
d(X, L_{\varphi_3})\le m.
\end{align}
%
%X\subset \hat{C}_{[\varphi_1,\varphi_3]}\\
%X\cap \hat{C}_{[\varphi_1,\varphi_2]}\neq\emptyset\\
%X\cap C_{[\varphi_2,\varphi_3]}\neq\emptyset\\
%X\cap \lmk
%C_{(\varphi_1,\varphi_3)}^c(m)\rmk\neq \emptyset\\
%
Therefore, we have
\begin{align}
\begin{split}
&\sum_{m\ge 0} \sum_{\substack{X: \\
X\subset \check
{C}_{[\varphi_1,\varphi_3]}\\
X\cap \check
{C}_{[\varphi_1,\varphi_2]}\neq\emptyset\\
X\cap \check
 C_{[\varphi_2,\varphi_3]}\neq\emptyset\\
X\cap \lmk \lmk
\lmk \check
 C_{(\varphi_1,\varphi_3)}\rmk^c\rmk(m)\rmk\neq \emptyset\\
}} 
\left[
\sup_{t\in[0,1]}\lmk \lV \Psi\lmk X; t\rmk \rV\rmk
\lv X\rv G_{F}\lmk m\rmk
\right]\\
&\le\sum_{m\ge 0}  G_{F}\lmk m\rmk
\lmk
 \sum_{\substack{X: \\
%X\subset \hat{C}_{[\varphi_1,\varphi_3]}\\
%X\cap \hat{C}_{[\varphi_1,\varphi_2]}\neq\emptyset\\
X\cap \check C_{[\varphi_2,\varphi_3]}\neq\emptyset\\
%X\cap \lmk \lmk\lmk C_{(\varphi_1,\varphi_3)}\rmk^c\rmk(m)\rmk\neq \emptyset\\
d(X, L_{\varphi_1})\le m
}} 
+\sum_{\substack{X: \\
%X\subset \hat{C}_{[\varphi_1,\varphi_3]}\\
X\cap \check{C}_{[\varphi_1,\varphi_2]}\neq\emptyset\\
%X\cap C_{[\varphi_2,\varphi_3]}\neq\emptyset\\
%X\cap \lmk \lmk\lmk C_{(\varphi_1,\varphi_3)}\rmk^c\rmk(m)\rmk\neq \emptyset\\
d(X, L_{\varphi_3})\le m
}} \rmk
\left[
\sup_{t\in[0,1]}\lmk \lV \Psi\lmk X; t\rmk \rV\rmk
\lv X\rv
\right]\\
&\le
\sum_{m\ge 0}  G_{F}\lmk m\rmk
\lmk
\sum_{\substack{x\in \check
 C_{[\varphi_2,\varphi_3]}\\
y\in  L_{\varphi_1}(m)
}}
+\sum_{\substack{x\in \check
{C}_{[\varphi_1,\varphi_2]}\\
y\in  L_{\varphi_3}(m)
}}
 \rmk
 \sum_{\substack{X: X\ni x,y
%X\subset \hat{C}_{[\varphi_1,\varphi_3]}\\
%X\cap \hat{C}_{[\varphi_1,\varphi_2]}\neq\emptyset\\
%X\cap C_{[\varphi_2,\varphi_3]}\neq\emptyset\\
%X\cap \lmk \lmk\lmk C_{(\varphi_1,\varphi_3)}\rmk^c\rmk(m)\rmk\neq \emptyset\\
%d(X, L_{\varphi_1})\le m
}} 
\left[
\sup_{t\in[0,1]}\lmk \lV \Psi\lmk X; t\rmk \rV\rmk
\lv X\rv
\right]\\
&\le
\lmk \lV\lv \Psi_1\rV\rv_F\rmk
\sum_{m\ge 0}  G_{F}\lmk m\rmk
\lmk
\sum_{\substack{x\in \check
 C_{[\varphi_2,\varphi_3]}\\
y\in  L_{\varphi_1}(m)
}}
+\sum_{\substack{x\in \check
{C}_{[\varphi_1,\varphi_2]}\\
y\in  L_{\varphi_3}(m)
}}
 \rmk
F\lmk {\dist}(x,y)\rmk\\
%&\le
%\lmk \lV\lv \Psi_1^{(1)}\rV\rv\rmk
%\sum_{m\ge 0}  G_{F}\lmk m\rmk\\
%&\lmk
%\sum_{r_1=0}^\infty
%\sum_{x\in S_{r_1}^{\varphi_2,\varphi_3}}\sum_{\substack{
%y\in  L_{\varphi_1}(m)
%}}F\lmk {\dist}(x,y)\rmk+
%\sum_{r_1=0}^\infty
%\sum_{x\in S_{r_1,1}^{[\varphi_1,\varphi_2]}}\sum_{\substack{
%y\in  L_{\varphi_3}(m)
%}}F\lmk {\dist}(x,y)\rmk
%\rmk \\
%&\le
%\lmk \lV\lv \Psi_1^{(1)}\rV\rv\rmk
%\sum_{m\ge 0}  G_{F}\lmk m\rmk
%\lmk
%\sum_{\substack{x\in C_{[\varphi_2,\varphi_3]}\\
%y\in  L_{\varphi_1}(m)
%}}
%+\sum_{\substack{x\in \hat{C}_{[\varphi_1,\varphi_2]}\\
%y\in  L_{\varphi_3}(m)
%}}
% \rmk
%F\lmk {\dist}(x,y)\rmk\\
&\le
64\cdot 144\cdot 24\cdot 
\lmk
\pi \kappa_{1,2,F}+F(0)
\rmk
\lmk \lV\lv \Psi_1\rV\rv_F\rmk
\lmk
\sum_{m\ge 0}  (m+1)^{4}G_{F}\lmk m\rmk\rmk
\lmk
\lmk {c^{(1)}}_{\varphi_{1},\varphi_{2},\varphi_{3}}\rmk^{-4}
+\lmk {c^{(1)}}_{\varphi_{3},\varphi_{1},\varphi_2}\rmk^{-4}
\rmk
\\
%\lmk
%{K^{(1)}}(m,{c^{(1)}}_{\varphi_1, \varphi_2,\varphi_3})
%+{K^{(1)}}(m,{c^{(1)}}_{\varphi_3, \varphi_1,\varphi_2})
%\rmk
\\
\end{split}
\end{align}
At the last inequality, we used Lemma \ref{lcone} with
$\varphi_3-\varphi_1<\frac\pi 2$.
Because of $\varphi_3-\varphi_1<\frac\pi 2$ and (\ref{ugugug}), the last value is finite.
\end{proof}
Now let us go back to the estimate of (\ref{mc1c2}).
If $C_1,C_2\in \caC_0$ are
$C_1=\check
{C}_{[\varphi_1,\varphi_2]}$,
$C_2=\check
{C}_{[\varphi_3,\varphi_4]}$
with  $\varphi_1<\varphi_2<\varphi_3<\varphi_4$, $\varphi_4-\varphi_1<2\pi$,
then from Lemma \ref{gomap},
we have
\begin{align}
\begin{split}
&
M(C_1,C_2)
%\sum_{m\ge 0} 
%\sum_{\substack{X:  
%%X(m)\cap \Lambda_{n}^{c}\neq \emptyset
%\\
%\forall  C\in\caC_{1},
%X\cap \lmk \lmk C^{c}\rmk(m)\rmk\neq\emptyset,\\
%%\not\exists C\in\caC_{1} s.t. X(m)\subset C\\
% X\cap C_{1}\neq\emptyset,\quad  X\cap C_{2}\neq\emptyset
%}} 
%\left[
%\sup_{t\in[0,1]}\lmk \lV \Psi^{(1)}\lmk X; t\rmk \rV\rmk
%\lv X\rv G_{F}\lmk m\rmk
%\right]\\
%
%
%M(C_1,C_2):=
%\sum_{m\ge 0} 
%\sum_{\substack{X:  X(m)\cap \Lambda_{n}^{c}\neq \emptyset\\
%\not\exists C\in\caC_{1} s.t. X(m)\subset C\\
% X\cap C_{1}\neq\emptyset,\quad  X\cap C_{2}\neq\emptyset
%}} 
%\left[
%\sup_{t\in[0,1]}\lmk \lV \Psi^{(1)}\lmk X; t\rmk \rV\rmk
%\lv X\rv G_{F}\lmk m\rmk
%\right]\\
\le
b_0(\varphi_1,\varphi_2,\varphi_3,\varphi_4)
%&:=\sum_{m\ge 0} \sum_{\substack{X: \\
%X\cap \hat{C}_{[\varphi_1,\varphi_2]}\neq\emptyset,\\ X\cap \hat{C}_{[\varphi_3,\varphi_4]}\neq\emptyset
%}} 
%\left[
%\sup_{t\in[0,1]}\lmk \lV \Psi^{(1)}\lmk X; t\rmk \rV\rmk
%\lv X\rv G_{F}\lmk m\rmk
%\right]\\
%&\le 
%(64)^2\lmk \lV\lv \Psi_1^{(1)}\rV\rv\rmk\lmk \sum_{m\ge 0}G_{F}\lmk m\rmk\rmk
%K^{(0)}(c^{(0)}_{\varphi_1,\varphi_2,\varphi_3,\
%varphi_4})
<\infty.
\end{split}
\end{align}
Now suppose that $C_1,C_2\in \caC_0$ are
$C_1=\check
{C}_{[\varphi_1,\varphi_2]}$,
$C_2=\check
 C_{[\varphi_2,\varphi_3]}$
with  $\varphi_1<\varphi_2<\varphi_3$, $\varphi_3-\varphi_1<2\pi$.
By the definition of $\caC_0$ and $\caC_1$,
there is some $C=C_{(\zeta_1,\zeta_2)}\in\caC_1$
such that $\varphi_1<\zeta_1<\varphi_2<\zeta_2<\varphi_3$ and $\zeta_2-\zeta_1<\frac\pi 2$.
For $X\in{\mathfrak S}_{\bbZ^2}$ 
to give a nonzero contribution in (\ref{mc1c2}),
it have to satisfy
\begin{align}
X(m)\cap \lmk \check
 C_{[\zeta_1,\zeta_2]}\rmk^c\neq\emptyset ,\quad
X\cap \check
{C}_{[\varphi_1,\varphi_2]}\neq\emptyset,\quad  X\cap \check C_{[\varphi_2,\varphi_3]}\neq\emptyset.
\end{align}
For such an $X$, one of the following occurs:
\begin{description}
\item[(i)] $X\cap \check C_{[\zeta_2,\varphi_3]}\neq\emptyset$ and $X\cap \check {C}_{[\varphi_1,\varphi_2]}\neq\emptyset$
\item[(ii)] $X\cap \check C_{[\varphi_1,\zeta_1]}\neq\emptyset$ and
$X\cap \check C_{[\varphi_2,\varphi_3]}\neq\emptyset$
\item[(iii)] $X\cap \check C_{[\varphi_2, \zeta_2]}\neq\emptyset$ 
(and 
$X\cap\check  C_{[\zeta_1,\varphi_2]}\neq\emptyset$ )
and
$X\cap \check C_{[\varphi_3,\varphi_1+2\pi]}\neq\emptyset$,
%\item[(iv)]($X\cap C_{[\varphi_2, \zeta_2]}\neq\emptyset$ and )
%$X\cap C_{\zeta_1,\varphi_2}\neq\emptyset$ and
%$X\cap C_{[\zeta_2,\varphi_3]}\neq\emptyset$
%\item[(v)]
%$X\cap C_{[\varphi_2, \zeta_2]}\neq\emptyset$ 
%(and 
%$X\cap C_{\zeta_1,\varphi_2}\neq\emptyset$ )
%and 
%$X\cap C_{[\varphi_1, \zeta_1]}\neq\emptyset$ 
\item[(iv)]
$X\subset \check C_{\zeta_1,\zeta_2}$,
$X\cap \lmk \lmk \check C_{\zeta_1,\zeta_2}\rmk^c\rmk(m)\neq\emptyset$,
$X\cap \check C_{[\varphi_2, \zeta_2]}\neq\emptyset$ ,
and 
$X\cap \check C_{[\zeta_1,\varphi_2]}\neq\emptyset$.
\end{description}
Hence we get 
\begin{align}
\begin{split}
&M({C_1,C_2})\\
&\le 
b_0(\varphi_1,\varphi_2, \zeta_2,\varphi_3)
+b_0(\varphi_1,\zeta_1, \varphi_2,\varphi_3)
+b_0(\varphi_2, \zeta_2, \varphi_3,\varphi_1+2\pi)\\
&+b_1(\zeta_1,\varphi_2,\zeta_2)\\
&<\infty.
\end{split}
\end{align}
Hence we have proven the claim of {\it Step 2}.\\
{\it Step 3.}
Next we set
\begin{align}\label{tpsidef}
\tilde\Xi(Z,t):=\left\{
\begin{gathered}
\Xi^{(1)}(Z,t),\quad\text{if}\;\quad \;\exists C\in\caC_{1} \;s.t. \; Z\subset C\\
0\quad\text{otherwise}
\end{gathered}.\right.
\end{align}

 Clearly, we have $\tilde \Xi\in \hat\caB_{\tilde F}([0,1])$.
 Note that
 \begin{align}\label{hvh}
 H_{\Lambda_{n}, \tilde\Xi}(t)+V_{n}(t)
 = H_{\Lambda_{n}, \Xi^{(1)}}(t).
 \end{align}
 
 As a uniform limit of $[0,1]\ni t\mapsto V_{n}(t)\in \caA$, (\ref{kub}),
$[0,1]\ni t\mapsto V(t)\in \caA$ is norm-continuous.
Because of $\tilde \Xi\in \hat\caB_{\tilde F}([0,1])$,
$[0,1]\ni t\mapsto \tau_{t,s}^{\tilde\Xi}\lmk V(t)\rmk\in \caA$ is also norm-continuous,
for each $s\in[0,1]$.
Therefore, for each $s\in [0,1]$, there is a unique norm-differentiable map 
$[0,1]\ni t \mapsto W^{(s)}(t) \in \caU\lmk \caA\rmk$
such that
\begin{align}
\frac{d}{dt} W^{(s)}(t)=-i \tau_{t,s}^{\tilde\Xi}\lmk V(t)\rmk W^{(s)}(t),\quad
W^{(s)}(s)=\unit.
\end{align}
It is given by
\begin{align}\label{wsexp}
W^{(s)}(t)
:=\sum_{k=0}^{\infty }(-i)^{k}
\int_{s}^{t}ds_{1}\int_{s}^{s_{1}}ds_{2}\cdots \int_{s}^{s_{k-1}}ds_{k}
\tau_{s_{1},s}^{\tilde\Xi}\lmk V(s_{1})\rmk
\cdots \tau_{s_{k},s}^{\tilde\Xi}\lmk V(s_{k})\rmk.
\end{align}
Analogously, for each $s\in[0,1]$ and $n\in\nan$, we define
a unique norm-differentiable map from
$[0,1]$ to $ \caU\lmk \caA\rmk$
such that
\begin{align}
\frac{d}{dt} W_{n}^{(s)}(t)=-i \tau_{t,s}^{(\Lambda_{n})\tilde\Xi}\lmk V_{n}(t)\rmk W_{n}^{(s)}(t),\quad
W_{n}^{(s)}(s)=\unit.
\end{align}
It is given by
\begin{align}\label{wsexpn}
W_{n}^{(s)}(t)
:=\sum_{k=0}^{\infty }(-i)^{k}
\int_{s}^{t}ds_{1}\int_{s}^{s_{1}}ds_{2}\cdots \int_{s}^{s_{k-1}}ds_{k}
\tau_{s_{1},s}^{(\Lambda_{n})\tilde\Xi}\lmk V_{n}(s_{1})\rmk
\cdots \tau_{s_{k},s}^{(\Lambda_{n})\tilde\Xi}\lmk V_{n}(s_{k})\rmk.
\end{align}
By the uniform convergence (\ref{kub}) and Lemma \ref{tni},
we have 
\begin{align}\lim_{n\to\infty}
\sup_{t\in[0,1]}\lV
\tau_{t,s}^{(\Lambda_{n})\tilde\Xi}\lmk V_{n}(t)\rmk
-\tau_{t,s}^{\tilde\Xi}\lmk V(t)\rmk\rV=0.
\end{align}
From this and (\ref{wsexp}), (\ref{wsexpn}),
we obtain
\begin{align}
\lim_{n\to\infty}\sup_{t\in[0,1]}\lV W_{n}^{(s)}(t)- W^{(s)}(t) \rV =0.
\end{align}
This and Theorem \ref{tni} 4 for $\Xi^{(1)}, \tilde \Xi\in \caB_{\tilde F}([0,1])$
imply
\begin{align}\label{limlim}
\begin{split}
\lim_{n\to\infty}  \tau_{s,t}^{(\Lambda_{n}), \tilde\Xi}\circ \Ad \lmk W_{n}^{(s)}(t)\rmk (A)=
\tau_{s,t}^{ \tilde\Xi}\circ \Ad \lmk W^{(s)}(t)\rmk (A),\\
\lim_{n\to\infty}  \tau_{s,t}^{(\Lambda_{n}), \Xi^{(1)}}(A)=
\tau_{s,t}^{ \Xi^{(1)}}(A),
\end{split}
\end{align}
for any $A\in\caA$.

Note that for any $A\in\caA$
\begin{align}
\begin{split}
&\frac{d}{dt} \tau_{s,t}^{(\Lambda_{n}), \tilde\Xi}\circ \Ad \lmk W_{n}^{(s)}(t)\rmk (A)\\
&=-i\left[
H_{\Lambda_{n}, \tilde\Xi}(t),
\tau_{s,t}^{(\Lambda_{n}), \tilde\Xi}\circ \Ad \lmk W_{n}^{(s)}(t)\rmk (A)
\right] \\
&\quad\quad\quad\quad\quad\quad\quad
-i\tau_{s,t}^{(\Lambda_{n}), \tilde\Xi}\lmk
\left[
\tau_{t,s}^{(\Lambda_{n}), \tilde\Xi}\lmk V_{n}(t)\rmk,
\Ad \lmk W_{n}^{(s)}(t)\rmk (A)
\right]
\rmk
\\
&=-i\left[
H_{\Lambda_{n}, \tilde\Xi}(t)+V_{n}(t),
\tau_{s,t}^{(\Lambda_{n}), \tilde\Xi}\circ \Ad \lmk W_{n}^{(s)}(t)\rmk (A)
\right]\nonumber\\
&=-i\left[
H_{\Lambda_{n}, \Xi^{(1)}}(t),
\tau_{s,t}^{(\Lambda_{n}), \tilde\Xi}\circ \Ad \lmk W_{n}^{(s)}(t)\rmk (A)
\right].
\end{split}
\end{align}
We used  (\ref{inv})
for the second equality and (\ref{hvh}) for the third equality.
On the other hand, for any $A\in\caA$, we have 
\begin{align}
\frac{d}{dt} \tau_{s,t}^{(\Lambda_{n}), \Xi^{(1)}} (A)=-i\left[
H_{\Lambda_{n}, \Xi^{(1)}}(t),
\tau_{s,t}^{(\Lambda_{n}), \Xi^{(1)}} (A)
\right].
\end{align}
Therefore, $\tau_{s,t}^{(\Lambda_{n}), \tilde\Xi}\circ \Ad \lmk W_{n}^{(s)}(t)\rmk (A)$
and $ \tau_{s,t}^{(\Lambda_{n}), \Xi^{(1)}} (A)$ satisfy the same differential equation.
Also note that we have $\tau_{s,s}^{(\Lambda_{n}), \tilde\Xi}\circ \Ad \lmk W_{n}^{(s)}(s)\rmk (A)=
 \tau_{s,s}^{(\Lambda_{n}), \Xi^{(1)}} (A)=A$.
 Therefore, we get
 \begin{align}
 \tau_{s,t}^{(\Lambda_{n}), \tilde\Xi}\circ \Ad \lmk W_{n}^{(s)}(t)\rmk (A)
 =\tau_{s,t}^{(\Lambda_{n}), \Xi^{(1)}} (A).
 \end{align}
By (\ref{limlim}), we obtain
\begin{align}
 \tau_{s,t}^{\tilde\Xi}\circ \Ad \lmk W^{(s)}(t)\rmk (A)
 =\tau_{s,t}^{\Xi^{(1)}} (A),\quad A\in\caA, \; t,s\in[0,1].
 \end{align}
 Taking inverse, we get 
\begin{align}\label{www}
 \Ad \lmk W^{(s)^{*}}(t)\rmk\circ\tau_{t,s }^{\tilde\Xi}  
 =\tau_{t,s}^{\Xi^{(1)}},\; t,s\in[0,1].
 \end{align}
 {\it Step 4.}
 Combining (\ref{ttt}) and (\ref{www}) we have
 \begin{align}
\tau_{1,0}^{\Psi}=
\tau_{1,0}^{\Psi^{(0)}}\tau_{0,1}^{ \Xi^{(1)}}
=\tau_{1,0}^{\Psi^{(0)}}\circ\Ad \lmk \lmk W^{(1)}(0)\rmk^{*}\rmk\circ\tau_{0,1 }^{\tilde\Xi}.
\end{align}
By the definition of $\Psi^{(0)}$ and $\tilde\Xi$,
we obtain
decompositions
\begin{align}\label{inu}
\begin{split}
&\tau_{1,0}^{\Psi^{(0)}}
=\alpha_{[0,\theta_{1}]}\otimes \alpha_{(\theta_1,\theta_2]}\otimes \alpha_{(\theta_2,\theta_3]}
\otimes  \alpha_{(\theta_3,\frac\pi 2]}\\
&\tau_{0,1}^{\tilde\Xi}
= \alpha_{(\theta_{0.8},\theta_{1.2}]}\otimes
\alpha_{(\theta_{1.8},\theta_{2.2}]}\otimes \alpha_{(\theta_{2.8},\theta_{3.2}]}
\end{split}
\end{align}
with $\alpha$s of the form (\ref{sqaut2}) and (\ref{sqaut3}).
This completes the proof of the first part.
\\
{\it Step 5.}
Suppose that $\beta_{g}^{U}\lmk \Psi(X;t)\rmk=\Psi(X;t)$ for any $X\in{\mathfrak S}_{\bbZ^2}$, $t\in [0,1]$, and
$g\in G$. Then clearly we have $\beta_{g}^{U}\lmk \Psi^{(0)}(X;t)\rmk=\Psi^{(0)}(X;t)$ for any $X\in{\mathfrak S}_{\bbZ^2}$, $t\in [0,1]$, and
$g\in G$.
By Theorem \ref{tni}, 5, this implies
$\tau_{1,0}^{\Psi^{(0)}}{\beta_g^U}={\beta_g^U}\tau_{1,0}^{\Psi^{(0)}}$.
From the decomposition (\ref{inu}),
this means all of $\alpha_{[0,\theta_{1}],\sigma}$, 
$\alpha_{(\theta_1,\theta_2],\sigma,\zeta}$, $\alpha_{(\theta_2,\theta_3],\sigma,\zeta}$,
$\alpha_{(\theta_3,\frac\pi 2],\zeta}$, $\sigma=L,R$, $\zeta=U,D$ 
commute with ${\beta_g^U}$.
Because $\Pi_{X}$ commutes with $\beta_{g}^{U}$
$\tau_{t,s}^{\Psi}$ commutes with $\beta_{g}^{U}$  (Theorem \ref{tni}, 5)
, and $\Psi^{(1)}$ is $\beta_{g}^{U}$-invariant,
$\Xi^{(s)}$ is $\beta_{g}^{U}$-invariant 
from the definition (\ref{psisdef}).
Therefore, from the definition (\ref{tpsidef}), $\tilde\Xi$ is also
$\beta_{g}^{U}$-invariant.
Hence by Theorem \ref{tni}, 5
$\tau_{0,1}^{ \tilde \Xi}$ commutes with ${\beta_g^U}$.
The decomposition
(\ref{inu}) then implies that
$\alpha_{(\theta_{0.8},\theta_{1.2}],\sigma,\zeta}$
$\alpha_{(\theta_{1.8},\theta_{2.2}],\sigma,\zeta}$
 $\alpha_{(\theta_{2.8},\theta_{3.2}],\sigma,\zeta}
$,$\sigma=L,R$, $\zeta=U,D$ 
commute with ${\beta_g^U}$.
\end{proof}
An analogous proof shows the following.
\begin{prop}\label{prophaut}
Let 
$F\in \caF_a$ be an $F$-function of the form
$F(r)=\frac{\exp\lmk {-r^{\theta}}\rmk}{(1+r)^{4}}$ with a constant  $0<\theta<1$.
Let $\Psi\in \hat \caB_{F}([0,1])$ be a path of interactions satisfying $\Psi_1\in \hat \caB_F([0,1])$.
Define $\Psi^{(0)}\in \hat \caB_{F}([0,1])$ by
 \begin{align}\label{pzpo}
 \begin{split}
& \Psi^{(0)}\lmk X; t\rmk:=
 \left\{
\begin{gathered}
 \Psi\lmk X; t\rmk,\quad \text{if }\quad X\subset H_U\quad\text{or}\quad
 X\subset H_D
 \\
 0,\quad \text{otherwise}
 \end{gathered}
 \right.,
 \end{split}
 \end{align}
 for each $X\in{\mathfrak S}_{\bbZ^2}$, $t\in[0,1]$.
Then
 $
 \lmk\tau_{1,0}^{\Psi^{(0)}}\rmk^{-1}\tau_{1,0}^{\Psi}$ belongs to $\haut(\caA)$.
\end{prop}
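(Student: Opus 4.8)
The plan is to re‑run the proof of Theorem~\ref{thm:quasiauto} almost verbatim, with the family of cones $\caC_0$ there replaced by the pair of half‑planes $\{H_U,H_D\}$ and with the target decomposition replaced by the much weaker left--right splitting that membership in $\haut(\caA)$ demands. First I would set $\Psi^{(1)}:=\Psi^{(0)}-\Psi$; since $\Psi^{(0)}(X;t)=\Psi(X;t)$ whenever $X\subset H_U$ or $X\subset H_D$, the interaction $\Psi^{(1)}(X;t)$ vanishes unless $X$ meets both $H_U$ and $H_D$, in which case it equals $-\Psi(X;t)$. Defining the conjugated interaction $\Xi^{(s)}$ by the formula (\ref{psisdef}) with this $\Psi^{(1)}$, Proposition~\ref{tokyo1877} gives $\Xi^{(s)},\Xi^{(n)(s)}\in\hat\caB_{\tilde F}([0,1])$ for a suitable $\tilde F$ of the form (\ref{apple}), and the argument of Step~1 of Theorem~\ref{thm:quasiauto} yields $\tau_{1,0}^{\Psi}=\tau_{1,0}^{\Psi^{(0)}}\circ\tau_{0,1}^{\Xi^{(1)}}$, that is, $\lmk\tau_{1,0}^{\Psi^{(0)}}\rmk^{-1}\circ\tau_{1,0}^{\Psi}=\tau_{0,1}^{\Xi^{(1)}}$. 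It then suffices to prove $\tau_{0,1}^{\Xi^{(1)}}\in\haut(\caA)$, i.e.\ that for each $0<\theta<\frac\pi2$ it decomposes as $\inn\circ(\alpha_L\otimes\alpha_R)$ with $\alpha_\sigma\in\Aut\lmk\caA_{(C_\theta)_\sigma}\rmk$.

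The heart of the proof --- and the step I expect to be the main obstacle --- is the analogue of Step~2: for a fixed $\theta$ one must show $\sum_{Z}\sup_{t}\lV\Xi^{(1)}(Z,t)\rV<\infty$, the sum running over all finite $Z$ with $Z\not\subset(C_\theta)_L$ and $Z\not\subset(C_\theta)_R$. Such a $Z$ lies in this ``remainder'' exactly when either $Z\not\subset C_\theta$, or else $Z\subset C_\theta$ and $Z$ meets both $H_L$ and $H_R$. Running the same majorization as in (\ref{remi}) (using Theorem~\ref{tni}~3), the sum is dominated by a constant times $\sum_{m\ge0}G_F(m)$ times a sum over those $X$ that straddle the $x$‑axis and whose $m$‑fattening $X(m)$ lands in the remainder region. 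The geometric point is that the support of $\Psi^{(1)}$ (a strip about the $x$‑axis), the left/right interface (a neighbourhood of the $y$‑axis) and the complement $(C_\theta)^c$ (two wedges about the $y$‑axis) are transverse and meet only near the origin; hence, for fixed $\theta$, an $X$ near the $x$‑axis contributes only when $m$ is at least a $\theta$‑dependent multiple of the distance from $X$ to the origin (up to $\diam X$, which is itself damped by the decay of $\Psi$). So the relevant $X$ have horizontal extent $O_\theta(m)$ while their vertical extent is controlled through $\lV\lv\Psi_1\rV\rv_F$, and each inner sum is at most $\lV\lv\Psi_1\rV\rv_F$ times a polynomial in $m$ --- this is where I would invoke Lemma~\ref{gomap} and Lemma~\ref{yomap}, applied to a cone containing the $x$‑axis strip together with a cone inside $(C_\theta)^c$ (respectively a thin wedge at the origin). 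Since $G_F$ decays faster than any polynomial for $F\in\caF_a$, by (\ref{ugugug}), the double sum converges. I would remark that this is in fact \emph{easier} than the corresponding step for Theorem~\ref{thm:quasiauto}, where the perturbation support and the interface both run along the radial cone walls and a buffer family of intermediate cones was needed: here no such buffer is required.

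Granting this estimate, Steps~3 and~4 go through with no change. I would split $\Xi^{(1)}=\Xi^{(1)}_L+\Xi^{(1)}_R+V$, where $\Xi^{(1)}_\sigma$ retains only the terms with $Z\subset(C_\theta)_\sigma$ and $V(t):=\sum_{Z}\Xi^{(1)}(Z,t)$ sums over the remaining $Z$; by the Step~2 estimate $V(t)\in\caA$ converges absolutely, is norm‑continuous in $t$, and is the uniform limit of its finite‑volume truncations $V_n(t)$. With $\tilde\Xi:=\Xi^{(1)}_L+\Xi^{(1)}_R$ one has $H_{\Lambda_{n},\Xi^{(1)}}(t)=H_{\Lambda_{n},\tilde\Xi}(t)+V_n(t)$; solving $\frac{d}{dt}W^{(s)}(t)=-i\,\tau_{t,s}^{\tilde\Xi}(V(t))W^{(s)}(t)$ with $W^{(s)}(s)=\unit$ and passing to the limit via Theorem~\ref{tni} produces a genuine unitary $W^{(1)}(0)\in\caU(\caA)$ with $\tau_{0,1}^{\Xi^{(1)}}=\Ad\lmk\lmk W^{(1)}(0)\rmk^{*}\rmk\circ\tau_{0,1}^{\tilde\Xi}$. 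Since $\Xi^{(1)}_L$ and $\Xi^{(1)}_R$ have disjoint supports contained in $H_L$ and $H_R$ respectively, the dynamics factorizes, $\tau_{0,1}^{\tilde\Xi}=\tau_{0,1}^{\Xi^{(1)}_L}\otimes\tau_{0,1}^{\Xi^{(1)}_R}$ with $\tau_{0,1}^{\Xi^{(1)}_\sigma}\in\Aut\lmk\caA_{(C_\theta)_\sigma}\rmk$. Therefore $\tau_{0,1}^{\Xi^{(1)}}=\inn\circ(\alpha_L\otimes\alpha_R)$ with $\alpha_\sigma:=\tau_{0,1}^{\Xi^{(1)}_\sigma}$, and since $\theta\in\lmk0,\frac\pi2\rmk$ was arbitrary this gives $\tau_{0,1}^{\Xi^{(1)}}\in\haut(\caA)$, completing the proof.
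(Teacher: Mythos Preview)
Your proposal is correct and follows essentially the same route as the paper: reduce to $\tau_{0,1}^{\Xi^{(1)}}$ via Step~1 of Theorem~\ref{thm:quasiauto}, bound the ``remainder'' sum $\sum_{Z\nsubseteq (C_{\theta_0})_L,\,Z\nsubseteq (C_{\theta_0})_R}\sup_t\lV\Xi^{(1)}(Z,t)\rV$ by a case analysis on $X$ that invokes Lemmas~\ref{gomap} and~\ref{yomap}, and then run Steps~3--4 verbatim to extract the inner part and factorize $\tau_{0,1}^{\tilde\Xi}$ into left/right. One small point the paper makes explicit and you do not: the angle must be taken with $0<\theta_0<\frac\pi4$ (so that the wedge $[-\theta_0,\theta_0]$ has opening $<\frac\pi2$, as required by Lemma~\ref{yomap}); this suffices for $\haut(\caA)$ since $(C_{\theta_0})_\sigma\subset (C_\theta)_\sigma$ for larger $\theta$.
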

\begin{proof}
Define $\tilde F$ as in (\ref{apple}) with some $0<\theta'<\theta$.
%Setting
%\begin{align}
% \Psi^{(1)}\lmk X; t\rmk:=\Psi^{(0)}\lmk X; t\rmk-\Psi\lmk X; t\rmk,
% \end{align}
The same argument as in Theorem \ref{thm:quasiauto} {\it Step 1.} implies
that  there exists $\Xi^{(1)}\in \hat \caB_{\tilde F}[0,1]$ with $\tilde F\in\caF_a$,
such that
\begin{align}
\tau_{1,0}^{\Psi}=
\tau_{1,0}^{\Psi^{(0)}}\tau_{0,1}^{ \Xi^{(1)}}.
\end{align}
This $\Xi^{(1)}$ is given by the formula (\ref{psisdef})
for current  $\Psi$ and $\Psi^{(1)}\lmk X; t\rmk:=\Psi^{(0)}\lmk X; t\rmk-\Psi\lmk X; t\rmk$.
To prove the theorem, it suffices to show that 
$\tau_{0,1}^{ \Xi^{(1)}}$ belongs to $\haut(\caA)$.
Indeed, for any $0<\theta_0<\frac\pi 4$,
as in Theorem \ref{thm:quasiauto} {\it Step 2}, we have 
\begin{align}\label{tanuki}
\begin{split}
&\sum_{\substack{Z: Z\nsubseteq C_{[0,\theta_0],L}
\\\text{and}\;\;Z\nsubseteq C_{[0,\theta_0],R}
}}
\sup_{t\in [0,1]}\lV \Xi^{(1)}\lmk Z,t\rmk\rV\\
&\le
\frac{8}{C_{F}}\lmk e^{2I_F(\Psi)}-1\rmk
\sum_{m\ge 0} 
\sum_{\substack{X:  
X(m)\nsubseteq C_{[0,\theta_0],L}
\\\text{and}\;\;X(m)\nsubseteq C_{[0,\theta_0],R}
} }
\left[
\sup_{t\in[0,1]}\lmk \lV \Psi^{(1)}\lmk X; t\rmk \rV\rmk
\lv X\rv G_{F}\lmk m\rmk
\right]
<\infty.
\end{split}
\end{align}
To see this, note that if $X$ in the last line has a  non-zero contribution to the sum,
then at least one of the following occurs.
\begin{description}
\item[(i)] $X\cap C_{[\theta_{0}, \frac\pi 2],U}\neq\emptyset$,
and $X\cap H_{D}\neq\emptyset$
\item[(ii)]$X\cap  C_{[\theta_{0}, \frac\pi 2],D}\neq\emptyset$,
and $X\cap H_{U}\neq\emptyset$
\item[(iii)] $X\subset C_{[0,\theta_{0}]}$ and
\begin{description}
\item[(1)]
$X\cap C_{[0,\theta_{0}],L}\neq\emptyset$ and
$X\cap C_{[0,\theta_{0}],R}\neq\emptyset$, or 
\item[(2)]
$X\subset C_{[0,\theta_{0}], R}$,
$X\cap \check
 C_{[0,\theta_{0}]}\neq\emptyset$,
$X\cap \check
 C_{[-\theta_{0},0]}\neq\emptyset$
and $X{(m)}\cap \lmk C_{[0,\theta_{0}], R}\rmk^{c}\neq \emptyset$,
\item[(3)]
$X\subset C_{[0,\theta_{0}], L}$,
$X\cap \check
 C_{[\pi-\theta_{0},\pi ]}\neq\emptyset$,
$X\cap \check
 C_{[\pi, \pi+\theta_{0}]}\neq\emptyset$
and $X{(m)}\cap \lmk C_{[0,\theta_{0}], L}\rmk^{c}\neq \emptyset$.
\end{description}
\end{description}
Therefore, the summation in the second line  of (\ref{tanuki}) is bounded by
\begin{align*}
\frac{8}{C_{F}}\lmk e^{2I_F(\Psi)}-1\rmk
\lmk
\begin{gathered}
b_{0}(\theta_{0}, \pi-\theta_{0}, \pi, 2\pi)
+b_{0}(0,\pi, \pi+\theta_{0}, 2\pi-\theta_{0})
+b_{0}(-\theta_{0}, \theta_{0}, \pi-\theta_{0}, \pi+\theta_{0})\\
+b_{1}(-\theta_{0}, 0,\theta_{0})
+b_{1}(\pi-\theta_{0}, \pi, \pi+\theta_{0})
\end{gathered}
\rmk<\infty,
\end{align*}
from Lemma \ref{gomap} and \ref{yomap},
proving (\ref{tanuki}).

Therefore, as in {\it Step 3.} of Theorem \ref{thm:quasiauto},
setting
\begin{align}
\tilde\Xi(Z,t):=\left\{
\begin{gathered}
\Xi^{(1)}(Z,t),\quad\text{if}\quad 
 Z\subseteq C_{[0,\theta_0],L}\quad
\text{or}\;\;Z\subseteq C_{[0,\theta_0],R}
\\
0\quad\text{otherwise}
\end{gathered}\right.,
\end{align}
we obtain $\tau_{0,1}^{ \Xi^{(1)}}=\inn\circ\tau_{0,1}^{ \tilde \Xi}$.
By the definition, $\tau_{0,1}^{ \tilde \Xi}$ decomposes as
$\tau_{0,1}^{ \tilde \Xi}= \zeta_L\otimes\zeta_R$, with some $\zeta_\sigma\in\Aut\lmk\caA_{C_{[0,\theta_0],\sigma}}\rmk$, $\sigma=L,R$.
As this holds for any $0<\theta_0<\frac\pi 4$, we conclude $\tau_{0,1}^{ \Xi^{(1)}}\in\haut(\caA)$.
\end{proof}
\begin{thm}\label{thmqautreal}
Let 
$F\in \caF_a$ be an $F$-function of the form
$F(r)=\frac{\exp\lmk {-r^{\theta}}\rmk}{(1+r)^{4}}$ with a constant  $0<\theta<1$.
Let $\Psi\in \hat \caB_{F}([0,1])$ be a path of interactions satisfying $\Psi_1\in \hat \caB_F([0,1])$.
 If $\Psi$ is $\beta$-invariant, then $\tau_{1,0}^{\Psi}$
belongs to $\guaut(\caA)$.
\end{thm}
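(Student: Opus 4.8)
The plan is to use the upper/lower half--plane splitting $\Psi^{(0)}$ of Proposition~\ref{prophaut} as a bridge between that proposition and the ``$\gsqaut$'' half of Theorem~\ref{thm:quasiauto}. First I would take $\Psi^{(0)}$ to be the interaction defined in (\ref{pzpo}) and set
\[
\gamma_H:=\lmk\tau_{1,0}^{\Psi^{(0)}}\rmk^{-1}\circ\tau_{1,0}^{\Psi},
\qquad
\gamma_C:=\tau_{1,0}^{\Psi^{(0)}},
\]
so that $\tau_{1,0}^{\Psi}=\gamma_C\circ\gamma_H$. By Proposition~\ref{prophaut} we already know $\gamma_H\in\haut(\caA)$. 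Since, by definition, $\guaut(\caA)$ consists exactly of the automorphisms of the form $\gamma_C\circ\gamma_H$ with $\gamma_C\in\gsqaut(\caA)$ and $\gamma_H\in\haut(\caA)$, it then remains only to show that $\gamma_C=\tau_{1,0}^{\Psi^{(0)}}$ lies in $\gsqaut(\caA)$.

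For this I would apply the second assertion of Theorem~\ref{thm:quasiauto} to the interaction $\Psi^{(0)}$ in place of $\Psi$. Two points must be verified. The first is that $\Psi^{(0)}$ and $\Psi^{(0)}_1$ still belong to $\hat\caB_F([0,1])$ for the same $F\in\caF_a$; this is immediate because $\Psi^{(0)}$ is obtained from $\Psi$ by setting $\Psi^{(0)}(X;t)=0$ whenever $X\not\subset H_U$ and $X\not\subset H_D$, and discarding interaction terms can only decrease all the relevant $F$--norms. The second, and only substantive, point is that $\Psi^{(0)}$ is $\beta_g^U$--invariant for every $g\in G$, i.e.\ $\beta_g^U\lmk\Psi^{(0)}(X;t)\rmk=\Psi^{(0)}(X;t)$ for all $X\in{\mathfrak S}_{\bbZ^2}$ and $t\in[0,1]$. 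This follows from the $\beta$--invariance of $\Psi$ together with the structure of the supports: if $X\subset H_U$, then on $\caA_X$ the automorphism $\beta_g^U$ agrees with $\beta_g$ (both equal $\Ad\lmk\bigotimes_{X}U(g)\rmk$), so $\beta_g^U\lmk\Psi^{(0)}(X;t)\rmk=\beta_g\lmk\Psi(X;t)\rmk=\Psi(X;t)$; if $X\subset H_D$, then $\caA_X\subset\caA_{H_D}$ on which $\beta_g^U$ acts trivially, so $\Psi^{(0)}(X;t)$ is again fixed; and for all remaining $X$ we have $\Psi^{(0)}(X;t)=0$.

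Granting these verifications, Theorem~\ref{thm:quasiauto} yields $\tau_{1,0}^{\Psi^{(0)}}\in\gsqaut(\caA)$, and therefore $\tau_{1,0}^{\Psi}=\tau_{1,0}^{\Psi^{(0)}}\circ\gamma_H$ exhibits $\tau_{1,0}^{\Psi}$ as an element of $\guaut(\caA)$, completing the proof. I expect no new analytic input here beyond what was already established for Theorem~\ref{thm:quasiauto} and Proposition~\ref{prophaut}; the only step requiring genuine care is the $\beta_g^U$--invariance bookkeeping above, namely keeping straight the distinct roles of $\beta_g$, of $\beta_g^U$ on $\caA_{H_U}$, and of the trivial action of $\beta_g^U$ on $\caA_{H_D}$, together with spelling out the stability of the $\caF_a$ and $\hat\caB_F$ conditions under restriction of support.
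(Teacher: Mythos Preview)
Your proposal is correct and follows essentially the same approach as the paper: decompose $\tau_{1,0}^{\Psi}=\tau_{1,0}^{\Psi^{(0)}}\circ\gamma_H$ with $\gamma_H\in\haut(\caA)$ by Proposition~\ref{prophaut}, then apply Theorem~\ref{thm:quasiauto} to $\Psi^{(0)}$ using that $\beta$-invariance of $\Psi$ yields $\beta_g^U$-invariance of $\Psi^{(0)}$. Your explicit case split for the $\beta_g^U$-invariance and your remark on the stability of the $\hat\caB_F$ norms under support restriction are a bit more detailed than the paper's version, but the argument is the same.
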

\begin{proof}
Define $\Psi^{(0)}$ as in (\ref{pzpo}) for our $\Psi$.
By Proposition \ref{prophaut}, we have
$\lmk\tau_{1,0}^{\Psi^{(0)}}\rmk^{-1}\tau_{1,0}^{\Psi}\in \haut(\caA)$.
On the other hand, applying Theorem  \ref{thm:quasiauto} to $\Psi^{(0)}\in\hat\caB_F([0,1])$,
we see that $\tau_{1,0}^{\Psi^{(0)}}$ belongs to 
$\sqaut(\caA).$
Note that $\Psi^{(0)}(X;t)$ is non-zero only if $X\subset H_U$ or $X\subset H_D$,
and it coincides with $\Psi(X;t)$ when it is non-zero.
Therefore, if $\Psi$ is $\beta$-invariant, $\Psi^{(0)}$ is $\beta_g^U$-invariant. 
Therefore, by Theorem \ref{thm:quasiauto}, we have
$\tau_{1,0}^{\Psi^{(0)}}\in \gsqaut(\caA)$.
Hence we have
$\tau_{1,0}^{\Psi}\in \guaut(\caA)$.
%we obtain a decomposition
%\begin{align}\label{usagi2}
%\begin{split}
%\tau_{1,0}^{\Psi^{(0)}}=&\inn\circ
%\lmk
% \alpha_{(0,\theta_1]}\otimes
%\alpha_{(\theta_1,\theta_2]}\otimes \alpha_{(\theta_2,\theta_3]}
%\otimes  \alpha_{(\theta_3,\frac\pi 2]}
%\rmk\\
%&\circ
%\lmk
%\alpha_{(\theta_{0.8},\theta_{1.2}]}
%\alpha_{(\theta_{1.8},\theta_{2.2}]}\otimes \alpha_{(\theta_{2.8},\theta_{3.2}]}
%\rmk,
%\end{split}
%\end{align}
%with $\alpha_X$ such that (\ref{fukurou}) and (\ref{azarashi}).
%On the other hand,
%we 
\end{proof}

\begin{proofof}[Theorem \ref{main}]
Let $\Phi_0\in\caP_{UG}$ be the fixed trivial interaction with a unique gapped ground state.
Its ground state $\omega_0:=\omega_{\Phi_0}$ is of a product form (\ref{ozs}). 
For any $\Phi\in \caP_{SL\beta}$, we have 
$\Phi_0\sim\Phi$.
%, via a path $\Phi : [0,1]\to \caP_{UF\beta}$.
Then by Theorem \ref{mo},
 there exists some 
$\Psi\in\hat\caB_F([0,1])$ with $\Psi_{1}\in \hat\caB_{F}([0,1])$ for some $F\in \caF_a$ of the form
$F(r)=\frac{\exp\lmk {-r^{\theta}}\rmk}{(1+r)^{4}}$ with a $0<\theta<1$,
such that
$\omega_{\Phi}=\omega_{\Phi_0}\circ\tau_{1,0}^{\Psi}$.
From Theorem \ref{thm:quasiauto}, $\tau_{1,0}^\Psi$ belongs to
$\sqaut(\caA)$. 
Because  $\Phi\in \caP_{SL\beta}$, 
$\omega_{\Phi}=\omega_{\Phi_0}\circ\tau_{1,0}^{\Psi}$ is $\beta$-invariant.
Then, by Theorem  \ref{defindexspt}, 
$\IG(\omega_{\Phi})$ is not empty.
Therefore, we may define $h_\Phi:=h(\omega_{\Phi})$ by Definition \ref{theindexdef}.

To see that $h_\Phi$ is an invariant of $\sim_\beta$,
let $\Phi_1,\Phi_2\in \caP_{SL\beta}$
with $\Phi_1\sim_\beta\Phi_2$.
Then by Theorem \ref{mo},
there exists some  $\beta$-invariant
$\Psi\in\hat\caB_F([0,1])$ with $\Psi_{1}\in\hat\caB_F([0,1])$
for some $F\in \caF_a$ of the form
$F(r)=\frac{\exp\lmk {-r^{\theta}}\rmk}{(1+r)^{4}}$ with a constant $0<\theta<1$ 
such that
$\omega_{\Phi_2}=\omega_{\Phi_1}\circ\tau_{1,0}^{\Psi}$.
Applying Theorem  \ref{thmqautreal}, 
to this $\Psi$,
$\tau_{1,0}^{\Psi}$
belongs to $\guaut(\caA)$.
Then Theorem \ref{stabilitythm}
implies 
\begin{align}
h_{\Phi_2}=h(\omega_{\Phi_2})
=h(\omega_{\Phi_1}\circ\tau_{1,0}^{\Psi})=h(\omega_{\Phi_1})
=h_{\Phi_1},
\end{align}
proving the stability.
\end{proofof}

\section{Automorphisms with factorized $d^{0}_{H_{U}}\alpha$}\label{autocasesec}
When $\alpha\in \eaut(\omega)$ has some good factorization property with respect to
the action of $\beta_{g}^{U}$, the index $h(\omega)$
can be calculated without going through GNS representations.
\begin{defn}
For $\alpha\in \Aut\lmk\caA\rmk$, we set
\begin{align}
\lmk d^{0}_{H_{U}}\alpha\rmk (g):=\alpha^{-1}\beta_{g}^{U}\circ \alpha\circ \lmk \beta_{g}^{U}\rmk^{-1},\quad g\in G.
\end{align}
We say that $d^{0}_{H_{U}}\alpha$ is factorized into left and right if there are automorphisms
$\gamma_{g,\sigma}\in\Aut\lmk\caA_{H_{\sigma}}\rmk$, $g\in G$, $\sigma=L,R$
such that 
\begin{align}\label{dzd}
\lmk d^{0}_{H_{U}}\alpha\rmk (g)=\inn\circ \lmk \gamma_{g,L}\otimes\gamma_{g,R}\rmk,\quad
g\in G.
\end{align}
\end{defn}
For known examples of $2$-dimensional SPT-phases like 
 \cite{cglw} and
\cite{MillerMIyake2016}
 and \cite{Beni2016}
or injective PEPS \cite{molnar},
this property holds.
From such an automorphism, we can derive an outer action of $G$.
\begin{lem}\label{seta}
Let $\alpha\in \Aut\lmk\caA\rmk$ be an automorphism.
Suppose that $d^{0}_{H_{U}}\alpha$ is factorized into left and right i.e., there are automorphisms
$\gamma_{g,\sigma}\in\Aut\lmk\caA_{H_{\sigma}}\rmk$, $g\in G$, $\sigma=L,R$
such that 
\begin{align}\label{autofac}
\lmk d^{0}_{H_{U}}\alpha\rmk (g)=\inn\circ \lmk \gamma_{g,L}\otimes\gamma_{g,R}\rmk,\quad
g\in G.
\end{align}
Then there are unitaries $v_{\sigma}(g,h)\in \caU\lmk \caA_{H_{\sigma}}\rmk$,
$g,h\in G$, $\sigma=L,R$ such that
\begin{align}\label{usec}
\gamma_{g,{\sigma}}\beta_{g}^{{\sigma}U}\gamma_{h,{\sigma}}\beta_{h}^{{\sigma}U}
 \lmk \gamma_{gh,{\sigma}}\beta_{gh}^{{\sigma}U}\rmk^{-1}
 =\Ad\lmk v_{\sigma}(g,h)\rmk.
\end{align}
\end{lem}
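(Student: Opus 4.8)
The plan is to produce the unitaries $v_\sigma(g,h)$ directly from the cocycle-type identity satisfied by $d^0_{H_U}\alpha$. First I would observe that $d^0_{H_U}\alpha$ is a (twisted) $1$-cocycle: a direct computation using the definition $\lmk d^0_{H_U}\alpha\rmk(g)=\alpha^{-1}\beta_g^U\circ\alpha\circ\lmk\beta_g^U\rmk^{-1}$ gives the relation
\begin{align}
\lmk d^0_{H_U}\alpha\rmk(gh)
=\lmk d^0_{H_U}\alpha\rmk(g)\circ\lmk \beta_g^U\circ \lmk d^0_{H_U}\alpha\rmk(h)\circ \lmk\beta_g^U\rmk^{-1}\rmk,
\quad g,h\in G.
\end{align}
Substituting the assumed factorization $\lmk d^0_{H_U}\alpha\rmk(g)=\inn\circ\lmk\gamma_{g,L}\otimes\gamma_{g,R}\rmk$ into both sides, and using that $\beta_g^U$ preserves the decomposition $\caA=\caA_{H_L}\otimes\caA_{H_R}$ (so $\beta_g^U=\beta_g^{LU}\otimes\beta_g^{RU}$ and conjugation by it does not mix the two tensor factors), I get that the automorphism
\begin{align}
\gamma_{g,\sigma}\beta_g^{\sigma U}\gamma_{h,\sigma}\beta_h^{\sigma U}\lmk\gamma_{gh,\sigma}\beta_{gh}^{\sigma U}\rmk^{-1}
\end{align}
tensored over $\sigma=L,R$ equals an inner automorphism of $\caA$, i.e.\ $\Ad(w(g,h))$ for some $w(g,h)\in\caU(\caA)$.

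The remaining point is to split this inner automorphism into a left part and a right part. Each factor $\gamma_{g,\sigma}\beta_g^{\sigma U}\gamma_{h,\sigma}\beta_h^{\sigma U}\lmk\gamma_{gh,\sigma}\beta_{gh}^{\sigma U}\rmk^{-1}$ is an automorphism of $\caA_{H_\sigma}$, so I have an automorphism of the form $\theta_L\otimes\theta_R$ with $\theta_\sigma\in\Aut\caA_{H_\sigma}$ which is globally inner. This is a standard rigidity fact: if $\theta_L\otimes\theta_R=\Ad(w)$ on $\caA_{H_L}\otimes\caA_{H_R}$ then each $\theta_\sigma$ is itself inner on $\caA_{H_\sigma}$, implemented by a unitary $v_\sigma\in\caU(\caA_{H_\sigma})$. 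The cleanest route is via GNS representations of the pure product state $\omega_0=\omega_L\otimes\omega_R$: $\Ad(w)$ fixes the quasi-equivalence class of $\omega_0$ trivially, and restricting to $\caA_{H_L}\otimes\bbC\unit$ shows $\omega_L\circ\theta_L\sim_{q.e.}\omega_L$, so by irreducibility of $\pi_L$ the automorphism $\theta_L$ is unitarily implemented; Kadison transitivity then produces $v_L\in\caU(\caA_{H_L})$ with $\theta_L=\Ad(v_L)$, and similarly $v_R$. (This is essentially the mechanism already used repeatedly in Lemma~\ref{splitlem5} and in the proof of Theorem~\ref{defindexspt}, and I would either cite a clean statement of it or reprove it in two lines.) Setting $v_\sigma(g,h):=v_\sigma$ for this choice gives (\ref{usec}).

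I expect the main obstacle to be purely bookkeeping rather than conceptual: one must be careful that the factorization (\ref{autofac}) only holds modulo inner automorphisms of the \emph{whole} algebra $\caA$, not of $\caA_{H_L}$ and $\caA_{H_R}$ separately, so after substituting into the cocycle identity one genuinely needs the left--right rigidity argument of the previous paragraph to conclude that the $\sigma$-components are inner on $\caA_{H_\sigma}$. A secondary point to check is that $\beta_g^U$ really does factor as $\beta_g^{LU}\otimes\beta_g^{RU}$ and commutes appropriately with the tensor decomposition, so that conjugation $\beta_g^U\circ(\cdot)\circ(\beta_g^U)^{-1}$ sends $\Aut\caA_{H_\sigma}$ to itself; this is immediate from the on-site nature of $\beta$, but it is what makes the whole splitting work. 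No delicate analysis or convergence issues arise here — everything is finite-group algebra combined with the elementary structure theory of pure states on UHF tensor products.
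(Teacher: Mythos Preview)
Your overall strategy coincides with the paper's: compute the ``cocycle'' identity for $\alpha^{-1}\beta_g^U\alpha$, substitute the left--right factorization, and reduce to the statement that if $\theta_L\otimes\theta_R$ is inner on $\caA_{H_L}\otimes\caA_{H_R}$ then each $\theta_\sigma$ is inner on $\caA_{H_\sigma}$. The paper organizes the first step slightly differently (it writes $\id=\alpha^{-1}\beta_g^U\alpha\cdot\alpha^{-1}\beta_h^U\alpha\cdot(\alpha^{-1}\beta_{gh}^U\alpha)^{-1}$ directly rather than your cocycle form), but this is cosmetic.

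The one genuine weak point is your justification of the splitting step. From $\omega_L\circ\theta_L\sim_{q.e.}\omega_L$ and irreducibility of $\pi_L$ you correctly get a unitary $V_L\in\caU(\caH_L)$ implementing $\theta_L$, but Kadison transitivity does \emph{not} then produce $v_L\in\caU(\caA_{H_L})$ with $\theta_L=\Ad(v_L)$: unitary implementability in one irreducible representation does not force innerness (think of the shift on a UHF algebra). The references you give, Lemma~\ref{splitlem5} and Theorem~\ref{defindexspt}, concern splitting of \emph{states}, not of inner automorphisms, and do not close this gap either. The paper instead invokes Lemma~\ref{bimp}, whose proof uses a slice-map argument: one applies a vector-state slice $\Phi_\xi$ in the $\caH_L$-leg to $\pi_0(w)$ itself, and the fact that $w\in\caA_{H_L}\otimes\caA_{H_R}$ (not merely $\caB(\caH_0)$) forces the resulting operator to lie in $\pi_R(\caA_{H_R})$, hence $V_R\in\pi_R(\caA_{H_R})$ and $\theta_R$ is inner. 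So the fix is simply to cite Lemma~\ref{bimp} at this step rather than Kadison transitivity.
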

\begin{proof}
Because $\beta_{g}^{U}$ is a group action, substituting (\ref{autofac}), we get
\begin{align}
&\id_{\caA}
=\alpha^{-1}\beta_{g}^{U}\alpha\circ\alpha^{-1}\beta_{h}^{U}\alpha\circ
\lmk \alpha^{-1}\beta_{gh}^{U}\alpha\rmk^{-1}\notag\\
&=\inn\circ 
 \lmk \gamma_{g,L}\beta_{g}^{LU}\otimes\gamma_{g,R}\beta_{g}^{RU}\rmk
 \circ 
 \lmk \gamma_{h,L}\beta_{h}^{LU}\otimes\gamma_{h,R}\beta_{h}^{RU}\rmk
 \circ 
 \lmk \gamma_{gh,L}\beta_{gh}^{LU}\otimes\gamma_{gh,R}\beta_{gh}^{RU}\rmk^{-1}\notag\\
 &=\inn\circ 
 \lmk \gamma_{g,L}\beta_{g}^{LU}\gamma_{h,L}\beta_{h}^{LU}
 \lmk \gamma_{gh,L}\beta_{gh}^{LU}\rmk^{-1}
 \otimes\gamma_{g,R}\beta_{g}^{RU} \gamma_{h,R}\beta_{h}^{RU}\lmk \gamma_{gh,R}\beta_{gh}^{RU}\rmk^{-1}\rmk 
\end{align}
By Lemma \ref{bimp}, we then see that there are 
unitaries $v_{\sigma}(g,h)\in \Aut\lmk \caA_{H_{\sigma}}\rmk$,
$g\in G$, $\sigma=L,R$ satisfying (\ref{usec}).
\end{proof}
 It is well known that 
a third cohomology class  can be associated to cocycle actions \cite{Connes}\cite{jones}.
\begin{lem}\label{iriomote}
Let $\alpha\in \Aut\lmk\caA\rmk$ be an automorphism
such that $d^{0}_{H_{U}}\alpha$ is factorized into left and right as (\ref{autofac}).
Let
 $v_{\sigma}(g,h)\in \caU\lmk \caA_{H_{\sigma}}\rmk$,
$g,h\in G$, $\sigma=L,R$ 
 be unitaries satisfying (\ref{usec}), given in Lemma \ref{seta}.
 Then
 there is some $c_{{\sigma}}\in C^{3}(G,\bbT)$, $\sigma=L,R$
 such that
 \begin{align}\label{sado}
 v_{{\sigma}}(g,h)
 v_{{\sigma}}(gh,k)
 =c_{{\sigma}}(g,h,k) 
 \lmk \gamma_{g,{\sigma}}\circ\beta_{g}^{{\sigma}U}\lmk v_{{\sigma}}(h,k)\rmk\rmk
 v_{{\sigma}}\lmk g, hk\rmk,\quad g,h,k\in G.
 \end{align}
\end{lem}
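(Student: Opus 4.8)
The plan is to show that the standard cocycle-action argument for deriving a $3$-cocycle out of a projective-type relation goes through here, working entirely in the C$^*$-algebra $\caA_{H_\sigma}$ rather than in a GNS representation. Fix $\sigma$ and write, for brevity, $\mu_g := \gamma_{g,\sigma}\circ\beta_g^{\sigma U}\in\Aut(\caA_{H_\sigma})$, so that the relation (\ref{usec}) reads $\mu_g\mu_h\mu_{gh}^{-1}=\Ad(v_\sigma(g,h))$, i.e. $\mu_g\mu_h=\Ad(v_\sigma(g,h))\circ\mu_{gh}$. First I would observe that applying $\mu_g$ to the identity $\mu_h\mu_k=\Ad(v_\sigma(h,k))\circ\mu_{hk}$ gives
\begin{align}
\mu_g\mu_h\mu_k=\Ad\lmk \mu_g(v_\sigma(h,k))\rmk\circ\mu_g\mu_{hk}
=\Ad\lmk \mu_g(v_\sigma(h,k))\, v_\sigma(g,hk)\rmk\circ\mu_{ghk},
\end{align}
whereas grouping the other way,
\begin{align}
\mu_g\mu_h\mu_k=\Ad\lmk v_\sigma(g,h)\rmk\circ\mu_{gh}\mu_k
=\Ad\lmk v_\sigma(g,h)\, v_\sigma(gh,k)\rmk\circ\mu_{ghk}.
\end{align}
Hence $\Ad$ of the two unitaries $v_\sigma(g,h)v_\sigma(gh,k)$ and $\mu_g(v_\sigma(h,k))\,v_\sigma(g,hk)$ agree on $\caA_{H_\sigma}$.

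Next I would use that $\caA_{H_\sigma}$ is a UHF algebra, hence simple with trivial center: if $\Ad(u)=\Ad(w)$ as automorphisms of $\caA_{H_\sigma}$ for unitaries $u,w$, then $u^*w$ is a central unitary, so $u^*w=c\,\unit$ for some $c\in\bbT$. Applying this gives a number $c_\sigma(g,h,k)\in\bbT$ with
\begin{align}
v_\sigma(g,h)\,v_\sigma(gh,k)=c_\sigma(g,h,k)\lmk \mu_g(v_\sigma(h,k))\rmk v_\sigma(g,hk),
\end{align}
which is exactly (\ref{sado}) once we unfold $\mu_g=\gamma_{g,\sigma}\circ\beta_g^{\sigma U}$. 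This defines $c_\sigma$ as an element of $C^3(G,\bbT)$; that it is genuinely a function $G^{\times 3}\to\bbT$ is immediate, and no cocycle condition is asserted in this particular lemma (that will presumably be the content of the next lemma in the paper), so the proof ends here. The only mild subtlety is that the unitaries $v_\sigma(g,h)$ are only determined up to a phase by (\ref{usec}); but since the statement merely asserts \emph{existence} of $c_\sigma$ for the \emph{given} choice of $v_\sigma$, this causes no difficulty.

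The main thing to be careful about — really the only obstacle — is the appeal to triviality of the center: I should confirm that $v_\sigma(g,h)\in\caU(\caA_{H_\sigma})$ and $\mu_g(v_\sigma(h,k))\in\caU(\caA_{H_\sigma})$ genuinely lie in $\caA_{H_\sigma}$ (not in some multiplier algebra), which is guaranteed because $\gamma_{g,\sigma}$ and $\beta_g^{\sigma U}$ are automorphisms of $\caA_{H_\sigma}$ and $v_\sigma(h,k)\in\caA_{H_\sigma}$, and that $\caA_{H_\sigma}$ has trivial center — true since it is UHF (an infinite tensor product of copies of $\Mat_d$), hence simple with center $\bbC\unit$. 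With that in hand the argument is purely formal. I would run the computation for $\sigma=R$ and remark that $\sigma=L$ is identical.
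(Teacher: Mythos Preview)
Your proof is correct and follows essentially the same route as the paper: set $\hat\gamma_{g,\sigma}:=\gamma_{g,\sigma}\beta_g^{\sigma U}$ (your $\mu_g$), expand $\hat\gamma_g\hat\gamma_h\hat\gamma_k$ in two ways using $\hat\gamma_g\hat\gamma_h=\Ad(v_\sigma(g,h))\hat\gamma_{gh}$, and conclude from triviality of the center. The only difference is cosmetic---you invoke the trivial center of the UHF algebra $\caA_{H_\sigma}$, while the paper writes $\caA'\cap\caA=\bbC\unit$; your formulation is arguably cleaner since the unitaries live in $\caA_{H_\sigma}$.
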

\begin{proof}
By (\ref{usec}), we have
\begin{align}\hat\gamma_{g,\sigma}\hat\gamma_{h,\sigma}
 =\Ad\lmk v_{\sigma}(g,h)\rmk\circ \hat\gamma_{gh,\sigma}
\end{align}
for
$
\hat\gamma_{g,\sigma}:=\gamma_{g,{\sigma}}\beta_{g}^{{\sigma}U}
$.
Using this, we have
\begin{align}
&\Ad\lmk v_{\sigma}(g,h)\rmk\circ\Ad\lmk v_{\sigma}(gh,k)\rmk\circ\hat\gamma_{ghk,\sigma}
=\Ad\lmk v_{\sigma}(g,h)\rmk\circ
\hat\gamma_{gh,\sigma}\circ\hat\gamma_{k,\sigma}
=\hat\gamma_{g,\sigma}\hat\gamma_{h,\sigma}\hat\gamma_{k,\sigma}
=\hat\gamma_{g,\sigma}\circ \Ad\lmk v_{\sigma}(h,k)\rmk\circ\hat\gamma_{hk,\sigma}\notag\\
&=\Ad\lmk \hat\gamma_{g,\sigma}\lmk  v_{\sigma}(h,k)\rmk\rmk
\hat\gamma_{g,\sigma}\circ \hat\gamma_{hk,\sigma}
=\Ad\lmk
 \hat\gamma_{g,\sigma}\lmk  v_{\sigma}(h,k)\rmk  v_{\sigma}(g,hk)
\rmk\circ\hat\gamma_{ghk,\sigma}.
\end{align}
Because $\caA'\cap \caA=\unit_{\caA}$,
$\hat\gamma_{g,\sigma}\lmk  v_{\sigma}(h,k)\rmk  v_{\sigma}(g,hk)$
and $v_{\sigma}(g,h)v_{\sigma}(gh,k)$ are proportional to each other, proving the Lemma.
\end{proof}
By the same argument as Lemma \ref{lem3}, we can show that this $c_{R}$ is actually a $3$-cocycle.
If $\omega\in \QLS$ is given by
an automorphism $\alpha\in \eaut(\omega)$
 with factorized $d^{0}_{H_{U}}\alpha$, and if 
$\omega_{0}$ is invariant under $\beta_{g}^{U}$,
then we have
 $h(\omega)=[c_{R}]_{H^{3}(G,\bbT)}$, for $c_R$ given in Lemma \ref{iriomote}.
\begin{thm}
Let $\omega_{0}$ be a reference state of the form (\ref{ozs}), and assume that
$\omega_{0}\circ\beta_{g}^{U}=\omega_{0}$ for any $g\in G$.
Let $\alpha\in \qaut\lmk \caA\rmk$
be an automorphism.
Suppose that $d^{0}_{H_{U}}\alpha$ is factorized into left and right as in (\ref{autofac})
with some $\gamma_{g,\sigma}\in \Aut\lmk\caA_{C_{\theta_{0}},\sigma}\rmk$
and $0<\theta_{0}<\frac\pi 2$, for $\sigma=L,R$.
Let
 $v_{\sigma}(g,h)\in \caU\lmk \caA_{H_{\sigma}}\rmk$,
$g,h\in G$, $\sigma=L,R$ 
 be unitaries satisfying (\ref{usec}) (given in Lemma \ref{seta})
and $c_{R}\in C^{3}(G,\bbT)$ satisfying (\ref{sado})
 for these $v_{R}(g,h)$ (given in Lemma \ref{iriomote}).
 Then we have $\omega_{0}\circ\alpha\in\QLS$ with $\IG(\omega_{0}\circ\alpha)\neq\emptyset$, 
 $c_{R}\in Z^{3}(G,\bbT)$,
 and $h(\omega_{0}\circ\alpha)=[c_{R}]_{H^{3}(G,\bbT)}$.
\end{thm}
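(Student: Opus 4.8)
The plan is to show that the 3-cocycle $c_R$ constructed from the left–right factorization of $d^0_{H_U}\alpha$ coincides, up to a coboundary, with the cocycle $c_R$ appearing in the GNS-based definition of $h(\omega_0\circ\alpha)$. First I would record the easy facts: since $\alpha\in\qaut(\caA)$ we have $\omega:=\omega_0\circ\alpha\in\QLS$; and setting $\tilde\beta_g:=\alpha^{-1}\circ(\gamma_{g,L}\otimes\gamma_{g,R})^{-1}\circ\lmk d^0_{H_U}\alpha\rmk(g)\circ\alpha\circ\beta_g^U$, a direct computation using $\lmk d^0_{H_U}\alpha\rmk(g)=\inn\circ(\gamma_{g,L}\otimes\gamma_{g,R})$ and $\omega_0\circ\beta_g^U=\omega_0$ shows $\omega\circ\tilde\beta_g=\omega$ and that $\tilde\beta_g\circ(\beta_g^U)^{-1}$ factorizes into $\Aut(\caA_{C_{\theta_0},L})\otimes\Aut(\caA_{C_{\theta_0},R})$ modulo inner, with the factors essentially given by $\gamma_{g,\sigma}$. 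Hence $(\tilde\beta_g)\in\IG(\omega,\theta_0)$ with $\eta_g^\sigma$ built from $\gamma_{g,\sigma}$; in particular $\IG(\omega)\neq\emptyset$ and $h(\omega)$ is defined.

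Next I would pick a convenient element $\alpha$ itself for the "$\alpha$" slot of the index machinery: since $d^0_{H_U}\alpha$ is factorized with support in $C_{\theta_0}$, one checks that for $\theta>\theta_0$ we may take a decomposition $(\alpha_L,\alpha_R,\Theta)\in{\mathfrak D}^\theta_\alpha$ compatible with the factorization — more precisely, one wants $\Theta$ trivial on the relevant cones so that the formula $\eta_g\beta_g^U\eta_h(\beta_g^U)^{-1}\eta_{gh}^{-1}$ reduces, after conjugation by $\alpha_0$, to exactly $\hat\gamma_{g,\sigma}\hat\gamma_{h,\sigma}(\hat\gamma_{gh,\sigma})^{-1}=\Ad(v_\sigma(g,h))$ with $\hat\gamma_{g,\sigma}=\gamma_{g,\sigma}\beta_g^{\sigma U}$. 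Then in Lemma~\ref{ichi} one may literally choose $u_R(g,h):=\pi_R\circ\alpha_R(v_R(g,h))$ and $W_g$ implementing $\pi_0\circ\alpha_0\circ\Theta\circ\eta_g\beta_g^U\circ\Theta^{-1}\circ\alpha_0^{-1}$ on $\caH_0$. Substituting these into the defining relation (\ref{uwc}) and using $\Ad(W_g)\circ\pi_R\circ\alpha_R = \pi_R\circ\alpha_R\circ\hat\gamma_{g,R}\circ(\dots)$ turns (\ref{uwc}) into $\pi_R\alpha_R\lmk v_R(g,h)v_R(gh,k)\rmk = c_R(g,h,k)\,\pi_R\alpha_R\lmk \hat\gamma_{g,R}(v_R(h,k))\,v_R(g,hk)\rmk$, which by irreducibility of $\pi_R$ forces the $H^3$-class of this $c_R$ to agree with the class of the $c_R$ from Lemma~\ref{iriomote}. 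This simultaneously shows $c_R\in Z^3(G,\bbT)$ (since it equals a genuine cocycle, or by repeating the argument of Lemma~\ref{lem3}) and gives $h(\omega_0\circ\alpha)=[c_R]_{H^3(G,\bbT)}$ by Theorem~\ref{welldefined}.

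The main obstacle I anticipate is the bookkeeping needed to arrange a genuinely compatible choice in ${\mathfrak D}^\theta_\alpha$ and $\caT(\theta,(\tilde\beta_g))$: because $\alpha$ is only assumed in $\qaut(\caA)$, its decomposition $\alpha=\inn\circ(\alpha_L\otimes\alpha_R)\circ\Theta$ need not interact cleanly with the factorization of $d^0_{H_U}\alpha$, so I would need to massage $\Theta$ (absorbing cone-supported pieces left and right) so that the automorphism $\eta_g\beta_g^U\eta_h(\beta_g^U)^{-1}\eta_{gh}^{-1}$ is exactly $\Ad$ of $\alpha_R(v_R(g,h))$-type elements rather than merely cohomologous to it. A second, milder subtlety is that $v_\sigma(g,h)$ lies in $\caU(\caA_{H_\sigma})$ rather than in a cone algebra, so one must verify the commutation-with-$\Theta$ steps using that $\gamma_{g,\sigma}$ (hence, up to boundary terms, the relevant $\eta_g^\sigma$) is supported in $C_{\theta_0}$; Lemma~\ref{bimp} and the fact that $\beta_g^{\sigma U}$ preserves cone algebras handle this. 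Once these localization points are settled, the computation is a direct substitution into (\ref{uwc}) essentially identical in structure to the proof of Lemma~\ref{lem3}, and the well-definedness Theorem~\ref{welldefined} removes any dependence on the auxiliary choices.
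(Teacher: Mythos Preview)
Your overall strategy matches the paper's: choose $\tilde\beta_g$, $\eta_g^\sigma$, and $(\alpha_L,\alpha_R,\Theta)$ so that the abstract $u_R(g,h)$ of Lemma~\ref{ichi} can be taken to be $\pi_R\circ\alpha_R(v_R(g,h))$, and then read off that the cocycle of Definition~\ref{nagasaki} equals the $c_R$ of Lemma~\ref{iriomote}. The paper does exactly this.

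However, you are making the argument harder than necessary and the ``obstacles'' you anticipate are not real. Three points:

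\textbf{(1)} Your formula for $\tilde\beta_g$ is overcomplicated and, as written, the invariance $\omega\circ\tilde\beta_g=\omega$ is not a direct computation. The paper simply takes $\tilde\beta_g:=\alpha^{-1}\beta_g^U\alpha$. Then $\omega\circ\tilde\beta_g=\omega_0\circ\beta_g^U\circ\alpha=\omega_0\circ\alpha=\omega$ is immediate, and since $\alpha^{-1}\beta_g^U\alpha=(d^0_{H_U}\alpha)(g)\circ\beta_g^U=\inn\circ(\gamma_{g,L}\otimes\gamma_{g,R})\circ\beta_g^U$ one may take $\eta_g^\sigma=\gamma_{g,\sigma}$ on the nose, with no ``building'' required.

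\textbf{(2)} Your second ``subtlety'' dissolves immediately: because $\Ad(v_\sigma(g,h))=\gamma_{g,\sigma}\beta_g^{\sigma U}\gamma_{h,\sigma}(\beta_g^{\sigma U})^{-1}\gamma_{gh,\sigma}^{-1}\in\Aut(\caA_{C_{\theta_0},\sigma})$, the unitary $v_\sigma(g,h)$ already lies in $\caU(\caA_{C_{\theta_0},\sigma})$ (this is the content of Lemma~\ref{bimp}, or just $\caA'\cap\caA=\bbC\unit$). The paper records this at the outset.

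\textbf{(3)} Consequently, no massaging of $\Theta$ is needed: an \emph{arbitrary} $(\alpha_L,\alpha_R,\Theta)\in\caD_\alpha^{\theta_0}$ works. Since $v_R(h,k)\in\caA_{C_{\theta_0},R}$ and $\Theta\in\Aut(\caA_{C_{\theta_0}^c})$, one computes directly
\[
\Ad(W_g)\bigl(\unit_{\caH_L}\otimes\pi_R\alpha_R(v_R(h,k))\bigr)
=\unit_{\caH_L}\otimes\pi_R\alpha_R\bigl(\gamma_{g,R}\beta_g^{RU}(v_R(h,k))\bigr),
\]
and with $u_\sigma(g,h):=\pi_\sigma\alpha_\sigma(v_\sigma(g,h))$ the relation~(\ref{uwc}) becomes exactly $\pi_R\alpha_R$ applied to~(\ref{sado}). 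This identifies the two cocycles on the nose, not merely up to coboundary.
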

\begin{proof}
That $\omega_{0}\circ\alpha\in\QLS$  is by definition.
Because 
\begin{align}
\Ad\lmk v_{\sigma}(g,h)\rmk
=
\gamma_{g,{\sigma}}\beta_{g}^{{\sigma}U}\gamma_{h,{\sigma}}\beta_{h}^{{\sigma}U}
 \lmk \gamma_{gh,{\sigma}}\beta_{gh}^{{\sigma}U}\rmk^{-1}
 \in  \Aut\lmk\caA_{C_{\theta_{0}},\sigma}\rmk,
\end{align}
our $ v_{\sigma}(g,h)$ belongs to $\caU\lmk\caA_{C_{\theta_{0}},\sigma}\rmk$.
Because
\begin{align}
\omega_{0}\alpha\circ\alpha^{-1}\beta_{g}^{U}\alpha
=\omega_{0}\beta_{g}^{U}\alpha
=\omega_{0}\alpha
\end{align}
and 
\begin{align}
\alpha^{-1}\beta_{g}^{U}\alpha
=\inn\circ \lmk \gamma_{g,L}\otimes\gamma_{g,R}\rmk\circ\beta_{g}^{U},
\end{align}
with $\gamma_{g,\sigma}\in \Aut\lmk\caA_{C_{\theta_{0},\sigma}}\rmk$,
we have $(\alpha^{-1}\beta_{g}^{U}\alpha)\in \IG(\omega_{0}\alpha, \theta_{0})$,
and $(\gamma_{g,\sigma})\in\caT\lmk\theta_{0}, \alpha^{-1}\beta_{g}^{U}\alpha\rmk$.
Clearly $\alpha\in \eaut(\omega_{0}\circ\alpha)$ and there is $(\alpha_L,\alpha_R,\Theta)\in \caD_{\alpha}^{\theta_{0}}$ because $\alpha\in \qaut(\caA)$.
Set $\gamma_{g}:=\gamma_{g,L}\otimes\gamma_{g,R}$.
From Lemma \ref{ichi}, there is some $W_{g}\in \caU(\caH_{0})$ $g\in G$ satisfying
\begin{align}\label{cento}
\Ad\lmk W_g\rmk\circ\pi_0
=\pi_0\circ\lmk \alpha_L\otimes\alpha_{R}\rmk
\circ\Theta\circ\gamma_g\beta_g^U\circ\Theta^{-1}\circ
\lmk \alpha_L\otimes\alpha_{R}\rmk^{-1},\quad
g\in G.
\end{align}
In particular, because $ v_{R}(h,k)$ belongs to $\caU\lmk\caA_{\lmk C_{\theta_{0}}\rmk_{R}}\rmk$,
$\Theta\in \Aut\lmk\caA_{C_{\theta_{0}}^{c}}\rmk$, and $\gamma_g\beta_g^U$ preserves
$\caA_{\lmk C_{\theta_{0}}\rmk_{R}}$,
we have
\begin{align}\label{cento1}
&\Ad\lmk W_g\rmk\circ
\pi_0\circ\lmk \alpha_L\otimes\alpha_{R}\rmk
\lmk \id_{\caA_{{L}}}\otimes \lmk v_{R}(h,k)\rmk\rmk
=\pi_0\circ\lmk \alpha_L\otimes\alpha_{R}\rmk
\circ\Theta\circ\gamma_g\beta_g^U\circ\Theta^{-1}\circ
\lmk \id_{\caA_{{L}}}\otimes \lmk v_{R}(h,k)\rmk\rmk\notag\\
&=\pi_0\circ\lmk \alpha_L\otimes\alpha_{R}\rmk
\lmk \id_{\caA_{{L}}}\otimes \gamma_{g,R}\beta_g^{RU}\lmk v_{R}(h,k)\rmk\rmk
=\unit_{\caH_{L}}\otimes \pi_R\circ \alpha_{R}\circ
 \gamma_{g,R}\beta_g^{RU}\lmk v_{R}(h,k)\rmk
\end{align}

On the other hand, (\ref{usec}) means 
\begin{align}\label{hitsuji}
\Ad\lmk
\pi_{\sigma}\circ\alpha_{\sigma}\lmk v_{\sigma}(g,h)\rmk
\rmk\pi_{\sigma}
=\pi_{\sigma}\circ\alpha_{\sigma}\circ
\gamma_{g,{\sigma}}\beta_{g}^{{\sigma}U}\gamma_{h,{\sigma}}\lmk \beta_{g}^{{\sigma}U}\rmk^{-1}
 \lmk \gamma_{gh,{\sigma}}\rmk^{-1}\circ\alpha_{\sigma}^{-1}.
\end{align}
From (\ref{cento}) and (\ref{hitsuji}) we have
\begin{align}
\lmk (W_g), (\pi_{\sigma}\circ\alpha_{\sigma}\lmk v_{\sigma}(g,h)\rmk)\rmk\in \IP\lmk
\omega_{0}\circ\alpha, \alpha, \theta_{0},
(\alpha^{-1}\beta_{g}^{U}\alpha), (\gamma_{g,\sigma}),
(\alpha_L,\alpha_R,\Theta)
\rmk.
\end{align}
Now from (\ref{sado}) and then (\ref{cento1}), we obtain
\begin{align}
&\unit_{\caH_{L}}\otimes\pi_{R}\circ\alpha_{R}\lmk v_{R}(g,h)v_{R}(gh,k)\rmk
=c_{{R}}(g,h,k) 
 \unit_{\caH_{L}}\otimes\pi_{R}\circ\alpha_{R}\lmk
 \lmk \gamma_{g,{R}}\circ\beta_{g}^{{R}U}\lmk v_{{R}}(h,k)\rmk\rmk
 v_{{R}}\lmk g, hk\rmk\rmk\notag\\
& =c_{{R}}(g,h,k)
 \lmk
 \Ad\lmk W_g\rmk
\lmk \id_{\caH_{{L}}}\otimes \pi_{R}\alpha_{R}\lmk v_{R}(h,k)\rmk\rmk\rmk
\cdot \lmk \unit_{\caH_{L}}\otimes\pi_{R}\circ\alpha_{R}\lmk
 v_{{R}}\lmk g, hk\rmk\rmk\rmk.
\end{align}
This means 
\begin{align}
c_{R}=c_R\lmk
\omega_{0}\circ\alpha, \alpha, \theta_{0},
(\alpha^{-1}\beta_g^{U}\alpha), (\gamma_{g,\sigma}),
(\alpha_L,\alpha_R,\Theta),\lmk (W_g), ((\pi_{\sigma}\circ\alpha_{\sigma}\lmk v_{\sigma}(g,h)\rmk))\rmk
\rmk
\end{align}
in the Definition \ref{nagasaki}.
Hence we get 
$c_{R}\in Z^{3}(G,\bbT)$,
 and $h(\omega_{0}\circ\alpha)=[c_{R}]_{H^{3}(G,\bbT)}$.
\end{proof}

{\bf Acknowledgment.}\\
{
The author is grateful to Hal Tasaki for a stimulating discussion over two-dimensional
Dijkgraaf-Witten model.
The author is grateful to Yasuyuki Kawahigashi for introducing the author
various papers from operator algebra. 
This work was supported by JSPS KAKENHI Grant Number 16K05171 and 19K03534.
It was also supported by JST CREST Grant Number JPMJCR19T2.
The present result and the main idea of the proof were announced publicly on 15 December 2020 at IAMP One World Mathematical Physics Seminar (see you-tube video)\cite{IAMP}.
It was also presented in the international meeting {\it Theoretical studies of topological phases of matter}
on 17 December 2020, and
 Current Developments in Mathematics 4th January 2021 via zoom with a lecture note.
 \cite{IAMP}.
}

\appendix
\section{Basic Notations}\label{notasec}
 For a finite
set $S$, $\#S$ indicates the number of elements in $S$.
For $t\in\bbR$, $[t]$ denotes the smallest integer less than or equal to $t$.

For a Hilbert space $\caH$, $B(\caH)$ denotes the set of all bounded operators on $\caH$.
If $V:\caH_1\to\caH_2$ is a linear map from a Hilbert space $\caH_1$ to 
another Hilbert space $\caH_2$,
then $\Ad (V):B(\caH_1)\to B(\caH_2)$ denotes the map
$\Ad(V)(x):=V x V^*$, $x\in B(\caH_1)$.
Occasionally we write $\Ad_V$ instead of $\Ad(V)$.
For a $C^{*}$-algebra $\caB$ and $v\in \caB$, we set 
$\Ad(v)(x):=\Ad_{v}(x):=vxv^{*}$, $x\in \caB$.

%For a $C^{*}$-algebra $\caB$, we denote by
%$\caB_{1}$ the set of all elements
%in  $\caB$ with norm less than or equal to $1$ and by $\caB_{+,1}$ the set of all positive elements
%in  $\caB$ with norm less than or equal to $1$.
For a state $\omega$, $\varphi$ on a $C^{*}$-algebra $\caB$,
we write $\omega\sim_{\rm q.e.}\varphi$ when they are quasi-equivalent. (See \cite{BR1}.)
We denote by $\Aut \caB$ the group of automorphisms on a $C^{*}$-algebra $\caB$.
The group of inner automorphisms on  a unital $C^{*}$-algebra $\caB$ is denoted by $\Inn \caB$.
For $\gamma_1,\gamma_2\in\Aut(\caB)$, $\gamma_1=\inn\circ\gamma_2$
means there is some unitary $u$ in $\caB$ such that $\gamma_1=\Ad(u)\circ\gamma_2$.
For a unital $C^{*}$-algebra $\caB$, the unit of $\caB$ is denoted by $\unit_{\caB}$.
For a Hilbert space we write $\unit_{\caH}:=\unit_{\caB(\caH)}$.
For a unital $C^{*}$-algebra $\caB$, by $\caU(\caB)$, we mean
the set of all unitary elements in $\caB$.
For a Hilbert space we write $\caU(\caH)$ for $\caU(\caB(\caH))$.

For a state $\varphi$ on $\caB$ and a $C^{*}$-subalgebra $\caC$ of $\caB$,
$\varphi\vert_{\caC}$ indicates the restriction of $\varphi$ to $\caC$.

\section{Automorphisms on UHF-algebras}
\begin{lem}\label{bimp}
Let $\mathfrak A$, $\mathfrak B$ be UHF-algebras.
If automorphisms $\gamma_{\mathfrak A}\in\Aut(\mathfrak A)$, $\gamma_{\mathfrak B}\in\Aut(\mathfrak B)$ and a unitary $W\in \caU\lmk{\mathfrak A}\otimes \mathfrak B\rmk$
satisfy
\begin{align}
\lmk
\gamma_{\mathfrak A}\otimes\gamma_{\mathfrak B}
\rmk(X)=
\Ad_W(X),\quad X\in{\mathfrak A\otimes\mathfrak B},
\end{align}
then there are unitaries $u_{\mathfrak A}\in\caU(\mathfrak A)$ and  $u_{\mathfrak B}\in\caU(\mathfrak B)$ such that
\begin{align}
\begin{split}
&\gamma_{\mathfrak A}\lmk X_{\mathfrak A}\rmk
=\Ad_{u_{\mathfrak A}}(X_{\mathfrak A})
,\quad X_{\mathfrak A}\in{\mathfrak A},\\
&\gamma_{\mathfrak B}\lmk X_{\mathfrak B}\rmk=\Ad_{u_{\mathfrak B}}(X_{\mathfrak B})
,\quad X_{\mathfrak B}\in{\mathfrak B}.
\end{split}
\end{align}
\end{lem}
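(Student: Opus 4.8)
The plan is to conjugate the hypothesis against the two tensor factors separately by means of slice maps, and then to exploit the simplicity of UHF-algebras (concretely, that their centers are trivial and that they are finite) in order to turn an intertwiner into a genuine unitary of $\mathfrak A$, respectively $\mathfrak B$.

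First I would rewrite the identity $\lmk\gamma_{\mathfrak A}\otimes\gamma_{\mathfrak B}\rmk(X)=\Ad_W(X)$, evaluated on $a\otimes\unit_{\mathfrak B}$ and on $\unit_{\mathfrak A}\otimes b$, in the form
\begin{align*}
W\lmk a\otimes\unit_{\mathfrak B}\rmk=\lmk\gamma_{\mathfrak A}(a)\otimes\unit_{\mathfrak B}\rmk W,\qquad
W\lmk\unit_{\mathfrak A}\otimes b\rmk=\lmk\unit_{\mathfrak A}\otimes\gamma_{\mathfrak B}(b)\rmk W,
\end{align*}
for $a\in\mathfrak A$, $b\in\mathfrak B$. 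For a state $\phi$ on $\mathfrak B$ let $L_\phi:=\id_{\mathfrak A}\otimes\phi\colon\mathfrak A\otimes\mathfrak B\to\mathfrak A$ be the corresponding slice map; it is a unital completely positive $\mathfrak A$-bimodule map over $\mathfrak A\otimes\bbC\unit_{\mathfrak B}$, so it is $*$-preserving and satisfies $L_\phi\lmk(a'\otimes\unit_{\mathfrak B})\,T\,(a''\otimes\unit_{\mathfrak B})\rmk=a'\,L_\phi(T)\,a''$. Putting $b_\phi:=L_\phi(W)\in\mathfrak A$ and applying $L_\phi$ to the first relation above, and to the adjoint of that relation (rewritten with $a\otimes\unit_{\mathfrak B}$ moved to the opposite side), I obtain
\begin{align*}
b_\phi\,a=\gamma_{\mathfrak A}(a)\,b_\phi,\qquad a\,b_\phi^{*}=b_\phi^{*}\,\gamma_{\mathfrak A}(a),\qquad a\in\mathfrak A .
\end{align*}
Combining these two identities gives $\gamma_{\mathfrak A}(a)\,b_\phi b_\phi^{*}=b_\phi b_\phi^{*}\,\gamma_{\mathfrak A}(a)$ for every $a$, so $b_\phi b_\phi^{*}$ lies in the center of $\gamma_{\mathfrak A}(\mathfrak A)=\mathfrak A$. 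Since $\mathfrak A$ is UHF its center is $\bbC\unit_{\mathfrak A}$, hence $b_\phi b_\phi^{*}=\lambda_\phi\unit_{\mathfrak A}$ for some $\lambda_\phi\ge 0$.

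The one step that requires a little care is arranging $b_\phi\neq 0$. This is possible because $W$ is a nonzero element of the (minimal) tensor product and the product functionals separate its points: if $L_\phi(W)=0$ for every state $\phi$ on $\mathfrak B$, then $(\varphi\otimes\phi)(W)=\varphi(L_\phi(W))=0$ for all states $\varphi$ on $\mathfrak A$ and $\phi$ on $\mathfrak B$, which forces $W=0$. Fixing a state $\phi$ with $b_\phi\neq0$ we get $\lambda_\phi>0$, so $u_{\mathfrak A}:=\lambda_\phi^{-1/2}b_\phi$ is a coisometry, hence — as $\mathfrak A$ is finite — a unitary in $\mathfrak A$, and $\gamma_{\mathfrak A}(a)=b_\phi\,a\,b_\phi^{*}/\lambda_\phi=u_{\mathfrak A}\,a\,u_{\mathfrak A}^{*}$, i.e.\ $\gamma_{\mathfrak A}=\Ad_{u_{\mathfrak A}}$. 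Running the symmetric argument with the slice maps $\varphi\otimes\id_{\mathfrak B}$ for states $\varphi$ on $\mathfrak A$, and with the second relation above, produces in the same way a unitary $u_{\mathfrak B}\in\mathfrak B$ with $\gamma_{\mathfrak B}=\Ad_{u_{\mathfrak B}}$, completing the proof.

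The genuinely non-elementary inputs are thus only standard ones: the existence and $\mathfrak A$-bimodularity of slice maps on the minimal $C^{*}$-tensor product, and the fact that product functionals separate its points; everything else is the short computation above, and the role of the UHF hypothesis is exactly to supply the trivial center and finiteness that convert $b_\phi$ into a unitary of $\mathfrak A$ itself. An alternative route — closer to the GNS techniques used elsewhere in the paper — would be to pass to the GNS representation of a product of pure states $\psi_{\mathfrak A}\otimes\psi_{\mathfrak B}$, use that $\Ad_{\pi(W)}$ normalizes $\caB(\caH_{\mathfrak A})\otimes\unit$ to factorize $\pi(W)=V_{\mathfrak A}\otimes V_{\mathfrak B}$, and deduce $\pi_{\mathfrak A}\circ\gamma_{\mathfrak A}=\Ad_{V_{\mathfrak A}}\circ\pi_{\mathfrak A}$; but then one still has to descend from a unitary of $\caB(\caH_{\mathfrak A})$ to a unitary of $\mathfrak A$, which is precisely the step the slice-map computation sidesteps.
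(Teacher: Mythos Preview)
Your proof is correct and takes a genuinely different route from the paper's. The paper first passes to irreducible representations $(\caH_{\mkA},\pi_{\mkA})$, $(\caH_{\mkB},\pi_{\mkB})$, observes that $(\pi_{\mkA}\circ\gamma_{\mkA})\otimes(\pi_{\mkB}\circ\gamma_{\mkB})$ is unitarily equivalent to $\pi_{\mkA}\otimes\pi_{\mkB}$ via $\pi(W)$, deduces quasi-equivalence on each factor, and invokes Wigner's theorem to get implementing unitaries $v_{\mkA}\in\caU(\caH_{\mkA})$, $v_{\mkB}\in\caU(\caH_{\mkB})$ with $(\pi_{\mkA}\otimes\pi_{\mkB})(W)=c(v_{\mkA}\otimes v_{\mkB})$. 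It then \emph{also} uses a slice map --- but at the Hilbert-space level, $\Phi_\xi:\caB(\caH_{\mkA}\otimes\caH_{\mkB})\to\caB(\caH_{\mkB})$ coming from a vector $\xi\in\caH_{\mkA}$ --- applied to an approximating sequence for $W$, to show that $v_{\mkB}$ actually lies in $\pi_{\mkB}(\mkB)$; simplicity of $\mkB$ (faithfulness of $\pi_{\mkB}$) then finishes. So the ``descent from $\caB(\caH)$ to $\mkA$'' that you describe as the residual difficulty of the GNS route is exactly what the paper carries out, and it does so with a slice-map argument of its own.

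Your version moves the slice map to the $C^*$-level from the outset and never leaves $\mkA\otimes\mkB$: you slice $W$ directly to get $b_\phi\in\mkA$, use triviality of the center to make $b_\phi b_\phi^*$ scalar, and then invoke \emph{finiteness} of UHF-algebras (existence of a faithful trace) to promote the resulting coisometry to a unitary. This is shorter and avoids representation theory entirely; the trade-off is that you use one extra structural fact about UHF-algebras (finiteness) that the paper's argument does not need, whereas the paper uses irreducibility of pure-state GNS representations and Wigner's theorem. Both arguments rely on simplicity and on slice maps/product functionals separating points; they just deploy them at different levels.
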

\begin{proof}
Fix some irreducible representations 
$(\caH_{\mathfrak A},\pi_{\mathfrak A})$, $(\caH_{\mathfrak B},\pi_{\mathfrak B})$,
of $\mathfrak A$, $\mathfrak B$.
We claim that there are unitaries $v_{\mathfrak A}\in\caU(\caH_{\mathfrak A})$
and $v_{\mkB}\in\caU(\caH_{\mkB})$ such that 
\begin{align}\label{clu}
\begin{split}
&\Ad_{v_{\mkA}}\lmk \pi_{\mkA}(X_{\mkA})\rmk=\pi_{\mkA}\circ\gamma_{\mkA} (X_{\mkA}),\quad
X_{\mkA}\in\mkA,\\
&\Ad_{v_{\mkB}}\lmk \pi_{\mkB}(X_{\mkB})\rmk=\pi_{\mkB}\circ\gamma_{\mkB} (X_{\mkB}),\quad
X_{\mkB}\in\mkB.
\end{split}
\end{align}
To see this, note that
\begin{align}
\lmk
\pi_{\mkA}\circ\gamma_{\mkA}\otimes \pi_{\mkB}\circ\gamma_{\mkB}
\rmk
=\Ad_{\lmk \pi_{\mkA}\otimes \pi_{\mkB}\rmk(W)}\circ\lmk \pi_{\mkA}\otimes\pi_{\mkB}\rmk.
\end{align}
From this, $\pi_{\mkA}\circ\gamma_{\mkA}$ (resp. $\pi_{\mkB}\circ\gamma_{\mkB}$) is quasi-equivalent to
$\pi_{\mkA}$ (resp. $\pi_{\mkB}$).
Because $\pi_{\mkA}$ and $\pi_{\mkB}$ are irreducible, by the Wigner theorem, 
there are  unitaries $v_{\mathfrak A}\in\caU(\caH_{\mathfrak A})$
and $v_{\mkB}\in\caU(\caH_{\mkB})$
satisfying (\ref{clu}).

We then have
\begin{align}
\Ad_{\lmk \pi_{\mkA}\otimes \pi_{\mkB}\rmk(W)}\circ\lmk \pi_{\mkA}\otimes\pi_{\mkB}\rmk
=\lmk \pi_{\mkA}\circ\gamma_{\mkA}\rmk\otimes \lmk \pi_{\mkB}\circ\gamma_{\mkB}\rmk
=\lmk \Ad_{v_{\mkA}}\circ\pi_{\mkA}\rmk\otimes \lmk \Ad_{v_{\mkB}}\circ\pi_{\mkB}\rmk
=\Ad_{v_{\mkA}\otimes v_{\mkB}}\circ \lmk\pi_{\mkA}\otimes\pi_{\mkB}\rmk.
\end{align}
Because $\pi_{\mkA}\otimes\pi_{\mkB}$ is irreducible, there is a $c\in\bbT$
such that
\begin{align}
\lmk \pi_{\mkA}\otimes\pi_{\mkB}\rmk(W)=c\lmk v_{\mkA}\otimes v_{\mkB}\rmk.
\end{align}

We claim there is a unitary $u_{\mathfrak B}\in\caU(\mathfrak B)$ such that
\begin{align}
%\pi_{\mkA}\lmk u_{\mkA}\rmk=v_{\mkA},\quad
\pi_{\mkB}\lmk u_{\mkB}\rmk=v_{\mkB}.
\end{align}
Choose a unit vector $\xi\in\caH_{\mathfrak A}$ with $\braket{\xi}{ v_{\mkA}\xi}\neq 0$.
For each $x\in\caB(\caH_\mkA\otimes\caH_{\mkB})$, the map
\begin{align}
\caH_{\mathfrak B}\times \caH_{\mathfrak B}\ni 
(\eta_1,\eta_2)\mapsto
\braket{\lmk \xi\otimes \eta_1\rmk}{x \lmk \xi\otimes \eta_2\rmk}
\end{align}
is a bounded sesquilinear form.
Therefore, there is a unique $\Phi_\xi(x)\in\caB(\caH_{\mathfrak B})$ such that
\begin{align}
\braket{\eta_1}{\Phi_\xi(x)\eta_2}
=\braket{\lmk \xi\otimes \eta_1\rmk}{x \lmk \xi\otimes \eta_2\rmk},\quad
(\eta_1,\eta_2)\in \caH_{\mathfrak B}\times \caH_{\mathfrak B}.
\end{align}
The map
$\Phi_\xi:\caB\lmk\caH_\mkA\otimes\caH_{\mkB}\rmk\to \caB(\caH_{\mathfrak B})$  is linear and
\begin{align}\label{norp}
\lV\Phi_\xi(x)\rV\le \lV x\rV,\quad x\in\caB(\caH).
\end{align}
Because $W$ belongs to $\mathfrak A\otimes \mathfrak B$,
there are sequence 
\begin{align}
z_N=\sum_{i=1}^{n_N} a_i^{(N)}\otimes b_i^{(N)},\quad
\text{with}\quad  a_i^{(N)}\in\mkA,\quad b_i^{(N)}\in\mkB
\end{align}
such that
\begin{align}
\lV
W-z_N
\rV<\frac 1N.
\end{align}
Because of (\ref{norp}),
we have
\begin{align}
\lV
\Phi_{\xi}\lmk \lmk \pi_\mkA\otimes\pi_\mkB\rmk\lmk W-z_N\rmk\rmk
\rV<\frac 1N.
\end{align}
Note that 
\begin{align}
\Phi_{\xi}\lmk \lmk \pi_\mkA\otimes\pi_\mkB\rmk\lmk z_N\rmk\rmk
=\sum_{i=1}^{n_N} \braket{\xi}{ \pi_{\mathfrak A}\lmk a_i^{(N)}\rmk\xi}\pi_{\mathfrak B}\lmk b_i^{(N)}\rmk
\in\pi_{\mathfrak B}(\mathfrak B).
\end{align}
Therefore, we have
\begin{align}
c\braket{\xi}{v_{\mkA}\xi} v_{\mkB}
=\Phi_{\xi}\lmk c\lmk v_{\mkA}\otimes v_{\mkB}\rmk
\rmk
=\Phi_{\xi}\lmk
\lmk \pi_{\mkA}\otimes\pi_{\mkB}\rmk(W)\rmk
\in\overline{\pi_{\mathfrak B}(\mathfrak B)}^n,
\end{align}
where $\overline{\cdot}^{n}$ denotes the norm closure.
Because $\pi_{\mkB}\lmk\mathfrak B\rmk$ is norm-closed, we have $\overline{\pi_\mkB\lmk\mathfrak B\rmk}^n
=\pi_\mkB\lmk\mathfrak B\rmk$.
Hence we have $v_{\mkB}\in\pi_\mkB\lmk\mathfrak B\rmk$, i.e.,
there is a unitary $u_\mkB\in\mathfrak B$ such that $v_{\mkB}=\pi_{\mathfrak B}\lmk u_{\mkB}\rmk$.

We then have
\begin{align}
\pi_\mkB\circ\Ad_{u_{\mkB}}(X)
=\Ad_{\pi_\mkB(u_\mkB)}\circ\pi_{\mkB}(X)
=\Ad_{v_\mkB}\circ\pi_{\mkB}(X)
=\pi_\mkB\circ\gamma_\mkB(X),\quad X\in\mkB.
\end{align}
As $\mkB$ is simple, 
$\Ad_{u_{\mkB}}(X)=\gamma_\mkB(X)$
for all $ X\in\mkB$.

The proof for $\mkA$ is the same.

\end{proof}

\section{$F$-functions}\label{ffunc}
In this section, we collect various estimates about $F$-functions.
Let us first start from the definition.
\begin{defn}\label{ffuncdef}
An $F$-function $F$ on $({\bbZ^2}, \dist)$
is a non-increasing function $F:[0,\infty)\to (0,\infty)$
such that
\begin{description}
\item[(i)]
$\lV F\rV:=\sup_{x\in{\bbZ^2}}\lmk \sum_{y\in{\bbZ^2}}F\lmk {\dist}(x,y)\rmk\rmk<\infty$,
and
\item[(ii)]
$C_{F}:=\sup_{x,y\in{\bbZ^2}}\lmk \sum_{z\in{\bbZ^2}}
\frac{F\lmk {\dist}(x,z)\rmk F\lmk {\dist}(z,y)\rmk}{F\lmk {\dist}(x,y)\rmk}\rmk<\infty$.
\end{description}
These are called \emph{uniform integrability} and the \emph{convolution identity}, respectively.
\end{defn}

We denote by $\caF_{a}$ a class of $F$-functions which decay quickly.
\begin{defn}\label{fadef}
We say an $F$-function $F$ belongs to $\caF_{a}$
if
\begin{description}
\item[(i)]
for any $k\in\bbN\cup\{0\}$ and $0<\theta\le 1$, we have
\begin{align}
\kappa_{\theta,k, F}:=\sum_{n=0}^{\infty} (n+1)^{k}\lmk  F(n)\rmk^{\theta}<\infty,
\end{align}
and
\item[(ii)] for any $0<\theta<1$,
there is a $F$-function $\tilde F_{\theta}$ such that
\begin{align}\label{tildef}
\max\left\{ F\lmk\frac r 3\rmk, \lmk  F\lmk \lcm \frac r 3 \rcm \rmk\rmk^{\theta}\right\}\le
\tilde F_{\theta}(r),\quad\quad  r\ge 0.
\end{align}

\end{description}

\end{defn}
For example, a function $F(r)=\frac{\exp\lmk {-r^{\theta}}\rmk}{(1+r)^{4}}$ with a constant  $0<\theta<1$
belongs to $\caF_a$. (See section 8 of \cite{NSY}.)

In this section, we derive inequalities about $F\in \caF_{a}$.
In order for that the following Lemma is useful.
We will freely identify $\bbC$ and $\bbR^{2}$ in an obvious manner.
\begin{lem}\label{sest}
For $0\le \theta_1<\theta_2\le 2\pi$, $c>0$, and $r\ge 0$, set
\begin{align}
S_{r,c}^{[\theta_1,\theta_2]}
:=
\left\{
s e^{i\theta}\in\bbR^2\mid
r\le s< r+c,\quad
\theta\in [\theta_1,\theta_2]
\right\}.
\end{align}
Then we 
have
\begin{align}
\#\lmk
S_{r,c}^{[\theta_1,\theta_2]}\cap \bbZ^2
\rmk
\le \pi \lmk 2\sqrt 2+c\rmk^{2}(r+1).
\end{align}
In particular, we have
\begin{align}
\#\lmk
S_{r,1}^{[\theta_1,\theta_2]}\cap \bbZ^2
\rmk \le 64 (r+1).
\end{align}
\end{lem}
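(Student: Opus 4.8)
\textbf{Proof plan for Lemma \ref{sest}.}

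The plan is to bound the annular sector $S_{r,c}^{[\theta_1,\theta_2]}$ by the area of an enclosing region, using the standard fact that the number of lattice points in a planar set is at most the area of its $1$-neighbourhood (or, more crudely, at most the area of the set enlarged by a margin accounting for unit squares). First I would observe that $S_{r,c}^{[\theta_1,\theta_2]}$ is contained in the full annulus $A_{r,c}:=\{se^{i\theta}\mid r\le s<r+c,\ \theta\in[0,2\pi]\}$, since enlarging the angular range only makes the set bigger; thus it suffices to bound $\#(A_{r,c}\cap\bbZ^2)$. Associating to each lattice point $z\in A_{r,c}\cap\bbZ^2$ the closed unit square centred at $z$, these squares have disjoint interiors and each is contained in the slightly fattened annulus $\tilde A:=\{w\in\bbR^2\mid r-\tfrac{\sqrt2}{2}\le |w|\le r+c+\tfrac{\sqrt2}{2}\}$, because every point of the unit square centred at $z$ lies within distance $\tfrac{\sqrt2}{2}$ of $z$. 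Hence $\#(A_{r,c}\cap\bbZ^2)\le \mathrm{Area}(\tilde A)$.

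Next I would compute the area of $\tilde A$: writing $R_+:=r+c+\tfrac{\sqrt2}{2}$ and $R_-:=\max\{0,r-\tfrac{\sqrt2}{2}\}$, we have
\begin{align}
\mathrm{Area}(\tilde A)=\pi\lmk R_+^2-R_-^2\rmk\le \pi\lmk R_+^2-(r-\tfrac{\sqrt2}{2})^2\rmk
=\pi\lmk c+\sqrt2\rmk\lmk 2r+c\rmk,
\end{align}
using the difference of squares $R_+^2-R_-^2=(R_+-R_-)(R_++R_-)=(c+\sqrt2)(2r+c)$ when $r\ge \tfrac{\sqrt2}{2}$, and checking the small-$r$ case directly. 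A convenient uniform bound is then obtained by noting $c+\sqrt2\le (2\sqrt2+c)$ trivially (in fact $\sqrt2<2\sqrt2$) and $2r+c\le (2\sqrt2+c)^2(r+1)$ for $r\ge0$, since $2r+c\le (4+c)(r+1)\le (2\sqrt2+c)^2(r+1)$ because $(2\sqrt2+c)^2=8+4\sqrt2 c+c^2\ge 4+c$ whenever $c\ge0$. Combining these estimates gives $\#(S_{r,c}^{[\theta_1,\theta_2]}\cap\bbZ^2)\le \pi(2\sqrt2+c)^2(r+1)$, which is the claimed inequality. Finally, specializing to $c=1$ yields $\pi(2\sqrt2+1)^2(r+1)=\pi(9+4\sqrt2)(r+1)$, and since $\pi(9+4\sqrt2)<\pi\cdot 14.66<46.1<64$, we get $\#(S_{r,1}^{[\theta_1,\theta_2]}\cap\bbZ^2)\le 64(r+1)$.

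The main point requiring care is the geometric step of covering lattice points by disjoint unit squares and verifying containment in the fattened annulus $\tilde A$, together with the arithmetic of bounding $\pi(R_+^2-R_-^2)$ uniformly by the stated polynomial in $r$; none of this is deep, but one must handle the boundary case $r<\tfrac{\sqrt2}{2}$ (where $R_-=0$) separately to be rigorous. Everything else is elementary area estimation, and the constant $64$ in the $c=1$ case is generous enough that no sharp tracking of constants is needed.
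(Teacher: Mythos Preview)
Your approach is essentially the same as the paper's: both bound the lattice-point count by the area of a fattened annulus. The paper uses unit squares with corners at lattice points and fattens by $\sqrt{2}$, while you use unit squares centred at lattice points (with disjoint interiors) and fatten by $\tfrac{\sqrt{2}}{2}$; your version is actually slightly cleaner since it avoids the paper's extra step of relating the number of points to the number of squares.

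There is one arithmetic slip in your combination step. You bound $\mathrm{Area}(\tilde A)\le \pi(c+\sqrt{2})(2r+c)$ and then invoke the two inequalities $c+\sqrt{2}\le 2\sqrt{2}+c$ and $2r+c\le (2\sqrt{2}+c)^{2}(r+1)$; multiplying these would yield $\pi(2\sqrt{2}+c)^{3}(r+1)$, not $\pi(2\sqrt{2}+c)^{2}(r+1)$ as claimed. The fix is to replace the second inequality by the (simpler and stronger) bound $2r+c\le (2\sqrt{2}+c)(r+1)$, which holds because $(2\sqrt{2}+c)(r+1)-(2r+c)=(2\sqrt{2}+c-2)r+2\sqrt{2}\ge 0$; combining this with $c+\sqrt{2}\le 2\sqrt{2}+c$ then gives exactly $\pi(2\sqrt{2}+c)^{2}(r+1)$. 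With this correction the proof goes through, and the small-$r$ case and the $c=1$ specialisation are handled as you describe.
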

\begin{proof}
Because the diameter of a $2$-dimensional unit square is $\sqrt 2$,
any unit square $B$ of $\bbZ^2$ with $B\cap S_{r,c}^{[\theta_1,\theta_2]}\cap \bbZ^2\neq \emptyset$
satisfies $B\subset \hat S_{r,c}^{[\theta_1,\theta_2]}(\sqrt 2)$.
Therefore, we have
\begin{align}\label{boxn}
\#\left\{B\mid \text{unit square of} \quad \bbZ^2
\quad\text{with }\quad B\cap S_{r,c}^{[\theta_1,\theta_2]}\cap \bbZ^2\neq \emptyset\right\}
=\sum_{B: B\cap S_{r,c}^{[\theta_1,\theta_2]}\cap \bbZ^2\neq \emptyset} 1
\le \lv \hat S_{r,c}^{[\theta_1,\theta_2]}(\sqrt 2)\rv.
\end{align}
Note that the area of $ \hat S_{r,c}^{[\theta_1,\theta_2]}(\sqrt 2)$, denoted by 
$\lv \hat S_{r,c}^{[\theta_1,\theta_2]}(\sqrt 2)\rv$
 is less than
\begin{align}
\lv \hat S_{r,c}^{[\theta_1,\theta_2]}(\sqrt 2)\rv\le 
\pi\lmk (r+c+\sqrt 2)^2-(r-\sqrt 2)^2)\rmk
\le 
\pi
(2r+c) 
(2\sqrt 2+c)
\le \pi\lmk 2\sqrt 2+c\rmk^{2}(r+1)
%\le
%32 (r+1)\le 64 (r+1)
\end{align}
if $r>\sqrt 2$.
For $r\le \sqrt 2$, 
we have 
\begin{align}
\lv \hat S_{r,c}^{[\theta_1,\theta_2]}(\sqrt 2)\rv
\le \pi\lmk (r+c+\sqrt 2)^2\rmk
\le \pi\cdot (2\sqrt 2+c)^2
\le \pi\lmk 2\sqrt 2+c\rmk^{2}(r+1).
\end{align}
Hence in any case, we have 
\begin{align}
\lv \hat S_{r,c}^{[\theta_1,\theta_2]}(\sqrt 2)\rv
\le  \pi\lmk 2\sqrt 2+c\rmk^{2}(r+1).
\end{align}
Substituting this to (\ref{boxn}), we obtain
\begin{align}
\#\left\{B\mid \text{unit square of} \quad \bbZ^2
\quad\text{with }\quad B\cap S_{r,c}^{[\theta_1,\theta_2]}\cap \bbZ^2\neq \emptyset\right\}
\le \pi\lmk 2\sqrt 2+c\rmk^{2}(r+1).
\end{align}
On the other hand, we have
\begin{align}
& \#\left\{S_{r,c}^{[\theta_1,\theta_2]}\cap \bbZ^2\right\}
=\sum_{z\in S_{r,c}^{[\theta_1,\theta_2]}\cap \bbZ^2}1 
= \sum_{z\in S_{r,c}^{[\theta_1,\theta_2]}\cap \bbZ^2}
\sum_{B: \text{unit square of}\;\bbZ^2} \frac 14 \unit_{z\in B}\notag\\
&=\sum_{B: \text{unit square of}\;\bbZ^2} 
\sum_{z\in S_{r,c}^{[\theta_1,\theta_2]}\cap \bbZ^2}\frac 14 \unit_{z\in B}
\le \sum_{\substack{B: \\
\text{unit square of}\;\bbZ^2\\
B\cap S_{r,c}^{[\theta_1,\theta_2]}\cap \bbZ^2\neq \emptyset}} 1\notag\\
&=\#\left\{B\mid \text{unit square of} \quad \bbZ^2
\quad\text{with }\quad B\cap S_{r,c}^{[\theta_1,\theta_2]}\cap \bbZ^2\neq \emptyset\right\}
\le \pi\lmk 2\sqrt 2+c\rmk^{2}(r+1).
\end{align}
\end{proof}

For an $F$-function $F\in\caF_{a}$, 
define a function $G_{F}$ on $t\ge 0$ by
\begin{align}\label{gfdef}
G_{F}(t):= \sup_{x\in{\bbZ^2}}\lmk
\sum_{y\in{\bbZ^2}, {\dist}(x,y)\ge t} F\lmk {\dist}(x,y)\rmk
\rmk,\quad t\ge 0.
\end{align}
Note that by uniform integrability the supremum is finite for all $t$.
In particular, for any $0<\theta<1$, we have
\begin{align}
\begin{split}
&G_F(t)
\le  \sum_{r=[t]}^\infty\sum_{\substack{y\in \bbZ^2 :\\
r\le \dist(0,y)<r+1
}} F\lmk \dist(0,y)\rmk
\le \sum_{r=[t]}^\infty\sum_{y\in S_{r,1}^{[0,2\pi]}\cap\bbZ^2} F(r)
\le \sum_{r=[t]}^\infty \# \lmk S_{r,1}^{[0,2\pi]}\cap\bbZ^2\rmk F(r)\\
%\sum_{r=t}^\infty\sum_{y\in S_r^{[0,2\pi]}} F(r)
&\le 64 \sum_{r=[t]}^\infty (r+1) F(r)
= 64  \sum_{r=[t]}^\infty (r+1) F(r)^{\theta} F(r)^{1-\theta}
\le  64  \lmk \sum_{r=0}^\infty (r+1) F(r)^{\theta}\rmk F([t])^{1-\theta}
\le 64 \cdot \kappa_{\theta, 1, F}\cdot F([t])^{1-\theta}
<\infty.
\end{split}
\end{align}
Substituting this, for any $0<\alpha\le 1$, $0<\theta,\varphi<1$, and $k\in\bbN\cup \{0\}$, we have
\begin{align}\label{ugugug}
\begin{split}
&\sum_{n=0}^{\infty} (1+n)^{k} \lmk G_{F}(n)\rmk^{\alpha}
\le\lmk  64 \cdot \kappa_{\theta, 1, F}\rmk^{\alpha} 
\sum_{n=0}^{\infty}(1+n)^{k} \cdot F(n)^{\alpha\lmk 1-\theta\rmk}
=\lmk  64 \cdot \kappa_{\theta, 1, F}\rmk^{\alpha} 
\kappa_{\alpha\lmk 1-\theta\rmk, k, F}<\infty,\\
&\sum_{n=[\frac r3]}^{\infty} (1+n)^{k} \lmk G_{F}(n)\rmk^{\alpha}
\le\lmk  64 \cdot \kappa_{\theta, 1, F}\rmk^{\alpha} 
\sum_{n=[\frac r 3]}^{\infty}(1+n)^{k} \cdot
\lmk F(n)^{\alpha\lmk 1-\theta\rmk}\rmk^{(1-\varphi)} \lmk F(n)^{\alpha\lmk 1-\theta\rmk}\rmk^{\varphi}\\
&=\lmk  64 \cdot \kappa_{\theta, 1, F}\rmk^{\alpha} 
\kappa_{\alpha\lmk 1-\theta\rmk\lmk1-\varphi\rmk, k, F}
F\lmk \lcm\frac r3\rcm\rmk^{\alpha\lmk 1-\theta\rmk\varphi}.
\end{split}
\end{align}
For any $0<c\le 1$, we have
\begin{align}\label{bzeroest}
\begin{split}
&\sum_{r=0}^{\infty} F(cr) (r+2)^{3}
=\sum_{l=0}^{\infty}
\sum_{\substack{r\in\bbZ_{\ge 0}\\ l\le cr< l+1}}F(cr) (r+2)^{3}
\le \sum_{l=0}^{\infty}
\sum_{\substack{r\in\bbZ_{\ge 0}\\\frac{l}c\le r<\frac{l+1}c}}F(l) \lmk \frac{l+1}c+2\rmk^{3}\\
&\le
\sum_{l=0}^{\infty}
F(l) \lmk \frac{l+1}c+2\rmk^{3}\lmk \frac{l+1}c-(\frac lc-1)+1\rmk
\le
\sum_{l=0}^{\infty}
F(l) \lmk \frac{l+1}c+2\rmk^{4}\\
&\le
\frac 1{c^{4}}\sum_{l=0}^{\infty}
F(l) \lmk l+3\rmk^{4}\le \frac {3^{4}\kappa_{1, 4, F}}{c^{4}} <\infty.
\end{split}
\end{align}
We also have for $m\in\bbZ_{\ge 0}$ and $0<c\le 1$ that
\begin{align}\label{kikiki}
\begin{split}
&\sum_{r_1=0}^\infty\sum_{\substack{r\in{\bbZ_{\ge 0}}: \\\sqrt{r^2+r_1^2}c\ge (m+1)} }
(r_1+1)
 F\lmk
  \sqrt{r^2+r_1^2}c-(m+1)
  \rmk\\
&  \le 
  \sum_{l=0}^{\infty}
  \sum_{\substack{r_{1}, r\in\bbZ_{\ge 0}\\
  l\le \sqrt{r^2+r_1^2}c- (m+1)< l+1
  }}
  (r_1+1)
 F\lmk
  \sqrt{r^2+r_1^2}c-(m+1)
  \rmk\\
  &\le\sum_{l=0}^{\infty}
  \sum_{\substack{r_{1}, r\in\bbZ_{\ge 0}\\
  l\le \sqrt{r^2+r_1^2}c- (m+1)< l+1
  }}
  \lmk \frac{l+m+2}{c}+1\rmk \cdot F(l)\\
  &\le
    \sum_{l=0}^{\infty}
 \#\left\{
 \bbZ^{2}\cap S_{\frac{l+m+1}{c}, \frac 1c}^{[0,2\pi]}
 \right\}
  \lmk \frac{l+m+2}{c}+1\rmk \cdot F(l)\\
&\le 
  \sum_{l=0}^{\infty}\pi \lmk 2\sqrt 2+\frac 1c\rmk^{2}\lmk \frac{l+m+1}{c}+1\rmk\cdot
  \lmk \frac{l+m+2}{c}+1\rmk \cdot F(l)\\
  &\le 
  \sum_{l=0}^{\infty}\pi \lmk 2\sqrt 2+\frac 1c\rmk^{2}
  \lmk \frac{l+m+3}{c}\rmk^{2}\cdot F(l)\\
  &\le 
\pi \lmk 2\sqrt 2+\frac 1c\rmk^{2}
\frac{(m+3)^{2}}{c^{2}}
  \sum_{l=0}^{\infty}\lmk l+1\rmk^{2} F(l)\\
  &\le\pi \lmk 2\sqrt 2+\frac 1c\rmk^{2}
\frac{(m+3)^{2}}{c^{2}}
\kappa_{1,2,F}
\le \lmk\frac 3 c\rmk^{2}\lmk 2\sqrt 2+\frac 1c\rmk^{2}\pi
{(m+1)^{2}}
\kappa_{1,2,F}.
\end{split}
\end{align}

Recall (\ref{ccheck}) and (\ref{czdefn}).
\begin{lem}\label{gomap1}
Let $\varphi_1<\varphi_2<\varphi_3<\varphi_4$ with $\varphi_4-\varphi_1<2\pi$.
Then
we have
\begin{align}
\sum_{\substack{x\in \check C_{[\varphi_1,\varphi_2]},\\y\in \check C_{[\varphi_3,\varphi_4]}}}
F\lmk {\dist}(x,y)\rmk\le
(64)^3
\frac {3^{4}\kappa_{1, 4, F}}{\lmk c^{(0)}_{\varphi_1,\varphi_2,\varphi_3,\varphi_4}\rmk ^{4}}.
\end{align}

%
% (64)^2\lmk \lV\lv \Phi_1^{(1)}\rV\rv\rmk\lmk \sum_{m\ge 0}G_{F}\lmk m\rmk\rmk\\
%&\sum_{r=0}^\infty
%F\lmk 
%\sqrt{1- \max\left\{
%\cos\lmk \varphi_3-\varphi_2\rmk,
%\cos\lmk \varphi_4-\varphi_1\rmk,0
%\right\}}
%r)
%\rmk (r+1)^3\\
\end{lem}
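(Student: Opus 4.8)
The plan is to reduce the double cone sum to a single rapidly convergent sum by combining a geometric lower bound on $\dist(x,y)$ with the lattice-point counting estimate of Lemma \ref{sest} and the summability estimate (\ref{bzeroest}). \emph{Step 1 (geometry).} Identifying $\bbZ^2\subset\bbR^2\cong\bbC$, write $x=se^{i\alpha}$, $y=re^{i\beta}$, so that $\dist(x,y)^2=s^2+r^2-2sr\cos(\alpha-\beta)$. For $x\in\check C_{[\varphi_1,\varphi_2]}$ and $y\in\check C_{[\varphi_3,\varphi_4]}$ the angle $\beta-\alpha$ lies in the interval $[\varphi_3-\varphi_2,\varphi_4-\varphi_1]$, which is contained in $(0,2\pi)$ because $0<\varphi_3-\varphi_2<\varphi_4-\varphi_1<2\pi$. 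An interval contained in $(0,2\pi)$ contains no integer multiple of $2\pi$, so the cosine restricted to it is bounded by its values at the two endpoints; hence $\cos(\alpha-\beta)\le m:=\max\{\cos(\varphi_3-\varphi_2),\cos(\varphi_4-\varphi_1),0\}$, and the same observation shows $m\in[0,1)$. Using the elementary identity $s^2+r^2-2srm-(1-m)(s^2+r^2)=m(s-r)^2\ge 0$ we obtain
\[
\dist(x,y)^2\ge s^2+r^2-2srm\ge(1-m)(s^2+r^2)=\lmk c^{(0)}_{\varphi_1,\varphi_2,\varphi_3,\varphi_4}\rmk^2\lmk s^2+r^2\rmk,
\]
so that $\dist(x,y)\ge c^{(0)}_{\varphi_1,\varphi_2,\varphi_3,\varphi_4}\sqrt{s^2+r^2}$ and, in particular, $0<c^{(0)}_{\varphi_1,\varphi_2,\varphi_3,\varphi_4}\le 1$.

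\emph{Step 2 (reduction to one sum).} Abbreviate $c:=c^{(0)}_{\varphi_1,\varphi_2,\varphi_3,\varphi_4}$. Grouping $x$ and $y$ according to the integer annuli $s'\le|x|<s'+1$, $r'\le|y|<r'+1$ and using that $F$ is non-increasing, Step 1 gives
\[
\sum_{\substack{x\in\check C_{[\varphi_1,\varphi_2]},\, y\in\check C_{[\varphi_3,\varphi_4]}}}F\lmk\dist(x,y)\rmk\le\sum_{s',r'\in\bbZ_{\ge 0}}\#\lmk S_{s',1}^{[0,2\pi]}\cap\bbZ^2\rmk\,\#\lmk S_{r',1}^{[0,2\pi]}\cap\bbZ^2\rmk\,F\lmk c\sqrt{s'^2+r'^2}\rmk.
\]
By Lemma \ref{sest} each cardinality is at most $64(s'+1)$, resp. $64(r'+1)$, so the right-hand side is at most $64^2\sum_{s',r'\in\bbZ_{\ge 0}}(s'+1)(r'+1)F\lmk c\sqrt{s'^2+r'^2}\rmk$. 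Now group the pairs $(s',r')$ by the integer $n$ with $n\le\sqrt{s'^2+r'^2}<n+1$: there are at most $\#\lmk S_{n,1}^{[0,2\pi]}\cap\bbZ^2\rmk\le 64(n+1)$ of them, for each one $s',r'\le n$ so $(s'+1)(r'+1)\le(n+1)^2\le(n+2)^2$, and $F\lmk c\sqrt{s'^2+r'^2}\rmk\le F(cn)$. Hence $\sum_{s',r'\in\bbZ_{\ge 0}}(s'+1)(r'+1)F\lmk c\sqrt{s'^2+r'^2}\rmk\le 64\sum_{n\ge 0}(n+2)^3F(cn)$.

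\emph{Step 3 (conclusion).} Since $0<c\le 1$, estimate (\ref{bzeroest}) gives $\sum_{n\ge 0}(n+2)^3F(cn)\le\frac{3^4\kappa_{1,4,F}}{c^4}$, and combining the three steps,
\[
\sum_{\substack{x\in\check C_{[\varphi_1,\varphi_2]},\, y\in\check C_{[\varphi_3,\varphi_4]}}}F\lmk\dist(x,y)\rmk\le 64^2\cdot 64\cdot\frac{3^4\kappa_{1,4,F}}{c^4}=(64)^3\,\frac{3^4\kappa_{1,4,F}}{\lmk c^{(0)}_{\varphi_1,\varphi_2,\varphi_3,\varphi_4}\rmk^4},
\]
which is the claimed bound. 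The only delicate point is Step 1: one must check that the cosine on the angular interval $[\varphi_3-\varphi_2,\varphi_4-\varphi_1]$ stays away from $1$, so that the $-2sr\cos$ term is not cancelled — which is exactly where the hypotheses $\varphi_4-\varphi_1<2\pi$ and the strict ordering of the $\varphi_i$ enter — and to feed this into the convexity-type inequality $s^2+r^2-2srm\ge(1-m)(s^2+r^2)$; everything afterwards is bookkeeping with Lemma \ref{sest} and (\ref{bzeroest}).
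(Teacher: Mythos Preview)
Your proof is correct and follows essentially the same approach as the paper: the geometric lower bound $\dist(x,y)\ge c^{(0)}\sqrt{s^2+r^2}$, the double annular decomposition with Lemma \ref{sest}, the regrouping of $(s',r')$ by radial shells with a third application of Lemma \ref{sest}, and the final use of (\ref{bzeroest}) are all exactly as in the paper. Your Step 1 is in fact slightly cleaner than the paper's case split on the sign of $\cos(\phi_2-\phi_1)$, since including $0$ in the max from the outset handles both cases at once via the single inequality $s^2+r^2-2srm\ge(1-m)(s^2+r^2)$.
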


\begin{proof}
Let $x=s_1e^{i\phi_1}\in \check C_{[\varphi_1,\varphi_2]}$
and $y=s_2e^{i\phi_2}\in \check C_{[\varphi_3,\varphi_4]}$ with $s_1,s_2\ge 0$.
If $\cos\lmk \phi_2-\phi_1\rmk\ge 0$, then
we have
\begin{align}
\begin{split}
&{\dist}(x,y)=\sqrt{s_1^2+s_2^2-2s_1s_2\cos\lmk \phi_2-\phi_1\rmk}
\ge 
\sqrt{s_1^2+s_2^2}
\sqrt{1-\cos\lmk \phi_2-\phi_1\rmk}
\\
%&\ge \sqrt{s_1^2+s_2^2-2s_1s_2
%\max\left\{
%\cos\lmk \varphi_3-\varphi_2\rmk,
%\cos\lmk \varphi_4-\varphi_1\rmk
%\right\}}\notag\\
&\ge \sqrt{1- \max\left\{
\cos\lmk \varphi_3-\varphi_2\rmk,
\cos\lmk \varphi_4-\varphi_1\rmk,0
\right\}}
\sqrt{s_1^2+s_2^2}
.
\end{split}
\end{align}
If $\cos\lmk \phi_2-\phi_1\rmk<0$, then
we have
\begin{align}
\begin{split}
&{\dist}(x,y)=\sqrt{s_1^2+s_2^2-2s_1s_2\cos\lmk \phi_2-\phi_1\rmk}
\ge \sqrt{s_1^2+s_2^2}.\\
%&\ge 
% \sqrt{1- \max\left\{
%\cos\lmk \varphi_3-\varphi_2\rmk,
%\cos\lmk \varphi_4-\varphi_1\rmk
%\right\}}
%\sqrt{s_1^2+s_2^2}.
%&\ge \sqrt{s_1^2+s_2^2-2s_1s_2
%\max\left\{
%\cos\lmk \varphi_3-\varphi_2\rmk,
%\cos\lmk \varphi_4-\varphi_1\rmk
%\right\}}\notag\\
\end{split}
\end{align}
Hence for any $x=s_1e^{i\phi_1}\in \check C_{[\varphi_1,\varphi_2]}$
and $y=s_2e^{i\phi_2}\in \check C_{[\varphi_3,\varphi_4]}$ with $s_1,s_2\ge 0$
we have
\begin{align}
{\dist}(x,y)\ge  \sqrt{1- \max\left\{
\cos\lmk \varphi_3-\varphi_2\rmk,
\cos\lmk \varphi_4-\varphi_1\rmk,0
\right\}}
\sqrt{s_1^2+s_2^2}
=c^{(0)}_{\varphi_1,\varphi_2,\varphi_3,\varphi_4}\sqrt{s_1^2+s_2^2}.
\end{align}
Substituting this estimate, we obtain
\begin{align}
\begin{split}
&\sum_{\substack{x\in \check C_{[\varphi_1,\varphi_2]},\\y\in \check C_{[\varphi_3,\varphi_4]}}}
F\lmk {\dist}(x,y)\rmk
\\
&\le
\sum_{r_1=0}^\infty\sum_{r_2=0}^\infty
\sum_{\substack{x\in S_{r_1,1}^{[\varphi_1,\varphi_2]}\cap \bbZ^2\\
y\in S_{r_2,1}^{[\varphi_3,\varphi_4]}\cap \bbZ^2}}
F\lmk {\dist}(x,y)\rmk
\notag\\
&\le
\sum_{r_1=0}^\infty\sum_{r_2=0}^\infty
F\lmk 
{c^{(0)}_{\varphi_1,\varphi_2,\varphi_3,\varphi_4}}
\sqrt{r_1^2+r_2^2})
\rmk\#\lmk
S_{r_1,1}^{[\varphi_1,\varphi_2]}\cap \bbZ^2
\rmk
\#\lmk
S_{r_2,1}^{[\varphi_3,\varphi_4]}\cap \bbZ^2
\rmk
\end{split}\\
\begin{split}
&\le (64)^2
\sum_{r_1=0}^\infty\sum_{r_2=0}^\infty
F\lmk 
{c^{(0)}_{\varphi_1,\varphi_2,\varphi_3,\varphi_4}}
\sqrt{r_1^2+r_2^2})
\rmk(r_1+1)(r_2+1)\\
&\le  (64)^2
\sum_{r=0}^\infty
\sum_{\substack{r_{1}, r_{2}\in\bbZ_{\ge 0}\\
(r_{1}, r_{2})\in S_{r, 1}^{[0,\frac\pi 2]}\cap\bbZ^{2}
}
}
F\lmk 
{c^{(0)}_{\varphi_1,\varphi_2,\varphi_3,\varphi_4}}
\sqrt{r_1^2+r_2^2})
\rmk(r_1+1)(r_2+1)\\
& 
\le (64)^2
\sum_{r=0}^\infty
F\lmk 
{c^{(0)}_{\varphi_1,\varphi_2,\varphi_3,\varphi_4}}
r
\rmk (r+2)^2\cdot
\#\lmk
S_{r}^{[0,\frac\pi 2]}\cap \bbZ^2
\rmk
\\
&\le
(64)^3
\sum_{r=0}^\infty
F\lmk 
{c^{(0)}_{\varphi_1,\varphi_2,\varphi_3,\varphi_4}}
r
\rmk (r+2)^3
\\
&\le(64)^3
\frac {3^{4}\kappa_{1, 4, F}}{\lmk c^{(0)}_{\varphi_1,\varphi_2,\varphi_3,\varphi_4}\rmk ^{4}}
\end{split}
 \end{align}
 We used Lemma \ref{sest} to bound $\#\lmk
S_{r,1}^{[0,\frac\pi 2]}\cap \bbZ^2
\rmk$ etc.
At the last inequality we used (\ref{bzeroest})
 
\end{proof}

Set
\begin{align}\label{lphidef}
L_\varphi:=
\left\{
z\in\bbR^2\mid
\arg z=\varphi
\right\},\quad \varphi\in [0,2\pi).
\end{align}
and
\begin{align}
{c^{(1)}}_{\zeta_1,\zeta_2,\zeta_3}
:=\sqrt{1-\max\left\{\cos (\zeta_1-\zeta_2),\cos (\zeta_1-\zeta_3)\right\}},\quad
\zeta_1,\zeta_2,\zeta_3\in [0,2\pi).
\end{align}

\begin{lem}\label{lcone}
Let $\varphi, \theta_1,\theta_2\in\bbR$
with $\theta_1<\theta_2$ and
$0<\lv\varphi-\theta_0\rv<\frac \pi 2$
for all $\theta_0\in [\theta_1,\theta_2]
$.
Then we have
\begin{align}
\sum_{x\in \check C_{[\theta_1,\theta_2]}}\sum_{y\in L_\varphi(m)} F\lmk {\dist}(x,y)\rmk
\le
64\cdot 144\cdot 24\cdot 
\lmk {c^{(1)}}_{\varphi,\theta_1,\theta_2}\rmk^{-4}
\lmk
\pi \kappa_{1,2,F}+F(0)
\rmk(m+1)^{4},
\end{align}
for any $m\in\bbN\cup\{0\}$
%
%
%Then for any $x=r_0 e^{i\theta}\in C_{[\theta_1,\theta_2]}$ with 
%$r_0\ge 0$, we have
%\begin{align}
%\begin{split}
%&\sum_{y\in L_\varphi(m)} F\lmk {\dist}(x,y)\rmk\\
%&\le
%\sum_{\substack{r\in{\bbZ_{\ge 0}}: \\\sqrt{r^2+r_0^2}{c^{(1)}}_{\theta,\varphi_1,\varphi_2}\ge (m+1)} }
%24(m+1) F\lmk
%  \sqrt{r^2+r_0^2}{c^{(1)}}_{\theta,\varphi_1,\varphi_2}-(m+1)
%  \rmk\\
%& +24\frac{(m+1)^2}{{c^{(1)}}_{\theta,\varphi_1,\varphi_2}}F(0)\unit_{r_0\le \frac{m+1}{  {c^{(1)}}_{\theta,\varphi_1,\varphi_2}  }}
%
%
%&\le
%\sum_{\substack{r\in\bbZ: \\\sqrt{r^2+r_0^2}{c^{(1)}}_{\theta,\varphi_1,\varphi_2}\ge (m+1)} }
%12(m+1) F\lmk
%  \sqrt{r^2+r_0^2}{c^{(1)}}_{\theta,\varphi_1,\varphi_2}-(m+1)
%  \rmk\\
%& +24\frac{(m+1)^2}{{c^{(1)}}_{\theta,\varphi_1,\varphi_2}}F(0)\unit_{r_0\le \frac{m+1}{  {c^{(1)}}_{\theta,\varphi_1,\varphi_2}  }}
%
%
% \lmk \sum_{r: \sqrt{r^2+r_0^2}
%{c^{(1)}}_{\varphi, \theta_1,\theta_2}
%\ge (m+1)} 
%12(m+1) F\lmk
%  \sqrt{r^2+r_0^2}{c^{(1)}}_{\varphi, \theta_1,\theta_2}
%  -(m+1)
%  \rmk\rmk+24\frac{(m+1)^2}
%{{c^{(1)}}_{\varphi, \theta_1,\theta_2}}
%F(0)
%\sum_{r: \sqrt{r^2+r_0^2}
%\sqrt{1-\max\left\{\cos (\theta_1-\varphi),\cos (\theta_2-\varphi)\right\}}
%\ge (m+1)} 
%12(m+1)\\
%& F\lmk
%  \sqrt{r^2+r_0^2}\sqrt{1-\max\left\{\cos (\theta_1-\varphi),\cos (\theta_2-\varphi)\right\}}-(m+1)
%  \rmk\\
%& +24\frac{(m+1)^2}
%{\sqrt{1- \max\left\{\cos (\theta_1-\varphi),\cos (\theta_2-\varphi)\right\}}}F(0).
\end{lem}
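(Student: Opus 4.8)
The plan is to follow the argument in the proof of Lemma~\ref{gomap1}, with the second cone there replaced by the tube $\hat L_\varphi(m)$ around the ray $L_\varphi$. The key input is a geometric lower bound: for $x\in\check C_{[\theta_1,\theta_2]}$ and $y\in L_\varphi(m)$, let $z\in L_\varphi$ be the point of the ray nearest to $y$ (possibly the origin), and put $s_1:=\lv x\rv$, $s_2:=\lv z\rv$. Then
\begin{align}
\dist(x,z)\ge {c^{(1)}}_{\varphi,\theta_1,\theta_2}\sqrt{s_1^2+s_2^2}.
\end{align}
This is the degenerate case of the estimate inside the proof of Lemma~\ref{gomap1} in which the cone $\check C_{[\varphi_3,\varphi_4]}$ collapses to the single ray $L_\varphi$; it is legitimate because the hypothesis $0<\lv\varphi-\theta_0\rv<\frac\pi2$ on $[\theta_1,\theta_2]$ forces $\varphi\notin[\theta_1,\theta_2]$, so the angular separation $\phi:=\varphi-\arg x$ satisfies $0\le\cos\phi\le\max\{\cos(\varphi-\theta_1),\cos(\varphi-\theta_2)\}$, whence $\dist(x,z)^2=s_1^2+s_2^2-2s_1s_2\cos\phi\ge(s_1^2+s_2^2)(1-\cos\phi)\ge\lmk{c^{(1)}}_{\varphi,\theta_1,\theta_2}\rmk^2(s_1^2+s_2^2)$, the case $s_1s_2=0$ being trivial since ${c^{(1)}}_{\varphi,\theta_1,\theta_2}\le1$. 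Since $\lv y-z\rv\le m$ and $F$ is non-increasing, this gives
\begin{align}
F\lmk\dist(x,y)\rmk\le F\lmk\max\left\{{c^{(1)}}_{\varphi,\theta_1,\theta_2}\sqrt{s_1^2+s_2^2}-m,\;0\right\}\rmk .
\end{align}

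Next I would decompose both sums into radial shells. Writing $c:={c^{(1)}}_{\varphi,\theta_1,\theta_2}$, grouping $x$ by $s_1\in[r_1,r_1+1)$ gives at most $64(r_1+1)$ points by Lemma~\ref{sest}, while grouping $y$ by $s_2\in[r_2,r_2+1)$ gives at most $24(m+1)$ points: for each $r_2$ the corresponding band of $\hat L_\varphi(m)$ is a rectangle of sides $\le1$ and $\le2m$, so the area-counting argument of Lemma~\ref{sest} applies (the $y$ whose nearest ray point is the origin fill a disc of radius $\le m$, which is dominated by the same kind of estimate). Hence
\begin{align}
\sum_{x\in\check C_{[\theta_1,\theta_2]}}\sum_{y\in L_\varphi(m)}F\lmk\dist(x,y)\rmk
\le 64\cdot24\,(m+1)\sum_{r_1,r_2\ge0}(r_1+1)\,F\lmk\max\left\{c\sqrt{r_1^2+r_2^2}-m,\;0\right\}\rmk .
\end{align}

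Then I would split the last double sum at $c\sqrt{r_1^2+r_2^2}=m+1$. On the region $c\sqrt{r_1^2+r_2^2}\ge m+1$ one has $F\lmk c\sqrt{r_1^2+r_2^2}-m\rmk\le F\lmk c\sqrt{r_1^2+r_2^2}-(m+1)\rmk$, so the estimate (\ref{kikiki}) with this $c$ bounds that part by $\lmk\frac3c\rmk^2\lmk2\sqrt2+\frac1c\rmk^2\pi(m+1)^2\kappa_{1,2,F}$, which using $c\le1$, i.e.\ $2\sqrt2+\frac1c\le\frac4c$, is at most $144\,c^{-4}\pi(m+1)^2\kappa_{1,2,F}$; this is where the factor $144$ appears. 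On the complementary region one replaces $F$ by $F(0)$ and sums $(r_1+1)$ over the lattice points of the quarter-disc of radius $(m+1)/c$, which an elementary area count bounds by a constant multiple of $c^{-4}F(0)(m+1)^3$, small enough to sit under $144\,c^{-4}F(0)(m+1)^3$. Adding the two, multiplying back by $64\cdot24\,(m+1)$, and using $(m+1)^3\le(m+1)^4$ gives $64\cdot144\cdot24\,c^{-4}\lmk\pi\kappa_{1,2,F}+F(0)\rmk(m+1)^4$, as claimed; finiteness of the right-hand side follows from $F\in\caF_a$, cf.\ (\ref{ugugug}).

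The only real point of care is the second counting step, for the tube $\hat L_\varphi(m)$: unlike the angular sectors $S_{r,1}^{[\theta_1,\theta_2]}$ of Lemma~\ref{sest}, the band at radius $r_2$ here is a thin $1\times O(m)$ rectangle, so its lattice-point count grows linearly in $m$ and not in $r_2$. This is precisely why, in the reduced double sum, only the $x$-index carries a $(r_1+1)$ weight, which matches the shape of the already-established estimate (\ref{kikiki}); extracting a clean absolute constant (taken to be $24$ above) from that rectangle, together with the near-origin disc, is the main piece of bookkeeping.
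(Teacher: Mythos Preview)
Your approach is the paper's: slice $L_\varphi(m)$ into transverse slabs, count $O(m)$ lattice points per slab, lower-bound $\dist(x,y)$ via the cone--ray angular gap, and feed the result into~(\ref{kikiki}). The one real slip is the $r_2=0$ band. Your set $\{y\in L_\varphi(m): s_2\in[0,1)\}$ contains, besides the $1\times 2m$ rectangle, all $y$ whose nearest ray point is the origin, i.e.\ the half-disc $\{y:\operatorname{proj}_{L_\varphi}(y)<0,\ |y|\le m\}$; that half-disc carries $O(m^2)$ lattice points, not $O(m)$. So the displayed inequality with ``$24(m+1)$ points per band'' is not justified, and your parenthetical does not repair it (a disc of radius $m$ is \emph{not} dominated by an $O(m)$ count).

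The paper avoids this by slicing according to the \emph{signed} projection onto the line through $L_\varphi$: it sets
\[
T_{\varphi,r,m}:=\bigl\{se^{i\theta}: r\le s\cos(\theta-\varphi)\le r+1,\ |s\sin(\theta-\varphi)|\le m\bigr\}
\]
and covers $L_\varphi(m)\subset\bigcup_{r\ge -m}T_{\varphi,r,m}$. Each $T_{\varphi,r,m}$ is an honest $1\times 2m$ rectangle with $\le 12(m+1)$ lattice points; the angular bound $\dist(x,re^{i\varphi})\ge c^{(1)}_{\varphi,\theta_1,\theta_2}\sqrt{r^2+r_0^2}$ holds for \emph{all} $r\in\bbR$ (the inequality $(r-r_0)^2\cos(\theta_0-\varphi)\ge 0$ needs only $\cos(\theta_0-\varphi)>0$, which your hypothesis gives), and the symmetry $r\mapsto -r$ then produces the factor $24=2\cdot 12$ when passing to $r\ge 0$. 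With that single change in parameterization your argument becomes the paper's verbatim; alternatively you can keep your $s_2$-bands for $r_2\ge 1$ and bound the half-disc contribution separately (all those $y$ have $s_2=0$, so the extra term is $O(m^2)\sum_{r_1}(r_1+1)F(\max\{cr_1-m,0\})$, which also sits under the stated bound).
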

\begin{proof}
For each $r\in\bbZ$, set
\begin{align}
T_{\varphi, r,m}:=
\left\{
se^{i\theta}\in \bbR^{2}\mid
r\le s\cos(\theta-\varphi)\le r+1,\quad
-m\le s\sin(\theta-\varphi)\le m
\right\}.
\end{align}
Note that $s\cos(\theta-\varphi)$ is a projection of $se^{i\theta}$ onto $L_\varphi$
and $|s\sin(\theta-\varphi)|$ is the distance of $se^{i\theta}$ from the line including
 $L_\varphi$.
Then we have
\begin{align}\label{ldecom}
L_\varphi(m)\subset
\cup_{r=-m}^\infty T_{\varphi, r,m}\cap\bbZ^2,\quad\text{and}\quad
\lv 
\hat T_{\varphi, r,m}(\sqrt 2)
\rv\le (2\sqrt 2+1)(2m+2\sqrt 2)\le
12(m+1).
\end{align}
Because the diameter of a $2$-dimensional unit square is $\sqrt 2$,
any unit square $B$ of $\bbZ^2$ with $B\cap T_{\varphi, r,m}
\cap \bbZ^2\neq \emptyset$
satisfies $B\subset \hat T_{\varphi, r,m}(\sqrt 2)$.
Therefore, we have
\begin{align}
\begin{split}
&\#\left\{B\mid \text{unit square of} \quad \bbZ^2
\quad\text{with }\quad B\cap
T_{\varphi, r,m}
\cap \bbZ^2\neq \emptyset\right\}
=\sum_{B: B\cap
T_{\varphi, r,m}
\cap \bbZ^2\neq \emptyset} 1\\
&\le \lv 
\hat T_{\varphi, r,m}(\sqrt 2)
\rv\le 12(m+1).
\end{split}
\end{align}
On the other hand, we have
\begin{align}\label{tnum}
& \#\left\{T_{\varphi,r,m}\cap \bbZ^2\right\}
=\sum_{z\in T_{\varphi,r,m}\cap \bbZ^2}1 
= \sum_{z\in T_{\varphi,r,m}\cap \bbZ^2}
\sum_{B: \text{unit square of}\;\bbZ^2} \frac 14 \unit_{z\in B}\notag\\
&=\sum_{B: \text{unit square of}\;\bbZ^2} 
\sum_{z\in T_{\varphi,r,m}\cap \bbZ^2}\frac 14 \unit_{z\in B}
\le \sum_{\substack{B: \\
\text{unit square of}\;\bbZ^2\\
B\cap T_{\varphi,r,m}\cap \bbZ^2\neq \emptyset}} 1\notag\\
&=\#\left\{B\mid \text{unit square of} \quad \bbZ^2
\quad\text{with }\quad B\cap T_{\varphi,r,m}\cap \bbZ^2\neq \emptyset\right\}
\le 12(m+1).
\end{align}
If  $x\in \check C_{[\theta_1,\theta_2]}$, we have $x=r_0 e^{i\theta_0}$
for some $r_0\ge 0$ and $\theta_0\in [\theta_1,\theta_2]$.
By the assumption, we have $0<|\theta_0-\varphi|<\frac\pi 2$
hence $0<\cos(\varphi-\theta_0)<1$.
Therefore, for any $r\in\bbR$, we have
\begin{align}
\begin{split}
&\dist(x, r e^{i\varphi})
=\sqrt{r^2+r_0^2-2r_0r\cos (\theta_0-\varphi)}
\ge \sqrt{r^2+r_0^2}\sqrt{1-\cos (\theta_0-\varphi)}\\
&\ge
 \sqrt{r^2+r_0^2}\sqrt{1-\max\left\{\cos (\theta_1-\varphi),\cos (\theta_2-\varphi)\right\}}.
\end{split}
\end{align}
Therefore, for 
any $x\in \check C_{[\theta_1,\theta_2]}$  and $y\in T_{\varphi, r,m}$,
we have
\begin{align}
\begin{split}
&{\dist}(x,y)
\ge \dist (x, r e^{i\varphi})-(m+1)
%\ge  \sqrt{r^2+r_0^2}{c^{(1)}}_{\theta,\varphi_1,\varphi_2}-(m+1)\\
=\sqrt{r^2+r_0^2}{c^{(1)}}_{\varphi,\theta_1,\theta_2}-(m+1).
\end{split}
\end{align}
From this and (\ref{ldecom}) and (\ref{tnum}), for any 
$x=r_0 e^{i\theta_0}\in C_{[\theta_1,\theta_2]}$, $r_0\ge 0$, we have 
\begin{align}
\begin{split}
&\sum_{y\in L_\varphi(m)} F\lmk {\dist}(x,y)\rmk
\le 
\sum_{r=-m}^\infty \sum_{y\in\lmk  T_{\varphi, r,m}\cap\bbZ^2\rmk}
 F\lmk {\dist}(x,y)\rmk
 \le
 \sum_{r=-\infty}^\infty \sum_{y\in\lmk  T_{\varphi, r,m}\cap\bbZ^2\rmk}
 F\lmk {\dist}(x,y)\rmk\\
& \le
 \sum_{\substack{r\in\bbZ: \\
 \sqrt{r^2+r_0^2}
 {c^{(1)}}_{\varphi,\theta_1,\theta_2}\ge (m+1)}} \sum_{y\in\lmk  T_{\varphi, r,m}\cap\bbZ^2\rmk} F\lmk
  \sqrt{r^2+r_0^2}{c^{(1)}}_{\varphi,\theta_1,\theta_2}-(m+1)
  \rmk\\
& +\sum_{\substack{
r\in\bbZ: \\
\sqrt{r^2+r_0^2}{c^{(1)}}_{\varphi,\theta_1,\theta_2}<(m+1)}} \sum_{y\in\lmk  T_{\varphi, r,m}\cap\bbZ^2\rmk}
 F\lmk 0\rmk\\
&\le
\sum_{\substack{r\in\bbZ: \\\sqrt{r^2+r_0^2}{c^{(1)}}_{\varphi,\theta_1,\theta_2}\ge (m+1)}} 
12(m+1) F\lmk
  \sqrt{r^2+r_0^2}{c^{(1)}}_{\varphi,\theta_1,\theta_2}-(m+1)
  \rmk\\
& +\sum_{\substack{r\in\bbZ:\\ \sqrt{r^2+r_0^2}{c^{(1)}}_{\varphi,\theta_1,\theta_2}<(m+1)}} 12(m+1)
 F\lmk 0\rmk\\
&\le 
\sum_{\substack{r\in{\bbZ_{\ge 0}}: \\\sqrt{r^2+r_0^2}{c^{(1)}}_{\varphi,\theta_1,\theta_2}\ge (m+1)} }
24(m+1) F\lmk
  \sqrt{r^2+r_0^2}{c^{(1)}}_{\varphi,\theta_1,\theta_2}-(m+1)
  \rmk\\
& +36\frac{(m+1)^2}{{c^{(1)}}_{\varphi,\theta_1,\theta_2}}F(0)\unit_{r_0\le \frac{m+1}{  {c^{(1)}}_{\varphi,\theta_1,\theta_2}  }}
%&=
% \lmk \sum_{r\in\bbZ: \sqrt{r^2+r_0^2}
%{c^{(1)}}_{\varphi, \theta_1,\theta_2}
%\ge (m+1)} 
%12(m+1) F\lmk
%  \sqrt{r^2+r_0^2}{c^{(1)}}_{\varphi, \theta_1,\theta_2}
%  -(m+1)
%  \rmk\rmk+24\frac{(m+1)^2}
%{{c^{(1)}}_{\varphi, \theta_1,\theta_2}}
%F(0)
%.
\end{split}
\end{align}
We then get
\begin{align}
\begin{split}
&\sum_{x\in \check C_{[\theta_1,\theta_2]}}\sum_{y\in L_\varphi(m)} F\lmk {\dist}(x,y)\rmk\\
&\le
\sum_{r_1=0}^\infty
\sum_{x\in S_{r_1,1}^{[\theta_1,\theta_2]}\cap\bbZ^2}\\
&\lmk \begin{gathered}
\sum_{\substack{r\in{\bbZ_{\ge 0}}: \\\sqrt{r^2+r_1^2}{c^{(1)}}_{\varphi,\theta_1,\theta_2}\ge (m+1)} }
24(m+1) F\lmk
  \sqrt{r^2+r_1^2}{c^{(1)}}_{\varphi,\theta_1,\theta_2}-(m+1)
  \rmk\\
 +36\frac{(m+1)^2}{{c^{(1)}}_{\varphi,\theta_1,\theta_2}}F(0)\unit_{r_1\le \frac{m+1}{  {c^{(1)}}_{\varphi,\theta_1,\theta_2}  }}
\end{gathered}\rmk\\
&\le
\sum_{r_1=0}^\infty
64(r_1+1)\\
&\lmk \begin{gathered}
\sum_{\substack{r\in{\bbZ_{\ge 0}}: \\\sqrt{r^2+r_1^2}{c^{(1)}}_{\varphi,\theta_1,\theta_2}\ge (m+1)} }
24(m+1) F\lmk
  \sqrt{r^2+r_1^2}{c^{(1)}}_{\varphi,\theta_1,\theta_2}-(m+1)
  \rmk\\
 +36\frac{(m+1)^2}{{c^{(1)}}_{\varphi,\theta_1,\theta_2}}F(0)\unit_{r_1\le \frac{m+1}{  {c^{(1)}}_{\varphi,\theta_1,\theta_2}  }}
\end{gathered}\rmk\\
&\le 
64\cdot 24\cdot 
\lmk\frac 3 {c^{(1)}}_{\varphi,\theta_1,\theta_2}\rmk^{2}\lmk 2\sqrt 2+\frac 1{c^{(1)}}_{\varphi,\theta_1,\theta_2}\rmk^{2}\pi
{(m+1)^{3}}
\kappa_{1,2,F}
+64\cdot 36\cdot \frac{(m+1)^2}{{c^{(1)}}_{\varphi,\theta_1,\theta_2}}F(0)
\lmk\frac{m+1}{{c^{(1)}}_{\varphi,\theta_1,\theta_2}}+1\rmk^{2}\\
&\le
64\cdot 144\cdot 24\cdot 
\lmk {c^{(1)}}_{\varphi,\theta_1,\theta_2}\rmk^{-4}
\lmk
\pi \kappa_{1,2,F}+F(0)
\rmk(m+1)^{4}.
\end{split}
\end{align}
We used (\ref{kikiki}).

\end{proof}

\section{Quasi-local automorphisms}\label{quasilocalsec}
In this section we collect some results from \cite{NSY}, and prove Theorem \ref{mo}.
\begin{defn}
A norm-continuous interaction on $\caA$ defined on an interval $[0,1]$
is a map
$\Phi:{\mathfrak S}_{\bbZ^2}\times [0,1]\to \caA_{\rm loc}$ such that
\begin{description}
\item[(i)]
for any $t\in[0,1]$, $\Phi(\cdot, t):{\mathfrak S}_{\bbZ^2}\to \caA_{\rm loc}$
is an interaction, and
\item[(ii)]
for any $Z\in{\mathfrak S}_{\bbZ^2}$, the map
$\Phi(Z,\cdot ):[0,1]\to \caA_{Z}$
is norm-continuous.
\end{description}
\end{defn}

To ensure that the interactions induce quasi-local automorphisms we need to impose sufficient decay properties on the interaction strength.

\begin{defn}\label{bfdef}
Let $F$ be an $F$-function on $({\bbZ^2},\dist)$.
We denote by $\caB_{F}([0,1])$ the set of all
norm continuous interactions $\Phi$ on $\caA$ defined on an interval $[0,1]$
such that the function $\lV
\Phi
\rV_{F}: [0,1]\to \bbR$ defined by
\begin{align}
\lV
\Phi
\rV_F(t):=
\sup_{x,y\in{\bbZ^2}}\frac{1}{F\lmk {\dist}(x,y)\rmk}\sum_{Z\in{\mathfrak S}_{\bbZ^2}, Z\ni x,y}
\lV\Phi(Z;t)\rV,\quad t\in[0,1],
\end{align}
is uniformly bounded, i.e., 
$\sup_{t\in[0,1]}\lV \Phi \rV(t)<\infty$.
It follows that $t \mapsto \lV \Phi \rV_F(t)$ is integrable, and we set 
\begin{align}
I_F(\Phi):=I_{1,0}(\Phi):= C_{F} \int_{0}^{1} dt\lV \Phi \rV_F(t),
\end{align}
with $C_F$ given in Definition \ref{ffuncdef}.
We also set
\begin{align}
\lV\lv \Phi\rV\rv_F:=
\sup_{x,y\in{\bbZ^2}}\frac{1}{F\lmk {\dist}(x,y)\rmk}\sum_{Z\in{\mathfrak S}_{\bbZ^2}, Z\ni x,y}
\sup_{t\in [0,1]}\lmk \lV\Phi(Z;t)\rV\rmk
\end{align}
and denote by $\hat\caB_{F}([0,1])$ the set of all $\Phi\in\caB_{F}([0,1])$
with $\lV\lv \Phi\rV\rv<\infty$.
\end{defn}
We will need some more notation.
For $\Phi\in \caB_{F}([0,1])$ and $0\le m\in\bbR$, 
we introduce a path of interactions $\Phi_{m}$ by
\begin{align}\label{pm}
\Phi_{m}\lmk X;t\rmk:=|X|^{m}\Phi\lmk X;t\rmk,\quad X\in{\mathfrak S}(\bbZ^2),\quad t\in[0,1].
\end{align}
An interaction gives rise to local (and here, time-dependent) Hamiltonians, via
\begin{align}
H_{\Lambda,\Phi}(t):=\sum_{Z\in\Lambda}\Phi(Z,t),\quad t\in[0,1],\quad \Lambda\in{\mathfrak S}_{\bbZ^2}.
\end{align}
We denote by $U_{\Lambda,\Phi}(t;s)$, the solution of 
\begin{align}
\frac{d}{dt} U_{\Lambda,\Phi}(t;s)=-iH_{\Lambda,\Phi}(t) U_{\Lambda,\Phi}(t;s),\quad s, t\in[0,1]\\
U_{\Lambda,\Phi}(s;s)=\unit.
\end{align}
We define corresponding automorphisms $\tau_{t,s}^{(\Lambda),\Phi}, \hat{\tau}_{t,s}^{(\Lambda), \Phi}$ on $\caA$ by
\begin{align}
\tau_{t,s}^{(\Lambda), \Phi}(A):=U_{\Lambda,\Phi}(t;s)^{*}AU_{\Lambda,\Phi}(t;s),\\
\hat{\tau}_{t,s}^{(\Lambda), \Phi}(A):=U_{\Lambda,\Phi}(t;s)AU_{\Lambda,\Phi}(t;s)^{*},
\end{align}
with $A \in \caA$. 
Note that
\begin{align}\label{inv}
\hat{\tau}_{t,s}^{(\Lambda), \Phi}={\tau}_{s,t}^{(\Lambda), \Phi},
\end{align}
by the uniqueness of the solution of the differential equation.

\begin{thm}[\cite{NSY}]\label{tni}
Let $F$ be an $F$-function on $({\bbZ^2}, \dist)$.
Suppose that $\Phi\in\caB_F([0,1])$.
Then the following holds:
\begin{enumerate}
\item \label{it:tdlimit} The limits
\begin{align}
\tau_{t,s}^{\Phi}(A):=\lim_{\Lambda \nearrow{\bbZ^2}}\tau_{t,s}^{(\Lambda), \Phi}(A),\quad
\hat \tau_{t,s}^{\Phi}(A):=\lim_{\Lambda \nearrow{\bbZ^2}}\hat \tau_{t,s}^{(\Lambda), \Phi}(A),\quad
A\in\caA, \quad t,s\in[0,1]
\end{align}
exist and defines strongly continuous families of automorphisms on $\caA$
such that $\hat\tau_{t,s}^{\Phi}=\tau_{s,t}^{\Phi}={\tau_{t,s}^{\Phi}}^{-1}$ and
\begin{align}
\hat \tau_{t,s}^{\Phi}\circ\hat \tau_{s,u}^{\Phi}=\hat \tau_{t,u}^{\Phi},\quad \tau_{t,t}^{\Phi}=\id_{\caA}, \quad t,s,u\in[0,1].
\end{align}
\item \label{it:lr}For any $X,Y\in {\mathfrak S}_{\bbZ^2}$ with $X\cap Y=\emptyset$
the bound 
\begin{align}
\lV
\left[
\tau_{t,s}^{\Phi}(A), B
\right]
\rV
\le \frac{2\lV A\rV\lV B\rV}{C_{F}}\lmk e^{2I_F(\Phi)}-1\rmk\lv X\rv G_{F}\lmk d(X,Y)\rmk
\end{align}
holds for all $A\in \caA_{X}$, $B\in\caA_{Y}$, and $t,s\in [0,1]$.

If $\Lambda \in {\mathfrak S}_{\bbZ^2}$ and $X \cup Y \subset \Lambda$, a similar bound holds for $\tau_{t,s}^{(\Lambda),\Phi}$.

\item \label{it:approx}
For any $X\in {\mathfrak S}_{\bbZ^2}$ 
we have
\begin{align}
&\lV
\Delta_{X(m)}\lmk
\tau_{t,s}^{\Phi}(A)\rmk
\rV
\le \frac{8\lV A\rV}{C_{F}}\lmk e^{2I_F(\Phi)}-1\rmk\lv X\rv G_{F}\lmk m\rmk,
\end{align}
for
%for all $\Lambda\in{\mathfrak S}_{\bbZ^2}$and
 $A\in \caA_{X}$.
 Here we set $\Delta_{X(0)}:=\Pi_{X}$ and $\Delta_{X(m)}:=\Pi_{X(m)}-\Pi_{X(m-1)}$
 for $m\in\bbN$.
A similar bound holds for $\tau_{t,s}^{(\Lambda),\Phi}$.
(See (\ref{gfdef}) for the definition of $G_F$.)

\item \label{it:localdyn}
For any $X,\Lambda\in {\mathfrak S}(\bbZ^2)$ with $X\subset\Lambda$, and $A \in \caA_X$
we have
\begin{align}
\lV
\tau_{t,s}^{(\Lambda), \Phi}(A)-\tau_{t,s}^{\Phi}(A)
\rV
\le\frac{2}{C_{F}} \lV A\rV e^{2I_F(\Phi)}I_F(\Phi) \lv X\rv G_{F}\lmk d\lmk X,{\bbZ^2}\setminus\Lambda\rmk
\rmk.
\end{align}
\item\label{hanahana}
If $\beta_{g}^U\lmk \Phi(X;t)\rmk=\Phi(X;t)$ for any $X\in{\mathfrak S}_{\bbZ^2}$, $t\in [0,1]$, and
$g\in G$, then
we have
$\beta_g^U\circ\tau_{t,s}^{ \Phi}=\tau_{t,s}^{\Phi}\circ \beta_g^U$
for any $t,s\in [0,1]$ and
$g\in G$,
\end{enumerate}
\end{thm}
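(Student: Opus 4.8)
The plan is to reduce the statement to the analogous identity for the finite-volume dynamics $\tau_{t,s}^{(\Lambda),\Phi}$ and then pass to the thermodynamic limit via Theorem~\ref{tni}~\ref{it:localdyn}. The starting observation is that $\beta_g^U$, restricted to $\caA_{\Lambda}$ for any finite $\Lambda\in{\mathfrak S}_{\bbZ^2}$, is implemented by the unitary $W_{\Lambda,g}:=\bigotimes_{z\in\Lambda\cap H_U}U(g)$, regarded as an element of $\caA_{\Lambda\cap H_U}\subset\caA_{\Lambda}$; that is, $\beta_g^U(A)=\Ad\lmk W_{\Lambda,g}\rmk(A)$ for all $A\in\caA_{\Lambda}$. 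Since $\beta_g^U\lmk\Phi(Z;t)\rmk=\Phi(Z;t)$ for every $Z\in{\mathfrak S}_{\bbZ^2}$ and $t\in[0,1]$, summing over $Z\subset\Lambda$ gives $\beta_g^U\lmk H_{\Lambda,\Phi}(t)\rmk=H_{\Lambda,\Phi}(t)$, hence $W_{\Lambda,g}$ commutes with $H_{\Lambda,\Phi}(t)$ for every $t\in[0,1]$.

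Next I would show that $W_{\Lambda,g}$ commutes with the propagator $U_{\Lambda,\Phi}(t;s)$. For this, one compares the two norm-differentiable families $t\mapsto W_{\Lambda,g}U_{\Lambda,\Phi}(t;s)W_{\Lambda,g}^{*}$ and $t\mapsto U_{\Lambda,\Phi}(t;s)$: using $[W_{\Lambda,g},H_{\Lambda,\Phi}(t)]=0$, both satisfy the same linear differential equation $\frac{d}{dt}X(t)=-iH_{\Lambda,\Phi}(t)X(t)$ with the same initial value $X(s)=\unit$, so by uniqueness of solutions they coincide, i.e. $[W_{\Lambda,g},U_{\Lambda,\Phi}(t;s)]=0$. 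Consequently, for $A\in\caA_{\Lambda}$, noting $\tau_{t,s}^{(\Lambda),\Phi}(A)\in\caA_{\Lambda}$ and $\beta_g^U(A)\in\caA_{\Lambda}$, we get
\begin{align*}
\beta_g^U\lmk\tau_{t,s}^{(\Lambda),\Phi}(A)\rmk
=\Ad\lmk W_{\Lambda,g}\rmk\lmk U_{\Lambda,\Phi}(t;s)^{*}AU_{\Lambda,\Phi}(t;s)\rmk
=U_{\Lambda,\Phi}(t;s)^{*}\Ad\lmk W_{\Lambda,g}\rmk(A)U_{\Lambda,\Phi}(t;s)
=\tau_{t,s}^{(\Lambda),\Phi}\lmk\beta_g^U(A)\rmk .
\end{align*}

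Finally I would take $\Lambda\nearrow\bbZ^2$. Fix $A\in\caA_{\rm loc}$, say $A\in\caA_{\Lambda_0}$ for some finite $\Lambda_0$; then $\beta_g^U(A)\in\caA_{\Lambda_0}$ as well. By Theorem~\ref{tni}~\ref{it:localdyn} (and the uniform boundedness of $\lV\Phi\rV_F$, hence finiteness of $I_F(\Phi)$, together with $G_F(m)\to 0$), $\tau_{t,s}^{(\Lambda),\Phi}(B)\to\tau_{t,s}^{\Phi}(B)$ in norm for every local $B$; since $\beta_g^U$ is an isometric $*$-automorphism we may pass to the limit in the displayed identity to obtain $\beta_g^U\lmk\tau_{t,s}^{\Phi}(A)\rmk=\tau_{t,s}^{\Phi}\lmk\beta_g^U(A)\rmk$ for all $A\in\caA_{\rm loc}$. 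As $\caA_{\rm loc}$ is norm-dense in $\caA$ and both $\beta_g^U\circ\tau_{t,s}^{\Phi}$ and $\tau_{t,s}^{\Phi}\circ\beta_g^U$ are norm-continuous, the identity extends to all of $\caA$, proving the claim. The only genuinely load-bearing point is the commutation $[W_{\Lambda,g},U_{\Lambda,\Phi}(t;s)]=0$, which follows at once from uniqueness for the defining linear ODE; everything else is bookkeeping about the local implementability of $\beta_g^U$ and the already-established thermodynamic-limit estimates.
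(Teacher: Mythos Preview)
Your proof of item~\ref{hanahana} is correct and essentially identical to the paper's: both show $\beta_g^U\lmk U_{\Lambda,\Phi}(t;s)\rmk=U_{\Lambda,\Phi}(t;s)$ via uniqueness of the ODE and then pass to the thermodynamic limit; you phrase it through the local implementing unitary $W_{\Lambda,g}$, while the paper applies $\beta_g^U$ directly to the propagator, but this is the same computation. Items~\ref{it:tdlimit}--\ref{it:localdyn} are not argued in your proposal, but the paper likewise dispatches them with citations to \cite{NSY} (Theorem~3.5, Corollary~3.6, Corollary~4.4), so nothing is missing.
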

\begin{proof}
	Item~\ref{it:tdlimit} is Theorem~3.5 of~\cite{NSY}, while~\ref{it:lr} and~\ref{it:localdyn} follow from Corollary~3.6 of the same paper by a straightforward bounding of $D(X,Y)$ and the summation in eq.~(3.80) of~\cite{NSY} respectively.
	Item~\ref{it:approx} can be obtained using~\ref{it:lr} and~\cite[Cor. 4.4]{NSY}.	

Suppose that ${\beta_g^U}\lmk \Phi(X;t)\rmk=\Phi(X;t)$ for any $X\in{\mathfrak S}_{\bbZ^2}$, $t\in [0,1]$, and
$g\in G$.
Then we have
\begin{align}
\frac{d}{dt} {\beta_g^U}\lmk U_{\Lambda,\Phi}(t;s)\rmk
=-{\beta_g^U}\lmk iH_{\Lambda,\Phi}(t)\rmk  {\beta_g^U}\lmk U_{\Lambda,\Phi}(t;s)\rmk
=- iH_{\Lambda,\Phi}(t){\beta_g^U}\lmk U_{\Lambda,\Phi}(t;s)\rmk,\quad t\in[0,1]
\end{align}
and ${\beta_g^U}\lmk U_{\Lambda,\Phi}(s;s)\rmk=\unit$.
Hence ${\beta_g^U}\lmk U_{\Lambda,\Phi}(t;s)\rmk$ and $U_{\Lambda,\Phi}(t;s)$
satisfy the same differential equation and initial condition.
Therefore we get ${\beta_g^U}\lmk U_{\Lambda,\Phi}(t;s)\rmk=U_{\Lambda,\Phi}(t;s)$.
From this, we obtain ${\beta_g^U}\tau_{t,s}^{(\Lambda), \Phi}=\tau_{t,s}^{(\Lambda), \Phi}{\beta_g^U}$,
and taking $\Lambda\uparrow \bbZ^2$, we obtain 
${\beta_g^U}\circ\tau_{t,s}^{ \Phi}=\tau_{t,s}^{\Phi}\circ {\beta_g^U}$.
\end{proof}

%Consider the same notation and assumptions as in Theorem~\ref{tni}.
%To continue we need to make additional assumptions on the function $F$.
%In particular, we assume that there is an $\alpha\in(0,1)$ such that
%\begin{align}\label{as:galp}
%	\sum_{n=0}^{\infty} (1+n)^{3}G_F(n)^{\alpha}<\infty,
%\end{align}
%where $G_F$ is as defined in~\eqref{gfdef}.
%Assume also that
%\begin{align}\label{gmoment}
%\sum_{n=0}^{\infty} (1+n)^{k}G_F(n)<\infty,
%\end{align}
%for any $k\in\bbN$.
%Furthermore, we assume that there is an $F$-function $\tilde F$ on $({\bbZ^2}, \dist)$ such that
%\begin{align}\label{as:gf}
%\max\left\{
%F\lmk \frac r 3\rmk, \sum_{n=[\frac r 3]}^{\infty} (1+n)^{2\nu+1}G_F(n)^{\alpha}
%\right\}\le \tilde F(r).
%\end{align}
The following is slightly strengthened version of  Assumption 5.15. of \cite{NSY}.
\begin{assum}\label{nsy5.15}[\cite{NSY}]
 We assume that the family of linear maps
 $\{\caK_t:\caA_{\rm loc}\to \caA\}_{t\in[0,1]}$
 is norm
continuous and satisfy the followings: There is
 a family of linear maps
$\{
\caK_t^{(n)} : \caA_{\Lambda_n} \to \caA_{\Lambda_n} \}_{t\in[0,1]} $
for each $n\ge 1$ such that:
\begin{description}
\item[(i)]
For each $n\ge 1$, the family 
$\{\caK_t^{(n)}
: \caA_{\Lambda_n}\to \caA_{\Lambda_n}\}_{t\in [0,1]}$
satisfies
\begin{description}
\item[(a)]
For each $t\in[0,1]$, $\lmk \caK_t^{(n)}(A)\rmk^*=  \caK_t^{(n)}(A^*)$
for all $\caA_{\Lambda_n}$.
\item[(b)] For each $A\in \caA_{\Lambda_n}$, the function $[0,1]\ni t\to \caK_t^{(n)}(A)$ is norm continuous.
\item[(c)] For each $t\in[0,1]$ the map $\caK_t^{(n)}: \caA_{\Lambda_n}\to \caA_{\Lambda_n}$
 is norm
 continuous and moreover, this
continuity is uniform on $[0,1]$.
\end{description}
 \item[(ii)]
  There is some $p\ge 0$ and a constant $B_1>0$ for which
  %a measurable, locally bounded function  $B:[0,1]\to [0,\infty)$ for which
given any $X\in{\mathfrak S}_{\bbZ^2}$ and $n\ge 1$ large enough so that 
$X\subset\Lambda_n$ 
\[
\lV \caK^{(n)}_t(A)\rV\le B_1  |X|^p\lV A \rV,\quad \text{for all}\quad
A\in\caA_{X}\quad \text{and}\quad t\in [0,1].
\]
\item[(iii)]
There is some $q\ge 0$, a non-negative, non-increasing function $G$ with 
$G(x)\to  0$ as $x\to\infty$,
and a constant $C_1>0$
% measurable, locally bounded function $C:[0,1]\to [0,\infty)$
 for which given any sets $X,Y\in{\mathfrak S}_{\bbZ^2}$
 and $n\ge 1$
  large enough so that $X\cup Y\subset\Lambda_n$,
  \[
  \lV\lcm  \caK^{(n)}_t(A), B\rcm \rV\le C_1|X|^q\lV A \rV\lV B\rV G\lmk\dist(X,Y)\rmk,\quad \text{for all}\quad
A\in\caA_{X},\quad B\in\caA_{Y} \quad\text{and}\quad t\in [0,1].
  \]
\item[(iv)]
There is some $r\ge 0$, a non-negative, non-increasing function $H$ 
with $H(x)\to 0$ as $x\to\infty$,
and a constant $D_1>0$
% measurable, locally bounded function $D:[0,1]\to [0,\infty)$ 
for which given any $X\in{\mathfrak S}_{\bbZ^2}$
there exists $N\ge 1$ such that for $n\ge N$
\[
\lV \caK_t^{(n)} (A)-\caK_t(A)\rV
\le D_{1} |X|^r \lV A\rV H\lmk \dist (X,\bbZ^2\setminus \Lambda_n)\rmk 
\]
 for all $A\in\caA_X$ and $t\in[0,1]$.
\end{description}
\end{assum}
The following theorem is a slight modification of Theorem 5.17 of \cite{NSY}
\begin{thm}\label{nsy5.17}
Let $F\in\caF_a$, with $\tilde F_\theta$ in (\ref{tildef}) for each $0<\theta<1$.
Assume
that $\{\caK_t\}_{t\in [0,1]}$
 is a family of linear maps satisfying Assumption \ref{nsy5.15}, 
 with $G=G_{F}$ in (iii).
(Recall Definition \ref{fadef} and (\ref{gfdef})). 
 Let 
 $\Phi\in \caB_F([0,1])$
  be an interaction satisfying $\Phi_m\in \caB_F([0,1])$ for $m=\max\{p,q,r\}$
  where 
  $p,q,r$ are numbers in Assumption \ref{nsy5.15}.
Then,
the right hand side of the following sum
\begin{align}\label{eq:psis}
\Psi\lmk
Z, t
\rmk:=
\sum_{m\ge 0} \sum_{X\subset Z,\; X(m)=Z}
\Delta_{X(m)}\lmk
\caK_t\lmk
 \Phi\lmk X; t\rmk
\rmk
\rmk,\quad Z\in {\mathfrak S}_{\bbZ^2},\quad t\in [0,1]
\end{align}
defines a path of interaction such that $\Psi\in\caB_{\tilde F_{\theta}}([0,1])$, for any $0<\theta<1$.
 Furthermore, the formula 
 \begin{align}\label{eq:psisn}
	 \Psi^{(n)}\lmk
Z, t
\rmk:=
\sum_{m\ge 0} \sum_{X\subset Z, X(m)\cap\Lambda_{n}=Z}
\Delta_{X(m)}\lmk
\caK_t^{(n)}\lmk
 \Phi\lmk X; t\rmk
\rmk
\rmk
\end{align}
defines $\Psi^{{(n)}}\in \caB_{\tilde F_{\theta}}([0,1])$ , for any $0<\theta<1$ such that
$\Psi^{(n)}\lmk
Z, t
\rmk=0$ unless $Z\subset \Lambda_{n}$,
and satisfies
\begin{align}\label{psio}
\caK_t^{(n)} \lmk H_{\Lambda_n, \Phi}(t)\rmk
=H_{\Lambda_n, \Psi^{(n)}}(t).
\end{align}
For any $t,u\in[0,1]$, we have
\begin{align}\label{convconv2}
\lim_{n\to\infty}\lV
\tau_{t,u}^{\Psi^{(n)}}\lmk A\rmk
-\tau_{t,u}^{\Psi}\lmk A\rmk
\rV=0,\quad A\in\caA.
\end{align}
Furthermore, if $\Phi_{m+k}\in\hat \caB_{F}([0,1])$ for $k\in\bbN\cup\{0\}$,  then we have
$\Psi_k^{{(n)}}, \Psi_k\in \hat \caB_{\tilde F_{\theta}}([0,1])$
for ant $0<\theta<1$.
\end{thm}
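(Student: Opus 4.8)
\textbf{Proof proposal for Theorem~\ref{nsy5.17}.}

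The plan is to follow the strategy of \cite[Theorem~5.17]{NSY}, keeping track of the modifications needed to land in $\caB_{\tilde F_\theta}([0,1])$ rather than in a fixed $\caB_G$, and to propagate the extra $|X|^k$ weight in the $\hat\caB$-statement. First I would record the elementary combinatorial identity that, for $A\in\caA_X$,
\begin{align}
\caK_t(A)=\sum_{m\ge0}\Delta_{X(m)}\bigl(\caK_t(A)\bigr),
\end{align}
which follows from Theorem~\ref{tni}~(3)-type telescoping together with Assumption~\ref{nsy5.15}~(ii), and likewise for $\caK_t^{(n)}$; applying this with $A=\Phi(X;t)$ and regrouping by the set $Z=X(m)$ (resp.\ $Z=X(m)\cap\Lambda_n$) gives the formal identities $\sum_{Z}\Psi(Z,t)=\sum_X\caK_t(\Phi(X;t))$ and \eqref{psio}. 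So the real content is the norm bound making $\Psi$ a genuine interaction in $\caB_{\tilde F_\theta}$. I would estimate
\begin{align}
\bigl\lVert\Psi(Z,t)\bigr\rVert
\le\sum_{m\ge0}\ \sum_{\substack{X\subset Z\\ X(m)=Z}}
\bigl\lVert\Delta_{X(m)}\bigl(\caK_t(\Phi(X;t))\bigr)\bigr\rVert,
\end{align}
and bound each summand by combining Theorem~\ref{tni}~(3) applied to $\caK_t$ via Assumption~\ref{nsy5.15}~(iii) (which controls $\Delta_{X(m)}\caK_t$ through $G_F(m/3)$ after splitting $X(m)$ into $X(\lfloor m/3\rfloor)$-neighbourhoods, the standard $3$-fold trick) with Assumption~\ref{nsy5.15}~(ii) (the $|X|^p$ growth, absorbed since $\Phi_m\in\caB_F([0,1])$). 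For a pair $x,y\in Z$ one then sums over $X\ni x,y$ and over $m$ with $x,y\in X(m)$; the geometry forces $\dist(x,y)\le\mathrm{diam}\,X+2m\le 3\max\{\mathrm{diam}\,X,m\}$, so the resulting factor is $\max\{F(\dist(x,y)/3),\,G_F(\lfloor\dist(x,y)/3\rfloor)\}$, and using $G_F(t)\le 64\kappa_{\theta,1,F}F(\lfloor t\rfloor)^{1-\theta}$ from \eqref{ugugug} together with \eqref{tildef} this is $\le c\,\tilde F_\theta(\dist(x,y))$. The summability in $m$ is exactly $\sum_m (1+m)^p G_F(m)^{\text{(something positive)}}<\infty$, which is \eqref{ugugug}. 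This yields $\Psi\in\caB_{\tilde F_\theta}([0,1])$; the same computation with $\caK_t^{(n)}$ and the constraint $X(m)\cap\Lambda_n=Z$ gives $\Psi^{(n)}\in\caB_{\tilde F_\theta}([0,1])$, and the support restriction $Z\subset\Lambda_n$ is immediate from $Z=X(m)\cap\Lambda_n$.

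Next I would prove the convergence \eqref{convconv2}. Here I would show $\lim_n\lVert\Psi^{(n)}(Z,t)-\Psi(Z,t)\rVert=0$ for each fixed $Z$, using Assumption~\ref{nsy5.15}~(iv) (the $\caK_t^{(n)}\to\caK_t$ bound with $H\to0$) together with the fact that for $Z\subset\Lambda_n$ deep inside, the two sums \eqref{eq:psis} and \eqref{eq:psisn} agree except for $X$ reaching the boundary of $\Lambda_n$, whose contribution is controlled by the tail of the $\caB_{\tilde F_\theta}$-norm plus the $H(\dist(X,\bbZ^2\setminus\Lambda_n))$ factor. A dominated-convergence argument in $Z$ (against the $\tilde F_\theta$-summable majorant already established) then gives $\lVert\Psi^{(n)}-\Psi\rVert_{\tilde F_\theta}(t)\to0$ in a suitable sense, and finally Theorem~\ref{tni}~(1) plus the Lieb--Robinson bound Theorem~\ref{tni}~(2)-type estimate upgrades this to strong convergence of the automorphisms $\tau_{t,u}^{\Psi^{(n)}}\to\tau_{t,u}^{\Psi}$ on $\caA$ (first on $\caA_{\rm loc}$ by a telescoping/local-approximation argument, then on all of $\caA$ by density and uniform boundedness). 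This is the step where one must be slightly careful: the $\Psi^{(n)}$ live in $\caB_{\tilde F_\theta}$ with $\tilde F_\theta$-norms uniformly bounded in $n$ (again from \eqref{ugugug}), which is what makes the Lieb--Robinson machinery applicable uniformly.

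Finally, for the weighted statement: if $\Phi_{m+k}\in\hat\caB_F([0,1])$ then I would redo the bound above but insert a factor $|Z|^k$; since $|Z|=|X(m)|\le |X|\cdot\lmk\text{polynomial in }m\rmk$-many more sites, more precisely $|X(m)|\le C|X|(m+1)^2$ by a volume count of an $m$-neighbourhood in $\bbZ^2$, the extra weight costs at most $|X|^k(m+1)^{2k}$, absorbed into $\Phi_{m+k}\in\hat\caB_F$ (the $|X|^k$ part) and into the enhanced summability $\sum_m(1+m)^{p+2k}G_F(m)^{\cdots}<\infty$ from \eqref{ugugug} (the $(m+1)^{2k}$ part); the $\hat\caB$-norm replaces $\sup_t$ inside the sum, which is harmless since all bounds were already stated with $\sup_{t\in[0,1]}$ in \eqref{ugugug} and in Theorem~\ref{tni}. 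The main obstacle is the first paragraph: getting the interplay of the three decay/growth inputs (ii)--(iv) of Assumption~\ref{nsy5.15} with the $3$-fold geometric splitting to produce precisely a $\tilde F_\theta$-bound with $n$-uniform constants, i.e.\ verifying that the loss from $G_F$ to $F^{1-\theta}$ is exactly compensated by \eqref{tildef} and \eqref{ugugug}; everything after that is bookkeeping plus the now-standard quasi-local convergence arguments of \cite{NSY}.
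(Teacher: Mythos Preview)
Your proposal is essentially correct and recovers the result, but it takes a different route from the paper. The paper's proof is a direct reduction to the results of \cite{NSY}: one checks, using $F\in\caF_a$ together with \eqref{tildef} and \eqref{ugugug}, that condition~(ii) of \cite[Theorem~5.17]{NSY} holds for every $\tilde F_\theta$, and then invokes \cite[Theorem~5.17]{NSY} to get $\Psi,\Psi^{(n)}\in\caB_{\tilde F_\theta}([0,1])$ and local $F$-norm convergence; the uniform bound $\sup_n\int_0^1\lVert\Psi^{(n)}\rVert_{\tilde F_\theta}(t)\,dt<\infty$ follows from \cite[Theorem~5.13]{NSY}, and \eqref{convconv2} is then \cite[Theorem~3.8]{NSY}. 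For the weighted $\hat\caB$-statement the paper redoes one computation (modeled on \cite[(5.88)]{NSY}), inserting $|Z|^k$ and using $|X(n+1)|^k\le(2n+3)^{2k}|X|^k$ together with \cite[Lemma~8.9]{NSY} and \eqref{ugugug}.

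What you do instead is reconstruct the content of \cite[Theorems~5.13, 5.17]{NSY} from scratch. Your geometric observation $\dist(x,y)\le 3\max\{\mathrm{diam}\,X,m\}$ and the dichotomy it induces is exactly the mechanism behind the factor $r/3$ in \eqref{tildef}; however, your phrase ``splitting $X(m)$ into $X(\lfloor m/3\rfloor)$-neighbourhoods'' is not how the argument actually runs---the bound $\lVert\Delta_{X(m)}(\caK_t(A))\rVert\lesssim |X|^q\lVert A\rVert\,G_F(m)$ comes straightforwardly from Assumption~\ref{nsy5.15}~(iii) via a \cite[Corollary~4.4]{NSY}-type argument (not from Theorem~\ref{tni}~(3), which is only stated for $\tau^\Phi$), and the $1/3$ enters afterwards when converting the $F$- and $G_F$-decay into a $\tilde F_\theta$-bound. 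Your dominated-convergence plus Lieb--Robinson route to \eqref{convconv2} is morally what \cite[Theorem~3.8]{NSY} packages, and your $|X(m)|\le C|X|(m+1)^2$ is the same volume bound the paper uses. The paper's approach buys brevity and avoids rederiving \cite[Lemma~8.9]{NSY} for the sum $\sum_{X:X(n+1)\ni x,y}$; your approach is self-contained but longer and somewhat imprecise at the step where the three inputs (ii)--(iv) of Assumption~\ref{nsy5.15} are combined.
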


\begin{proof}
Because of $F\in\caF_{a}$, we see from 
(\ref{ugugug}) that for any $0<\alpha<1$ and $k\in\bbN$, $G_{F}^{\alpha}$ has a finite $k$-moment.
We also recall (\ref{tildef}) and (\ref{ugugug}) to see that
\begin{align}
\max\left\{ F\lmk\frac r 3\rmk, 
\sum_{n=[\frac r3]}^{\infty }(1+n)^{5}G_{F}(n)^{\alpha}
\right\}\le
\tilde C \tilde F_{\alpha(1-\theta')\varphi}(r),\quad\quad  r\ge 0,
\end{align}
for any $0<\alpha,\theta',\varphi<1$.
As this holds for any $0<\alpha,\theta',\varphi<1$, the condition in (ii) of Theorem 5.17 \cite{NSY}
holds for any $\tilde F_{\theta}$.
Therefore, from (ii) of Theorem 5.17 \cite{NSY}, we get
$\Psi,\Psi^{(n)}\in \caB_{\tilde F_{\theta}}\lmk[0,1]\rmk$
and
$\Psi^{(n)}$ converges locally in $F$-norm to $\Psi$ with respect to $\tilde F_{\theta}$,
for any $0<\theta<1$.
%If $Z$ is a finite set, we see that the right-hand side of~\eqref{eq:psis} contains only finitely many terms and hence is well-defined.
%Moreover, because of the $\Delta_{X(m)}$, it follows that $\Psi^{(s)}(Z,t) \in \caA_Z$.
%Since $\tau_{t,s}^{\tilde\Phi}$ is in automorphism we see that $\Psi^{(s)}(Z,t)$ is self-adjoint, and hence defines an interaction.
%That this interaction is in $\caB_{\tilde{F}}([0,1])$ follows then from Theorem 5.17(i) of~\cite{NSY}.
%The conditions of this theorem can be verified using Theorem~\ref{tni}, where in the notation of~\cite{NSY} we have $p=0$ and $q=r=1$.

%Similarly, equation~\eqref{eq:psisn} defines an interaction, and~\eqref{psio} can be verified by an explicit calculation, if we note that $\tau_{t,s}^{(\Lambda_n), \Phi}(\Phi(X,t))$ is in $\caA_{\Lambda_n}$.
%By part (ii) of Theorem~5.17 of~\cite{NSY} it follows that $\Psi^{(n)(s)} \in \caB_{\tilde{F}}([0,1])$, and moreover that $\Psi^{(n)(s)}$ converges to $\Psi^{(s)}$ in $F$-norm with respect to $\tilde{F}$.
Theorem 5.13 of~\cite{NSY} implies
\begin{align}
\sup_{n}\int_{0}^{1}\lV \Psi^{(n)}\rV_{\tilde F_{\theta}}(t) dt<\infty,
\end{align}
see also~\cite[eq. (5.101)]{NSY}.
Therefore, from Theorem 3.8 of \cite{NSY}, we obtain~\eqref{convconv2}.

By the proof of Theorem 5.17 and Theorem 5.13 (5.87) of \cite{NSY}, 
if $\Phi_{k+m}\in\hat \caB_{F}([0,1])$ for some $k\in\bbN$, then we have
$\Psi_k^{{(n)(s)}}, \Psi_k^{{(s)}}\in \hat \caB_{\tilde F}([0,1])$.
More precisely, instead of (5.88) of \cite{NSY}, we obtain
\begin{align}
\begin{split}
&\sum_{\substack{Z\in{\mathfrak S}_{\bbZ^2}\\Z\ni x,y
}} |Z|^{k}\sup_{t\in[0,1]}\lV\Psi(Z;t)
\rV\\
&\le 
B_1 \sum_{\substack{Z\in{\mathfrak S}_{\bbZ^2}\\Z\ni x,y
}} |Z|^{k+p}\sup_{t\in[0,1]}\lV\Phi(Z;t)
\rV
+4C_1
\sum_{n=0}^\infty G_F(n) \sum_{X: X(n+1)\ni x,y} |X|^q|X(n+1)|^k
\sup_{t\in[0,1]} \lV\Phi(X;t)\rV\\
&\le
B_1 \lV\lv \Phi_{k+p}\rv\rV_F F\lmk \dist(x,y)\rmk
+4C_1
\sum_{n=0}^\infty G_F(n)(2n+3)^{2k}
\sum_{X:X(n+1)\ni x,y} |X|^{q+k}
\sup_{t\in[0,1]} \lV\Phi(X;t)\rV\\
&\le
B_1 \lV\lv \Phi_{k+p}\rv\rV_F F\lmk \dist(x,y)\rmk
+\tilde C_{\theta}\tilde F_{\theta}\lmk \dist(x,y)\rmk
\lV\lv \Phi_{q+k}\rv\rV_F<\infty
%+4\cdot 3^{2k}C_1
%\cdot 64 \cdot \kappa_{\frac 12, 1, F}
%\kappa_{\frac 12, k, F}\lV\lv \Phi_{q+k}\rv\rV_F F\lmk \dist(x,y)\rmk
\end{split}
\end{align}
with some constant $\tilde C_{\theta}$, for each $0<\theta<1$.
In the last line we used (\ref{ugugug}) and Lemma 8.9 of \cite{NSY}.
Hence we get $\Psi_k^{{(n)}}, \Psi_k\in \hat \caB_{\tilde F_{\theta}}([0,1])$.
\end{proof}

\begin{proofof}[Theorem \ref{mo}]
Suppose $\Phi_{0}\sim \Phi_{1}$ via a path $\Phi$.
Our definition of $\Phi_0\sim\Phi_1$ means the existence of
a path of interactions satisfying Assumption 1.2 of \cite{MO}.
Therefore, Theorem 1.3 of \cite{MO}
 guarantees the existence of a path of quasi-local automorphism $\alpha_t$
 satisfying
 $\omega_{\Phi_1}=\omega_{\Phi_0}\circ\alpha_1$.
 From the proof in \cite{MO}, 
% we can see that this  path of quasi-local automorphism $\alpha_s$
% is the same one as $\tau_{s,0}^{\Psi}$ in \cite{NSY} Theorem 6.1.4., given for our path of interactions
% $\Phi$ in Definition \ref{classsym}.
 the automorphism $\alpha_t$ is given by a family of interactions
 \begin{align}\label{lalala}
\Psi\lmk
Z, t
\rmk:=
\sum_{m\ge 0} \sum_{X\subset Z,\; X(m)=Z}
\Delta_{X(m)}\lmk
\caK_t\lmk
 \dot \Phi\lmk X; t\rmk
\rmk
\rmk,\quad Z\in {\mathfrak S}_{\bbZ^2},\quad t\in [0,1],
\end{align}
with 
\begin{align}\label{kitty}
\caK_t(A):=-\int du W_\gamma(u)\tau_{u}^{\Phi(t)}(A),
\end{align}
as $\alpha_{t}=\tau_{t,0}^{\Psi}$.
(Note that by partial integral of (1.19) of \cite{MO}, we obtain (6.103) of \cite{NSY} with function
 $W_\gamma$ in (6.35) of \cite{NSY}).)
 The interaction $\Psi$ actually belongs to $\hat \caB_{F_3}([0,1])$
 for some $F_3\in\caF_a$.
 To see this, note that the path $\Phi$ in Definition \ref{classsym} 
 satisfy Assumption 6.12 of \cite{NSY}
 for any $F$-function because
 \begin{align}
 \sum_{\substack{X\in{\mathfrak S}_{\bbZ^2}\\
 X\ni x,y
 }}\lmk
\lV
\Phi\lmk X;s\rmk
\rV+|X|\lV
\dot{\Phi} \lmk X;s\rmk
\rV
\rmk
% \le 2^{(2R+1)^2}
%\sup_{s\in[0,1]}\sup_{\substack{X\in {\mathfrak S}_{\bbZ^2}}}
%\lmk
%\lV
%\Phi\lmk X;s\rmk
%\rV+|X|\lV
%\dot{\Phi} \lmk X;s\rmk
%\rV
%\rmk
\le 
\frac{ 2^{(2R+1)^2}C_b^\Phi}{F(R)} F(\dist(x,y)),
\end{align}
with $C_b^\Phi$, $R$, given in {\it 3, 4} of Definition \ref{classsym}.
In particular, it satisfies Assumption 6.12 of \cite{NSY}, with
respect to the $F$-function (see section 8 of \cite{NSY})
$
F_1(r):=\frac{e^{-r}}{(1+r)^{4}}
$. By section 8 of \cite{NSY}, $F_1$ belongs to $\caF_a$.
%Fix some $0<\theta<1$. We may choose $\tilde F_\theta$ in (\ref{tildef})
%as 
%$F_2(r):=\tilde F_\theta(r):=C_1\frac{e^{-\frac{\theta}3r}}{(1+r)^{3}}
%$
%with a suitable constant $C_1$.
Fix any $0<\alpha<1$.
Then by  Proposition 6.13 and its proof of \cite{NSY}, the family of maps given by (\ref{kitty})
((6.102) of \cite{NSY}) satisfies Assumption \ref{nsy5.15}, with $p=0$, $q=1$, $r=1$
and $G=G_{F_{2}}$, where $F_{2}(r)=(1+r)^{-4} \exp\lmk - r^{\alpha}\rmk$.
Furthermore, we have $\dot{\Phi}_m\in \hat\caB_{F_2}([0,1])$ for any $m\in\bbN$,
because
\begin{align}
\lV\lv \dot \Phi_m\rV\rv_{F_2}:=
\sup_{x,y\in{\bbZ^2}}\frac{1}{F_2\lmk {\dist}(x,y)\rmk}\sum_{Z\in{\mathfrak S}_{\bbZ^2}, Z\ni x,y}
\sup_{t\in [0,1]}\lv Z\rv^m \lmk \lV\dot\Phi(Z;t)\rV\rmk
\le \frac{2^{(2R+1)^2} (2R+1)^{2m}C_b^\Phi}{F_2(R)}<\infty.
\end{align}
We have $F_{2}\in \caF_{a}$, and fixing any $0<\alpha'<\alpha$,
 $\tilde F_{2}(r):= (1+r)^{-4} \exp\lmk - r^{\alpha'}\rmk$
 satisfy 
\begin{align}
\max\left\{ F_{2}\lmk\frac r 3\rmk, \lmk  F_{2}\lmk \lcm \frac r 3 \rcm \rmk\rmk^{\theta}\right\}\le
C_{2,\theta,\alpha'}\tilde F_{2}(r),\quad\quad  r\ge 0,
\end{align}
for a suitable constant $C_{2,\theta,\alpha'}$.

Therefore, by Theorem \ref{nsy5.17}, 
$\Psi$ given by (\ref{lalala}) for this $\caK_t$ and $\dot\Phi$
satisfy $\Psi_1,\Psi\in\hat \caB_{\tilde F_2}([0,1])$
 for  $\tilde F_2\in\caF_a$ above.
 
 If $\Phi$ is $\beta_{g}$-invariant, then $\tau^{\Phi(t)}$ commutes with
 $\beta_{g}$, hence $\caK_{t}$ commutes with $\beta_{g}$.
 As $\Pi_{X}$ commutes with $\beta_{g}$ and $\dot \Phi$ is $\beta_{g}$-invariant, we see that 
 $\Psi$ is $\beta_{g}$-invariant.

\end{proofof}

\begin{prop}\label{tokyo1877}
Let  $F,\tilde F\in \caF_a$ be $F$-functions
of the form 
$F(r)=(1+r)^{-4} \exp\lmk - r^{\theta}\rmk$,
 $\tilde F(r):=(1+r)^{-4} \exp\lmk - r^{\theta'}\rmk$
 with some constants $0<\theta'<\theta<1$.
Let $\Psi,\tilde\Psi\in\caB_{F}([0,1])$ be a path of interactions such that
$\Psi_{1}\in \caB_{F}([0,1])$.
Finally,  let $\tau_{t,s}^{\tilde \Psi}$ and $\tau_{t,s}^{(\Lambda_n),{\tilde \Psi}}$ be 
automorphisms given  by $\Psi,\tilde\Psi$ from  Theorem~\ref{tni}.

Then, with $s \in [0,1]$,
the right hand side of the following sum
\begin{align}\label{eq:psis}
\Xi^{(s)}\lmk
Z, t
\rmk:=
\sum_{m\ge 0} \sum_{X\subset Z,\; X(m)=Z}
\Delta_{X(m)}\lmk
\tau_{t,s}^{\tilde \Psi}\lmk
 \Psi\lmk X; t\rmk
\rmk
\rmk,\quad Z\in{\mathfrak S}_{\bbZ^2},\quad t\in [0,1]
\end{align}
defines a path of interaction such that $\Xi^{(s)}\in\caB_{\tilde F}([0,1])$.
 Furthermore, the formula 
 \begin{align}\label{eq:psisn}
	 \Xi^{(n)(s)}\lmk
Z, t
\rmk:=
\sum_{m\ge 0} \sum_{X\subset Z, X(m)\cap\Lambda_{n}=Z}
\Delta_{X(m)}\lmk
\tau_{t,s}^{(\Lambda_n), \tilde \Psi}\lmk
 \Psi\lmk X; t\rmk
\rmk
\rmk
\end{align}
defines $\Xi^{{(n)(s)}}\in \caB_{\tilde F}([0,1])$ such that
$\Xi^{(n)}\lmk
Z, t
\rmk=0$ unless $Z\subset \Lambda_{n}$,
and satisfies
\begin{align}\label{psio}
\tau_{t,s}^{(\Lambda_n), \tilde \Psi} \lmk H_{\Lambda_n, \Psi}(t)\rmk
=H_{\Lambda_n, \Xi^{(n)(s)}}(t).
\end{align}
For any $t,u\in[0,1]$, we have
\begin{align}\label{convconv}
\lim_{n\to\infty}\lV
\tau_{t,u}^{\Xi^{(n)(s)}}\lmk A\rmk
-\tau_{t,u}^{\Xi^{(s)}}\lmk A\rmk
\rV=0,\quad A\in\caA.
\end{align}
Furthermore, if $\Psi_1\in\hat \caB_{F}([0,1])$, then we have
$\Xi^{{(n)(s)}}, \Xi^{{(s)}}\in \hat \caB_{\tilde F}([0,1])$.
\end{prop}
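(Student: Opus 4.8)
The statement is essentially an application of Theorem~\ref{nsy5.17} with the family of maps $\caK_t := \tau_{t,s}^{\tilde\Psi}$ (and $\caK_t^{(n)} := \tau_{t,s}^{(\Lambda_n),\tilde\Psi}$), for the fixed parameter $s\in[0,1]$. So the core of the work is to verify that this family satisfies Assumption~\ref{nsy5.15}, with the function $G$ in item~(iii) taken to be $G_F$, and with exponents $p=q=r=0$. First I would record the finite-volume approximants: set $\caK_t^{(n)}:=\tau_{t,s}^{(\Lambda_n),\tilde\Psi}$ on $\caA_{\Lambda_n}$. Items (i)(a) and (i)(b) are immediate since each $\tau_{t,s}^{(\Lambda_n),\tilde\Psi}$ is a $*$-automorphism and $t\mapsto\tau_{t,s}^{(\Lambda_n),\tilde\Psi}(A)$ is norm-continuous (it solves the relevant ODE with a norm-continuous generator $H_{\Lambda_n,\tilde\Psi}(t)$). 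Item (i)(c) and item~(ii) are trivial with $B_1=1$, $p=0$, because automorphisms are isometric: $\lV\caK_t^{(n)}(A)\rV=\lV A\rV$. The content is in (iii) and (iv).

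\textbf{Key steps.} For item~(iii), the bound $\lV[\tau_{t,s}^{(\Lambda_n),\tilde\Psi}(A),B]\rV\le \tfrac{2\lV A\rV\lV B\rV}{C_F}(e^{2I_F(\tilde\Psi)}-1)\,|X|\,G_F(\dist(X,Y))$ for $A\in\caA_X$, $B\in\caA_Y$ with $X\cup Y\subset\Lambda_n$ is exactly the Lieb--Robinson bound of Theorem~\ref{tni}~\ref{it:lr} (its finite-volume version), uniform in $n$; this gives $q=0$, $G=G_F$, $C_1=\tfrac{2}{C_F}(e^{2I_F(\tilde\Psi)}-1)$. Here I use $\tilde\Psi\in\caB_F([0,1])$ so that $I_F(\tilde\Psi)<\infty$. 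For item~(iv), I would invoke Theorem~\ref{tni}~\ref{it:localdyn} (the local-approximation estimate): for $A\in\caA_X$ with $X\subset\Lambda_n$, $\lV\tau_{t,s}^{(\Lambda_n),\tilde\Psi}(A)-\tau_{t,s}^{\tilde\Psi}(A)\rV\le \tfrac{2}{C_F}\lV A\rV e^{2I_F(\tilde\Psi)}I_F(\tilde\Psi)\,|X|\,G_F(\dist(X,\bbZ^2\setminus\Lambda_n))$. This is precisely the required form with $r=0$, $H=G_F$, $D_1=\tfrac{2}{C_F}e^{2I_F(\tilde\Psi)}I_F(\tilde\Psi)$, and $\caK_t=\tau_{t,s}^{\tilde\Psi}$, which indeed maps $\caA_{\rm loc}\to\caA$ and is norm-continuous in $t$ on each local algebra by Theorem~\ref{tni}~\ref{it:tdlimit}. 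With Assumption~\ref{nsy5.15} verified and $G=G_F$ with $F\in\caF_a$, the hypothesis ``$\Phi_m\in\caB_F([0,1])$ for $m=\max\{p,q,r\}=0$'' of Theorem~\ref{nsy5.17} is just $\Psi\in\caB_F([0,1])$, which holds. Applying Theorem~\ref{nsy5.17} (with its $\Phi$ replaced by our $\Psi$, $\caK_t$ by $\tau_{t,s}^{\tilde\Psi}$, and $\theta$ in $\tilde F_\theta$ chosen so that $c_1\tilde F$ dominates, as in the relation $\max\{F(r/3),(F([r/3]))^{1/2}\}\le c_1\tilde F(r)$) yields $\Xi^{(s)},\Xi^{(n)(s)}\in\caB_{\tilde F}([0,1])$, the support property, the identity~\eqref{psio}, and the convergence~\eqref{convconv}. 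The final sentence, that $\Psi_1\in\hat\caB_F([0,1])$ implies $\Xi^{(n)(s)},\Xi^{(s)}\in\hat\caB_{\tilde F}([0,1])$, is the ``$\Phi_{m+k}\in\hat\caB_F([0,1])$'' clause of Theorem~\ref{nsy5.17} with $k=1$ (and $m=0$): we need $\Psi_1\in\hat\caB_F([0,1])$, which is assumed.

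\textbf{Main obstacle.} The only genuinely delicate point is bookkeeping of the $F$-functions: Theorem~\ref{nsy5.17} is stated for a single $F\in\caF_a$ with its associated $\tilde F_\theta$, whereas here we are handed two functions $F$ and $\tilde F$ of prescribed exponential form with $0<\theta'<\theta<1$, and must check that $\tilde F$ (up to a constant) serves as a legitimate $\tilde F_\theta$ for the relevant $\theta$. This is the same computation already used in Theorem~\ref{thm:quasiauto}, {\it Step 1}, via~\eqref{tildef} and~\eqref{ugugug}: one picks $0<\theta'<\theta$ and a constant $c_1$ with $\max\{F(r/3),(F([r/3]))^{1/2}\}\le c_1\tilde F(r)$, so that $c_1\tilde F$ satisfies Definition~\ref{fadef}(ii) for $F$ and exponent $1/2$. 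Since $\caB_{c_1\tilde F}([0,1])=\caB_{\tilde F}([0,1])$ and likewise for the hatted spaces, we may conclude with $\tilde F$ itself. A secondary (but routine) point is that all the estimates from Theorem~\ref{tni} used above are uniform in $n$ because $\lV\tilde\Psi\rV_F$ and hence $I_F(\tilde\Psi)$ do not depend on the volume; this uniformity is what lets the $n\to\infty$ machinery of Theorem~\ref{nsy5.17} (Theorem~3.8 of~\cite{NSY}) apply. No new ideas beyond what is already in the excerpt are needed.
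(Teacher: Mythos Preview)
Your approach is exactly the paper's: apply Theorem~\ref{nsy5.17} with $\caK_t=\tau_{t,s}^{\tilde\Psi}$, $\caK_t^{(n)}=\tau_{t,s}^{(\Lambda_n),\tilde\Psi}$, and verify Assumption~\ref{nsy5.15} via Theorem~\ref{tni}. There is, however, a bookkeeping slip in the exponents. The Lieb--Robinson bound in Theorem~\ref{tni}~\ref{it:lr} and the local-approximation bound in Theorem~\ref{tni}~\ref{it:localdyn} both carry a factor $|X|$ (which you in fact quote correctly), so Assumption~\ref{nsy5.15} holds with $q=r=1$, not $q=r=0$. Consequently $m=\max\{p,q,r\}=1$, and the hypothesis of Theorem~\ref{nsy5.17} is $\Psi_1\in\caB_F([0,1])$, not merely $\Psi\in\caB_F([0,1])$; this is precisely why Proposition~\ref{tokyo1877} assumes $\Psi_1\in\caB_F([0,1])$. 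Likewise, the ``furthermore'' clause is the case $k=0$ (with $m=1$), not $k=1$ with $m=0$: one needs $\Psi_{m+0}=\Psi_1\in\hat\caB_F([0,1])$ to conclude $\Xi^{(s)},\Xi^{(n)(s)}\in\hat\caB_{\tilde F}([0,1])$. With these corrections your argument goes through unchanged; compare the proof of Theorem~\ref{mo}, where exactly the same exponents $p=0$, $q=1$, $r=1$ appear for the analogous family of maps.
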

\begin{proof}
From Theorem \ref{nsy5.17}, it suffices to show that
the family $\{\caK_t:=\tau_{t,u}^{\tilde\Psi}\}$
satisfies the Assumption \ref{nsy5.15}.
This follows from Theorem \ref{tni}.
\end{proof}


\begin{thebibliography}{}




\bibitem[BMNS]{bmns}
S.~ Bachmann, S.~Michalakis, B.~Nachtergaele, and R.~Sims.
\newblock{Automorphic Equivalence within Gapped Phases of Quantum Lattice Systems.}\newblock{Communications in Mathematical Physics}
{\bf 309}, 835--871, 2012. 




\bibitem[BR1]{BR1}
 O.~Bratteli and D.~W.~Robinson.
\newblock {\em Operator Algebras and Quantum Statistical 
 Mechanics 1.} Springer-Verlag. (1986).
 \bibitem[BR2]{BR2}
 O.~Bratteli and  D.~W.~Robinson.
 \newblock {\em Operator Algebras and Quantum Statistical 
 Mechanics 2.} Springer-Verlag. (1996).
 
 
 \bibitem[CGLW]{cglw}
 X.~Chen, Z.~C. Gu, Z.~X. Liu, and X.~G. Wen,
 \newblock{Symmetry protected topological orders and the group cohomology of their
symmetry group.}
 Phys. Rev. B {\bf 87}, 155114 (2013)

 \bibitem[C]{Connes}
A.~Connes,
\newblock{Periodic automorphisms of the hyperfinite factor of type $II_1$.}
\newblock{Acta Sci. Math. (Szeged)} {\bf 39} (1977), no. 1--2, 39--66.

 
 \bibitem[DW]{DW}
R.~Dijkgraaf, E.~Witten, 
\newblock{Topological gauge theories and group cohomology.}
Commun.Math. Phys.{\bf 129}, 393--429 (1990)

\bibitem[J]{jones}
V.~Jones.
\newblock{Actions of finite groups on the hyperfinite type $II_{1}$ factor.}
Mem. Amer. Math. Soc. 28 (1980), no. 237.

\bibitem[KOS]{kos} A.~Kishimoto, N.~Ozawa, and S.~Sakai.
\newblock{Homogeneity of the pure state space of a separable C*-algebra}.
\newblock{Canad. Math. Bull.} {\bf 46} 365--37 (2003).


\bibitem[MM]{MillerMIyake2016}
J.~Miller and A.~Miyake, \newblock{ Hierarchy of universal entanglement in 2D
  measurement-based quantum computation}, 
  \newblock{Quantum Information }{\bf 2},
  16036 (2016). 





\bibitem[MGSC]{molnar}
A.~Molnar, Y.~Ge, N.~Schuch, and J.~I.~Cirac
\newblock{A generalization of the injectivity
condition for projected entangled pair
states}
J. Math. Phys. {\bf 59}, 021902 (2018).
\bibitem[MO]{MO}A.~Moon and Y.~Ogata
\newblock{Automorphic equivalence within gapped phases in the bulk}
\newblock{Journal of Functional Analysis : {\bf 278}, Issue 8, 1 108422 (2020)}

\bibitem[NO]{NO}
P.~Naaijkens, Y.~Ogata
\newblock{Stability of the distal split property and absence of superselection sectors in 2D systems}
In preparation.

\bibitem[NSY]{NSY} B. Nachtergaele, R. Sims, and A. Young. 
\newblock{ Quasi-locality bounds for quantum lattice systems. I. Lieb-Robinson bounds, quasi-local maps, and spectral flow automorphisms}. J. Math. Phys. \textbf{60}, 061101 (2019)

\bibitem[O1]{ogata}
\newblock{A $\bbZ_{2}$-Index of Symmetry Protected Topological Phases with Time Reversal Symmetry for Quantum Spin Chains.}
 Commun. Math. Phys. {\bf 374}, 705--734 (2020). 
 \bibitem[O2]{IAMP}
\newblock{One World Mathematical Physics Seminar 15. Dec. 2020}
 \url{https://youtu.be/cXk6Fk5wD_4}
 \newblock {Theoretical studies of topological phases of matter 17. Dec 2020}
 \url{https://www.ms.u-tokyo.ac.jp/%7Eyasuyuki/yitp2020x.htm}
\newblock{ Current Developments in Mathematics 4th January 2021} 
\url{https://www.math.harvard.edu/event/current-developments-in-mathematics-2020/}
\bibitem[P]{powers}
R.~T.~Powers.
\newblock{ Representations of uniformly hyperfinite algebras and their associated von Neumann rings.}
\newblock{ Ann. of Math.}
{\bf  86} 138--171 (1967).


\bibitem[T]{takesaki}
M.~Takesaki.
\newblock{Theory of operator algebras. I.}
\newblock{\em{Encyclopaedia of Mathematical Sciences}.} Springer-Verlag. (2002).

\bibitem[Y]{Beni2016}
B.~Yoshida, \newblock{Topological phases with generalized global symmetries}\/,
  Phys. Rev. B {\bf 93}, 155131 (2016).

\end{thebibliography}
\end{document}